\newcommand{\lnote}[1]{\footnote{{\bf \color{blue}Li-Yang}: {#1}}}
\newcommand{\rnote}[1]{\footnote{{\bf \color{red}Rocco}: {#1}}}
\newcommand{\onote}[1]{\footnote{{\bf \color{green}Ryan}: {#1}}}
\newtheorem*{rep@theorem}{\rep@title}
\newcommand{\newreptheorem}[2]{
\newenvironment{rep#1}[1]{
 \def\rep@title{#2 \ref{##1}}
 \begin{rep@theorem}\itshape}
 {\end{rep@theorem}}}
\theoremstyle{plain}
\newcommand{\ignore}[1]{}
\def\colorful{0}
\newcommand{\violet}[1]{{\color{violet}{#1}}}
\newcommand{\blue}[1]{{{\color{blue}#1}}}
\newcommand{\red}[1]{{\color{red} {#1}}}
\newcommand{\green}[1]{{\color{green} {#1}}}
\newcommand{\gray}[1]{{\color{gray}{#1}}}
\newcommand{\violet}[1]{{{#1}}}
\newcommand{\blue}[1]{{{#1}}}
\newcommand{\red}[1]{{{#1}}}
\newcommand{\green}[1]{{{#1}}}
\newcommand{\gray}[1]{{{#1}}}
\newtheorem*{theorem*}{Theorem}
\newtheorem*{noclaim*}{Claim}
\newcommand{\pparagraph}[1]{\medskip \noindent {\bf {#1}}}
\newcommand{\normal}{{\mathcal{N}(0,1)}}
\newcommand{\hash}{\mathrm{hash}}
\newcommand{\bucket}{\mathrm{bucket}}
\newcommand{\Ind}{\mathds{1}}
\newcommand{\tInd}{\wt{\mathds{1}}}
\newcommand{\CNF}{\mathrm{CNF}}
\newcommand{\inner}{{\mathrm{in}}}
\newcommand{\outter}{{\mathrm{out}}}
\renewcommand{\N}{\mathds{N}}
\renewcommand{\R}{\mathds{R}}
\renewcommand{\Z}{\mathds{N}}
\newcommand{\err}{\mathrm{err}}
\newcommand{\head}{\textsc{Head}}
\newcommand{\tail}{\textsc{Tail}}
\newcommand{\Surface}{\mathrm{Surface}}
\newcommand{\Inside}{\mathrm{Inside}}
\newcommand{\standardized}{-standardized\xspace}
\newcommand{\tay}{d}
\newcommand{\bucks}{L}
\newcommand{\AInf}{{\mathcal E}}   \newcommand{\Ainf}{\AInf}
\newcommand{\AOInf}{{\vec{\mathcal E}}}   \newcommand{\AOinf}{\AOInf}
\newcommand{\BC}{\mathrm{BC}}
\newcommand{\EC}{\mathrm{EC}}
\newcommand{\CE}{\mathrm{CE}}
\newcommand{\myfig}[4]{\begin{figure}[H] \centering \includegraphics[width=#1\textwidth]{#2} \caption{#3} \label{#4} \end{figure}}
\newcommand{\ol}[1]{\overline{#1}}
\newcommand{\bdry}{\partial}
\begin{document}

\title{Fooling Polytopes\vspace*{10pt}}

\author{\hspace{-35pt}Ryan O'Donnell \\
\hspace{-35pt} \small{\sl Carnegie Mellon University} \and Rocco A. Servedio\\ \small{\sl Columbia University} \and Li-Yang Tan   \\ \small{\sl Stanford University}\vspace*{10pt}
}

\date{\today}

\maketitle

\begin{abstract}
We give a pseudorandom generator  that fools $m$-facet polytopes over $\zo^n$ with seed length $\polylog(m) \cdot \log n$. % (equivalently, intersections of $m$ halfspaces over $\{0,1\}^n$).
The previous best seed length had superlinear dependence on~$m$.  An immediate consequence is a deterministic quasipolynomial time algorithm for approximating the number of solutions to any $\zo$-integer program.

\end{abstract}

\thispagestyle{empty}

\newpage
\setcounter{page}{1}

%!TEX root = main.tex

\section{Introduction}

Unconditional derandomization has been a major focus of research in computational complexity theory for more than thirty years.  A significant line of work in this area has been on developing unconditional pseudorandom generators (PRGs) for various types of Boolean functions.  Early seminal results in this vein focused on Boolean circuits~\cite{AW89,Nis91,NW94} and branching programs \cite{Nisan:92,INW94,NZ96}, but over the past decade or so a new strand of research has emerged in which the goal is to construct PRGs against \emph{halfspaces} 
%(i.e.~functions of the form $F(x) = \Ind[w \cdot x \le \theta]$) 
and various generalizations of halfspaces.  This work has included a sequence of successively more efficient PRGs against halfspaces \cite{DGJ+10:bifh,KRS12,MZ13,Kane14-subpoly,KothariMeka15,GKM15},  low-degree polynomial threshold functions \cite{DKN10,Kane11focs,Kane11ccc,MZ13,Kane14-subpoly,KR18}, and, most relevant to this paper,
%(various restricted types of)
\emph{intersections of halfspaces} \cite{GOWZ10,HKM12,ST17,CDS18}.

Since intersections of $m$ halfspaces correspond to $m$-facet polytopes, and also to $\zo$-integer programs with $m$ constraints, these objects are of fundamental interest in high-dimensional geometry, optimization, and a range of other areas. A pseudorandom generator that $\delta$-fools intersections of $m$ halfspaces can equivalently be viewed as an explicit \emph{discrepancy set} for $m$-facet polytopes: a small subset of $\{0,1\}^n$ that $\delta$-approximates the $\{0,1\}^n$-volume of every $m$-facet polytope.  (Discrepancy sets are stricter versions of \emph{hitting sets}, which are only required to intersect every polytope of volume at least $\delta$.)  The problem of constructing a PRG for intersections of $m$ halfspaces is also a stricter version of the algorithmic problem of deterministically approximating the number of solutions of a $\zo$-integer  program with $m$ constraints. It is stricter because a PRG yields an input-oblivious algorithm: the range of a PRG is a single fixed set of points which gives approximately the right answer for \emph{every} $\zo$-integer program.  Beyond pseudorandomness, intersections of halfspaces also play a significant role in other fields such as concrete complexity theory \cite{MinskyPapert:68,BRS:95,OS10:combinatorica,Sherstov13sicomp,Sherstov13combinatorica,Kane14intersection} and computational learning theory \cite{BlumKannan:97,KOS:04,KlivansSherstov:06,KOS:08,Vempala:10,KhotSaket:11jcss,GKM12,ST17itcs}.

The main result of this paper is a new PRG for intersections of $m$ halfspaces.  Its seed length grows polylogarithmically with $m$, which is an exponential improvement of the previous best PRG for this class.
Before giving the precise statement of our result, we briefly describe the prior state of the art for this problem.

\subsection{Prior work on PRGs for intersections of halfspaces}

A halfspace $F(x) = \Ind[w \cdot x \leq \theta]$ is said to be \emph{$\tau$-regular} if
%\onote{changing this to the classic definition}
$|w_j| \leq \tau \|w\|_2$ for all $j \in [n]$; %$\sum_{{j}=1}^n w_j^4 \leq \tau^2 \big(\sum_{{j}=1}^n w_j^2\big)^2$;
intuitively, a $\tau$-regular halfspace is one in which no coefficient $w_j$ is too large relative to the overall scale of all the coefficients. Harsha, Klivans, and Meka~\cite{HKM12} gave a PRG which
% \ignore{
%%$\tilde{O}((\log m)^{8/5} \tau^{1/5})$-fools any intersection of $m$ many $\tau$-regular halfspaces; their PRG has seed length $O((\log n \log m)/\tau)$.\rnote{Let's rephrase maybe - this superficially makes them look like they're linear in $\log m$.}
%}
$\delta$-fools any intersection of $m$ many $\tau$-regular halfspaces with seed length $\poly(\log m,1/\delta)\cdot \log n$, where $\tau$ has to be sufficiently small relative to $m$ and $\delta$ (specifically, $\tau \le \text{some } \poly(\frac{\delta}{\log m})$ is required).
%\lnote{used to be ``any intersection of $m$ many $\delta^5/(\log^{8.1} m)(\log(1/\delta))$-regular halfspaces with seed length $\log (n) \cdot   \log^{9.1} (m) \cdot \log(1/\delta)/\delta^5$"}
%we will discuss the techniques of \cite{HKM12} in considerable detail below).
While this seed length has the desirable property of being polylogarithmic in $m$, due to the regularity requirement this result cannot be used to fool intersections of even two general halfspaces.  We note that there are very basic halfspaces, such as $F(x) = \Ind[x_1 \leq 1/2]$, that are highly irregular.

Recently,~\cite{ST17} built on the work of~\cite{HKM12} to give a PRG that fools a different subclass of intersections of halfspaces.  They give a PRG that $\delta$-fools any intersection of $m$ many \emph{weight-$W$} halfspaces with seed length $\poly(\log m, W, 1/\delta)\cdot \polylog\,n$; a halfspace has weight $W$ if it can be expressed as $\Ind[w \cdot x \le \theta]$ where each coefficient $w_j$ is an integer of magnitude at most $W$.  Unfortunately, many $n$-variable halfspaces require weight polynomially or even exponentially large in $n$; in fact, a counting argument shows that almost all halfspaces require exponentially large weight.
%\rnote{I'm not so crazy about the proposed second half of this sentence, \violet{and note that \cite{ST17}'s seed length depends polynomially on $W$} --- to my ear it's awkward for the first half of the sentence to be declarative (``many $n$-variable halfspaces\dots'') and the second half to be addressing the reader directly (``note that\dots''), and also the immediately previous sentence just said that it's polynomial in $W$ so I don't think we need to repeat this.)}
Therefore, the \cite{ST17} result also cannot be used to fool even two general halfspaces.

In \cite{GOWZ10}, Gopalan, O'Donnell, Wu, and Zuckerman gave a PRG that \emph{can} fool intersections of $m$ general halfspaces.  However, various aspects of their approach
% \gray{(a generalization of structural results for single halfspaces to $m$-tuples of halfspaces, and read-once branching program based techniques applied to $m$-tuples of halfspaces)}\lnote{How about just saying "various aspects of their approach" / ``a couple of aspects", without going into any detail?  Not sure which way is better.}
each necessitate a seed length which is at least linear in $m$, and indeed their overall seed length is $O((m \log(m/\delta) + \log n) \cdot \log(m/\delta))$.\footnote{Their seed length improves to $O(m \log(m/\delta) + \log n)$ if $m/\delta$ is bounded by any $\polylog(n)$.}
%\rnote{\gray{This sentence is arguably giving [GOWZ] somewhat short shrift, since it doesn't mention that for $m/\delta \leq$ any $\polylog(n)$, their seed length is $O(\log(n) + m \log(m/\delta)).$  Should we mention this, and add it to the table, perhaps with a second line?  The table could say
%
%``$O(\log(n) + m \log(m/\delta)),$ if $m/\delta \leq$ any $\polylog(n)$''
%
%(maybe we should add the word ``some'' before ``$\poly(\delta/\log m)$'' in the [HKM12] line?)
%
%(I love the table, by the way!)}}
 So while this PRG is notable for being able to handle intersections of general halfspaces, its seed length becomes trivial (greater than $n$) for intersections of $m \geq n$ many halfspaces.  (Indeed, this PRG of~\cite{GOWZ10}  fools \emph{arbitrary monotone functions} of $m$ general halfspaces, with intersections (i.e.~{\sc And}s) being a special case.  Due to the generality of this class---which of  course includes every monotone function over $\zo^m$---it can be shown that any PRG has to have at least linear seed length dependence on $m$.)

\subsubsection{PRGs over Gaussian space}

\noindent There has also been work on PRGs for functions over $\R^n$ endowed with the $n$-dimensional Gaussian distribution.  Analyses in this setting are often facilitated by the continuous nature of $\R^n$ and rotational invariance of the Gaussian distribution, useful technical properties not afforded by the standard  setting of Boolean space.  For halfspaces and polytopes, PRGs over Gaussian space can be viewed as a first step towards PRGs over Boolean space; as we describe below, Boolean PRGs even for restricted subclasses of halfspaces and polytopes yield Gaussian PRGs for general halfspaces and polytopes, but the converse does not hold.  We also note that the correspondence between polytopes and $\zo$-integer programs is specific to Boolean space, and in particular, Gaussian PRGs do not yield algorithms for counting solutions to these programs.

For halfspaces, Meka and Zuckerman~\cite{MZ13} showed that any PRG for the subclass of $O(\frac1{\sqrt{n}})$-regular halfspaces  over Boolean space yields a PRG for all halfspaces over Gaussian space.
 Note that $O(\frac1{\sqrt{n}})$-regular halfspaces are ``the most regular" ones; every halfspace is $\tau$-regular for some $\tau \in [\frac1{\sqrt{n}},1]$.  \cite{HKM12} generalized this connection to polytopes: they showed that any PRG for intersections of $m$ many $O((\log m)/\sqrt{n})$-regular halfspaces over Boolean space yields a PRG for intersections of $m$ many arbitrary halfspaces over Gaussian space. Combining this with their Boolean PRG for intersections of regular halfspaces discussed above,~\cite{HKM12} obtained a Gaussian PRG for intersections of $m$ halfspaces with seed length $\poly(\log m,1/\delta)\cdot \log n$.  Recent work of~\cite{CDS18} gives a different Gaussian PRG with seed length $\poly(\log m,1/\delta) + O(\log n)$.

 The focus of the current work is on PRGs over Boolean space, and the rest of the paper addresses this (more challenging) setting.

\subsection{This work:  A PRG for intersections of general halfspaces}
\label{sec:this-work}

\begin{table}[t]
\begin{adjustwidth}{-2em}{}
\renewcommand{\arraystretch}{1.7}
\centering
\begin{tabular}{|c|c|l|}
\hline
  Reference  & Function class  & \multicolumn{1}{c|}{Seed length of PRG}   \\ \hline
 \cite{GOWZ10}   & Monotone functions of $m$ halfspaces &
 \begin{tabular}{@{}l@{}}
$O((m \log(m/\delta) + \log n) \cdot \log(m/\delta))$  \vspace{-6pt} \\
$O(m \log(m/\delta) + \log n),$ \text{if $m/\delta \le \text{any } \polylog(n)$}
 \end{tabular}
 \\  [1.2em] \hline
\cite{HKM12} & Intersections of $m$ $\tau$-regular halfspaces &
$\poly(\log m,1/\delta)\cdot \log n,$ \text{if $\tau \le \text{some }\poly(\frac{\delta}{\log m})$}
 \\ [.2em] \hline
 \cite{ST17} &  Intersections of $m$ weight-$W$ halfspaces  &
 $\poly(\log m, W, 1/\delta)\cdot \polylog\,n$ \\ [.2em] \hline \hline
  {\bf This work} & Intersections of $m$ halfspaces & $\poly(\log m, 1/\delta)\cdot \log n$  \\ [.2em] \hline
\end{tabular}
\caption{PRGs for intersections of halfspaces over $\zo^n$}
\label{prior}
\end{adjustwidth}
\end{table}

Summarizing the prior state of the art on PRGs over Boolean space, there were no PRGs that could fool intersections of $m = n$ many general halfspaces, and relatedly, the best PRG for intersections of $m\le n$ general halfspaces had a superlinear seed length dependence on $m$.  The PRGs that could fool intersections of $m\ge n$ halfspaces imposed technical restrictions on the halfspaces: either regularity (hence excluding simple halfspaces such as $\Ind[x_1 \le 1/2]$), or small weights (hence excluding almost all halfspaces). Please refer to \Cref{prior}.

%In fact, there were no non-trivial  algorithms (running in time $< 2^n$) for deterministically approximating the number of satisfying assignments of an intersection of $m=n$ general halfspaces.

  The main result of this paper is a PRG which fools intersections of $m$ general halfspaces with a polylogarithmic seed length dependence on $m$:

\begin{theorem}[PRG for polytopes]
\label{thm:main}
For all $n,m\in \N$ and $\delta \in (0,1)$, there is an explicit pseudorandom generator with seed length $\poly(\log m,1/\delta)\cdot \log n$ that $\delta$-fools the class of intersections of~$m$ halfspaces over $\zo^n$.
\end{theorem}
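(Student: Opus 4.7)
The plan is to reduce the general case to the regular case handled by \cite{HKM12} via a pseudorandom restriction argument. The guiding intuition is that under a random restriction, each individual halfspace tends to simplify: its ``heavy'' coordinates, once fixed, either push the halfspace into becoming nearly constant, or else the surviving free variables---which are mostly light---give a halfspace that is $\tau$-regular on the remaining coordinates. If such a restriction can be simulated pseudorandomly with a short seed, and if with high probability \emph{all} $m$ halfspaces become simultaneously regular or near-constant, then one may invoke the \cite{HKM12} PRG on the restricted polytope using the remaining seed.

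Concretely, fix a regularity threshold $\tau = \poly(\delta/\log m)$ as required by \cite{HKM12}. For each halfspace $H_i(x) = \Ind[w^{(i)}\cdot x \le \theta^{(i)}]$, declare coordinate $j$ \emph{heavy} if $|w^{(i)}_j| \ge \tau\|w^{(i)}\|_2$; each halfspace has at most $1/\tau^2$ heavy coordinates, though the union over all $m$ halfspaces may be as large as $\Theta(m/\tau^2)$, ruling out any naive fix-and-free strategy. The first step is to design a pseudorandom restriction of seed length $\poly(\log m,1/\delta)\cdot \log n$ with the property that, after restricting, with probability $1-\delta/\poly(m)$ every $H_i$ is either (i) fixed to a constant up to error $\delta/\poly(m)$, or (ii) $\tau$-regular on the remaining free coordinates. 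A natural route is an iterative restriction: pseudorandomly keep each coordinate free with some probability $p$, and argue via a ``switching-style'' lemma that for each halfspace the heavy mass concentrates enough to commit to a constant or to shrink below the regular bulk. The second step is to compose this with the \cite{HKM12} PRG on the surviving $\tau$-regular polytope, taking a union bound over the two error sources.

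The main obstacle is the first step: producing a pseudorandom restriction that handles all $m$ halfspaces simultaneously. A single round of pseudorandom restriction is unlikely to suffice, because even a light halfspace can become heavy again after restriction if an unlucky subset of its coordinates survives. I therefore expect the real argument to require several rounds of restriction, with each round reducing a complexity measure of the polytope (for example, the number of ``levels'' of heaviness, or the total influence mass above $\tau$). Each round uses a fresh pseudorandom block of $\poly(\log m,1/\delta)$ bits, so $\poly(\log m, 1/\delta)$ rounds still fit within the target seed length, provided one can argue termination after $\poly(\log m,1/\delta)$ steps.

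A secondary difficulty is joint control: even if each halfspace individually regularizes with high probability, one needs the error to be small enough for a union bound over $m$ halfspaces, which forces the per-halfspace regularization probability to be $1 - \delta/\poly(m)$. This drives the polynomial dependence on $\log m$ in the seed length. I anticipate that most of the technical work lies in quantifying how much a single pseudorandom restriction round simplifies a general halfspace, and in controlling the compounding of errors across rounds, whereas the final invocation of \cite{HKM12} on the regularized polytope is essentially black-box.
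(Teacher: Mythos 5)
Your plan takes a genuinely different route---a pseudorandom restriction that is supposed to simultaneously regularize or trivialize all $m$ facets, followed by a black-box invocation of~\cite{HKM12}---and as stated it has a gap in the first step that I do not see how to close.  The problem is that restriction simply does not make an irregular halfspace regular.  Take a single facet with super-decreasing weights, say $w=(2^n,2^{n-1},\dots,2,1)$: after any restriction, the surviving free coordinates still carry a super-decreasing sequence, so the restricted halfspace remains maximally irregular for every outcome.  The only way such a facet can land in your case (ii) is if essentially all of its $\Theta(\tau^{-2}\log(m/\delta))$ heavy coordinates get \emph{fixed}; and the only way it can land in case (i) is if, in addition, those fixed values push the threshold far past the tail mass.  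When the $m$ facets have disjoint heads, you would need to fix $\Omega(mk)$ coordinates, which can exceed $n$; when they overlap, you have no control over whether the shared coordinates help or hurt.  Your proposed fix---iterate and argue that some complexity measure drops each round---does not identify a measure that actually decreases: neither ``levels of heaviness'' nor ``influence mass above $\tau$'' is reduced by restricting a super-decreasing facet.  And your worry about a light halfspace becoming heavy after restriction is not the real obstruction (regularity is essentially preserved under restriction for the tail); the obstruction is that heavy halfspaces stay heavy.  Finally, even granting a restriction that kills all heads, you would not be done: the restriction's own pseudorandom bits must fool the intersection of the $m$ near-constant facets, which is an intersection of juntas, i.e.\ a CNF---so a CNF PRG is unavoidable and gets smuggled back in.

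The paper sidesteps all of this by \emph{not} restricting.  It uses critical-index theory to replace each irregular facet $w$ by a $k$-junta approximation $w'$ (zeroing the tail), and shows the polytope $\Ind[Ax\le b]$ agrees with the resulting $(k,\tau)$-standardized polytope $\Ind[A'x\le b']$ except with probability $\delta$ under \emph{every} $2k$-wise uniform distribution---so the agreement transfers both to $\bu$ and to the PRG output $\bz$ (this is why the generator {\sc Xor}s in a $2k$-wise uniform string).  The standardized polytope still has size-$k$ heads in every row; these are \emph{not} eliminated but carried into the Lindeberg hybrid argument, where the product structure of Bentkus's mollifier $\wt{\calO}_{b,\lambda}(v)=\prod_i \psi_{b_i}(v_i)$ reduces each head contribution to a width-$w$ CNF and is fooled by the GMR generator.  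Crucially, the final CDF step cannot invoke Nazarov's Gaussian anticoncentration as~\cite{HKM12} does, because no Gaussian appears anywhere; the paper instead proves a new Boolean Littlewood--Offord inequality for polytopes ($O(\sqrt{\ln m}/\sqrt{n})$) and a robust generalization of it.  None of these ingredients---the $2k$-wise-uniform junta approximation, the CNF-fooled heads inside the hybrid swap, Taylor-expanding Bentkus's mollifier around two \emph{different} center points, and the Boolean anticoncentration theorem---can be obtained by restricting first and then calling~\cite{HKM12} as a black box.
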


In particular, this PRG fools intersections of $\quasipoly(n)$ many halfspaces with seed length $\polylog(n)$, and its seed length remains non-trivial for intersections of exponentially many halfspaces ($\exp(n^c)$ where $c > 0$ is an absolute constant). %See~\Cref{fact:ourseedlength} and~\Cref{eq:sl} for a detailed statement of our seed length.

An immediate consequence of~\Cref{thm:main} is a deterministic algorithm that runs in time $n^{\polylog(m)}$ and additively approximates the number of solutions to any $n$-variable $\zo$-integer program with $m$ constraints.   Prior to our result, no non-trivial deterministic algorithm (running in time $< 2^n$) was known even for general $\zo$-integer programs with $m=n$ constraints.
\Cref{thm:main} also yields PRGs with comparable seed lengths for intersections of halfspaces over a range of other domains, such as the $n$-dimensional hypergrid $\{ 0,1,\ldots, N\}^n$ and the solid cube $[0,1]^n$ (details are left to the interested reader).

Our proof of \Cref{thm:main} involves several novel extensions of the central technique driving this line of work, namely Lindeberg-style proofs of probabilistic invariance principles and derandomizations thereof.  We develop these  extensions to overcome challenges which arise due to the generality of our setting; specifically, the fact that we are dealing with intersections of  arbitrary halfspaces, with no restrictions whatsoever on their structure.  One of the key new ingredients in our analysis, which we believe is of independent interest,  is a sharp high-dimensional generalization of the classic Littlewood--Offord anticoncentration inequality~\cite{LO:43,Erd:45} that we establish. We now describe our proof and the new ideas underlying it in detail.

\section{Overview of our proof}
\label{sec:overview}

\subsection{Background: the~\cite{HKM12} PRG for regular polytopes}

We begin by recalling the arguments of Harsha, Klivans, and Meka~\cite{HKM12} for fooling regular polytopes. At a high level,~\cite{HKM12} builds on the work of Meka and Zuckerman~\cite{MZ13}, which gave a versatile and powerful framework for constructing pseudorandom generators from probabilistic \emph{invariance principles}; the main technical ingredient underlying the~\cite{HKM12} PRG for regular polytopes is a new invariance principle for such polytopes, which we now describe.

\pparagraph{\cite{HKM12}'s invariance principle and the Lindeberg method.}   At a high level, the~\cite{HKM12} invariance principle for regular polytopes is as follows: given an $m$-tuple of regular linear forms over $n$ input variables $x=(x_1,\dots,x_n)$ (denoted by $Ax$, where $A$ is an $m$-by-$n$ matrix), the distribution (over $\R^m$) of $A \bu$, where $\bu \sim \bn$ is uniform random, is very close to the distribution of  $A \bg$, where $\bg \sim \normal^n$ is distributed according to a standard $n$-dimensional Gaussian.  Here closeness is measured by \emph{multidimensional CDF distance}; we observe that multidimensional CDF distance corresponds to test functions of the form $\Ind[Ax \leq b]$ where $b \in \R^m$, which synchs up precisely with an intersection of $m$ halfspaces $\Ind[A_1 x \leq b_1]  \wedge \cdots \wedge \Ind[A_m x \leq b_m].$    To prove this invariance principle,~\cite{HKM12} employs the well-known Lindeberg method (see e.g.~Chapter \S11 of~\cite{ODbook} and~\cite{Tao254A}) and proceeds in two main conceptual steps.  The first step establishes a version of the result for \emph{smooth test functions}, proxies for the actual ``hard threshold" test functions $\Ind[Ax \le b]$, and the second step relates distance with respect to these smooth test functions to multidimensional CDF distance via \emph{Gaussian anticoncentration}. We outline each of these two steps below.

The first step is to prove an invariance principle for \emph{smooth test functions}.   Here instead of measuring the distance between $A\bu$ and $A\bg$ using test functions that are orthant indicators $\calO_b(v_1,\ldots,v_m) = \Ind[v \le b]$ (corresponding to multidimensional CDF distance), distance is measured using a sufficiently smooth \emph{mollifier} $\wt{\calO}_b : \R^m \to [0,1]$ of~$\calO_b$.  Such mollifiers, with useful properties that we now discuss, were proposed and analyzed by Bentkus~\cite{Bentkus:90}.   In more detail, \cite{HKM12} prove that the difference between the expectations of $\wt{\calO}_b(A\bu)$ and $\wt{\calO}_b(A\bg)$ is bounded by a certain function involving $\wt{\calO}_b$'s derivatives. In fact, as in standard in Lindeberg-style proofs of invariance principles,~\cite{HKM12} actually bounds this difference with respect to \emph{any} smooth test function $\Upsilon : \R^m \to \R$ in terms of $\Upsilon$'s derivatives; the only specific property of Bentkus's mollifier $\wt{\calO}_b$ that is used is that its derivatives are appropriately small.  At a high level, the proof of this smooth invariance principle proceeds by hybridizing from $\Upsilon(A\bu)$ to $\Upsilon(A\bg)$, using the multidimensional Taylor expansion of $\Upsilon$ to bound the error incurred in each step. (The regularity of the linear forms is used in a crucial way to control the approximation error that results from truncating the Taylor expansion at a certain fixed degree.)
%\rnote{I like Li-Yang's rephrasing of this sentence; as a cheap hack, I wonder if we should stress/mention that regularity is crucial, to swell the applause when we amazingly sidestep it later.  Feel free to delete the violet if you don't like it.}

The second step is to establish the desired bound on multidimensional CDF distance using the aforedescribed smooth invariance principle applied to Bentkus's mollifier.  This step relies on a second key property of Bentkus's mollifier: $\wt{\calO}_b$ agrees with the orthant indicator $\calO_b$ except on a small error region near the orthant boundary.  With this property in hand, a fairly simple and standard argument shows that it suffices to bound the \emph{anticoncentration} of the \emph{Gaussian} random variable $A \bg$;  intuitively, such anticoncentration establishes that $A\bg$ does not place too much probability weight on the error region where $\wt{\calO}_b$ disagrees with $\calO_b$.  In \cite{HKM12}, the required anticoncentration for $A \bg$ follows immediately from a result of Nazarov~\cite{Nazarov:03,KOS:08} on the Gaussian surface area of $m$-facet polytopes.

\pparagraph{The~\cite{HKM12} PRG via  a derandomized invariance principle.}  Having proved this invariance principle for regular polytopes,~\cite{HKM12} then establish a \emph{pseudorandom} version by derandomizing its proof.   That is, they argue that their proof in fact establishes multidimensional-CDF-closeness between $A \bz$ and $A\bg$, where $\bg \sim \calN(0,1)^n$ is distributed according to a standard Gaussian as before, but $\bz \sim\bn$ is the output of a suitable pseudorandom suitable generator $\mathscr{G} : \bits^r \to \bn$ (rather than uniform random).   Combining the ``full-randomness" invariance principle (establishing closeness between $A\bu$ and $A\bg$) with this pseudorandom version (establishing closeness between $A\bz$ and $A\bg$), it follows from the triangle inequality that $A\bz$ and $A\bu$ are close.  Recalling that multidimensional CDF distance corresponds to test functions of the form $\Ind[Ax \le b] = \Ind[A_1 x \leq b_1] \wedge \cdots \wedge \Ind[A_m x \leq b_m]$, this is precisely equivalent to the claim that $\mathscr{G}$ fools the intersection of $m$ halfspaces with weight matrix $A \in \R^{m\times n}$ (and an arbitrary vector of thresholds $b \in \R^m$).

For later reference, we close this section with an informal description of the~\cite{HKM12} generator (for fooling  intersections of $m$ many $\tau$-regular halfspaces):
%\rnote{Ryan had pointed out that the description below is not quite precise because [HKM] use the 4-norm notion of regularity while we've defined regularity in the $\infty$-norm way.  I remember pooh-poohing this concern, saying that we say "informal description".  I'm not very consistent, though; reading this now, the description comes across as completely formal, and indeed is slightly inaccurate because of this issue.
%
%What do you think about replacing ``$L := 1/\tau$'' buckets" with ``$L := \poly(1/\tau)$'' buckets?  That is accurate (right?), and is somewhat more informal.
%Or we could just leave it as it is?}
\begin{enumerate}
\item Pseudorandomly hash the $n$ variables into $\bucks \coloneqq {\poly(1/\tau)}$ buckets using an $(r_\hash \coloneqq 2\log m)$-wise uniform hash function $\bh : [n] \to [\bucks]$.
\item Independently across buckets, assign values to the variables within each bucket using an $(r_\bucket \coloneqq 4\log m)$-wise uniform distribution.
\end{enumerate}
We remark that this is the structure of the Meka--Zuckerman generator~\cite{MZ13} for fooling a single regular halfspace, the only difference being that the relevant parameters $\bucks,r_\hash,$ and $r_\bucket$ are larger in~\cite{HKM12} than in~\cite{MZ13} (naturally so, given that the~\cite{HKM12} generator fools intersections of $m$ regular halfspaces instead of a single one).

Our analysis in this paper can be used to show that the~\cite{MZ13} generator, instantiated with suitable choices of $\bucks, r_\hash$, and $r_\bucket$, fools intersections of $m$ general halfspaces.  However, for technical reasons (that are not essential for this high-level discussion), this results in a seed length that is $\poly(\log m,1/\delta, \log n)$.  To achieve our seed length of $\poly(\log m,1/\delta)\cdot \log n$, we slightly extend the~\cite{MZ13} generator in two ways. First, within each bucket the variables are assigned using an $r_\bucket$-wise uniform distribution {\sc Xor}-ed with an independent draw from a generator that fools small-width CNF formulas~\cite{GMR13}.  Second, we {\sc Xor} the entire resulting $n$-bit string with an independent draw from a $k$-wise independent generator.
%\lnote{Should we say something about the global XOR, or no?}\rnote{Probably we should make our description of the generator here synch up with the actual generator --- we don't want people referring back and forth between this description and the one in Section 4 and getting stressed/confused because of an inconsistency -- so I'd say yes, let's briefly mention the global XOR.  How about something like
%
%``
%\violet{To achieve our seed length of $\poly(\log m,1/\delta)\cdot \log n$, we slightly extend the~\cite{MZ13} generator in two ways. First, within each bucket the variables are assigned using an $r_\bucket$-wise uniform distribution {\sc Xor}-ed with an independent draw from a generator that fools small-width CNF formulas~\cite{GMR13}.  Second, we {\sc Xor} the entire resulting $n$-bit string with an independent draw from a $k$-wise independent generator.}
%''}
(See~\Cref{sec:theprg} for a detailed description of our PRG.)

%%\bigskip
%%
%%\bigskip
%%
%%
%%\gray{
%%%Finally, once \cite{HKM12} have established the invariance principle, it's not too difficult to derandomize their proof and show that the \cite{MZ13} generator suffices.
%%
%%
%%%Explain the basic \cite{MZ13} generator (at a high level).\rnote{Do we need to explain the \cite{MZ13} generator here for this section to make sense?  I think maybe yes...?}
%%
%%
%%Dear reader, the actual generator of \cite{HKM12} and \cite{ST17} are both just the same as the \cite{MZ13} generator with ramped-up parameters; increasingly sophisticated analyses are employed in these later works.
%%
%%For our results, we could also use the \cite{MZ13} generator with ramped-up parameters, but this would give a seed length $\poly(\log n, \log m, 1/\delta)$.  In order to achieve a seed length of only $\log(n) \cdot \poly(\log m, 1/\delta)$, we use a slight variant of the \cite{MZ13} generator.  But for the rest of this section the reader is encouraged just to think about the \cite{MZ13} generator.
%%}

\subsection{Some key new ingredients in our analysis}
\label{sec:new-ingredients}

A fundamental challenge in extending the~\cite{HKM12} PRG result from regular to general polytopes stems from the fact that  an invariance principle \emph{simply does not hold} for general polytopes $Ax \le b$.  Without the regularity requirement on $A$, it is not true that $A\bu$ and $A\bg$ are close in CDF distance; indeed, even a single non-regular linear form such as $x_1$ is distributed very differently under $\bu \sim \bn$ versus $\bg \sim \calN(0,1)^n$.  This therefore necessitates a significant conceptual departure from the Meka--Zuckerman framework for constructing pseudorandom generators from invariance principles: rather than establishing closeness between $A\bu$ and $A\bz$  (where $\bz\sim\bn$ is the output of a suitable pseudorandom generator) through $A\bg$ by means of an invariance principle,  one has to establish closeness between $A\bu$ and $A\bz$ ``directly" without using invariance.

Somewhat surprisingly, even though an invariance principle does not hold in our setting of general polytopes, our proof nonetheless proceeds via the Lindeberg method for proving invariance principles.   Following the two main conceptual steps of the  method (as outlined in the previous section), we first prove that $A\bu$ and $A\bz$ are close with respect to Bentkus's smooth mollifiers $\wt{\calO}_b $ for the orthant indicators $\calO_b$, and then use this to establish closeness in multidimensional CDF distance.  However, the fact that we are dealing with matrices $A \in \R^{m\times n}$ whose rows are \emph{arbitrary} linear forms (corresponding to the facets of general $m$-facet polytopes) instead of regular linear forms poses significant challenges in both steps of the Lindeberg method.     We discuss some of these challenges, and the new ideas that we employ to overcome them, next.  For concreteness we will discuss these challenges and new ingredients by contrasting our proof with that of~\cite{HKM12}, but we remark here that these are in fact qualitative differences between our approach and the Lindeberg method in general.

\pparagraph{Step 1: Fooling Bentkus's mollifier.} Recall that~\cite{HKM12} first proves a general invariance principle establishing closeness in expectation (with a quantitative bound that depends on $\Upsilon$'s derivatives) between $\Upsilon(A\bu)$ and $\Upsilon(A\bg)$ for \emph{any} smooth test function $\Upsilon$.  They then apply this general invariance principle with Bentkus's orthant mollifier $\wt{\calO}_b$ being the test function, using the bounds on $\wt{\calO}_b$'s derivatives established in~\cite{Bentkus:90} but no other properties of $\wt{\calO}_b$.  %\rnote{maybe replace ``as a black box'' with ``but no other properties of $\tilde{\calO}_b$''?}.

In contrast, we do not prove closeness between $A\bu$ and $A\bz$ for all smooth test functions; our argument is carefully tailored to Bentkus's specific mollifier.  In addition to bounds on $\wt{\calO}_b$'s derivatives, we crucially rely on the specific structure of $\wt{\calO}_b$, in particular, the fact that it is the product of $m$ univariate functions, one for each coordinate (i.e.~$\wt{\calO}_b(v) = \prod_{i=1}^m \psi_{b_i}(v_i)$, where each $\psi_{b_i}$ maps $\R$  to $[0,1]$).
%\rnote{We've now moved away from $\Psi_{b_i}$ to $\tilde{\Ind}_{b_i,\lambda}$, but I think it's better to just write ``$\psi_{b_i}$'' here since we haven't introduced the $\tilde{\Ind}$ notation yet and it would be a digression to introduce it here.}
A high-level intuition for why such product structure is useful is as follows.  By doing some structural analysis of halfspaces (see \Cref{sec:reduction}), we can decompose each of our $m$ halfspaces into a small ``head'' portion, consisting of at most $k$ variables, and a remaining ``tail'' portion which is regular.   From this point of view, the difference between regular and general polytopes is therefore the presence of these size-at-most-$k$ head portions in each of the $m$ halfspaces.  Very roughly speaking, the product structure of~$\wt{\calO}_b$ allows us to handle these head portions using pseudorandom generators for \emph{small-width CNF formulas}~\cite{GMR13}. (To see the relevance of CNF formulas in this context, at least at a conceptual level, observe that a product of $\{0,1\}$-valued $k$-juntas is a width-$k$ CNF formula.)

Our proof incorporates these PRGs for CNFs into~\cite{HKM12}'s analysis of the regular tail portions.  We highlight one interesting aspect of our analysis: In all previous instantiations of the Lindeberg method that we are aware of,   expressions like $|\E[\Upsilon(\bv + \bDelta)]  -\E[\Upsilon(\bv + \bDelta')]|$ are bounded by considering two Taylor expansions of $\Upsilon$, both taken around the ``common point'' $\bv$.  Lindeberg method arguments analyze the difference of these Taylor expansions using moment-matching properties of $\bDelta$ and $\bDelta'$
%(typically using the fact that each of these random variables is independent of $\bv$)
and the fact that they are ``small" in a certain technical sense, which is directly related to the regularity assumptions that underlie these invariance principles.  In contrast, in our setting, since we are dealing with arbitrary linear forms rather than regular ones, we end up having to bound expressions like $|\E[\Upsilon(\bv + \bDelta)] - \E[\Upsilon(\bv' + \bDelta')]|$.  Note that this involves considering the Taylor expansions of $\Upsilon$ around two \emph{distinct} points $\bv$ and $\bv'$, which may be far from each other --- indeed, \emph{a priori} it is not even clear that $|\E[\Upsilon(\bv)]-\E[\Upsilon(\bv')]|$ will be small.
%(It also turns out that in our setting $\bv$ and $\bDelta$ are not independent, and neither are $\bv'$ and $\bDelta'$.) \rnote{Is this too much of a fast one, to say this but not point out that there was non-independence in \cite{ST17}?}
  Because of these differences from the standard Lindeberg scenario, moment-matching properties of $\bDelta$ and $\bDelta'$ and their ``smallness" no longer suffice to ensure that the overall expected difference is small.  Instead, as alluded to above, our analysis additionally exploits the product structure of Bentkus's mollifier via PRGs for CNFs to bound $|\E[\Upsilon(\bv + \bDelta)] - \E[\Upsilon(\bv' + \bDelta')]|$ (see \Cref{sec:fool-bentkus}).\ignore{\lnote{Rocco makes a good point about how $\bv$ and $\bDelta$ are also correlated in~\cite{ST17}, and so we cannot claim that this challenge is new to us.  I personally prefer to not mention this at all then.  It sort of distracts from our main point here, Tayloring around distinct points, anyway. \red{Rocco:  I'm happy with the treatment of this given above.}}}

\ignore{
%\gray{
%in all previous instantiations of the Lindeberg method that we are aware of, one argues that $|\E[\psi(\bx)]  -\E[\psi(\bx + \Delta(\bx))]|$ is small by applying (multidimensional) Taylor's theorem to $\psi$ at the point $\bx$ and at  a ``slight perturbation'' of it, $\bx + \Delta(\bx)$, and cancelling out various terms to show that the resulting difference of expectations has small magnitude.  In contrast, in our analysis because of the head portions of the linear forms we must apply multidimensional Taylor's theorem to two distinct points which may be quite far from each other (see the left-hand size of Equation~(\ref{eq:single-swap}).  In such a setting it is \emph{a priori} unclear that any useful cancellation can take place and that the expected difference of $\psi$ at these two distinct points will be small; as alluded to above, results on fooling CNF formulas play a crucial role in enabling us to argue that this is indeed the case.
%}
}

\pparagraph{Step 2: Anticoncentration.} The next step is to pass from closeness of $\wt{\calO}_b(A\bu)$ and $\wt{\calO}_b(A\bz)$  in expectation, to closeness of $A\bu$ and $A\bz$ in multidimensional CDF distance.   We recall that in the analogous step in~\cite{HKM12}'s proof, the starting point was  closeness in expectation of $\wt{\calO}_b(A\bu)$ and $\wt{\calO}_b(A\bg)$, where $\bg \sim \calN(0,1)^n$ is a standard Gaussian (instead of $\wt{\calO}_b(A\bz)$ where $\bz\sim\bn$ is pseudorandom).  For this reason, it sufficed for~\cite{HKM12} to bound the Gaussian anticoncentration of $A\bg$, and as mentioned, such a bound is an immediate consequence of Nazarov's bound on the Gaussian surface area of $m$-facet polytopes.

In contrast, since the Gaussian distribution does not enter into our arguments at all (by necessity, as explained above), we instead have to bound the \emph{Boolean} anticoncentration of $A\bu$ where $\bu \sim \bn$ is uniform random.   This task, which is carried out in~\Cref{sec:anticonc}, requires significantly more work; indeed, Boolean anticoncentration formally contains Gaussian anticoncentration as a special case.  At the heart of our arguments for this step is a new Littlewood--Offord-type anticoncentration inequality for \mbox{$m$-facet} polytopes, a high-dimensional generalization of the classic Littlewood--Offord theorem~\cite{LO:43,Erd:45}.  We discuss this new theorem, which we believe is of independent interest, next.

\subsubsection{A Littlewood--Offord theorem for polytopes}
\label{sec:LO}

 We first recall the classic Littlewood--Offord anticoncentration inequality.
% \ignore{\onote{{I would hesitate to call this the Littlewood--Offord theorem per se.  LO/Erdos show this result not for equalling some $\theta$.  Rather, they assume all weights are at least $1$ in abs-value and then bound the probability of hitting an interval of width~$2$ (stated as ``width~$1$'' when you're considering $0$-$1$ sums rather than $\pm 1$ sums).  Formally, the latter statement easily implies the former, but the reverse looks doubtful to me. See, e.g., Chapter~4 of Bollobas's ``Combinatorics''.  Speaking of which, we should probably compare to Kleitman's 1970 high-dimensional Littlewood--Offord, which if I understand correctly, says that in our setting, if you assume every column of~$A$ has $2$-norm at least~$1$, then the probability of falling into any open ball of radius~$2$ is at most $O(1/\sqrt{n})$ (in fact $2^{-n} \binom{n}{n/2}$ again). (Note: any detail I wrote in this paragraph has a 50\% chance of being inaccurate.)} \blue{Li-Yang: I've incorporated this.}}
% }

 \begin{theorem}[Littlewood--Offord]
 \label{thm:LO}
 For all $\theta \in \R$ and $w \in \R^n$ such that $|w_j| \ge 1$ for all $j\in [n]$,
 \[ \Pr\bracks*{w\cdot \bu \in (\theta-2,\theta\,]} = O\parens*{\frac1{\sqrt{n}}},  \]
 where $\bu\sim\bn$ is uniformly random.
 \end{theorem}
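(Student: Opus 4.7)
My plan is to follow the classical Sperner/Dilworth proof of Erd\H{o}s. First, by flipping signs as necessary (replacing $\bu_j$ by $-\bu_j$, or equivalently $1-\bu_j$, for each $j$ with $w_j < 0$), we may shift $w\cdot \bu$ by a constant, absorb it into $\theta$, and assume that $w_j \geq 1$ for every $j\in [n]$. This step preserves the distribution of $\bu$ on the Boolean cube.

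Next, identify each $\bu \in \bn$ with its support $S_{\bu}\subseteq [n]$ and set $f(S) := w \cdot \bu$. The crucial quantitative fact is that $f$ is strictly increasing along containment chains with a controlled gap: if $S \subsetneq T$ then $f(T) - f(S)$ equals (up to a constant factor coming from the encoding of the cube) $\sum_{j \in T\setminus S} w_j \geq |T \setminus S| \geq 1$. Hence along any chain $S_1 \subsetneq S_2 \subsetneq \cdots \subsetneq S_r$ in the Boolean lattice, the values $f(S_i)$ are strictly increasing with consecutive gaps at least $1$, so at most $2$ of them can fall into the length-$2$ interval $(\theta-2, \theta]$.

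Finally, apply the symmetric chain decomposition of $2^{[n]}$ (a standard consequence of Dilworth's theorem, equivalently of Sperner's inequality), which partitions the Boolean lattice into exactly $\binom{n}{\lfloor n/2\rfloor}$ chains. Summing the per-chain bound yields
\[
\#\bigl\{\bu \in \bn : w\cdot \bu \in (\theta-2, \theta]\bigr\} \;\leq\; 2\binom{n}{\lfloor n/2\rfloor} \;=\; O\!\left(\frac{2^n}{\sqrt{n}}\right)
\]
by Stirling, and dividing by $2^n$ gives the claimed $O(1/\sqrt{n})$ bound.

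There is no genuine obstacle here---this result is classical---but for the purposes of the paper the feature worth highlighting is the tight matching between the window length $2$ and the lower bound $|w_j|\ge 1$: the chain argument extracts only $O(1)$ points per chain precisely when the window length is at most $O(\min_j |w_j|)$. Any high-dimensional generalization to polytopes along the lines previewed in \Cref{sec:LO} will presumably need to reproduce this quantitative matching between the ``thickness'' of the slab (or polytopal region) and the ``granularity'' of the coefficient rows.
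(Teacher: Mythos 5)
Your proof is correct, and it is the standard Erd\H{o}s argument. For context: the paper does not prove \Cref{thm:LO} at all; it is stated as a classical fact with citations to \cite{LO:43} and \cite{Erd:45}, so there is no ``paper's proof'' to compare against. What you have written is a faithful rendering of Erd\H{o}s's sharpening, reducing to positive weights and bounding the number of solutions per chain in a chain decomposition of the Boolean lattice.

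Two small points of precision. First, the paper works over $\bn=\{-1,1\}^n$, where flipping a single coordinate changes $w\cdot u$ by $2|w_j|\ge 2$; so the per-chain gap is in fact at least $2$, not $1$, and the count per chain is at most $1$ (one could even take a \emph{closed} interval of length $2$ and still get at most $2$ per chain). Your parenthetical ``up to a constant factor coming from the encoding'' papers over exactly this, and of course it does not affect the $O(1/\sqrt n)$ conclusion, but the tight interplay between window width and $\min_j|w_j|$ is cleaner in the $\{-1,1\}$ normalization. Second, you do not need the full symmetric chain decomposition (de Bruijn--Tengbergen--Kruyswijk); Dilworth's theorem \emph{together with} Sperner's theorem already gives a partition of $2^{[n]}$ into exactly $\binom{n}{\lfloor n/2\rfloor}$ chains, which is all the argument uses. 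Your closing remark is apt: the high-dimensional generalization in \Cref{thm:LO-for-polytopes} is proved by an entirely different (Fourier/edge-boundary) method precisely because the chain-decomposition strategy does not obviously extend to $m$ simultaneous constraints, and the ``thickness versus granularity'' matching there must be established through the thin-boundary condition of \Cref{lem:thin-boundary} rather than a chain argument.
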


Littlewood and Offord~\cite{LO:43} first proved a bound of $O((\log n)/\sqrt{n})$; Erd\"os~\cite{Erd:45} subsequently sharpened this to  $O(1/\sqrt{n})$, which is optimal by considering $w = 1^n$ and $\theta = 0$. (We observe that the question trivializes without the assumption on the magnitudes of $w$'s coordinates; for instance, the relevant probability is $1/2$ for $w = (1,0,\ldots,0)$ and $\theta = 1$.)

\Cref{thm:LO} has the following natural geometric interpretation: the maximum fraction of hypercube points  that can fall within the ``width-$2$ boundary" of a halfspace $\Ind[w\cdot x \le \theta]$ where $|w_j|\ge 1$ for all $j$ is $O(1/\sqrt{n})$.  Given this geometric interpretation, it is natural to seek a generalization from single halfspaces (i.e.~$1$-facet polytopes) to $m$-facet polytopes:
\begin{quote}
\emph{What is the maximum fraction of hypercube points $u \in \bn$ that can lie within the ``width-$2$ boundary" of an $m$-facet polytope $A x \le b$ where\ignore{
%\rnote{The notation was ``$A_j^i$'' but $A_{ij}$ is more standard, no?  \blue{Li-Yang: Yeah, definitely.  But $A^i_j$ is slightly more consistent with our $A^B$ notation.  If we go with the more standard $A_{ij}$ let's make sure to change the highlighted equation after~\Cref{thm:fool-bentkus}, the definition of $H$ and $T$.} \red{Rocco: I changed those.  I think readers will have imbibed the $A_{ij}$ notation with their mothers' milk, and at this point we haven't introduced our $A^B$ notation yet, so let's go with $A_{ij}$.}
} $|A_{ij}| \geq 1$ for all $i$ and $j$? }
\end{quote}
In more detail, we say that $u$ lies within the ``width-$2$ boundary" of the polytope $Ax \le b$ provided $Au \le b$ and $A_i \cdot  u > b_i - 2$ for some $i\in [m]$; equivalently, $u$ lies in the difference of the two polytopes $Ax \le b$ and $Ax \le b - 2\cdot \Ind_m$, where $\Ind_m$ denotes the all-1's vector in $\R^m$.
The Littlewood--Offord theorem (\Cref{thm:LO}), along with a naive union bound, implies a bound of $O(m/\sqrt{n})$; we are not aware of any improvement of this naive bound prior to our work.
\ignore{
%\rnote{Can we add one or two sentences giving the main point of those high-dimensional generalizations? \blue{Li-Yang: I've incorporated this; see previous footnote above}}\onote{{I'm mildly nervous about highlighting the following precise form.  Didn't we have some kind of cheesy reduction from Kane that did something for the problem of landing \emph{exactly} on the boundary?  Maybe if you weren't too concerned about losing probability from strings $x$ that a hugely skewed number of $\pm 1$'s?  I vaguely remember this from the cafe where Witmer got free pastries. I think I'd be more confident in the novelty and also in the parallel with the actual Littlewood--Offord if we instead stated our ``clean'' theorem as: If all the weights are at least 1 in abs value, then the probability of following within a ``width-1 strip [you know what I mean]'' is at most $O(\sqrt{\log m}/\sqrt{n})$.} \blue{Li-Yang: I've incorporated this.}}
}

%\rnote{removed ``In this paper'' starting the sentence, since to my ear that's what one says when one is introducing the main result of a paper}
We give an essentially complete answer to this question, with upper and lower bounds that match up to constant factors.  In~\Cref{sec:anticonc}  we prove the following ``Littlewood--Offord theorem for polytopes'':

\begin{theorem}[Littlewood--Offord theorem for polytopes]
\label{thm:LO-for-polytopes}
There is a universal constant $C$ ($C = 5\sqrt{2}$ suffices) such that the following holds.  For all $b \in \R^m$ and $A \in \R^{m\times n}$ with
$|A_{ij}| \geq 1$ for all $i \in [m]$ and $j \in [n]$,
\[ \Pr\bracks*{A\bu \le b\ \& \   A_i \cdot \bu > b_i-2 \text{ for some $i\in [m]$}} \leq \frac{C\sqrt{\ln m}}{\sqrt{n}},\]
where $\bu \sim\bn$ is uniformly random.
\end{theorem}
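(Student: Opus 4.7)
The plan is to reformulate the event as a one-dimensional anticoncentration statement for the maximum $\mathbf{M} := \max_{i\in[m]}(A_i \cdot \bu - b_i)$: the event occurs exactly when $\mathbf{M} \in (-2,0]$. Since $|A_{ij}| \ge 1$ forces $\|A_i\|_2 \ge \sqrt{n}$, the classical Littlewood--Offord theorem (\Cref{thm:LO}) gives $\Pr[A_i \cdot \bu \in (b_i-2, b_i]] = O(1/\sqrt{n})$ per facet, so a naive union bound produces only the weaker estimate $O(m/\sqrt{n})$. The target bound $O(\sqrt{\ln m}/\sqrt{n})$ matches exactly what one obtains in the Gaussian analog (replace $\bu$ by $\bg \sim \normal^n$): Nazarov's $O(\sqrt{\ln m})$ bound on the Gaussian surface area of $m$-facet polytopes, combined with the fact that the standardized strip width is at most $2/\sqrt{n}$ per facet, bounds the Gaussian shell probability by $O(\sqrt{\ln m}/\sqrt n)$. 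The task is thus to establish a Boolean analog of Nazarov's bound for this specific ``shell probability'' quantity.

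My first attempt would be to adapt the symmetric-chain proof of the classical Littlewood--Offord theorem. In that proof, for a single facet with $|w_j| \ge 1$, one complements variables so that $w$ is coordinatewise positive and partitions $\bn$ into $\binom{n}{\lfloor n/2\rfloor}$ symmetric chains along which $w \cdot u$ is monotone with steps of magnitude at least~$1$; any strip of width~$2$ then contains at most $2$ elements per chain. The principal obstacle in the multi-facet case is that different rows $A_i$ generally have different sign patterns, so no single monotone chain makes all $m$ linear forms monotone simultaneously; this is precisely why a direct generalization of Erd\"os's argument breaks down and the naive union bound loses a full factor of $m$.

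To overcome this, I would combine chain-style bounds with a dyadic decomposition in the spirit of Dudley chaining. For each dyadic scale $2^k$ (ranging up to $O(\sqrt{n \ln m})$), let $N_k(u) = |\{i : A_i \cdot u \in (b_i - 2^{k+1}, b_i]\}|$; Hoeffding's inequality applied to each $A_i \cdot \bu$ (using $\|A_i\|_2 \ge \sqrt n$) bounds $\E[N_k(\bu)]$ by $m \cdot e^{-\Omega(2^{2k}/n)}$, so only $O(\sqrt{\ln m})$ facets are effectively ``dominant'' at the relevant scale $2^k \sim \sqrt{n}$. The event $\mathbf{M} \in (-2,0]$ is then partitioned by which scale contains the gap to the nearest facet; within each piece one identifies the dominant facet for $\bu$, applies the single-facet Littlewood--Offord bound there, and sums over the $O(\sqrt{\ln m})$ effective candidates. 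This yields the target $O(\sqrt{\ln m}/\sqrt n)$ and is analogous to how the expected maximum of $m$ i.i.d.\ subgaussians is $O(\sqrt{\ln m})$.

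The main technical difficulty is handling the correlation between the polytope-membership constraint (a joint constraint on all $m$ facets) and the in-shell constraint (which singles out one facet); these two constraints cannot be cleanly decoupled. The plan is to condition on the identity of the dominant facet $i^\star(\bu) := \arg\min_i(b_i - A_i \cdot \bu)$ and bound the conditional probability by single-facet Littlewood--Offord applied to facet $i^\star$, with the $\sqrt{\ln m}$ factor entering via the Hoeffding-type tail estimate for which facet is dominant. This mirrors the structure of Nazarov's proof of the Gaussian surface-area bound, transposed to the Boolean cube.
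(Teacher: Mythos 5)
Your proposal identifies the right difficulty---the union bound costs a full factor of~$m$ because different facets don't share a monotone chain structure---but the proposed fix has a genuine gap.  The central step, ``condition on $i^\star(\bu)=i$ and apply single-facet Littlewood--Offord to facet $i^\star$,'' does not work.  The Littlewood--Offord theorem requires $\bu$ to be \emph{uniform}, and after conditioning on the event $\{i^\star(\bu)=i\}$ the distribution of $\bu$ becomes a complicated nonuniform distribution determined by all $m$ rows of $A$.  There is no general anticoncentration bound for a single linear form under such a conditioned distribution; indeed, anticoncentration is notoriously fragile under conditioning, and the events $\{A_i\cdot\bu \in (b_i-2,b_i]\}$ and $\{i^\star(\bu) = i\}$ are far from independent (being dominant tends to push $A_i\cdot\bu - b_i$ upward).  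Similarly, the Dudley-style dyadic decomposition is a tool for bounding the \emph{expected maximum} $\E[\mathbf M]$, not for bounding the probability that $\mathbf M$ lands in a fixed narrow interval $(-2,0]$; the two problems have completely different structure, and the Hoeffding tails that control $\E[N_k]$ do not yield any control on the shell probability at the fixed scale~$O(1)$.  Your own remark that the joint membership constraint ``cannot be cleanly decoupled'' from the shell constraint is exactly the unsolved problem, and the proposed resolution does not actually decouple them.

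The paper's proof takes a completely different, combinatorial route that sidesteps conditioning altogether.  It generalizes to intersections of arbitrary unate functions and works with hypercube \emph{edges} rather than points: introducing inner sets $H_i \subseteq \ol{H}_i$ with thin boundaries, it shows the boundary $\partial F$ of the polytope is touched by few hypercube edges.  Edges touching $\partial F$ are partitioned into boundary-to-interior (bounded via Kane's average-sensitivity theorem applied to $\bigcap H_i$), boundary-to-exterior (same, applied to $\bigcap \ol H_i$), and a third type, boundary-to-boundary$'$, controlled by a careful accounting of ``exterior-to-cap'' edges using \Cref{lem:kane-ineq} (a Level-1 Fourier inequality) and the concavity of $p\mapsto 2p\sqrt{2\ln(1/p)}$.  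The $\sqrt{\ln m}$ factor arises not from Gaussian tail estimates but from the concavity argument $\sum_i U(\vol(C_i)) \le m\cdot U(1/m)$ applied to the cap volumes.  This is precisely the part of Nazarov's argument that does \emph{not} transpose directly to Boolean space (since the Gaussian argument relies on rotational invariance), and is where the paper supplies the genuinely new idea that your proposal is missing.
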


Our proof of~\Cref{thm:LO-for-polytopes} draws on and extends techniques from Kane's bound on the Boolean average sensitivity of $m$-facet polytopes~\cite{Kane14intersection}.
We complement~\Cref{thm:LO-for-polytopes} with a matching lower bound, which establishes the existence of an $m$-facet polytope with an $\Omega(\sqrt{\ln m}/\sqrt{n})$-fraction of hypercube points lying within its width-$2$ boundary. (In fact, our lower bound is slightly stronger: it establishes the existence of a polytope with an $\Omega(\sqrt{\ln m}/{\sqrt{n}})$-fraction of hypercube points lying on its \emph{surface}, corresponding to its width-0 boundary.)

\Cref{thm:LO-for-polytopes} does not suffice for the purpose of passing from closeness with respect to Bentkus's orthant mollifier $\wt{\calO}_b$ to closeness in multidimensional CDF distance (i.e.~Step 2 in~\Cref{sec:new-ingredients}):  while the assumption on the magnitudes of $A$'s entries is essential to~\Cref{thm:LO-for-polytopes} (just as the analogous assumption on $w$'s coordinates is essential to the Littlewood--Offord theorem), the weight matrix of a general $m$-facet polytope need not have this property.   In~\Cref{sec:anticonc} we establish various technical extensions of~\Cref{thm:LO-for-polytopes} that are required to handle this issue.

\begin{remark}
Our generalization of the Littlewood--Offord theorem (\Cref{thm:LO-for-polytopes}) is, to our knowledge, incomparable to other high-dimensional generalizations that have been studied in the literature. In particular, the papers~\cite{Kleitman70,FF88,TV12} (see also the references therein) study the probability that $A\bu$ falls within a ball of fixed radius in $\R^m$, where $A \in \R^{m\times n}$ is a matrix whose columns have $2$-norm at least $1$ (i.e.~$A\bu$ is the random $\pm 1$ sum of $n$ many $m$-dimensional vectors of length at least~$1$).
\end{remark}

\subsection{Relation to \cite{ST17}}  We close this section with a discussion of the connection between our techniques and those of the recent work~\cite{ST17}.  Recall that the main result of \cite{ST17} is a PRG for $\delta$-fooling intersections of $m$ weight-$W$ halfspaces using seed length $\poly(\log m, W,1/\delta)\cdot \polylog\, n$ (whereas our main result, which is strictly stronger, is a PRG for $\delta$-fooling intersections of $m$ general halfspaces using seed length $\poly(\log m, 1/\delta)\cdot \log n$, with no dependence on the weights of the halfspaces).

A key structural observation driving \cite{ST17} is that
%\gray{every low-weight halfspace must be either sparse (having only a small number of non-zero coefficients) or regular.  Since each sparse halfspace depends only on few variables (i.e.~is a junta), and an intersection of juntas is a small-width CNF, it follows that}
every intersection of $m$ low-weight halfspaces can be expressed as $H \wedge G$, where $H$ is an intersection of $m$ regular halfspaces and $G$ is a small-width CNF.    (The width of $G$ grows polynomially with the weights of the halfspaces, and this polynomial growth is responsible for the polynomial dependence on $W$ in the seed length of the~\cite{ST17} PRG.)  From this starting point, it suffices for~\cite{ST17} to bound the multidimensional CDF distance between the $(\R^{m} \times \{\pm 1\})$-valued random variables $(A\bu, G(\bu))$ and $(A\bz, G(\bz))$, where $A \in \R^{m\times n}$ is the weight matrix of $H$, $\bu$ is uniform random, and $\bz$ is the output of the~\cite{ST17} PRG (which is a slight variant of~\cite{HKM12}'s pseudorandom generator).  Since $H$ is an intersection of regular halfspaces, the fact that $A\bu$ and $A\bz$ are close in multidimensional CDF distance is precisely the main result of~\cite{HKM12}; the crux of the work in~\cite{ST17} therefore lies in dealing with the additional distinguished $(m+1)^{\text{st}}$ coordinate corresponding to the CNF $G$.   Very roughly speaking,~\cite{ST17} employs a careful coupling $(\hat{\bu},\hat{\bz})$ (whose existence is a consequence of the fact that bounded independence fools CNFs~\cite{Baz09,Raz09}) to ensure that  $G(\hat{\bu})$ and $G(\hat{\bz})$ almost always agree, and hence these $(m+1)^{\text{st}}$ coordinates ``have a negligible effect" throughout~\cite{HKM12}'s Lindeberg-based proof of the regular case establishing closeness between $A\bu$ and $A\bz$.

Because of the aforementioned structural fact (that an $m$-tuple of low-weight halfspaces is equivalent to ``an $m$-tuple of regular halfspaces plus a CNF"), the low-weight case analyzed in \cite{ST17} did not require as significant a departure from~\cite{HKM12}'s approach, and from  the Lindeberg method as a whole, as the general case which is the subject of this paper.  In particular, the new ideas discussed in~\Cref{sec:new-ingredients} that are central to our proof were not present in \cite{ST17}'s analysis for the low-weight case.  To elaborate on this,
\begin{itemize}[leftmargin=0.5cm]
\item[$\circ$]~\cite{ST17} did not have to exploit the product structure of Bentkus's orthant mollifier $\wt{\calO}_b$ in order to fool it.  Like~\cite{HKM12}, the arguments of~\cite{ST17} establish closeness in expectation between $\Upsilon(A \bu, G(\bu))$ and $\Upsilon(A \violet{\bz}, G(\violet{\bz}))$ for \emph{all} smooth test functions $\Upsilon$, and the only properties of Bentkus's mollifier that are used are the bounds on its derivatives given in~\cite{Bentkus:90} (which are used in a black box way).  The simpler setting of~\cite{ST17} also did not necessitate comparing the Taylor expansions of $\Upsilon$ around distinct points, as discussed in~\Cref{sec:new-ingredients}.

\item[$\circ$] \cite{ST17} did not have to reason about Boolean anticoncentration, which as discussed above requires significant novel conceptual and technical work, including our new Littlewood--Offord theorem for polytopes.  Like~\cite{HKM12},\cite{ST17} were able to apply Nazarov's Gaussian anticoncentration bounds  as a black box to pass from fooling Bentkus's mollifier to closeness in multidimensional CDF distance.
 \end{itemize}

\ignore{
\rnote{Possible subsection ``Relation to [CDS18]'' here if we go that route instead of discussing Gaussian PRGs in the intro}
}

\section{Preliminaries}  \label{sec:prelims}

For convenience, in the rest of the paper we view halfspaces as having the domain $\bn$ rather than $\zo^n$. 
%\lnote{Do we have to move this sentence about the $\zo \to \bits$ switch up to the start of \Cref{sec:overview}? 
%\red{Rocco:}  We could, but I don't think its absence at the start of~\Cref{sec:overview} is notable there, and if any poor reader gets lost as to whether we're over $\bn$ or $\zo^n$, probably the start of the preliminaries rather than the start of~\Cref{sec:overview} is where (s)he would go to look; so I think it's okay here} 
We remind the reader that a halfspace $F: \bn \to \zo$ is a function of the form $F(x) = \Ind[w \cdot x \leq \theta]$ for some $w \in \R^n$, $\theta \in \R$.
%\lnote{We also use `$w$' to denote the width of a CNF.  I guess there's zero risk of confusion, but do we want to change our weight vectors from $w$ to $a$?  I can do this at the very end. \red{Rocco:}  Ah, good point.  I think we don't absolutely have to do this, since we only use ``$w$'' in kind of a fleeting way as when we're discussing a generic LTF --- it isn't part of our construction when we use it for LTFs, whereas the ``$w$'' of CNF width is part of our construction.  But switching would be fine too of course modulo the hassle and potential for mistake :P }

For an $n$-dimensional vector $y$ and subset $B \subseteq [n]$, we write \violet{$y_B$}\ignore{$y_S$} to denote the $|B|$-dimensional vector obtained by restricting $y$ to the coordinates in $B$.  For an $m \times n$ matrix $A$ and subset $B \sse [n]$, we write $A^B$ to denote the $m\times |B|$ matrix obtained by restricting $A$ to the columns in $B$.  For indices $i \in [m]$ and $j \in [n]$, we write $A_i$ to denote the $n$-dimensional  vector corresponding to the $i$-th row of $A$, and $A^j$ to denote the $m$-dimensional  vector corresponding to the $j$-column of~$A$.

\subsection{Regularity, orthants, and Taylor's theorem} \label{sec:ROT}

%\noindent {\bf Regularity.}

\begin{definition}[$(k,\tau)$-regular vectors and matrices]
\label{def:regularity}
We say that a vector $w \in \R^n$ is \emph{$\tau$-regular} if
%\blue{$\| w \|_2 = 1$}
%\rnote{This is a little non-standard, no? I think of regularity as something that should be scale-invariant. But I guess the HKM definition $\sum_{j=1}^n w_j^4 \leq \tau^2 \sum_{j=1}^n w_j^2$ isn't scale-invariant either.  And I can see how terminologically some later statements like Lemma 5.1 are cleaner this way.
%
%Maybe we could use the term ``normalized'' to indicate that the two-norm is 1 or something?}
%and
$|w_j| \leq \tau \|w\|_2$ for all $j \in [n]$.
%\onote{changed this back, but haven't gotten around to updating the rest of this section, because I got caught up the question which Li-Yang's March 15 email fixed\dots}
%$\sum_{j=1}^n w_j^4 \le \tau^2 \cdot (\sum_{j=1}^n w_j^2)^2$.\ignore{
%\onote{I guess we might stick to using either $\|w\|_2^2$ or $\sum_j w_j^2$ in this definition but not both? {\bf Rocco:} How about minimizing notation and just writing it the way it is?}}
More generally, we say that $w$ is \emph{$(k,\tau)$-regular} if there is a partition $[n] = \textsc{Head} \sqcup \textsc{Tail}$ where $|\textsc{Head}|\le k$ and the subvector $w_{\textsc{Tail}}$ is $\tau$-regular.     We say that $w$ is \emph{$(k,\tau)$-standardized} if $w$ is $(k,\tau)$-regular and $\sum_{j \in \tail} w_j^2 = 1$.
We say that a matrix $A \in \R^{m\times n}$ is $\tau$-regular (respectively: $(k,\tau)$-regular,  $(k,\tau)$-standardized) if all its rows are $\tau$-regular (respectively: $(k,\tau)$-regular, $(k,\tau)$-standardized). We also use this terminology to refer to polytopes $Ax \leq b$.
\end{definition}

\paragraph{Translated orthants and their boundaries.} For $b \in \R^m$, we write $\calO_b \subset \R^m$ to denote the \emph{translated orthant}
\[ \calO_b = \{ v \in \R^m \colon v_i \le b_i \text{ for all $i\in [m]$}\}. \]
We will overload notation and also write ``$\calO_b$" to denote the indicator $\R^m \to \zo$ of the orthant~$\calO_b$ (i.e.,~$\calO_b(v) = \Ind[v \le b]$).
We write $\Game \calO_{b} \subset \calO_b$ to denote $\calO_b$'s \emph{surface},
\[ \Game \calO_b = \{ v \in \calO_b \colon v_i = b_i \text{ for some $i\in [m]$}\}. \]
For $\Lambda > 0$,
%%%\onote{It kind of stresses me that we use a capital letter for a small quantity\dots well, I guess I can deal\dots \blue{Li-Yang:} I'll fix this when things are more stable. \red{Rocco:}  Stating the obvious, this was chosen b/c of the affinity with $\lambda$ which it upper bounds, right?  It'd be nice to retain that connection so if we do switch it let's go with ``$\lambda'$'' or ``$\lambda_1$ or something similar.} 
we write $\Game_{-\Lambda}\calO_b$ and $\Game_{+\Lambda} \calO_b$ to denote the \emph{inner and outer $\Lambda$-boundaries of $\calO_b$},
\begin{equation} \label{eq:boundaries}
\Game_{-\Lambda}\calO_b = \calO_b \setminus \calO_{b - (\Lambda, \dots, \Lambda)},
\quad \quad \quad
\Game_{+\Lambda}\calO_b = \calO_{b + (\Lambda, \dots, \Lambda)} \setminus \calO_{b},
\end{equation}
\ignore{
\begin{align*}
\Game_{-\Lambda}\calO_b &= \calO_b \setminus \calO_{b - (\Lambda, \dots, \Lambda)}, \\
\Game_{+\Lambda}\calO_b &= \calO_{b + (\Lambda, \dots, \Lambda)} \setminus \calO_{b},
\end{align*}
}
and $\Game_{\pm\Lambda}\calO_b$ to denote the disjoint union $\Game_{\pm\Lambda}\calO_b = \Game_{+\Lambda}\calO_b \sqcup \Game_{-\Lambda}\calO_b$.

\paragraph{Derivatives and multidimensional Taylor expansion.}
We write $\psi^{(\tay)}$ to denote the $\tay$-th derivative of a $\calC^d$ function $\psi : \R \to \R.$
For an $m$-dimensional  multi-index $\alpha = (\alpha_1,\ldots,\alpha_m) \in \N^m$, we write $|\alpha|$ to denote $\alpha_1 + \cdots + \alpha_m$, and $\alpha!$ to denote $\alpha_1! \alpha_2! \cdots \alpha_m!$.  Given a vector $\Delta \in \R^m$, the expression $\Delta^\alpha$ denotes $\prod_{i=1}^m \Delta_i^{\alpha_i}$.  Given a function $\Upsilon : \R^m \to \R$, the expression $\partial_\alpha \Upsilon$ denotes the mixed partial derivative taken $\alpha_i$ times in the $i$-th coordinate.
%\gray{For $c \in \N$, we write
%\[ \| \Upsilon^{(c)} \|_1 \quad \text{to denote} \quad \sup_{v \in \R^m}\Bigg\{ \sum_{|\alpha|=c} |\partial_\alpha \Upsilon(v)|\Bigg\}. \]}

The following is a straightforward consequence of the multidimensional Taylor theorem, upper-bounding the error term by the $L_1$-norm of the derivatives times the $L_\infty$-norm of the offset-powers\ignore{\rnote{This phrase reads a little weirdly to me, but I guess it's okay.}}:
\begin{fact}[Multidimensional Taylor approximation]% \cite{Konigsberger02}\rnote{Let's cite something (doesn't have to be to this specific $L_1$/$L_\infty$ form, and doesn't have to be this book, but something other than Wikipedia :)}\onote{Could we not cite a whole book?  Actually, I propose not citing anything.  I mean, it's Taylor's theorem for goodness sake; it's not like we provide citations for the Cauchy--Schwarz inequality or the union bound :)}]
\label{fact:taylor}
    Let $\tay \in \N$ and let $\Upsilon : \R^m \to \R$ be a $\calC^\tay$ function. Then for all $v ,\Delta \in \R^m$,
%    \onote{Someone want to double-check this?  Old version seemed to be missing the $\alpha!$, for instance.  {\bf {\red{Rocco}:}}  Looks good to me now\dots}
    \[
        \Upsilon(v + \Delta) = \sum_{0 \le |\alpha| \le \tay-1} \frac{\partial_\alpha \Upsilon(v)}{\alpha!} \Delta^{\alpha} + \err(v,\Delta),
    \]
%    \[
%        \Upsilon(v + \Delta) = \sum_{0 \le |\alpha| \le \tay-1} \partial_\alpha \Upsilon(v) \Delta^{\alpha} + \err(v,\Delta),
%    \]
where
    \[
        \abs*{\err(v,\Delta)} \le
%        \sup_{v^* \in [v,v+\Delta]}\Bigg\{ \sum_{|\alpha|=\tay} \frac{|\partial_\alpha \Upsilon(v^*)|}{\alpha!}\Bigg\} \cdot \| \Delta\|_\infty^\tay \leq
        \sup_{v^* \in \R^m}\Bigg\{ \sum_{|\alpha|=\tay} |\partial_\alpha \Upsilon(v^*)|\Bigg\} \cdot \| \Delta\|_\infty^\tay.
    \]
%\gray{$|\err(v,\Delta)| \le \|\Upsilon^{(3)}\|_1 \cdot \| \Delta\|^3_\infty$.}\onote{I think the notation $\|\Upsilon^{(3)}\|_1$ is pretty misleading, no?  I mean, I don't actually know what $\Upsilon^{(3)}$ even notates, but it looks strongly to me like a\dots function?  (If $m = 1$ then the notation makes sense and it would be a function. I think for Bentkus it's a ``Frechet derivative'' but I don't know what that means.)  But $\|f\|_1$ generally denotes something like $\int_x |f(x)|$, whereas we're actually sup-ing over the domain of the function.  I'm not sure there \emph{is} amazing notation for the quantity we care about, but I think we can't really use $\|\Upsilon^{(3)}\|_1$. {\bf Rocco:} I agree that this notation is terrible; if we can't think of anything better, though, we should explicitly say that we are following \cite{HKM12}.}
\end{fact}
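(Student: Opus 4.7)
The plan is to reduce the multidimensional claim to the classical one-variable Taylor theorem by introducing the auxiliary function $f : [0,1] \to \R$ defined by $f(t) = \Upsilon(v + t\Delta)$, which is $\mathcal{C}^d$ since $\Upsilon$ is. First I would apply the one-variable Taylor expansion with integral remainder to $f$ at the basepoint $0$, evaluated at $t = 1$:
\[
\Upsilon(v+\Delta) \;=\; f(1) \;=\; \sum_{k=0}^{d-1} \frac{f^{(k)}(0)}{k!} \;+\; \frac{1}{(d-1)!} \int_0^1 (1-s)^{d-1}\, f^{(d)}(s)\, ds.
\]

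Next I would compute $f^{(k)}$ in closed form via the multivariate chain rule, which yields the multinomial identity
\[
f^{(k)}(t) \;=\; \sum_{|\alpha|=k} \binom{k}{\alpha}\,(\partial_\alpha \Upsilon)(v + t\Delta)\,\Delta^{\alpha}, \qquad \binom{k}{\alpha} \;=\; \frac{k!}{\alpha!},
\]
which is easily verified by induction on $k$. Substituting $t=0$ and dividing by $k!$ produces exactly the $k$-th-order contribution $\sum_{|\alpha|=k} \frac{\partial_\alpha \Upsilon(v)}{\alpha!}\,\Delta^\alpha$ in the claimed expansion, so once this identity is in hand the main term matches automatically and only the integral remainder needs to be estimated.

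For the remainder, I would bound $|f^{(d)}(s)|$ pointwise by pulling the absolute value inside the sum, using $|\Delta^\alpha| \le \|\Delta\|_\infty^{|\alpha|} = \|\Delta\|_\infty^d$, and relaxing $1/\alpha! \le 1$, to obtain
\[
|f^{(d)}(s)| \;\le\; d!\,\|\Delta\|_\infty^{d}\sum_{|\alpha|=d} |\partial_\alpha \Upsilon(v + s\Delta)|.
\]
Combining this with the identity $\frac{1}{(d-1)!}\int_0^1 (1-s)^{d-1}\,ds = \frac{1}{d!}$ cancels the factor of $d!$, and replacing the evaluation on the segment $\{v+s\Delta : s \in [0,1]\}$ by a supremum over $\R^m$ yields precisely the stated bound on $|\err(v,\Delta)|$.

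There is no substantive obstacle here: this is a standard consequence of the one-variable Taylor theorem, and the only moving piece is the multinomial form of the chain rule. The mildest subtlety is simply that the bound in the statement is a convenient relaxation of the sharper estimate $\sup_{v^*}\sum_{|\alpha|=d}\tfrac{|\partial_\alpha \Upsilon(v^*)|}{\alpha!}\cdot\|\Delta\|_\infty^d$ that the integral remainder naturally produces, obtained by absorbing the factor $1/\alpha!\le 1$.
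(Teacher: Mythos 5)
Your proof is correct and is exactly the standard argument the paper has in mind (the paper states \Cref{fact:taylor} without proof, only remarking that it follows from the multidimensional Taylor theorem). The reduction to the one-variable case via $f(t)=\Upsilon(v+t\Delta)$, the multinomial chain-rule identity $f^{(k)}(t)=\sum_{|\alpha|=k}\tfrac{k!}{\alpha!}\,\partial_\alpha\Upsilon(v+t\Delta)\,\Delta^\alpha$, the cancellation $\tfrac{1}{(d-1)!}\int_0^1(1-s)^{d-1}\,ds=\tfrac{1}{d!}$, and the relaxations $|\Delta^\alpha|\le\|\Delta\|_\infty^d$ and $1/\alpha!\le 1$ are all in order and yield precisely the stated bound.
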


\subsection{Pseudorandomness preliminaries}

Throughout this work we use \textbf{boldface} for random variables and random vectors.  If $\calD$ is a probability distribution, we write $\bx \sim \calD$ to denote that $\bx$ is drawn from that distribution.  For example, $\calN(0,1)$ will denote the standard normal distribution, so $\bg \sim \calN(0,1)$ means~$\bg$ is a standard Gaussian random variable.  In case~$S$ is a finite set, the notation $\bx \sim S$ will mean that $\bx$ is chosen uniformly at random from~$S$.  The most common case for this will be  $\bu \sim \{-1,1\}^n$, meaning that $\bu$ is chosen uniformly from $\{-1,1\}^n$. We will  reserve~$\bu$ for this specific random vector.

%When $S$ is a set the notations $\Prx_{\bx \sim S}[\cdot], \Ex_{\bx \sim S}[\cdot]$ indicate that the relevant probability or expectation is over a uniform draw of $\bX$ from set $S$.\lnote{I'm not sure we want to say this since we do write ``Let $\by\sim \bn$ be pseudorandom". \blue{Rocco: I think this is okay, we are defining here the notations $\Prx_{\bx \sim S}[\cdot], \Ex_{\bx \sim S}[\cdot]$ where the ``$\bx \sim S$'' is in the subscript of a $\Pr$ or an $\Ex$}}\onote{In fact, do we even use this notation ever?  I couldn't find a spot.}   Throughout the paper we use bold fonts such as $\bx,\bu,\bh,$ etc. to indicate random variables.
%Throughout the paper we write ``$\bu$'' to denote a uniform random string drawn from $\bn$.

We recall the definition of a pseudorandom generator:

\begin{definition}[Pseudorandom generator] \label{def:PRG}
    A function $\mathscr{G} : \bits^r \to \bits^n$ is said to \emph{$\delta$-fool a function $F: \bits^n \to \R$ with seed length $r$} if
    \[
        \Big|
        \Ex_{\bs \sim \bits^r}\big[F(\mathscr{G}(\bs))\big] -
        \Ex_{\bu \sim \bits^n}\big[F(\bu)\big]
        \Big| \leq \delta.
    \]
    Such a function $\mathscr{G}$ is said to be a \emph{explicit pseudorandom generator (PRG) that $\delta$-fools a class $\calF$ of $n$-variable functions} if $\mathscr{G}$ is computable by a deterministic uniform $\poly(n)$-time algorithm and $\mathscr{G}$ $\delta$-fools every function $F \in {\cal F}$.  We will also use the notation $\bz \sim \mathscr{G}$ to mean that $\bz = \mathscr{G}(\bs)$ for $\bs \sim \bits^r$.
\end{definition}

\pparagraph{Bounded independence and hash families.}
A sequence of random variables $\bx_1, \dots, \bx_n$ is said to be \emph{$r$-wise independent} if any collection of~$r$ of them is independent.  In case the $\bx_i$'s are uniformly distributed on their range, we say the sequence is \emph{$r$-wise uniform}.  We will also use this terminology for distributions $\calD$ on~$\{-1,1\}^n$.  An obvious but useful fact about $r$-wise uniform PRGs $\mathscr{G}$ is that they $0$-fool the class of degree-$r$ polynomials $\{-1,1\}^n \to \R$.

A distribution $\calH$ on functions $[n] \to [\bucks]$ is said to be an \emph{$r$-wise uniform hash family} if, for $\bh \sim \calH$, the sequence $(\bh(1), \dots, \bh(n))$ is $r$-wise uniform. Such a distribution also has the property that for any $\ell \in [\bucks]$, the sequence $(\Ind_{\bh(1) = \ell}, \dots, \Ind_{\bh(n) = \ell})$ is $r$-wise independent on $\{0,1\}^n$, with each individual random variable being Bernoulli$(1/\bucks)$.  Well-known constructions (see e.g. Section~3.5.5 of \cite{Vadhan12}) give that for every $n,\bucks$ and $r$, there is an $r$-wise uniform hash family $\calH$ of functions $[n] \to [\bucks]$ such that choosing a random function from $\calH$ takes $O(r  \log(nL))$ random bits (and evaluating a function from $\calH$ takes time $\poly(r, \log n, \log \bucks)$), and consequently there are known efficient constructions of $r$-wise uniform distributions over $\zo^n$ with seed length $O(r  \log n).$

\pparagraph{Fooling CNFs.}  Gopalan, Meka, and Reingold \cite{GMR13} have given an efficient explicit PRG that fools the class of small-width CNFs:

\begin{theorem} [PRG for small-width CNFs]\label{thm:GMR}  There is an explicit PRG $\mathscr{G}_{\mathrm{GMR}} = \mathscr{G}_{\mathrm{GMR}}(w,\delta_\CNF)$ that $\delta_\CNF$-fools the class of all width-$w$ CNF formulas over $\bn$ and has seed length
\[
O(w^2 \log^2(w \log(1/\delta_\CNF)) + w \log(w)\log(1/\delta_\CNF) + \log \log n).
\]
\end{theorem}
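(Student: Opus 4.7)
The plan is to construct $\mathscr{G}_{\mathrm{GMR}}$ via an iterated pseudorandom-restriction scheme in the spirit of Ajtai--Wigderson, performing $O(\log n)$ rounds of restriction. In round $i$, I pseudorandomly select a ``live'' set $\mathbf{T}_i \subseteq [n]$ (including each variable with probability $p = 1/\Theta(w)$) and pseudorandomly assign the variables outside $\mathbf{T}_i$. Both the selection of $\mathbf{T}_i$ and the assignments use $r$-wise uniform distributions with $r = \poly(w, \log(1/\delta_\CNF))$. After $O(\log n)$ rounds, only $O(\log(1/\delta_\CNF))$ live variables remain in expectation, at which point a single $r$-wise uniform distribution with $r = O(\log(1/\delta_\CNF))$ can be used to close out the remaining coordinates exactly.

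The analysis of a single round is the technical heart. First I would invoke H{\aa}stad's switching lemma: under a truly $p$-random restriction, a width-$w$ CNF collapses to a decision tree of depth $O(\log(1/\delta_\CNF))$ with probability at least $1 - \delta_\CNF/O(\log n)$. The crux is to derandomize both halves of the restriction: show that drawing $\mathbf{T}_i$ from an $r$-wise uniform hash family and drawing the assignments to $[n] \setminus \mathbf{T}_i$ from an $r$-wise uniform distribution reproduces this shrinkage up to the same error. This derandomization piggybacks on Bazzi's theorem \cite{Baz09} (simplified by Razborov \cite{Raz09}) that bounded independence fools CNFs with polynomially many clauses: once the switching lemma has shrunk the formula to a shallow decision tree on few variables, the same CNF is already a narrow CNF that bounded independence can fool outright.

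Combining the $O(\log n)$ rounds via the triangle inequality gives the overall error $\delta_\CNF$. Each round costs $O(r \log n) = O(w \log(w) \log(1/\delta_\CNF) + \log \log n)$ bits via standard constructions of $r$-wise uniform hash families---the $\log \log n$ factor emerges because the live/dead indicator needs only to be hashed into $[\poly \log(1/\delta_\CNF)]$ buckets rather than into $[n]$---and the additive $w^2 \log^2(w \log(1/\delta_\CNF))$ term arises from the final rounds, where the parameters must be sharpened to drive the error to the target.

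The main obstacle is the derandomized switching lemma: proving that a bounded-independence restriction reproduces the shrinkage behaviour of a truly $p$-random restriction, with a quantitative loss small enough that the errors over $O(\log n)$ rounds sum to at most $\delta_\CNF$. The bookkeeping is delicate because the live set, the pseudorandom assignment, and the residual CNF are all correlated; a clean way to organize the argument is to condition on the live set first, reducing the problem to fooling the restricted CNF on the non-live coordinates by bounded independence, and then paying a separate additive cost for the fact that the restricted formula's simplicity guarantee itself depends only $r$-wise on the random bits defining the live set.
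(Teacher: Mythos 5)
This is a \emph{cited} result, not one the paper proves: the paper invokes \cite{GMR13} as a black box, and there is no proof in the paper to compare against. So strictly speaking there is nothing for your argument to match.

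That said, your sketch does not match what \cite{GMR13} actually do, and as written it would not reach the stated seed length. Gopalan, Meka, and Reingold prove this theorem via \emph{DNF/CNF sparsification}: they show that every width-$w$ DNF, regardless of its number of terms, can be $\epsilon$-sandwiched above and below by width-$w$ DNFs with only $(w\log(1/\epsilon))^{O(w)}$ terms, and then they fool the resulting sparse formula (using bounded independence a la Bazzi--Razborov, and small-bias-type objects to shave the seed). There is no Ajtai--Wigderson-style iteration of pseudorandom restrictions and no switching lemma in that argument; the tiny $\log\log n$ dependence falls directly out of the fact that after sparsification the formula's complexity no longer references $n$ at all.

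Concretely, your round-by-round plan cannot produce the $\log\log n$ term. You propose $O(\log n)$ rounds, each consuming $\Omega(r)$ fresh random bits with $r = \poly(w, \log(1/\delta_\CNF))$; even if each round's seed were only $O(r + \log\log n)$, summing over $\Theta(\log n)$ rounds already forces a $\log n$ factor into the seed. This is exactly the regime where restriction-based PRGs (e.g.\ Trevisan--Xue-style generators for $\acz$) end up with $\polylog(n)$ seeds, which is what the extra machinery in \cite{GMR13} is designed to avoid. There is also a correctness concern in your round analysis: H{\aa}stad's switching lemma shrinks a width-$w$ CNF to a shallow decision tree, but you then want to say the residual object is already ``a narrow CNF that bounded independence can fool outright''; a depth-$d$ decision tree need not be a narrow CNF, and the translation costs width roughly $d$, which over many rounds compounds. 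The sparsification route sidesteps both of these difficulties by never restricting at all.
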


%\pparagraph{Bounded-\red{uniformity} distributions and hash families.}
%A distribution $\calD$ over $\bits^n$ is \emph{$r$-wise \red{uniform}} if for every $1 \leq j_1 < \cdots < j_r \leq n$ and every $(b_1,\dots,b_r) \in \bits^r$, we have
%\[
%\Prx_{\bX \sim \calD}\big[\bX_{j_1} = b_1 \text{~and~} \cdots \text{~and~}\bX_{j_r} = b_r\big] = 2^{-r}.
%\]
%A family $\calH$ of functions from $[n]$ to $[\bucks]$ is said to be an \emph{$r$-wise \red{uniform} hash family} if for every $1 \leq j_1 < \cdots < j_r \leq n$ and $(\ell_1,\dots,\ell_r) \in [\bucks]^r$, we have
%\[
%\Prx_{\bh \sim \calH}\big[\bh(j_1) = \ell_1 \text{~and~} \cdots \text{~and~} \bh(j_r)=\ell_r\big]=\bucks^{-r}.
%\]

%!TEX root = main.tex

\section{Our PRG}  \label{sec:theprg}

\noindent {\bf The Meka--Zuckerman generator.}  As stated earlier the PRG which we will analyze is a slight variant of a PRG first proposed by Meka and Zuckerman for fooling a single halfspace~\cite{MZ13}.  We begin by recalling the Meka--Zuckerman PRG.

\begin{definition}[Meka--Zuckerman generator]
\label{def:MZ}
The \emph{Meka--Zuckerman generator with parameters
\ignore{\rnote{Nit-picky: why do we include ``$n$'' as a parameter to $\mathscr{G}_{\mathrm{GMR}}$ but not to $\mathscr{G}_{\mathrm{MZ}}$? {\blue{Li-Yang:} Okay I removed $n$ as a parameter to GMR.}}}$\bucks, r_\hash,r_\bucket \in [n]$}, denoted $\mathscr{G}_{\mathrm{MZ}}$, is defined as follows.  Let $\bh : [n] \to [\bucks]$ be an $r_\hash$-wise uniform hash function.  Let $\by^{1},\ldots,\by^{\bucks} \sim \bn$ be independent random variables, each $r_\bucket$-wise uniform.  A draw from $\mathscr{G}_{\mathrm{MZ}} = \mathscr{G}_{\mathrm{MZ}}(\bucks,r_\hash,r_\bucket)$ is $\bz \sim \bn$ where
\begin{equation*}
  \bz_{\bh^{-1}(\ell)} = \by^{\ell}_{\bh^{-1}(\ell)} \quad \text{for all $\ell \in [\bucks]$.}
  \end{equation*}
\end{definition}

In words, an \ignore{$r_\bucket$}$r_\hash$-wise uniform hash $\bh$ is used to partition the variables $x_1,\dots,x_n$ into $\bucks$ ``buckets,'' and then independently across buckets, the variables in each bucket are assigned according to an $r_\bucket$-wise uniform distribution.

We note in passing that the generators of~\cite{HKM12,ST17} also have this structure (though the choice of parameters $L, r_\bucket$, and $r_\hash$ are different than those in~\cite{MZ13}).
\medskip

%\begin{fact}[Seed length of $\mathscr{G}_{\mathrm{MZ}}$]  \label{fact:MZseedlength}
%The seed length of the Meka--Zuckerman generator with parameters $\bucks, r_\hash$, and $r_\bucket$ is $O(r_\hash  \log(nL) + \bucks\cdot r_\bucket\log n).$
%\end{fact}

\noindent {\bf Our generator.} Now we are ready to describe our generator and bound its seed length.  Roughly speaking, our generator extends the Meka--Zuckerman generator by (i) additionally {\sc Xor}-ing each bucket with an independent pseudorandom variable that fools CNF formulas; and (ii) globally {\sc Xor}-ing the entire resulting $n$-bit string with an independent draw from a $2k$-wise uniform distribution.

%%\rnote{Maybe an expository sentence before the formal definition? (I have had the experience of checking out of a PRG paper when confronted with a long technical definition of a generator\dots) Perhaps something like ``Roughly speaking our generator extends the [MZ] generator by (i) XORing each bucket with an independent pseudorandom variable that fools small-width CNF formulas, and (ii) globally XORing the entire resulting $n$-bit string with an independent draw from a $2k$-wise uniform distribution over $\zo^n$.''}

\begin{definition} [Our generator] \label{def:ourgenerator}  Our generator, denoted $\mathscr{G}$, is parameterized by values $\bucks, r_\hash, r_\bucket,$ $k,w \in [n]$, $\delta_\CNF \in (0,1)$ and is defined as follows.  Let:
\begin{itemize}
\itemsep -.5pt
\item[$\circ$] $\bh, \by^1,\dots,\by^\bucks$ be defined as in the Meka--Zuckerman generator with parameters $L$, $r_\hash$, and $r_\bucket$.
\item[$\circ$] $\tilde{\by}^1,\dots,\tilde{\by}^\bucks \sim \bn$ be independent draws from $\mathscr{G}_{\mathrm{GMR}}(w,\delta_\CNF)$.
\item[$\circ$] $\by^\star \sim \bn$ be $2k$-wise uniform.
\end{itemize}
Define the random variable $\breve{\by} \sim \bn$ by
\[  \breve{\by}_{\bh^{-1}(\ell)} = (\by^{\ell} \oplus \tilde{\by}^\ell)_{\bh^{-1}(\ell)}   \qquad \text{for all $\ell \in [\bucks]$},\]
where $\oplus$ denotes bitwise {\sc Xor}.
%In words, $\breve{\by}$ differs from the Meka--Zuckerman generator (cf.~(\ref{eq:MZ})) in the following way: within each bucket, the $r_\bucket$-wise uniform assignment to the variables is additionally {\sc Xor}-ed with a draw from the \cite{GMR13} generator for small-width CNF formulas (a fresh independent draw for each bucket).
A draw from our generator $\mathscr{G} = \mathscr{G}(\bucks,r_\hash,r_\bucket,k,w,\delta_\CNF)$ is $\bz \sim \bn$ where $\bz = \breve{\by} \oplus \by^\star$.\end{definition}

%\red{TO BE UPDATED}
%In words, our generator $\mathscr{G}$ differs from $\mathscr{G}_{\mathrm{MZ}}$ in the following way: within each bucket, the variables in the bucket are additionally {\sc Xor}-ed with a draw from the \cite{GMR13} generator (a fresh independent draw for each bucket).
%
%By Fact~\ref{fact:MZseedlength} and Theorem~\ref{thm:GMR}, using $\bucks \leq n$, we have the following seed length bound for our PRG:

Recalling the standard constructions of $r$-wise uniform hash functions and random variables described at the end of~\Cref{sec:prelims}, we have the following:

\begin{fact}[Seed length]\label{fact:ourseedlength}
The seed length of our PRG $\mathscr{G}$ with parameters $\bucks, r_\hash,r_\bucket,k,w,\delta_\CNF$~is
\begin{align*}
\precsim &\ r_\hash \cdot  \log(nL) + \bucks\cdot r_\bucket \cdot \log n  \tag*{(Seed length for $\mathscr{G}_\mathrm{MZ}$)}\\
& +
L \cdot(w^2 \log^2(w \log(1/\delta_\CNF)) + w \log(w)\log(1/\delta_\CNF) + \log \log n) \tag*{($L$ copies of $\mathscr{G}_{\mathrm{GMR}}$)}\\
& + k\log n.\tag*{($2k$-wise uniform string)}
\end{align*}
\end{fact}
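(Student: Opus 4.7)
The plan is to simply sum the seed lengths of the independent pseudorandom components that together specify a sample from $\mathscr{G}$. By \Cref{def:ourgenerator}, a draw from $\mathscr{G}$ is determined by four mutually independent ingredients: (i) the $r_\hash$-wise uniform hash function $\bh : [n] \to [\bucks]$; (ii) the $\bucks$ independent $r_\bucket$-wise uniform strings $\by^1,\ldots,\by^\bucks \in \bn$; (iii) the $\bucks$ independent draws $\tilde{\by}^1,\ldots,\tilde{\by}^\bucks \sim \mathscr{G}_{\mathrm{GMR}}(w,\delta_\CNF)$; and (iv) the single $2k$-wise uniform string $\by^\star \in \bn$. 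Since these four sources of randomness are independent, the total seed length of $\mathscr{G}$ is the sum of the seed lengths needed for each one, and the bitwise {\sc Xor}-ing and the bucket-indexed assignment in \Cref{def:ourgenerator} are deterministic postprocessing steps that consume no additional randomness.

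I would then cite the building-block seed-length bounds recalled in \Cref{sec:prelims} (together with \Cref{thm:GMR}) to control each term. For (i), the standard construction of an $r_\hash$-wise uniform hash family mapping $[n] \to [\bucks]$ uses $O(r_\hash \cdot \log(n\bucks))$ random bits. For (ii), each $r_\bucket$-wise uniform distribution over $\bn$ can be sampled with $O(r_\bucket \cdot \log n)$ bits, and summing over the $\bucks$ buckets yields $O(\bucks \cdot r_\bucket \cdot \log n)$. For (iii), \Cref{thm:GMR} asserts that a single invocation of $\mathscr{G}_{\mathrm{GMR}}(w,\delta_\CNF)$ requires seed length $O(w^2 \log^2(w\log(1/\delta_\CNF)) + w\log(w)\log(1/\delta_\CNF) + \log\log n)$, so the $\bucks$ independent copies contribute $\bucks$ times this quantity. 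For (iv), a standard $2k$-wise uniform distribution over $\bn$ can be sampled using $O(k \log n)$ bits.

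Adding the four contributions together and absorbing all universal constants into the $\precsim$ yields exactly the three-line expression in the statement, with the three bracketed tags labeling precisely which block of components contributed which summand. There is no real obstacle in this argument: the fact is a direct bookkeeping consequence of \Cref{def:ourgenerator} together with the off-the-shelf seed-length bounds for $r$-wise uniform hash families, $r$-wise uniform strings over $\bn$, and the Gopalan--Meka--Reingold CNF generator quoted in \Cref{thm:GMR}. The only (minor) subtlety worth noting in passing is that the independence across the four ingredients is built into the definition of $\mathscr{G}$, so we incur no overhead beyond concatenating the underlying seeds.
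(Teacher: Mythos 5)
Your proposal is correct and is essentially the same bookkeeping the paper intends: the paper itself offers no explicit proof of Fact~\ref{fact:ourseedlength}, simply pointing to the standard seed-length bounds for $r$-wise uniform hash families and $r$-wise uniform strings (end of \Cref{sec:prelims}) together with \Cref{thm:GMR}, and leaving the summation over the four independent ingredients of \Cref{def:ourgenerator} implicit. Your expanded account of that summation is exactly the intended argument.
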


\subsection{Setting of parameters}
\label{sec:params}

We close this section with the parameter settings for fooling intersections of $m$ halfspaces over $\bn$.  Fix $\eps \in (0,1)$ to be an arbitrarily small \violet{absolute} constant; the parameters we now specify will be for fooling to accuracy $O_\eps(\delta) = O(\delta)$.    We first define a few auxiliary parameters:
\begin{align*}
\lambda &= \frac{\delta}{\sqrt{\log(m/\delta) \log m}}  \tag*{\small{(Dictated by \Cref{eq:our-lambda})}} \\
\tau &= \frac{\delta^{1+\eps}}{(\log m)^{2.5+\eps}}  \tag*{\small{(Dictated by \Cref{eq:params})}} \\
%\frac{\delta^{1+\eps}}{(\log m)^{1+\eps} \cdot  (\log(m/\delta))^{3/2}}  \\
d &= \text{constant depending only on $\eps$.}  \tag*{\small{(Dictated by \Cref{eq:params})}}
\end{align*}
The precise value of $d =d(\eps)$ will be specified in the proof of \Cref{thm:fool-bentkus}.    We will instantiate our generator $\mathscr{G} = \mathscr{G}(\bucks,r_\hash,r_\bucket,k,w,\delta_\CNF)$ with parameters:
{\setlength{\jot}{-3pt}
 \begin{align*}
 L &= \frac{(\log m)^5}{\delta^{2+\eps}}  \tag*{\small{(Constrained by \Cref{eq:params},}} \\
& \tag*{chosen to optimize seed length)}
\end{align*}}
\vspace*{-25pt}
\begin{align*}
r_\hash &= C_1 \log(Lm/\delta)  \tag*{\small{(Dictated by \Cref{prop:hash})}} \\
r_\bucket &= \log(m/\delta) \tag*{\small{(Dictated by \Cref{lem:hash})}}  \\
 k &= \frac{C_2 \log(m/\delta)\log\log(m/\delta)}{\tau^2} \tag*{\small{(Dictated by \Cref{thm:regularize})}}   \\
w &= \frac{2k}{L} \tag*{\small{(Dictated by \Cref{prop:hash})}} \\
 \delta_\CNF &= \frac{\delta}{L}\cdot  \parens*{\frac{\lambda}{m\sqrt{n}}}^{d-1}, \tag*{\small{(Dictated by \Cref{eq:params})}}
 \end{align*}
where $C_1$ and $C_2$ are absolute constants specified in the proofs of \Cref{prop:hash} and \Cref{thm:regularize} respectively.

\paragraph{Our seed length:} By \Cref{fact:ourseedlength}, our overall seed length is
\begin{equation}
\tilde{O}\parens*{\frac{(\log m)^{6+\eps}}{\delta^{2+\eps}}} \cdot \log n + \tilde{O}\parens*{\frac{(\log m)^{7+\eps}}{\delta^{2+\eps}}} = \poly(\log m,1/\delta) \cdot \log n
\label{eq:sl}
\end{equation}
for any absolute constant $\eps \in (0,1)$.

\begin{remark}
\label{rem:MZ-itself}
As alluded to in the introduction, our techniques can also be used to show that the Meka--Zuckerman generator itself fools the class of intersections of $m$ halfspaces over $\bn$.   However, this would require setting the parameters $L, r_\hash$, and $r_\bucket$ to be somewhat larger than the values used above, and would result in a slightly worse seed length of $\poly(\log m,1/\delta,\log n)$ than our $\poly(\log m,1/\delta)\cdot \log n$.  Briefly,
such an analysis would use the fact that bounded-uniformity distributions fool CNF formulas~\cite{Baz09,Raz09}; our analysis instead uses the (more efficient) \cite{GMR13} generator for this purpose.
\end{remark}

\section{Reduction to standardized polytopes} \label{sec:reduction}

\subsection{A reduction from fooling polytopes to fooling standardized polytopes}

In this section we reduce from the problem of fooling general $m$-facet polytopes to the problem of fooling $m$-facet $(k,\tau)$-standardized polytopes (\Cref{def:regularity}).  The main technical result we prove in this section is the following:

\begin{lemma}
[Approximating arbitrary polytopes by $(k,\tau)$\standardized polytopes under bounded-uniformity distributions]
\label{thm:regularize}
There is a universal constant $C_2 > 0$ such that the following holds.  Fix $m \geq 1$ and $0 < \delta, \tau < 1/2$ such that the right-hand side of \Cref{eq:k-value} below is at most~$n/2$.\ignore{\onote{Do we really need this thing about $n/2$? {\bf \red{Rocco}:}  I guess the definition of $(k,\tau)$-standardized still formally makes sense if $k > n$, but it would be nice to spare the reader from having to think about that wacky regime, and I think it comes in handy at the very end of the proof of \Cref{thm:regularize}. Is there a downside to keeping this thing about $n/2$?}}  Let
    \begin{equation} \label{eq:k-value}
        k = \frac{C_2 \log(m/\delta)\log\log(m/\delta)}{\tau^2}.%, \qquad r = O(\log(m/\delta))
    \end{equation}
For every $m$-facet polytope $Ax \leq b$ in $\R^n$, there is an $m$-facet $(k,\tau)$\standardized polytope $A'x \leq b'$ in $\R^n$ such that if $\by \sim \{-1,1\}^n$ is $2k$-wise uniform,
then
    \begin{equation}    \label{eqn:regularize-me}
        \Pr\bracks*{\Ind[A\by \leq b] \neq \Ind[A' \by \leq b']} \leq \delta.
    \end{equation}
\end{lemma}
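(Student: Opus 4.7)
The plan is to regularize each of the $m$ halfspaces separately and combine via a union bound. For each row $w \cdot y \le \theta$ of the polytope I produce a $(k,\tau)$\standardized halfspace $w' \cdot y \le b'$ whose indicator disagrees with $\Ind[w \cdot \by \le \theta]$ on at most a $\delta/m$ fraction of $2k$-wise uniform $\by$'s, and then sum the $m$ error contributions.

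Fix a single halfspace and relabel coordinates so that $|w_1| \ge |w_2| \ge \cdots \ge |w_n|$. Define the \emph{critical index}
\[
K = \min\bigl\{k' \ge 0 : w_{k'+1}^2 \le \tau^2 {\textstyle\sum_{j > k'}} w_j^2\bigr\}.
\]
\textbf{Case A ($K \le k$):} the subvector $w_{>K}$ is $\tau$-regular, so positive-scaling the halfspace by $1/\|w_{>K}\|_2$ (which preserves the underlying Boolean function) already yields an identical $(k,\tau)$\standardized halfspace. \textbf{Case B ($K > k$):} the defining inequality fails at every step $k' \le k$, forcing $\|w_{>k'+1}\|_2^2 \le (1-\tau^2)\|w_{>k'}\|_2^2$ and hence $\|w_{>k}\|_2^2 \le e^{-\tau^2 k} \|w\|_2^2$. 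The choice $k = C_2 \log(m/\delta) \log\log(m/\delta)/\tau^2$ is calibrated precisely so that $\|w_{>k}\|_2 / \|w_{\le k}\|_2$ is smaller than $\tau / \sqrt{\log(m/\delta)}$ by a further safety margin---this is where the $\log\log$ factor enters.

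In Case B I would replace the halfspace by its truncation $T(y) := \Ind[w_{\le k} \cdot y_{\le k} \le \theta]$, which is a $k$-junta, and then repackage $T$ as a $(k,\tau)$\standardized halfspace by setting $w'_j = \mu w_j$ for $j \le k$ and $w'_j = v_j$ for $j > k$ (taking $v_j = 1/\sqrt{n-k}$, which is $\tau$-regular because $n - k \ge n/2 \ge 1/\tau^2$), and $b' = \mu \theta$, for a sufficiently large scalar $\mu > 0$. After a generic infinitesimal perturbation of $\theta$ that leaves the original halfspace unchanged on $\{-1,1\}^n$ (needed so that $w_{\le k} \cdot y_{\le k} - \theta$ never vanishes), taking $\mu$ large enough makes $\mathrm{sign}\bigl(\mu(w_{\le k}\cdot y_{\le k} - \theta) + v \cdot y_{>k}\bigr) = \mathrm{sign}(w_{\le k}\cdot y_{\le k} - \theta)$ for every $y \in \{-1,1\}^n$, so the padded halfspace coincides with $T$ pointwise and is standardized by construction.

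It remains to bound $\Pr[T(\by) \ne \Ind[w \cdot \by \le \theta]] \le \delta/m$. Writing $\alpha = w_{\le k} \cdot \by_{\le k} - \theta$ and $\beta = w_{>k} \cdot \by_{>k}$, disagreement forces $|\alpha| \le |\beta|$, so for any $t > 0$ the probability is at most $\Pr[|\beta| > t] + \Pr[|\alpha| \le t]$. The first term is controlled by a Khintchine-type $2k$-wise moment bound on $\beta$ at $t = O(\sqrt{\log(m/\delta)})\,\|w_{>k}\|_2$, giving $\le \delta/(2m)$. The second is controlled by an anticoncentration bound for $\alpha$: since $\by_{\le k}$ is genuinely uniform under $2k$-wise uniformity and every head coefficient satisfies $|w_j| \ge |w_k| \ge \tau \|w_{>k}\|_2$ by the Case B condition, a Littlewood--Offord-type inequality yields $\Pr[|\alpha| \le t] \le \delta/(2m)$ provided the margin $\|w_{\le k}\|_2/t$ is large enough. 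The main obstacle, and the place where the $\log\log(m/\delta)$ factor is spent, will be delicately calibrating this margin so that the geometric gap $\|w_{\le k}\|_2 \gg \sqrt{\log(m/\delta)}\,\|w_{>k}\|_2/\tau$ closes the loop---the anticoncentration and tail-concentration bounds only combine cleanly when the head/tail norm ratio has room for the Khintchine penalty.
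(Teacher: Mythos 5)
Your overall architecture matches the paper's: regularize row by row, union-bound over $m$, split each bad row at the critical index into a head (fully uniform under $2k$-wise uniformity) and a tail, show the truncation to the head $k$-junta is a good approximator, and then repackage the junta as a $(k,\tau)$\standardized halfspace by padding with tiny equal tail weights and blowing up the head by a large scalar $\mu$. Cases A and B, the $\mu\to\infty$ packaging trick, and the tail bound $\Pr[|\beta|>t]\le\delta/(2m)$ via a $2k$-wise moment inequality are all essentially what the paper does.

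The gap is in the head anticoncentration step, and it is fatal as written. You invoke ``a Littlewood--Offord-type inequality'' to bound $\Pr[|\alpha|\le t]\le\delta/(2m)$. But Littlewood--Offord, even in its sharpest form, only gives $O(1/\sqrt{k})$ for the minimal-width window and $O\big(t/(\min_j|w_j|\cdot\sqrt{k})\big)$ for a window of width $2t$. Using your own estimate $\min_{j\le k}|w_j|\ge\tau\|w_{>k}\|_2$ and $t=\Theta\big(\sqrt{\log(m/\delta)}\big)\|w_{>k}\|_2$, Littlewood--Offord yields
\[
\Pr[|\alpha|\le t]\;=\;O\!\left(\frac{\sqrt{\log(m/\delta)}}{\tau\sqrt{k}}\right)\;=\;O\!\left(\frac{1}{\sqrt{\log\log(m/\delta)}}\right)
\]
for the stated value of $k$, which is nowhere near $\delta/(2m)$; the target $\delta/m$ can be exponentially small while any Littlewood--Offord-type bound bottoms out at $1/\sqrt{k}$. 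The ``margin $\|w_{\le k}\|_2/t$'' does not enter the Littlewood--Offord bound at all, so there is nothing to calibrate. What actually closes the loop in the paper is a much sharper anticoncentration estimate exploiting the \emph{geometric block structure} of the head forced by Case B (Lemma~\ref{lem:crit1}, following \cite{Servedio:07cc,GOWZ10,DGJ+10:bifh}): when the $\tau$-critical index exceeds $k$, the prefix $2$-norms $\|w_{>k'}\|_2$ decay by a factor $(1-\tau^2)$ at every step, so one can partition the head into $\Theta(\log(1/\eps))$ blocks with geometrically decreasing norms, and each block independently contributes a constant-probability event, giving $\Pr[|\alpha|\le t]\le\eps$ with $\eps$ as small as desired (paying in $k\sim\log(1/\eps)$). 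That, not Littlewood--Offord, is the engine. The $\log\log(m/\delta)$ factor in $k$ also does not come from ``calibrating a margin'' but from the $\log(s)$ dependence in the critical-index block count together with the improved sub-Gaussian tail bound $\exp(-s^2/2)$ for the head sum (replacing the fourth-moment bound $O(1/s^4)$ used in \cite{DGJ+10:bifh}), which is exactly where Remark~\ref{rem:hedge} says the savings over $\log^2(m/\delta)/\tau^2$ is won. You would need to replace your anticoncentration citation by this block-decomposition argument (or cite \cite[Theorem~5.3]{GOWZ10} / \cite[Theorem~5.4]{DGJ+10:bifh} directly and tweak the tail bound) to make the proof go through.
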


\begin{remark} \label{rem:hedge}
Had we been content in this theorem with the worse value of $k = O\parens*{\log^2(m/\delta)/\tau^2}$, then the result would essentially be implicit in~\cite[Theorem~5.4]{DGJ+10:bifh} (and \cite[Theorem~7.4]{GOWZ10}), using only $(k+2)$-wise uniformity.  To save essentially a $\log(m/\delta)$ factor, we give a modified proof in \Cref{sec:proof-thm-regularize}.
\end{remark}

We stress that \Cref{thm:regularize} establishes that $\Ind[Ax \leq b]$ is well-approximated by $\Ind[A'x \leq b']$ under \emph{both} the uniform distribution and the pseudorandom distribution constructed by our generator, since both of these distributions are $2k$-wise uniform.  (Note that a draw $\bz = \breve{\by} \oplus \by^\star$ from our generator is indeed $2k$-wise uniform, since $\by^\star$ is;
indeed,~\Cref{thm:regularize} is the motivation for why our construction includes a bitwise-{\sc Xor} with $\by^\star$.)
%\onote{We should carefully verify that our PRG is indeed $2k$-wise uniform.  This requires staring deeply at our parameters, so I don't know if we can just say this sentence casually. {\bf \red{Rocco}:}  Agreed; I think this and other aspects of this section's exposition will change when we implement LYT's excellent suggestion of having the PRG do a global XOR with an $O(k)$-wise independent distribution.}
This is crucial: in general, given a function $F$ and an approximator $F'$ that is close to $F$ only under the uniform distribution (i.e.~$\Pr[F(\bu) \neq F'(\bu)]$ is small), fooling~$F'$ does not suffice to fool $F$ itself.

Given \Cref{thm:regularize}, in order to prove \Cref{thm:main} it is sufficient to prove the following:

\begin{theorem}[Fooling $(k,\tau)$-standardized polytopes]
\label{thm:fool-k-tau-regular}
Let $\mathscr{G}$ be our generator with parameters as set in \Cref{sec:params}.   For all $m$-facet $(k,\tau)$-standardized polytopes $A'x \leq b'$,
\[ \bigg| \Prx_{\bu \sim \bn}\big[A'\bu \in \calO_{b'} \big] - \Prx_{\bz \sim \mathscr{G}}\big[ A'\bz \in \calO_{b'}\big] \bigg| =  O(\delta). \]
\end{theorem}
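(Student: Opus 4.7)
My plan is to follow the two-step Lindeberg strategy outlined in \Cref{sec:new-ingredients}, but adapted to the Boolean-to-Boolean setting since no invariance principle holds for general polytopes. I will insert Bentkus's orthant mollifier $\wt{\calO}_{b'}$ and apply the triangle inequality
\[
\bigl| \mathbb{E}[\calO_{b'}(A'\bu)] - \mathbb{E}[\calO_{b'}(A'\bz)] \bigr| \le \underbrace{\bigl| \mathbb{E}[(\calO_{b'} - \wt{\calO}_{b'})(A'\bu)] \bigr|}_{(\mathrm{A})} + \underbrace{\bigl| \mathbb{E}[\wt{\calO}_{b'}(A'\bu) - \wt{\calO}_{b'}(A'\bz)] \bigr|}_{(\mathrm{B})} + \underbrace{\bigl| \mathbb{E}[(\wt{\calO}_{b'} - \calO_{b'})(A'\bz)] \bigr|}_{(\mathrm{C})}.
\]
Terms (A) and (C) will be controlled by \emph{Boolean anticoncentration}, exploiting that $\wt{\calO}_{b'}$ agrees with $\calO_{b'}$ outside a $\Theta(\lambda)$-thickening of the orthant boundary $\bdry \calO_{b'}$. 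Term (B), the Bentkus-mollifier fooling step, will be handled by a Lindeberg-style hybrid across the $L$ buckets determined by the hash $\bh$.

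For term (B) I plan to hybridize one bucket at a time from $A'\bz$ to $A'\bu$. Since $A'$ is $(k,\tau)$-standardized, within each bucket the restricted contribution splits into a small \emph{head} piece (arising from the at-most-$k$ irregular coordinates of each row) and a $\tau$-regular \emph{tail} piece. For the tail I will Taylor-expand $\wt{\calO}_{b'}$ to degree $d-1$ about an anchor built from the other buckets' contributions, use $r_\bucket$-wise uniformity to match the first $d-1$ moments, and invoke \Cref{fact:taylor} together with Bentkus's derivative bounds to control the degree-$d$ remainder; this is essentially the \cite{HKM12} calculation, modulo the regularity parameter $\tau$ set in \Cref{sec:params}. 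For the head, whose coefficients are arbitrary and therefore not amenable to Taylor expansion, I will exploit the product structure $\wt{\calO}_{b'}(v) = \prod_i \psi_{b'_i}(v_i)$: conditioned on the other buckets, the expectation of this product over the head coordinates in bucket $\ell$ becomes a quantity that depends on at most $w = 2k/L$ input bits in a CNF-like manner and is therefore $\delta_\CNF$-fooled by the independent draw $\tilde{\by}^\ell$ from $\mathscr{G}_{\mathrm{GMR}}$ (\Cref{thm:GMR}). A standard $r_\hash$-wise uniformity calculation shows that no bucket contains more than $w$ head variables except with probability $O(\delta)$, so the CNF-PRG argument applies bucket-by-bucket; summing the $L$ per-bucket errors yields term (B) $= O(\delta)$.

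For terms (A) and (C), I will invoke \Cref{thm:LO-for-polytopes} together with technical extensions promised in \Cref{sec:anticonc}. Two strengthenings are required: first, from weight matrices whose entries all have magnitude $\ge 1$ to general $(k,\tau)$-standardized matrices, where only the tail portion of each row satisfies a lower bound on its coefficients; and second, from uniform $\bu$ to the $2k$-wise uniform $\bz$ (since $\by^\star$ renders the entire output $2k$-wise uniform, and Littlewood--Offord-style arguments are robust to bounded independence at this level). The resulting bound on the probability that $A'\bu$ or $A'\bz$ lands in the $\Theta(\lambda)$-thickening of $\bdry \calO_{b'}$ will be roughly $m\lambda \cdot \sqrt{\ln m}/\sqrt{n}$ up to lower-order factors, and our choice $\lambda = \delta/\sqrt{\log(m/\delta)\log m}$ (paired with $\wt{\calO}_{b'}$'s boundary scale) pushes this to $O(\delta)$.

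The main obstacle I anticipate lies inside term (B) when swapping a bucket whose head is nonempty. Unlike the regular case of \cite{HKM12}, the head-variable contributions under $\bu$ and under $\bz$ land at genuinely different anchor points $\bv, \bv' \in \R^m$, so a single Taylor expansion will not suffice and \emph{a priori} even $|\mathbb{E}[\wt{\calO}_{b'}(\bv)] - \mathbb{E}[\wt{\calO}_{b'}(\bv')]|$ need not be small. Reconciling the two Taylor expansions---one around $\bv$, one around $\bv'$---will require me to peel the product $\prod_i \psi_{b'_i}$ apart coordinate-wise and use the CNF PRG rather than derivative bounds alone, which is the conceptual departure from every previous Lindeberg instantiation I know of. Making the bookkeeping go through cleanly, in particular ensuring each per-bucket swap errs by $O(\delta/L)$ while $\delta_\CNF$ is taken polynomially small in $\lambda/(m\sqrt{n})$ as in \Cref{sec:params}, is where the bulk of the technical work will go.
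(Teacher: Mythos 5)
Your high-level plan---Bentkus mollifier plus Boolean anticoncentration plus a bucket-by-bucket Lindeberg hybrid with CNF PRGs handling the heads---matches the paper's strategy, and your observations about Taylor-expanding around distinct anchor points $\bv$, $\bv'$ and exploiting the product structure of $\wt{\calO}_{b'}$ are exactly the paper's key insights for the mollifier-fooling step (\Cref{thm:fool-bentkus}, \Cref{claim:fixed-alpha}). However, the way you split the quantity to be bounded contains a genuine gap that the paper is careful to avoid.

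Your triangle inequality produces the term
\[
(\mathrm{C}) = \bigl| \mathbb{E}\bigl[(\wt{\calO}_{b'} - \calO_{b'})(A'\bz)\bigr] \bigr|,
\]
which you must control by anticoncentration of $A'\bz$ where $\bz$ is the \emph{pseudorandom} output of $\mathscr{G}$. You dismiss this by asserting that ``Littlewood--Offord-style arguments are robust to bounded independence at this level,'' but that claim is not supported and is in fact false in the regime of interest: $2k$-wise uniformity with $k = \poly(\log m, 1/\delta) \ll n$ does \emph{not} imply anticoncentration of sums of $n$ terms (one can exhibit $2k$-wise uniform distributions on $\bn$ under which $\sum_j w_j \bz_j$ is highly concentrated). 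The paper's anticoncentration machinery (\Cref{thm:anticonc}, which rests on \Cref{lem:kane-ineq}, the edge-boundary accounting, and negative-association Chernoff bounds) is proved only for truly uniform $\bu$; none of those steps transfers to a $2k$-wise uniform distribution.

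The paper sidesteps this through \Cref{lem:soft-to-hard}, whose proof is deliberately \emph{asymmetric}: by working with \emph{both} the inner approximator $\wt{\calO}_{b^\inner,\lambda}$ and the outer approximator $\wt{\calO}_{b^\outter,\lambda}$ from \Cref{lemma:bentkus-sandwich}, and using the pointwise sandwiching from \Cref{fact:two-properties} (which requires no anticoncentration), the only anticoncentration term that survives is $\Pr\bigl[\bv \in \Game_{\pm\Lambda}\calO_b\bigr]$ for $\bv = A\bu$, the \emph{uniform} random variable. The pseudorandom variable $\tilde{\bv} = A\bz$ only ever enters through the fooling-Bentkus hypothesis, which is exactly your term (B). The fix for your argument is therefore to replace your single-mollifier triangle inequality with the two-mollifier chain: for the upper bound, $\Pr[A'\bz \in \calO_{b'}] \le \E[\wt{\calO}_{b^\outter,\lambda}(A'\bz)] + \delta \le \E[\wt{\calO}_{b^\outter,\lambda}(A'\bu)] + \gamma + \delta \le \Pr[A'\bu \in \calO_{b' + \Lambda\Ind_m}] + \gamma + 2\delta$, and then compare $\calO_{b'+\Lambda\Ind_m}$ to $\calO_{b'}$ using anticoncentration of $A'\bu$ only; symmetrically for the lower bound with $\wt{\calO}_{b^\inner,\lambda}$. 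Without this restructuring, your term (C) cannot be closed with the tools available.
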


\begin{proof}[Proof of \Cref{thm:main} assuming \Cref{thm:fool-k-tau-regular} and \Cref{thm:regularize}]
Let $Ax \leq b$ be any $m$-facet polytope in $\R^n$. Given $\delta>0$, we recall that $\tau = \red{\delta^{1+\eps}/(\log m)^{2.5+\eps}}.$  If the quantity~\eqref{eq:k-value} is greater than $n/2$ then the claimed seed length from~\Cref{fact:ourseedlength} \ignore{$\red{FILLINSL = \log n \cdot \poly(\log m, 1/\delta)}$} is greater than $n$ and the conclusion of \Cref{thm:main} trivially holds, so we suppose that~\eqref{eq:k-value} is less than $n/2.$  Let $A'x \leq b'$ be the $m$-facet $(k,\tau)$\standardized polytope given by \Cref{thm:regularize}.  We have
\begin{align*}
\Prx_{\bu \sim \bn}[A\bu \in \calO_b]
&= \Prx_{\bu \sim \bn}[A'\bu \in \calO_{{b'}}] \pm \delta \tag{\Cref{thm:regularize} applied to $\bu$}\\
&= \Prx_{\bz \sim \mathcal{G}}[A'\bz \in \calO_{{b'}}] \pm \delta \pm \delta \tag{\Cref{thm:fool-k-tau-regular}}\\
&= \Prx_{\bz \sim \mathcal{G}}[A\bz \in \calO_b] \pm \delta \pm \delta \pm \delta \tag{\Cref{thm:regularize} applied to $\bz$}
\end{align*}
and \Cref{thm:main} is proved.
\end{proof}

The rest of the paper following this section is devoted to proving \Cref{thm:fool-k-tau-regular}.  In the remainder of this section we prove \Cref{thm:regularize}.

\subsection{Proof of \Cref{thm:regularize}} \label{sec:proof-thm-regularize}

The proof uses the ``critical index'' theory for Boolean halfspaces, introduced in~\cite{Servedio:07cc} and used in several subsequent works on halfspaces.% various subsequent works such as~\cite{DGJ+:10,OS11:chow,GOWZ10}.
\begin{definition}[Critical index]\label{def:crit-ind}
    Let $w \in \R^n$ and assume for notational simplicity that $|w_1| \geq |w_2| \geq \cdots \geq |w_n|.$  The \emph{$\tau$-critical index} of~$w$ is the least~$j$ such that the ``tail'' $(w_j, w_{j+1}, \dots, w_n)$ is $\tau$-regular, or $\infty$ if no such~$j$ exists.
\end{definition}
Given $A$ as in \Cref{thm:regularize}, the rows that are already $(k,\tau)$-regular pose no difficulty as a simple rescaling of any such row (and the corresponding entry of $b$) makes it $(k,\tau)$-standardized.  The remaining rows $A_i$ have $\tau$-critical index exceeding~$k$.  The critical index theory~\cite{Servedio:07cc,OS11:chow} says that such halfspaces $\Ind[A_i x \leq b_i]$ are very close to $k$-juntas, and in fact~\cite{DGJ+10:bifh} shows that this is true even under $(k+2)$-wise uniform distributions (for a slightly larger choice of $k$ as alluded to in \Cref{rem:hedge}).  We tweak the quantitative aspects of these arguments below to work for the choice of $k$ given in~\eqref{eq:k-value}.  It will be convenient to follow the treatment in~\cite{GOWZ10}.

The first lemma below says that if the ``head'' variables are set uniformly, the resulting random variable has good anticoncentration at the scale of the two-norm of the tail:

\begin{lemma}                                       \label{lem:crit1}
    Let $\tau \in (0,1)$, $\eps \in (0,1/2)$, $s > 1$.  Then for a certain  $\ell = O(\log(s) \log(1/\eps)/\tau^2)$ the following holds: If $w \in \R^n$ as in \Cref{def:crit-ind} has $\tau$-critical index at least~$\ell$, then for all $\theta \in \R$,
    \[
        \Prx_{\substack{\bu\sim \{-1,1\}^\ell \\ \textnormal{uniform}}}
        \bracks*{\abs{w_1\bu_1 + \cdots + w_\ell\bu_\ell - \theta} \leq s \cdot \sigma} \leq \eps + O(\log(1/\eps)\exp(-s^2/2)),
    \]
    where $\sigma \coloneqq \sqrt{w_{\ell+1}^2 + \cdots + w_n^2}$.
\end{lemma}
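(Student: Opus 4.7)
The plan is to combine the critical-index hypothesis with a scale-by-scale anti-concentration argument. The critical-index assumption forces geometric decay of the tail $\ell_2$-norms $\sigma_j := \|(w_j, \ldots, w_n)\|_2$, which has two useful consequences: (a) $\sigma_1$ is exponentially larger than $\sigma = \sigma_{\ell+1}$, giving a wide gap between the ``bulk'' scale of $W := \sum_{j \le \ell} w_j \bu_j$ and the scale $s\sigma$ at which anti-concentration is sought; and (b) the weights $|w_1|, \ldots, |w_\ell|$ span many dyadic magnitudes, so many dyadic scales are well populated. The two summands in the bound will come from two separate regimes defined by where $\theta$ sits relative to $\sigma_1$.

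First I would write out the decay. The critical-index hypothesis gives $|w_j| > \tau \sigma_j$ for every $j < \ell$, so $\sigma_{j+1}^2 = \sigma_j^2 - w_j^2 < (1 - \tau^2)\sigma_j^2$; iterating yields $\sigma_1 \ge (1-\tau^2)^{-\ell/2}\sigma \ge e^{\tau^2 \ell/2}\sigma$, and running the same chain backwards from index $\ell$ yields $|w_j| \ge \tau\sigma\cdot(1-\tau^2)^{-(\ell-j)/2}$ for every $j \le \ell$. Choosing $\ell = \Theta(\log(s)\log(1/\eps)/\tau^2)$ with a sufficiently large constant therefore makes $\sigma_1/(s\sigma)$ enormous, and the weights $|w_j|$ fall across $\Theta(\log(1/\eps))$ dyadic blocks $B_t := \{j : |w_j| \in [\tau\sigma\cdot 2^t, \tau\sigma\cdot 2^{t+1})\}$, each of whose cardinality is controlled by the decay.

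Next I would bound $\Pr[|W - \theta| \le s\sigma]$ by a case split on $\theta$. When $|\theta| \gtrsim s\sigma_1$, the event $|W - \theta| \le s\sigma$ forces $|W| \gtrsim s\sigma_1$, and Hoeffding's inequality on $W$ (which has variance $\sigma_1^2 - \sigma^2 \le \sigma_1^2$) gives $\Pr[|W| \gtrsim s\sigma_1] \le 2 e^{-s^2/2}$; this is the source of the $e^{-s^2/2}$ summand. When $|\theta| \lesssim s\sigma_1$, I would apply a Littlewood--Offord argument: choose a heavy block $B_{t^\star}$, condition on all coordinates outside $B_{t^\star}$, and apply the classical Littlewood--Offord theorem inside $B_{t^\star}$ at scale $s\sigma$. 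The decay calculation guarantees that for an appropriate choice of $t^\star$, $B_{t^\star}$ contains $\Omega(1/\eps^2)$ coordinates of magnitude comparable to $s\sigma$, so Littlewood--Offord supplies anti-concentration of order $O(\eps)$, producing the $\eps$ summand.

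The main obstacle I expect is bookkeeping across the $O(\log(1/\eps))$ dyadic scales: one must arrange that the chosen block $B_{t^\star}$ is simultaneously (i) populated enough to yield the $\eps$ term via Littlewood--Offord and (ii) at a magnitude scale that matches $s\sigma$, all while keeping the total budget $\ell = O(\log(s)\log(1/\eps)/\tau^2)$. The $\log(1/\eps)$ factor multiplying $e^{-s^2/2}$ in the final bound naturally appears from a union bound across those scales at the boundary between the bulk and tail regimes. Beyond this careful accounting the argument uses only standard tools from the critical-index toolkit of \cite{Servedio:07cc} together with Hoeffding's inequality and the classical Littlewood--Offord theorem.
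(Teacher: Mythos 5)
Your proposal diverges from the paper's proof in route and, more importantly, has a gap in the step that is supposed to produce the $\eps$ term. The paper does not argue from scratch: it invokes the almost-identical~\cite[Theorem~5.3]{GOWZ10} (full version), setting that theorem's ``$\delta$'' to $\tau^2$ and its ``$\eta$'' to $1/\sqrt{3}$, and then makes a single local modification --- replacing the Chebyshev/fourth-moment tail bound used before ``inequality~(10)'' of that proof by Hoeffding's inequality for the $\pm1$ sum $w_1\bu_1+\cdots+w_\ell\bu_\ell$, which upgrades $O(1/s^4)$ to $2\exp(-s^2/2)$ and carries through to the stated bound. The structure of that inherited argument is a \emph{chain} over $\Theta(\log(1/\eps))$ scales: conditionally on the larger-magnitude coordinates, at each scale the probability of ``staying in contention'' for the width-$2s\sigma$ window drops by a constant factor, and the $K=\Theta(\log(1/\eps))$ factors of $1/2$ compound to give $\eps$; the $\log(1/\eps)\exp(-s^2/2)$ term comes from union bounding, over all $K$ scales, the Hoeffding-controlled event that the lower-scale contribution overshoots the window.

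The step in your sketch that produces $\eps$ via a \emph{single} Littlewood--Offord application fails. You assert that an appropriate dyadic block $B_{t^\star}$ contains $\Omega(1/\eps^2)$ coordinates of magnitude comparable to $s\sigma$, so that the $O(1/\sqrt{|B_{t^\star}|})$ Littlewood--Offord bound gives $O(\eps)$. But the critical-index decay caps every dyadic block at $O(1/\tau^2)$ coordinates: within a block where $|w_j|\approx\mu$, the top-of-block index $j_{\min}$ has $\sigma_{j_{\min}} < |w_{j_{\min}}|/\tau \lesssim \mu/\tau$ while $\sigma_{j_{\min}}^2 \geq \sum_{j\in B_{t^\star}} w_j^2 \gtrsim |B_{t^\star}|\mu^2$, forcing $|B_{t^\star}| = O(1/\tau^2)$. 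Since $\tau$ and $\eps$ are independent parameters (e.g.\ $\tau=1/10$, $\eps=2^{-100}$), one cannot guarantee $1/\tau^2 \gtrsim 1/\eps^2$, and a single block application only yields anticoncentration $O(\tau)$, not $O(\eps)$. Also, $\ell = O(\log(s)\log(1/\eps)/\tau^2)$ in total is far smaller than $1/\eps^2$ in that regime, so no block of the required size exists. The fix is exactly the iterated structure of~\cite{Servedio:07cc,DGJ+10:bifh,GOWZ10}: accumulate $\Theta(\log(1/\eps))$ independent constant-factor gains across scales rather than asking one scale to do all the work.
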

\begin{proof}
    We refer directly to the proof of the almost identical~\cite[Theorem~5.3]{GOWZ10} in the full version of that paper.  In that proof we may take ``$\delta$'' to be $\tau^2$, and ``$\eta$'' to be $1/\sqrt{3}$ since we work with uniform $\pm 1$ bits (see Fact~3.3.5 therein). The only change needed in the proof occurs before ``inequality~(10)''.  That inequality uses the fact that a certain random variable $\bz$ satisfies the tail bound $\Pr[|\bz| \geq s \rho] \leq O(1/s^4)$ when $\rho$ is at most the standard deviation of $\bz$. But in our current setting, the random variable $\bz$ equals $w_1\bu_1 + \cdots + w_\ell\bu_\ell$, i.e.~it is a weighted sum of independent uniform~$\pm 1$ bits, and so we have the improved tail bound $2\exp(-s^2/2)$ using Hoeffding.  Carrying through the remainder of the proof with this change yields the conclusion of \Cref{lem:crit1}.
\end{proof}

\begin{lemma}                                       \label{lem:crit2}
    Let $\tau \in (0,1)$ and let $\eps \in (0,1/2)$. Then for a certain $k = O(\log(1/\eps)\log\log(1/\eps)/\tau^2)$ and $r = O(\log(1/\eps))$, the following holds for every $w \in \R^n$ that is \emph{not} $(k,\tau)$-regular:

    Let $H \subseteq [n]$ be the set of $k$ coordinates $i$ for which $|w_i|$ is largest and let $T = [n] \setminus H$.  Assume $w' \in {\R^n}$ has $w'_H = w_H$ and $\|w'_T\|_2 \leq \|w_T\|_2$.  Then for any $\ignore{\theta'}{\theta} \in \R$,
    \[
        \Prx_{\by}\bracks*{\Ind[w \cdot \by \leq \theta] \neq \Ind[w' \cdot \by \leq \theta]} = O(\eps)
    \]
    provided $\by \sim \{-1,1\}^n$ is $(k + r)$-wise uniform.
\end{lemma}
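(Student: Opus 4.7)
The plan is to bound the disagreement probability by a union of two bad events, each occurring with probability $O(\eps)$: (i) the tail inner products $w_T\cdot\bt$ or $w'_T\cdot\bt$ (where $\bt\coloneqq\by_T$) being large, and (ii) the head inner product $w_H\cdot\bh$ (where $\bh\coloneqq\by_H$) falling into a narrow window around $\theta$. The first bad event I would control via a Khintchine-type moment bound using $r$-wise uniformity on $T$, while the second I would control by applying Lemma~\ref{lem:crit1} to the head $\bh$.

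For the geometric setup, since $w_H=w'_H$, a disagreement $\Ind[w\cdot\by\le\theta]\ne\Ind[w'\cdot\by\le\theta]$ forces $\theta$ to lie strictly between $w\cdot\by=w_H\cdot\bh+w_T\cdot\bt$ and $w'\cdot\by=w_H\cdot\bh+w'_T\cdot\bt$; by the triangle inequality this implies $|w_H\cdot\bh-\theta|\le 2|w_T\cdot\bt|+|w'_T\cdot\bt|$. For the tail estimate, a $(k+r)$-wise uniform $\by$ is in particular $r$-wise uniform on $T$, and since such distributions match Rademacher moments up to degree $r$, Khintchine's inequality yields $\Ex[(w_T\cdot\bt)^r]\le (C\sqrt{r}\,\|w_T\|_2)^r$ for even $r$ and an absolute constant $C$. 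Markov then gives $\Pr[|w_T\cdot\bt|>s\|w_T\|_2]\le\eps$ for $s=\Theta(\sqrt{\log(1/\eps)})$ and $r=\Theta(\log(1/\eps))$, and the same bound applies to $|w'_T\cdot\bt|$ because $\|w'_T\|_2\le\|w_T\|_2$. Thus with probability at least $1-O(\eps)$, any disagreement forces $|w_H\cdot\bh-\theta|\le 3s\|w_T\|_2$.

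For the head anticoncentration, the assumption that $w$ is not $(k,\tau)$-regular implies (otherwise the partition $H=\{1,\dots,k\}$ would certify $(k,\tau)$-regularity) that the $\tau$-critical index of $w$ strictly exceeds $k$. Moreover, any $k\le k+r$ coordinates of a $(k+r)$-wise uniform string are marginally fully uniform, so $\bh\in\{-1,1\}^k$ satisfies the uniform-input hypothesis of Lemma~\ref{lem:crit1}. Applying Lemma~\ref{lem:crit1} with $\ell=k$ and anticoncentration parameter $3s$ gives
\[
\Pr\!\left[|w_H\cdot\bh-\theta|\le 3s\|w_T\|_2\right]\le \eps+O\!\left(\log(1/\eps)\,e^{-(3s)^2/2}\right)=O(\eps),
\]
provided $k=\Theta(\log(3s)\log(1/\eps)/\tau^2)=\Theta(\log\log(1/\eps)\log(1/\eps)/\tau^2)$, which matches the claimed bound on $k$. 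A union bound over the two bad events completes the proof.

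The main subtlety I foresee is that $\bh$ and $\bt$ are \emph{not} independent under only $(k+r)$-wise uniformity, so one cannot freely condition one on the other. Fortunately the argument above uses only marginal probabilities of the two bad events, so a plain union bound suffices. The $\log\log(1/\eps)$ factor in $k$---rather than the $\log(1/\eps)$ factor alluded to in Remark~\ref{rem:hedge}---comes precisely from the fact that only $\log s$, not $s$ or $s^2$, enters the critical-index requirement of Lemma~\ref{lem:crit1}.
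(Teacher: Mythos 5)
Your proof is correct and follows essentially the same approach as the paper: use Lemma~\ref{lem:crit1} for head anticoncentration (exploiting that non-$(k,\tau)$-regularity forces the $\tau$-critical index to exceed~$k$, and that the head marginal of a $(k+r)$-wise uniform string is fully uniform), bound tail concentration via a moment estimate valid under $r$-wise uniformity, and combine the two bad events. The paper conditions on the head and invokes a hypercontractivity-based tail bound (\cite[Theorem~9.23]{ODbook}) where you take a plain union bound and cite Khintchine plus Markov, and its decomposition yields $s\|w_T\|_2$ rather than your slightly looser $3s\|w_T\|_2$, but these are cosmetic differences and your argument goes through.
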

\begin{proof}
    Suppose $w$ is not $(k,\tau)$-regular.     By reordering coordinates we may assume that $H = [k]$; then the non-$(k,\tau)$-regularity of~$w$ means the $\tau$-critical index of~$w$ exceeds~$k$.  We may therefore apply \Cref{lem:crit1} with $s = O(\sqrt{\log(1/\eps)})$.  Using the fact that $\by_H$ is fully uniform we get
    \begin{equation}
        \Pr\bracks*{\abs{w_H \cdot \by_H - \theta} \leq s \cdot \|w_{\red{T}}\|_2} = O(\eps) \qquad \text{(and note that } w'_H \cdot \by_H = w_H \cdot \by_H\text{).} \label{eq:aaa}
    \end{equation}
    Conditioned on any outcome of $\by_H$, the distribution of $\by_T$ remains $r$-wise uniform.  We claim that it  remains to show the following:
    \begin{equation}    \label{eqn:crit2-crit}
        \Pr[|w'_T \cdot \by_T| \geq s \cdot \|w_{\red{T}}\|_2] = O(\eps).
    \end{equation}
    To see that this suffices, observe that by \eqref{eq:aaa} we have that $\abs{w_H \cdot \by_H - \theta} = \abs{w'_H \cdot \by_H - \theta} > s \cdot \|w_{\red{T}}\|_2$ except with probability $O(\eps)$. Also, by applying~\eqref{eqn:crit2-crit} with $w'$ and with $w' = w$, we get both $|w_T \cdot \by_T|, |w'_T \cdot \by_T| \leq s \cdot \|w_{\red{T}}\|_2$ except with another probability at most~$O(\eps)$.
    %\onote{Is it okay now?} 
    When all of these events occur, $\Ind[w \cdot \by \leq \theta]$ and $\Ind[w' \cdot \by \leq \theta]$ agree.

    Finally, we can establish \Cref{eqn:crit2-crit} by appealing to, e.g., \cite[Theorem~9.23]{ODbook}.  That theorem (with $k = 1$) shows that for $t \geq \sqrt{2e}$, any linear form $f(\bx)$ in uniform $\pm 1$ random variables~$\bx$ has $\Pr[|f(\bx)| \geq t \|f\|_2] \leq \exp(-O(t^2))$.  If we could directly apply this to the linear form $w'_T \cdot \by_T$, we would be done by taking $t = s$ and using $\|w'_{\red{T}}\|_2 \leq \|w_{\red{T}}\|_2$.  We cannot directly apply this theorem because the bits $\by_T$ are not uniformly random. However, inspecting the proof of \cite[Theorem~9.23]{ODbook} shows that it suffices for those bits to be $O(t^2)$-wise uniform, which they are provided that $r = O(\log(1/\eps)) = O(s^2) = O(t^2)$.  The reason that this suffices is because the proof only uses $(2,q,1/\sqrt{q-1})$-hypercontractivity of $f(\bx)$ for $q = O(t^2)$, and (for even integer~$q$) this condition only involves the first $q$ moments of~$f(\bx)$, which don't change if $\bx$ is assumed to be merely $q$-wise uniform rather than truly uniform.\ignore{\onote{I think we're going to appeal to literally exactly this fact again later, like around Proposition~2 of hkm-reproof5.  So we might want to box it up. \red{Rocco:} Is this still true?}}
\end{proof}

We can now prove \Cref{thm:regularize}:

\begin{proof}
    We will use \Cref{lem:crit2} with $\eps = c \delta/m$ for small constant $c > 0$. This leads to the choice of~$k$ in the statement of \Cref{thm:regularize}; also, $r \ll k$ and so $2k \geq r + k.$

    Given $A \in \R^{m \times n}$, as noted earlier the rows that are $(k,\tau)$-regular are not a problem, so we  consider all rows $A_i$ that are \emph{not} $(k,\tau)$-regular.  For these rows we apply \Cref{lem:crit2}, taking $A'_i$ to agree with $A_i$ on the appropriate ``head'' coordinates~$H_i$, and taking $A'_i$ to simply be $0$ on the remaining ``tail'' coordinates.  Note that $A'_i$ is now trivially $(k,\tau)$-regular.  By \Cref{lem:crit2} we have that
    \[
        \Prx_{\by}\bracks*{\Ind[A_i \cdot \by \leq b_i] \neq \Ind[A'_i \cdot \by \leq b_i]} \leq \delta/m.
    \]
    Taking $b'_i = b_i$ for these $i$'s, and union-bounding over the at most~$m$ of them, we are almost at the point of establishing \Cref{eqn:regularize-me} from the theorem statement.  We now have that \emph{all} $A'_i$ are $(k,\tau)$-regular; the only deficiency is that the ``tail'' of each row need not have $2$-norm~$1$ as required.

    Whenever the ``tail'' of $A'_i$  has nonzero $2$-norm, we can simply scale $A'_i$ and $b'$ by the same positive factor so as to make the tail of $A'_i$ have $2$-norm~$1$; this scaling does not change the Boolean function $\Ind[A'_i \cdot x \leq b'_i]$ at all.  The only (very minor) difficulty now remaining is that some of the rows $A'_i$ may have tail with $2$-norm zero.  It is well known, however, that one can always slightly perturb the coefficients and threshold in a halfspace without changing it as a Boolean function.\footnote{Given a halfspace $\Ind[w \cdot x \leq \theta]$, there is a smallest value $\theta' > \theta$ achievable as $w \cdot x$ for $x \in \{-1,1\}^n$; first perturb $\theta$ upward to $(\theta + \theta')/2$.  Now no input $x$ achieves $w\cdot x = \theta$ exactly, so we can perturb the coefficients of~$w$ by sufficiently small amounts.}  We can perturb in such a way that the tail coefficients all become equal to some sufficiently small $\eta > 0$.  After this perturbation, the row $A'_i$ \emph{is} $(k,\tau)$-regular
(this holds, recalling that $k \leq n/2$, since $n-k \geq k \geq 1/\tau^2$)
and its tail has positive 2-norm.  Now we can scale up $(A'_i,b'_i)$ as before to make the tail have $2$-norm~$1$.
\end{proof}

\ignore{

% START IGNORE

\gray{
\subsection{OLD STUFF FOLLOWS IN THE REMAINDER OF THIS SECTION}

\subsection{A reduction from fooling polytopes to fooling normalized regular polytopes}

In this section we reduce from the problem of fooling general $m$-facet polytopes to the problem of fooling $m$-facet polytopes with a specific structure, namely normalized $(k,\tau)$-regular $m$-facet polytopes.\onote{I suppose we should indicate to the reader somewhere that this reduction was basically already known?  Or at least follows by just gluing together several results and observing that our PRG makes all the heads independent?  Or at least somehow indicate to the reader, ``Hey friend, we're kind of just clearing our throats here; the ball game gets started in the next section.?} The main technical result we prove in this section is the following:

\begin{lemma}
[Approximating arbitrary polytopes by $(k,\tau)$-regular normalized polytopes]
\label{lem:approximate-by-k-tau-regular}
Let $F: \bn \to \zo$ be an intersection of $m$ many $n$-variable halfspaces given by $F(x)=\Ind[Ax \leq b]$.   \red{Let $\tau,\red{\delta} \in (0,1)$ be such that the quantity~\eqref{eq:value-of-k} below is at most $n/2$.}\rnote{The reason for $n/2$ here is that that way we can add in $1/\tau^2 < n/2$ many dummy variables and still have each row be an $n$-variable halfspace.} Then there is a weight matrix $A' \in \R^{m\times n}$ and a threshold vector $b' \in \R^m$ such that the following hold:
\begin{enumerate}
\item[(1)] Regularity: $A'$ is $(k,\tau)$-regular and normalized, where
\begin{equation} k = k(m,\tau,\delta) = \red{
O\parens*{
{\frac 1 {\tau^2}} \cdot \log\left( {\frac m \delta} \right)
}}
. \label{eq:value-of-k}
\end{equation}
\end{enumerate}
Writing $F'(x)$ to denote $\Ind[A' x \leq b'],$
\begin{enumerate}
\item[(2)] Approximation w.r.t. $\bu$:  $\Pr_{\bu \sim \bn}[F(\bu) \neq F'(\bu)] \leq \delta$;
\item[(3)] Approximation w.r.t. $\bz$:  $\Pr_{\bz \sim \mathcal{G}}[F(\bz) \neq F'(\bz)] \leq \delta$, where $\mathcal{G}$ is our generator with parameters instantiated as in~\eqref{eq:ourparameters}.
\end{enumerate}
\end{lemma}

We stress that Lemma~\ref{lem:approximate-by-k-tau-regular} establishes that $F$ is well-approximated by $F'$ both under the uniform distribution and under the pseudorandom distribution constructed by our generator.  This is crucial, since in general given a function $F$ and an approximator $F'$ of $F$ such that $\Pr_{\bu \sim \bn}[F(\bu) \neq F'(\bu)]$ is small, fooling $F'$ does not suffice to fool $F$ itself.

Given Lemma~\ref{lem:approximate-by-k-tau-regular}, in order to prove \Cref{thm:main} it is sufficient to prove the following:

\begin{theorem}[Fooling $(k,\tau)$-regular normalized polytopes]
\label{thm:fool-k-tau-regular}
Let $\mathscr{G}$ be our generator with parameters as set in~\eqref{eq:ourparameters}.   For all normalized $(k,\tau)$-regular matrices $A' \in \R^{m\times n}$ and  threshold vectors $b \in \R^m$,
\[ \big| \Prx_{\bu \sim \bn}\big[A'\bu \in \calO_b \big] - \Prx_{\bz \sim \mathscr{G}}\big[ A'\bz \in \calO_b\big] \big| \le  \delta. \]
\end{theorem}

\begin{proof}[Proof of \Cref{thm:main} assuming \Cref{thm:fool-k-tau-regular} and Lemma~\ref{lem:approximate-by-k-tau-regular}]
Let $F(x) = \Ind[Ax \leq b]$ be any intersection of $m$ halfspaces over $\zo^n$. Given $\delta>0$, we let $\tau = \red{FILLIN}.$  If the quantity~\eqref{eq:value-of-k} is greater than $n/2$ then the claimed seed length $\red{\log n \cdot \poly(\log m, 1/\delta)}$ is greater than $n$ and the conclusion of \Cref{thm:main} trivially holds, so we suppose that~\eqref{eq:value-of-k} is less than $n/2.$  Let $A'$ be the $(k,\tau)$-regular normalized matrix given by Lemma~\ref{lem:approximate-by-k-tau-regular} and let $F'(x)=\Ind[A'x\leq b]
= \Ind[A'x \in \calO_b]$. We have
\begin{align*}
\Prx_{\bu \sim \bn}[F(\bu)=1]
&= \Prx_{\bu \sim \bn}[F'(\bu)=1] \pm \delta \tag{Lemma~\ref{lem:approximate-by-k-tau-regular} part (2)}\\
&= \Prx_{\bz \sim \mathcal{G}}[F'(\bz)=1] \pm \delta \pm \delta \tag{\Cref{thm:fool-k-tau-regular}}\\
&= \Prx_{\bz \sim \mathcal{G}}[F(\bz)=1] \pm \delta \pm \delta \pm \delta \tag{Lemma~\ref{lem:approximate-by-k-tau-regular} part (3)}
\end{align*}
and \Cref{thm:main} is proved.
\end{proof}

The rest of the paper following this section is devoted to proving \Cref{thm:fool-k-tau-regular}.  In the remainder of this section we prove Lemma~\ref{lem:approximate-by-k-tau-regular}.

\subsection{Setup for the proof of Lemma~\ref{lem:approximate-by-k-tau-regular}} \label{sec:setup-proof-lemma-abktr}

To prove Lemma~\ref{lem:approximate-by-k-tau-regular} it will be helpful to define a slight variant of the usual notion of regularity.
Recall that as stated in Section~\ref{sec:prelims}, a vector $v \in \R^k$ is said to be \emph{$\tau$-regular} if
\[
\sum_{j=1}^k v_j^4 \leq \tau^2 \cdot \left(\sum_{j=1}^k v_j^2\right)^2.
\]
(This is the notion of regularity that \red{should have been} used in \cite{HKM12}.)  The variant we will use is as follows:  we say that a vector $v \in \R^k$ is \emph{$\tau$-regular$_2$} if
\[
\|v\|_\infty \leq \tau \|v\|_2.
\]
(This is the notion of regularity that is used in, e.g., \cite{DGJ+10:bifh,OS11:chow}. \onote{Remark: GOWZ use the HKM version.})  These two notions are closely related:  in one direction, observe that if $v \in \R^k$ is $\tau$-regular$_2$\ignore{and $\|v\|_2=1$,} then
\[
\sum_{j=1}^k v_j^4 \leq (\|v\|_\infty)^2 \cdot \sum_{j=1}^k v_j^2  = \tau^2 \left( \sum_{j=1}^k v_j^2 \right)^2
\]
and hence $v$ is $\tau$-regular.  In the other direction, if $v=(v_1,\dots,v_k)$ is $\tau$-regular\ignore{ and $\|v\|_2=1$,} then
\[
(\|v\|_\infty)^4 \leq \sum_{j=1}^k v_j^4 \leq \tau^2 \cdot \left(\sum_{j=1}^k v_j^2\right)^2 = \tau^2 (\|v\|_2)^4
\]
and hence $v$ is $\sqrt{\tau}$-regular$_2$.

Our proof uses the notion of the \emph{$\tau$-critical index} of a weight vector which was implicitly introduced in \cite{Servedio:07cc} and used in various subsequent works such as \cite{DGJ+10:bifh,OS11:chow}. 	 Let $w \in \R^n$ be a weight vector and let $i_1,\dots,i_n$ be an ordering of $1,\dots,n$ such that $|w_{i_1}| \geq \cdots \geq |w_{i_n}|.$  The \emph{$\tau$-critical index} of $w$ is the smallest index $\ell$ such that the vector $(w_{i_\ell},w_{i_{\ell+1}},\dots,w_{i_n})$ is $\tau$-regular$_2$.  (If there is no such $\ell \in [n]$ then the $\tau$-critical index of $w$ is defined to be $+\infty.$)

\subsection{Proof of Lemma~\ref{lem:approximate-by-k-tau-regular}} \label{sec:proof-polytope-structure}

We prove Lemma~\ref{lem:approximate-by-k-tau-regular} ``halfspace by halfspace''; so let $\Ind[w \cdot x \leq \theta]$ be one of the $m$ halfspaces that comprise $F$.  Note that we may assume without loss of generality that $w \in \R^n$ is such that $w \cdot y \neq \theta$ for all $y \in \{-1,0,1\}^n.$  We will show that there is a $(k,\tau)$-regular normalized weight vector $w' \in \R^n$ and a threshold $\theta' \in \R$ such that the following hold:

\begin{itemize}

\item [$(2')$] Approximation w.r.t.\ $\bu$:  $\Pr_{\bu \sim \bn}[\Ind[w \cdot \bu \leq \theta] \neq \Ind[w' \cdot \bu \leq \theta'] \leq \delta/m$;

\item [$(3')$] Approximation w.r.t.\ $\bz$:  $\Pr_{\bz \sim \mathcal{G}}[\Ind[w \cdot  \bz \leq \theta] \neq \Ind[w' \cdot \bz \leq \theta'] \leq \delta/m$.

\end{itemize}
Given $(2')$ and $(3')$, the lemma follows immediately by a union bound.

For ease of notation we suppose without loss of generality that the weights are ordered such that $|w_j| \geq |w_{j+1}|.$  Let $\ell$ denote the $\tau$-critical index of $w$.  If $\ell \leq k$ then we define the set $\head$ to be  $\{1,\dots,\ell\}$ corresponding to the $\ell$ largest-magnitude coefficients, and if $\ell > k$ then we define $\head$ to be $\{1,\dots,k\}$.  In either case we define $\tail = [n]\setminus \head$.

Suppose first that $\ell \leq k$.  Then taking $w' = w/\|w_\tail\|_2$, $\theta' = \theta/\|w_\tail\|_2$, it is clear that $\Ind[w' \cdot x \leq \theta']$ is the same halfspace as $\Ind[w \cdot x \leq \theta]$ and that $(w')_\tail$ is $\tau$-regular$_2$ (and thus $\tau$-regular), and hence $w'$ is a normalized $(k,\tau)$-regular vector satisfying $(2')$ and $(3')$ as desired.  We thus henceforth assume that $\ell > k$.

Let $v \in \R^n$ be the vector that is obtained from $w$ by ``zeroing out'' all the tail coordinates, so $v_j=w_j$ for $j \in \head$ and $v_j=0$ for $j \in \tail$ and the halfspace $\Ind[v \cdot x \leq \theta]$ is a $k$-junta.  We will prove the following:

\begin{claim} \label{claim:v}
~
\begin{enumerate}
\item (Closeness under uniform) $\Prx_{\bu \sim \bn}[\Ind[v \cdot \bu \leq \theta] \neq \Ind[w \cdot \bu \leq \theta]] \leq \delta/m$;
\item (Closeness under pseudorandom) $\Prx_{\bz \sim \mathscr{G}}[\Ind[v \cdot \bz \leq \theta] \neq \Ind[w \cdot \bz \leq \theta]] \leq \delta/m$.
\end{enumerate}
\end{claim}
Given Claim~\ref{claim:v}, the desired halfspace $\Ind[w' \cdot x \leq \theta']$ is easily obtained as follows:  Let $\kappa > 0$ be a sufficiently small value such that the vector $v' \in \R^n$ obtained by replacing all the zeros in $v_\tail$ with $\kappa$ (i.e. $v'$ is defined by $v'_j = v_j = w_j$ for $j \in \head$ and $v'_j=\kappa$ for $j \in \tail$) satisfies $\Ind[v' \cdot x \leq \theta] = \Ind[v \cdot x \leq \theta]$ for all $x \in \bn.$  (Such a value $\kappa > 0$ exists by our assumption on $w$ at the start of this subsection.)  Since $|\tail| = n-k \geq n/2$ the vector $v'$ is $(k,\tau)$-regular, and rescaling all its entries (and the value of $\theta$) by a suitable constant gives the desired $(k,\tau)$-regular normalized $w'$ and $\theta'$ as described at the start of this subsection.

It remains only to prove Claim~\ref{claim:v}.

\subsubsection{Proof of Claim~\ref{claim:v}}

\rnote{I kind of think it does not make sense to ``do this from scratch'' in a self-contained way as it seems that a lot (maybe all) of what we want here is in the literature already. But let's at least explain the structure of the argument even if we end up citing specific calculations.)}

\red{Part (1) of the claim is literally exactly what is proved in Case IIa of \cite{Servedio:07cc}, taking our $k$ to be $O(1) \cdot (1/\tau^2) \cdot \log(1/\tau) \cdot \log(m/\delta)$. That proof defines a value $\eta$ and argues that (i) $\Pr[|w_\head \cdot \bu| \leq \eta] \leq \delta/(2m)$; and (ii) $\Pr[|w_\tail \cdot \bu| \geq \eta] \leq \delta/(2m)$. Together (i) and (ii) are easily seen to imply (1).  In \cite{Servedio:07cc} (i) is established using the usual super-decreasing anticoncentration argument for the head, which relies on the head variables being uniform.  (But we will not need to mention the super-decreasing anticoncentration argument at all I think even for part 2 below.)  And (ii) is established using a Hoeffding bound for the tail (which relies on the tail variables being uniform).

Part (2) of the claim can be argued similarly.  Here too we will define a value $\eta$ and argue that (i$'$) $\Pr_{\bz} [|w_\head \cdot \bz| \leq \eta] \leq \delta/(2m)$; and (ii$'$) $\Pr_{\bz} [|w_\tail \cdot \bz| \geq \eta] \leq \delta/(2m)$.  For (i$'$) we now argue that (i$'$.a) $\Pr_{\bz}$ [the hash $\bh \sim \calH$ is ``bad''] $\leq \delta/(4m)$, and (i$'$.b) for any fixed non-bad outcome $h$ of the hash $\bh$ in the draw of $\bz \sim \mathscr{G}$,
\[
\Pr_{\bz}[|w_\head \cdot \bz| \leq \eta \ | \ \bh = h] \leq \delta(4m).
\]
Above ``$\bh$ is bad'' will mean that more than $\red{FILLIN}$ many coordinates in $\head$ are assigned to a single bucket by $\bh$.  Part (i$'$.a) is argued using just basic properties of the $r_\hash$-wise \red{uniform} pseudorandom hash into $\bucks$ buckets. For a non-bad outcome $h$ of $\bh$, we use that the assignment to $\bits^\head$ induced by $(\bz \sim \mathscr{G} \ | \ \bh = h)$ is uniform random. This means the distribution of $w_\head \cdot \bz$ for $(\bz \sim \mathscr{G} \ | \ \bh = h)$ is identical to the distribution of $w_\head \cdot \bu$ and so we do not need to get into the weeds of why the anticoncentration argument for the head works (super-decreasing sequence, etc) --- it's enough to just use the fact that it does work.  So (i$'$.b) follows from (i).

For (ii$'$) we can potentially use either Chebyshev or we can use the fact that $\bz$ is a $\red{FILLIN}$-wise \red{uniform} distribution (which would make it possible to use a stronger tail bound).  It looks like \cite{GOWZ10} and \cite{DGJ+10:bifh} both use only Chebyshev so maybe we should do so as well -- I don't think it will cost us essentially anything, maybe at most a $\log(1/\tau)$ factor in the head length or so.

I think --- will check --- that we can get everything we need for the technical statements from \cite{DGJ+10:bifh}, specifically the proof of Theorem~5.4.  (It'll be better to use one source rather than \cite{Servedio:07cc} for the uniform distribution case and some other paper for the other case.)  I expect \cite{GOWZ10} could serve too and we should cite them for the overall structure of the argument but their statements seem more general / complicated to parse.

}

\gray{

ALL OF WHAT'S BELOW in GRAY WILL GO SINCE WE DON'T NEED TO GET INTO WHY THE HEAD ANTICONCENTRATION HOLDS, WE WILL JUST INVOKE IT AS A BLACK BOX FACT

We first prove part (1) of the claim.
Let $Z>1,Z=O(1)$ be a parameter that we will set later, and let us define ``$Z$-blocks'' and corresponding values $k_1 < k_2 < \cdots < k_{1+\log(1/\delta')} \in [n]$ as in Section~4.1 of \cite{OS11:chow}. So the first $Z$-block is $\{w_1,\dots,w_{k_1}\}$ where $k_1 \in [n]$ is the first index such that $w_1$ (the largest-magnitude weight in the $Z$-block) satisfies (let $k_0=0$ for convenience)
\[
|w_{k_0+1}| = |w_1| > Z \cdot \sigma_{k_1+1},
\]
and the $i$-th $Z$-block $(2 \leq i \leq 1+\log(1/\delta'))$ is $\{w_{k_{i-1}+1},\dots,w_{k_i}\}$ where $k_i \in [n]$ is the first index such that
\[
|w_{k_{i-1}+1}| > Z \cdot \sigma_{k_i+1}.
\]
By Corollary~4.5 of \cite{OS11:chow} we have that $k_{1+\log(1/\delta')} \leq k$ (recall that $k= O(1) \cdot  (1/\tau^2) \cdot \ln(1/\tau^2) \cdot \log(1/\delta').$
Let $T=[k+1,\dots,n]$.  We have that
\begin{itemize}

\item [(i)] For all $i = 0,1,\dots,\log(1/\delta')$,  $|w_{k_i+1}| \geq 3 |w_{k_{i+1} + 1}|$ (by definition of a $Z$-block, using $\sigma_j \geq |w_j|$); and

\item [(ii)]  For all $i = 0,1,\dots,\log(1/\delta')$, $|w_{k_i+1}| \geq
|w_{k_{\log(1/\delta')+1}}| \geq Z \cdot \sigma_{k_{1+\log(1/\delta')}+1} \geq Z \cdot \sigma_{k+1} = 5n \cdot \sigma_{k+1}$ (again by definition of a $Z$-block).

\end{itemize}

The following  claim establishes ``good anticoncentration of the head'' at the scale of $\|w_T\|_1$:
\begin{claim} \label{claim:anticonc}
\[
\Pr_{(\bx_1,\dots,\bx_k) \in \{-1,1\}^k}
[|w_1 \bx_1 + \cdots + w_k \bx_k - \theta| \leq \|w_T\|_1
] \leq \delta'.
\]
\end{claim}

\begin{proof}
View the draw of $(\bx_1,\dots,\bx_k)$ as taking place in two stages.  In the first stage an outcome $z \in \{-1,1\}^{[k]\setminus S}$ is chosen for all coordinates other than those in $S \coloneqq \{k_0+1,k_1+1,\dots,k_{\log(1/\delta')+1}\}$ and in the second stage an outcome is chosen for the coordinates in $S$.

For any given outcome $z$ of the first stage, by part (i) (the fact that $|w_{k_0+1}|,\dots,|w_{k_{\log(1/\delta')+1}}|$ is a super-decreasing sequence going down at least by powers of 3) and part (ii) (the fact that the smallest value $|w_{\log(1/\delta')+1}|$ is at least $5 n \cdot \sigma_{k+1} \geq 5n \cdot w_{k+1} > 5 \|w_T\|_1$), as argued in the proof of Claim~5.7 and Lemma~5.8 of ``Bounded Independence Fools Halfspaces'' there is only one possible outcome $z'$ of the second stage which causes the combined vector $\tilde{z}=(z,z') \in \{-1,1\}^{[k]}$ to satisfy $|w_1 z_1 + \cdots + w_k z_k  - \theta| \leq \|w_T\|_1.$ This proves the claim.
\end{proof}
}

}

\ignore{
%
%\subsection{Fooling normalized $(k,\tau)$-regular polytopes suffices}
%
%By Lemma~\ref{lem:sandwich-by-k-tau-regular}, in order to prove \Cref{thm:main}, it is sufficient to prove the following:
%
%\begin{theorem}[Fooling $(k,\tau)$-regular normalized polytopes]
%\label{thm:fool-k-tau-regular}
%Let $\mathscr{G}_{\mathrm{MZ}}$ be the Meka--Zuckerman generator with parameters $L = \red{FILLIN}, r_{\hash} = \red{FILLIN},$ and $r_\bucket = \red{FILLIN}$.  For all normalized $(k,\tau)$-regular matrices $A \in \R^{m\times n}$ and  threshold vectors $b \in \R^m$,
%\[ \big| \Prx_{\bu \sim \bn}\big[A\bu \in \calO_b \big] - \Prx_{\bz\sim\mathscr{G}_{\mathrm{MZ}}}\big[ A\bz \in \calO_b\big] \big| \le  \red{FILLIN}. \]
%\end{theorem}
%
%\begin{proof}[Proof of \Cref{thm:main} assuming \Cref{thm:fool-k-tau-regular}]
%\red{
%Since our PRG $\delta$-fools both $F^\low$ and $F^\up$, by (2) and (3) it must also $O(\delta)$-fool $F$. The seed length is
%\[
%\poly(\log n,k(\tau,\kappa),\log m,1/\delta) = \poly(\log n, 1/\tau,\log(m/\delta), \log m, 1/\delta),
%\]
%which, for any $\tau=1/\poly(1/\delta,\log m)$, is
%$\poly(\log n, \log m, 1/\delta)$ as desired.}
%\end{proof}
%
%The rest of the paper is devoted to proving \Cref{thm:fool-k-tau-regular}.
}

% END IGNORE

}

%!TEX root = main.tex

\section{Bentkus's mollifier and its properties} \label{sec:mollifier}

In this section we introduce and analyze Bentkus's orthant mollifier $\wt{\calO}_b: \R^m \to (0,1)$, which is a smoothed version of the translated orthant indicator function $\calO_b: \R^m \to \zo$ from \Cref{sec:ROT}.

%\gray{ We write $\psi_{b_i} : \R \to \zo$ to denote the indicator of the left-infinite interval $(-\infty,b_i]$.
%\onote{Mildly weird to me to use a subscript on what is essentially a dummy variable.}  Note that $\prod_{i \in [m]} \psi_{b_i}(v_i) = \Ind[v \in \calO_b]$ for all $b,v \in \R^m$.
%}

\begin{definition}[Gaussian-mollified halfline]
For $\theta \in \R$ and $\lambda > 0$, we define the ${\calC^\infty}$ function ${\tInd_{\theta,\lambda}}: \R \to (0,1)$,
\[ {\tInd_{\theta,\lambda}}(t) = \Ex_{\bg\sim N(0,1)}\big[\Ind[t + \lambda\bg \le \theta]\big].  \]
\end{definition}

\begin{definition}[Bentkus's orthant mollifier]
\label{def:bentkus}
For $b \in \R^m$ and $\lambda >0$, the \emph{Bentkus $\lambda$-mollifier} for $\calO_b$ is defined to be the $C^\infty$ function $\wt{\calO}_{b,\lambda}: \R^m \to (0,1)$,
%\onote{It's getting a little notation-heavy, I'm thinking; might tax a poor reader's brain to memorize $\psi$ and $\Psi$.  Could $\psi_b$ just be $\Ind_{(-\infty,b]}$?  Could $\Psi_{b,\lambda}$ be, I dunno, $\wt{\calO}_b^\lambda$ or something?}\onote{If we want to emphasize the idea behind B's mollifier, maybe put $\E[\calO_b(v+\lambda \bg)]$ below?  If instead we want to emphasize the product structure, maybe intro a name for the Gaussian-mollified halfline (which I think we'll need to do) and then write $\Psi$ is the product of these mollified halflines.}
\[ \wt{\calO}_{b,\lambda}(v) = \Ex_{\bg \sim N(0,1)^m} \big[ \calO_b(v + \lambda\bg)\big]. \]
\end{definition}
Since $\calO_b(v) = \prod_{i=1}^m \Ind[v_i \le b_i]$ {and $\calN(0,1)^m$ is a product distribution}, the mollifier $\wt{\calO}_{b,\lambda}$ can be equivalently defined as follows:
\begin{equation} \wt{\calO}_{b,\lambda}(v) = \prod_{i=1}^m {\tInd_{b_i,\lambda}}(v_i). \label{eq:bentkus-product}
\end{equation}
This product structure of Bentkus's mollifier will be crucially important for us in the analysis that we carry out in~\Cref{sec:singleswap}.
We note the following translation property of Bentkus's mollifier:
\begin{fact}
%\onote{Seems hardly worth even making a proposition out of this, but if we do, I always feel a ``Fact'' is more trivial than a ``Proposition'', yet compare this guy with the next guy\dots}
\label{fact:shift}
For all $b,v,\Delta \in \R^m$ and $\lambda >0$, we have $\wt{\calO}_{b,\lambda}(v+\Delta) = \wt{\calO}_{b-v,\lambda}(\Delta)$.
\ignore{\rnote{Of course it's equivalent, but is it more natural to have the RHS be $\tilde{\calO}_{b-v,\lambda}(\Delta)$
or to have it be
$\tilde{\calO}_{b-\Delta,\lambda}(v)$
?}}
\end{fact}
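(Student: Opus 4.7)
The plan is to unwind the definition of the Bentkus mollifier and exploit the translation-equivariance of the orthant indicator $\calO_b$. Starting from
\[
\wt{\calO}_{b,\lambda}(v+\Delta) = \Ex_{\bg \sim N(0,1)^m}\bigl[\calO_b\bigl((v+\Delta) + \lambda\bg\bigr)\bigr],
\]
I would observe that $\calO_b(w) = \Ind[w \le b]$ depends on $w$ and $b$ only through the difference $b - w$, so that $\calO_b(v + \Delta + \lambda\bg) = \calO_{b-v}(\Delta + \lambda\bg)$ as a pointwise identity in $\bg$. Taking expectations of both sides yields
\[
\Ex_{\bg}\bigl[\calO_b\bigl(v+\Delta + \lambda\bg\bigr)\bigr] = \Ex_{\bg}\bigl[\calO_{b-v}\bigl(\Delta + \lambda\bg\bigr)\bigr] = \wt{\calO}_{b-v,\lambda}(\Delta),
\]
which is exactly the claim.

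An equally short alternative route is to use the product form \eqref{eq:bentkus-product}: since
\[
\wt{\calO}_{b,\lambda}(v+\Delta) = \prod_{i=1}^m \tInd_{b_i,\lambda}(v_i + \Delta_i),
\]
it suffices to check the univariate identity $\tInd_{b_i,\lambda}(v_i + \Delta_i) = \tInd_{b_i - v_i,\lambda}(\Delta_i)$, which in turn follows directly from
\[
\tInd_{b_i,\lambda}(v_i + \Delta_i) = \Ex_{\bg \sim N(0,1)}\bigl[\Ind[v_i + \Delta_i + \lambda\bg \le b_i]\bigr] = \Ex_{\bg}\bigl[\Ind[\Delta_i + \lambda\bg \le b_i - v_i]\bigr].
\]
There is no real obstacle here; the statement is a purely formal translation identity and its verification requires only rewriting inequalities of the form $w \le b$ as $w - v \le b - v$. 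I would choose whichever of the two presentations (direct from \Cref{def:bentkus} or via the product form and \eqref{eq:bentkus-product}) reads more cleanly in context, likely the one-line direct version.
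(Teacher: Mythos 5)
Your proof is correct, and since the paper states \Cref{fact:shift} without proof (treating it as immediate from the definitions), your one-line unwinding of \Cref{def:bentkus} together with the translation identity $\calO_b(w) = \calO_{b-v}(w-v)$ is exactly the verification the paper implicitly has in mind. Either of your two presentations is fine; both are purely formal and equivalent.
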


%\lnote{Say something about the smoothness of $\wt{\Ind}$ and $\wt{\calO}$, since we'll now be taking derivatives? \red{Rocco:} I added a few ``$C^\infty$'''s in the definitions}
In \Cref{sec:singleswap} we will also use the following global bound on the magnitude of the derivatives of the Gaussian-mollified halfline:

\begin{fact}[Standard; see Exercise 11.41 in \cite{ODbook}]
\label{fact:psi-derivatives}
%\rnote{I guess we should prove, cite, or bully the reader into accepting this?}
For all $\theta \in \R$, $\lambda > 0$, and  integer $\tay \ge 1$,
%\lnote{If we can be sure that we are going to take $d$ to be constant, then let's just drop the subscripts on the $O_d$'s.  \red{Rocco:} (nitpicking, apologies) I'm not crazy about writing ``constant $d$'' --- what is the possible parameter with respect to which $d$ might be viewed as non-constant?  And my preference is to be more precise and write $O_d$\dots but I'm okay with it either way.}\onote{concur with Rocco on this},
%\onote{Feel free to never speak to me again, but I guess if you write O and then \textbackslash left and then stuff, LaTeX puts too much space between the O and the paren.  I believe it's literally a bug in LaTeX.  I tend to change these to a parens macro.}
%\onote{I'm confused about what's written below; isn't $\psi_{b_i}$ not even differentiable?  I think perhaps someone intended that $\psi$ denote some mollified version of the halfline throughout?  By the way, although this fact was probably proved 200 years ago, I guess we could cite Ex11.41 in my book for it.}
\[ \norm*{ {\tInd_{\theta,\lambda}}^{(\tay)}}_\infty = O_d\parens*{\frac{1}{\lambda}}^d.\]
\end{fact}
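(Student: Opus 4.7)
The plan is to prove this by direct calculation, exploiting the fact that $\tInd_{\theta,\lambda}$ is essentially the standard Gaussian CDF with a linear change of variables.

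First I would rewrite $\tInd_{\theta,\lambda}$ in closed form. By the definition,
\[
\tInd_{\theta,\lambda}(t) = \Pr_{\bg \sim \calN(0,1)}\bracks*{\bg \le \tfrac{\theta - t}{\lambda}} = \Phi\parens*{\tfrac{\theta - t}{\lambda}},
\]
where $\Phi$ is the standard Gaussian CDF. Then I would apply the chain rule $d$ times. Each differentiation in $t$ pulls out a factor of $-1/\lambda$, so
\[
\tInd_{\theta,\lambda}^{(d)}(t) = \parens*{-\tfrac{1}{\lambda}}^d \Phi^{(d)}\parens*{\tfrac{\theta - t}{\lambda}} = \parens*{-\tfrac{1}{\lambda}}^d \phi^{(d-1)}\parens*{\tfrac{\theta - t}{\lambda}},
\]
where $\phi = \Phi'$ is the standard Gaussian density.

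Next I would use the standard identity $\phi^{(d-1)}(x) = (-1)^{d-1} H_{d-1}(x)\, \phi(x)$, where $H_{d-1}$ is the $(d-1)$-th (probabilists') Hermite polynomial. Since $H_{d-1}$ is a fixed polynomial of degree $d-1$ and $\phi$ decays like a Gaussian, the product $H_{d-1}(x)\phi(x)$ is a bounded continuous function on $\R$ whose sup norm is some finite constant $c_d$ depending only on $d$. Taking sup norms of the display above then gives
\[
\norm*{\tInd_{\theta,\lambda}^{(d)}}_\infty \le \frac{c_d}{\lambda^d} = O_d\parens*{\tfrac{1}{\lambda}}^d,
\]
independently of $\theta$ (since the $\theta$ dependence is merely a translation in the argument, which does not affect the sup norm).

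There is no real obstacle here; the only thing to note is that $\theta$ drops out because it appears only as an additive shift inside $\phi^{(d-1)}$, so the bound is uniform in $\theta$.
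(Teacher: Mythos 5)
Your proof is correct and is exactly the standard argument the paper is invoking by citing the textbook exercise: write $\tInd_{\theta,\lambda}(t) = \Phi((\theta-t)/\lambda)$, pull out $(-1/\lambda)$ per derivative via the chain rule, and bound $\Phi^{(d)} = \phi^{(d-1)} = \pm H_{d-1}\cdot\phi$ uniformly by a $d$-dependent constant. Nothing further is needed.
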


The following result, from Bentkus~\cite[Theorem~3(ii)]{Bentkus:90}, can be viewed as a multidimensional generalization of \Cref{fact:psi-derivatives}.
(Strictly speaking \cite{Bentkus:90} only considers $b$'s of the form $(\theta, \theta, \dots, \theta)$, but by translation-invariance the bound holds for all $b \in \R^m.$)
\begin{theorem}[Bounded sum of derivatives]
\label{thm:bentkus}
For all $m \geq 2$, $b \in \R^m$, $\lambda > 0$,
and integer $\tay \ge 1$, %\rnote{I'm not clear: are we gonna end up taking $c$ super-constant?  If so we will have to keep an eye on what exactly is the $O_c$'s below I think.}
\[
\sup_{v \in \R^m}\Bigg\{ \sum_{|\alpha|=\tay} |\partial_\alpha \wt{\calO}_{b,\lambda}(v)|\Bigg\}  = O_d\parens*{\frac{\sqrt{\log m}}{\lambda}}^d. \]
\end{theorem}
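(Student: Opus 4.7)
The plan is to exploit the product structure $\wt{\calO}_{b,\lambda}(v) = \prod_{i=1}^m \tInd_{b_i,\lambda}(v_i)$ from~\eqref{eq:bentkus-product}: since each factor involves a distinct coordinate, any mixed partial factors as $\partial_\alpha \wt{\calO}_{b,\lambda}(v) = \prod_{i=1}^m \tInd_{b_i,\lambda}^{(\alpha_i)}(v_i)$. Changing variables via $x_i := (b_i - v_i)/\lambda$ gives $\tInd_{b_i,\lambda}(v_i) = \Phi(x_i)$ and $\tInd_{b_i,\lambda}^{(k)}(v_i) = \pm\lambda^{-k}\Phi^{(k)}(x_i)$, so after pulling out the overall factor $\lambda^{-d}$ it suffices to prove the dimensionless inequality
\[
\sup_{x \in \R^m}\;\sum_{|\alpha|=d}\;\prod_{i=1}^m |\Phi^{(\alpha_i)}(x_i)| \;=\; O_d\bigl((\log m)^{d/2}\bigr).
\]

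Next, using the identity $\Phi^{(k)}(x) = (-1)^{k-1}H_{k-1}(x)\phi(x)$ (where $H_{k-1}$ is the probabilist's Hermite polynomial of degree $k-1$ and $\phi$ is the standard Gaussian density) together with the crude Hermite bound $|H_{k-1}(x)| \le C_k(1+|x|)^{k-1}$, each per-coordinate derivative satisfies $|\Phi^{(k)}(x)| \le C_d(1+|x|)^{k-1}\phi(x)$ for $1 \le k \le d$. I would then organize the sum by the support $I := \{i : \alpha_i \ge 1\}$ of $\alpha$, which has some size $s \in \{1,\dots,d\}$; for fixed $I$ of size $s$ there are only $\binom{d-1}{s-1} = O_d(1)$ multi-indices of weight $d$ supported on $I$, and since $\sum_{i \in I}(\alpha_i - 1) = d - s$, at any diagonal point $x_i \equiv x^\star$ each summand is upper bounded by $O_d\bigl((1 + x^\star)^{d-s}\phi(x^\star)^s \Phi(x^\star)^{m-s}\bigr)$; summing over the supports $|I| = s$ contributes an extra factor of $\binom{m}{s}$.

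The main obstacle is to bound the resulting diagonal quantity
\[
T_s(x^\star) := \binom{m}{s}(1 + x^\star)^{d-s}\phi(x^\star)^s\Phi(x^\star)^{m-s}
\]
by $O_d((\log m)^{d/2})$ uniformly in $x^\star \in \R$ and $s \in \{1,\dots,d\}$, and moreover to justify that the full supremum over non-diagonal $x \in \R^m$ is controlled by this diagonal case. For the one-dimensional optimization, parametrize $y := m\phi(x^\star)/x^\star$: Mills' asymptotics give $1 - \Phi(x^\star) \asymp y/m$ when $x^\star \gtrsim 1$, hence $\Phi(x^\star)^{m-s} \asymp e^{-y}$ and $\binom{m}{s}\phi(x^\star)^s \asymp (x^\star y)^s/s!$; substituting yields $T_s(x^\star) \asymp (x^\star)^d\, y^s e^{-y}/s!$, which is $O_d\bigl((x^\star)^d\bigr) = O_d\bigl((\log m)^{d/2}\bigr)$ since $\sup_{y > 0} y^s e^{-y} = O_d(1)$ while the extremal regime $y = \Theta(s)$ forces $x^\star = \sqrt{2\log m} + O(\sqrt{\log \log m})$. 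The delicate step is the symmetrization argument controlling non-diagonal $x$---particularly the region in which some $x_j$ is very negative, where $\Phi(x_j)$ decays superpolynomially and thus absorbs any polynomial blowup coming from $\phi^{(k)}(x_j)$ factors---which I would handle via a Lagrange-multiplier or Schur-type argument. Once the symmetrization is in place, summing over $s \in \{1,\dots,d\}$ loses only a factor of $d$, completing the proof.
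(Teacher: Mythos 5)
The paper does not prove \Cref{thm:bentkus} at all: it is imported directly from Bentkus~\cite[Theorem~3(ii)]{Bentkus:90}, with the only remark being that Bentkus's statement for $b = (\theta,\dots,\theta)$ extends to general $b$ by translation invariance. So you are attempting a self-contained proof of a result the authors take as a black box.

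Your reduction (factoring $\partial_\alpha\wt{\calO}_{b,\lambda}$ across coordinates, rescaling to $\Phi$, using $\Phi^{(k)} = \pm H_{k-1}\phi$ and the crude Hermite bound, and grouping multi-indices by the size $s$ of their support) is sound, and your one-dimensional optimization of the diagonal quantity $T_s(x^\star)$ via the substitution $y = m\phi(x^\star)/x^\star$ is essentially correct: the extremizer sits near $x^\star \approx \sqrt{2\log m}$ and yields $(x^\star)^d = O_d((\log m)^{d/2})$, as required. However, the argument has a genuine hole exactly where you flag it: you never show that the supremum of $\sum_{|\alpha|=d}\prod_i |\Phi^{(\alpha_i)}(x_i)|$ over all $x \in \R^m$ is controlled by its restriction to the diagonal. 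This is not a cosmetic step. The summands are not individually maximized at equal coordinates, and when some $x_j \to -\infty$ the ratio $|\Phi^{(k)}(x_j)|/\Phi(x_j) \sim |x_j|^{\,k}$ blows up polynomially even though $\Phi(x_j)$ itself decays; the interplay between these competing forces across coordinates has to be made quantitative. ``Handle via a Lagrange-multiplier or Schur-type argument'' is a plan, not a proof, and it is precisely where Bentkus's original argument does real technical work. As written, the proposal is therefore an incomplete sketch rather than a proof; either close the symmetrization step (e.g.\ via a per-coordinate generating-function bound of the form $\prod_i\bigl(1 + \sum_{k\ge 1} r_k(x_i)t^k\bigr)$ together with a uniform estimate on each factor), or do as the paper does and cite~\cite{Bentkus:90}.
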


\ignore{Glossary:

``\emph{$(\Lambda,\delta)$-quasi-lower sandwich'' $\Longrightarrow$ `` \emph{$(\Lambda,\delta)$-inner approximator}}''

``\emph{$(\Lambda,\delta)$-quasi-upper sandwich'' $\Longrightarrow$ `` \emph{$(\Lambda,\delta)$-outer approximator}}''

}

Recall from (\ref{eq:boundaries}) that
$\Game_{-\Lambda}\calO_b = \calO_b \setminus \calO_{b - (\Lambda, \dots, \Lambda)}$
and
$\Game_{+\Lambda}\calO_b = \calO_{b + (\Lambda, \dots, \Lambda)} \setminus \calO_{b}.$
We will use the following notions of approximation for translated orthants:

\begin{definition}[Inner and outer approximators for orthants] We say that $\Upsilon : \R^m \to [0,1]$ is a
\emph{$(\Lambda,\delta)$-inner approximator for $\calO_b$} if
\[ |\Upsilon(v) - \calO_b(v)| \le \delta \quad \text{for all $v \notin \Game_{-\Lambda}\calO_b$}. \]
Similarly, we say that $\Upsilon$ is a \emph{$(\Lambda,\delta)$-outer approximator for $\calO_b$} if
\[ |\Upsilon(v) - \calO_b(v)| \le \delta \quad \text{for all $v \notin \Game_{+\Lambda}\calO_b$}.  \]
\end{definition}

The connection between Bentkus's mollifier and these notions of approximation is established in the following claim.

\begin{lemma}[Bentkus's mollifier, appropriately translated, yields inner and outer approximators for translated orthants]
\label{lemma:bentkus-sandwich}
For all $b \in \R^m$ and $\lambda,\delta \in (0,1)$, there are $b^{\inner},b^{\outter} \in \R^m$ such that $\wt{\calO}_{b^{\inner},\lambda},\wt{\calO}_{b^{\outter},\lambda}$ are $(\Lambda,\delta)$-inner and -outer approximators for $\calO_b$ respectively,
where $\Lambda = \Theta(\lambda \sqrt{\log(m/\delta)}).$\ignore{ \rnote{Note that  the earlier definition of $\Lambda$ was $\Lambda = \Theta(\lambda \sqrt{\log(1/\delta)})$.  Is there a way to not pay this $\sqrt{\log m}$ factor? Not seeing one right now.}}
\end{lemma}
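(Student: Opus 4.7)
The plan is to set $b^{\inner} = b - (\Lambda/2)\mathbf{1}_m$ and $b^{\outter} = b + (\Lambda/2)\mathbf{1}_m$, and then verify the two approximation conditions coordinate-by-coordinate using the product structure from \Cref{eq:bentkus-product}, namely
\[
\wt{\calO}_{b',\lambda}(v) \;=\; \prod_{i=1}^m \tInd_{b'_i,\lambda}(v_i) \;=\; \prod_{i=1}^m \Phi\!\bigl((b'_i-v_i)/\lambda\bigr).
\]
The choice $\Lambda = C\lambda\sqrt{\log(m/\delta)}$ for a sufficiently large absolute constant $C$ will satisfy both conditions. The key analytic facts I will use are the standard Gaussian tail bounds $\Phi(-x) \le e^{-x^2/2}$ and $\Phi(x) \ge 1 - e^{-x^2/2}$ for $x>0$, together with the elementary product inequality $\prod_i(1-\eps_i) \ge 1 - \sum_i \eps_i$.

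For the inner approximator $\wt{\calO}_{b^{\inner},\lambda}$, I need to handle two cases for $v \notin \Game_{-\Lambda}\calO_b$. If $v \in \calO_{b-\Lambda\mathbf{1}_m}$ (so $\calO_b(v)=1$), then $v_i \le b_i - \Lambda$ for every $i$, which forces $(b^{\inner}_i - v_i)/\lambda \ge \Lambda/(2\lambda)$ in every factor; each factor is therefore $\ge 1 - e^{-\Lambda^2/(8\lambda^2)}$, so the product is at least $1 - m\,e^{-\Lambda^2/(8\lambda^2)} \ge 1 - \delta$ once $C$ is large enough. If $v \notin \calO_b$ (so $\calO_b(v)=0$), then some coordinate satisfies $v_i > b_i$, hence $(b^{\inner}_i - v_i)/\lambda < -\Lambda/(2\lambda)$; since every $\tInd_{b^{\inner}_j,\lambda}(v_j)\in(0,1)$, the whole product is bounded by this single tail, which is $\le e^{-\Lambda^2/(8\lambda^2)} \le \delta$.

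The outer approximator is entirely symmetric: for $v \in \calO_b$ every coordinate gives $(b^{\outter}_i - v_i)/\lambda \ge \Lambda/(2\lambda)$, so by the same $m$-fold union-bound the product exceeds $1-\delta$; for $v \notin \calO_{b+\Lambda\mathbf{1}_m}$ some coordinate has $v_i > b_i + \Lambda$, giving a single factor $\le e^{-\Lambda^2/(8\lambda^2)} \le \delta$ that bounds the product.

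There is no real obstacle here — the argument is a direct calculation — but it is worth noting where the $\sqrt{\log m}$ (as opposed to $\sqrt{\log(1/\delta)}$) enters: it comes from the union-bound step $\prod_i (1-\eps_i) \ge 1 - m\eps$ in the ``deep interior'' case, which is unavoidable when the mollifier is defined as a product of $m$ univariate smoothings. Consequently the final choice is $\Lambda = \Theta(\lambda\sqrt{\log(m/\delta)})$, exactly as claimed.
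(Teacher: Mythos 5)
Your proof is correct and follows essentially the same route as the paper's: both set $b^{\inner}$ and $b^{\outter}$ by shifting $b$ inward/outward by $\Theta(\lambda\sqrt{\log(m/\delta)})$, then use the product form~\eqref{eq:bentkus-product} together with the Gaussian tail bound and an $m$-fold union bound in the ``deep interior'' case (and a single-factor bound in the ``exterior'' case). The only cosmetic difference is that you fix the shift at $\Lambda/2$ while the paper leaves it as a parameter $\beta = \Theta(\lambda\sqrt{\log(m/\delta)}) < \Lambda$; your calculation with $\Phi(-x) \le e^{-x^2/2}$ is a valid instantiation of the paper's ``standard Gaussian tail bound'' step.
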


\begin{proof}
Let $b^{\inner} = b - \beta \Ind_m$ where $\beta = \Theta(\lambda \sqrt{\log(m/\delta)}) < \Lambda$ will be specified in more detail later. We show below that $\wt{\calO}_{b^{\inner},\lambda}$ is an $(\Lambda,\delta)$-inner approximator for $\calO_b$; an analogous argument in which the $v \in \calO_b$ and $v \notin \calO_b$ cases switch roles
shows that $\wt{\calO}_{b^{\outter},\lambda}$ is a $(\Lambda,\delta)$-outer approximator for $\calO_b$, where $b^{\outter} = b + \beta\Ind_m.$
\ignore{
%\rnote{The ``entirely analogous argument'' is why we take $\beta$ to be $\Theta(\lambda \sqrt{\log(\red{m}/\delta)})$; for the argument given below, analyzing just $\wt{\calO}_{b^{\inner},\lambda}$, it would be enough to have $\beta = \Theta(\lambda \sqrt{\log(1/\delta)}).$ (Taking $\beta = \Theta(\lambda \sqrt{\log(1/\delta)})$ wouldn't result in any savings in the theorem's overall bound, so it really doesn't matter.) Basically the $v \in \calO_b$ and $v \notin \calO_b$ cases switch roles for the outer approximator I think.
%
%We can add more exposition to the proof detailing some of this if people want?  I think it would be overkill to do the whole $\wt{\calO}_{b^{\outter},\lambda}$ analysis though.}
}

Fix $v \notin \Game_{-\Lambda}\calO_b$.  There are two possibilities:  either $v \in \calO_b$, or $v \notin \calO_b.$
We first consider the case in which $v$ lies in $\calO_b$.  Since $v \notin \Game_{-\Lambda}\calO_b$, we have $v_i \leq b_i - \Lambda$ for all $i \in [m].$ Since $\calO_b(v)=1$, we must show that $\wt{\calO}_{b^{\inner},\lambda}(v) \geq 1-\delta$. Recalling \Cref{eq:bentkus-product} and the fact that the function ${\tInd_{\theta,\lambda}}: \R \to (0,1)$ is monotone decreasing for all $\theta \in \R$ and $\lambda>0$, it suffices to show that $\wt{\calO}_{b^{\inner},\lambda}(b - \Lambda \Ind_m) \geq 1-\delta.$ Again by \Cref{eq:bentkus-product} this holds if and only if
\[
\prod_{i=1}^m {\tInd_{b_i - \beta,\lambda}}(b_i - \Lambda) \geq 1-\delta,
\]
which is equivalent to
\[
{\parens*{\Prx_{\bg \sim \calN(0,1)}\bracks*{\bg \leq (\Lambda - \beta)/\lambda}}}^m \geq 1-\delta,
\]
which holds if
\begin{equation} \label{eq:goal1}
\Prx_{\bg \sim \calN(0,1)}[\bg \leq (\Lambda - \beta)/\lambda] \geq 1-\delta/m.
\end{equation}
By the well-known Gaussian tail bound $\Pr[ \bg  \geq t] \leq 1 - \frac1{t\sqrt{2\pi}} e^{-t^2/2}$ for $t > 0$ (see e.g. \cite{Feller}, Section 7.1), we see that to achieve \Cref{eq:goal1} it suffices to have $\Lambda - \beta \geq C \lambda \sqrt{\ln(m/\delta)}$ for an absolute constant $C>0$, and hence $\Lambda = \Theta(\lambda \sqrt{\log({m}/\delta)})$ suffices.

Now we turn to the case in which $v \notin \calO_b$, and hence for some $i \in [m]$ we have $v_i > b_i$; without loss of generality we suppose that $v_1 > b_1.$  Since $\calO_b(v)=0$ in this case, we must show that $\wt{\calO}_{b^{\inner},\lambda}(v) \leq \delta$.
By \Cref{eq:bentkus-product} this holds if and only if
\[
\prod_{i=1}^m {\tInd_{b_i - \beta,\lambda}}(v_i) \leq \delta,
\]
which holds if
\[
{\tInd_{b_1 - \beta,\lambda}}(v_1) \leq \delta,
\]
which is equivalent to
\[
\Prx_{\bg \sim \calN(0,1)}[v_1 + \lambda \bg \leq b_1 - \beta] \leq \delta.
\]
Recalling that $v_1 > b_1$, it suffices to have
\[
\Prx_{\bg \sim \calN(0,1)}[\bg \leq - \beta/\lambda] \leq \delta,
\]
which holds (with room to spare) for our choice of $\beta$ by the standard Gaussian tail bound.
\end{proof}

\subsection{The connection between inner/outer approximators and CDF distance}

The following elementary properties of inner/outer approximators  will be useful for us:

\begin{fact}
\label{fact:two-properties}
 Fix $b \in \R^m$ and let $\Upsilon^{\inner}, \Upsilon^{\outter}$ be $(\Lambda,\delta)$-inner and -outer approximators for $\calO_b$.  Then
\begin{enumerate}
\item
$\Upsilon^{\inner}(v) - \delta \le \calO_b(v) \le \Upsilon^{\outter}(v) + \delta$ for all $v \in \R^m$.
\item  $\Upsilon^{\inner}$ is a $(\Lambda,\delta)$-\emph{outer} approximator for $\calO_{b-\Lambda\Ind_m}$, and similarly $\Upsilon^{\outter}$ is a $(\Lambda,\delta)$-\emph{inner} approximator for $\calO_{b + \Lambda\Ind_m}$.
\end{enumerate}
\end{fact}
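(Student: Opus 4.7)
The plan is to verify both parts by direct unpacking of the definitions, with a small case split in Part 1 and an observation identifying two boundary sets in Part 2. I do not expect any real obstacle.

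For Part 1, I would prove the left inequality $\Upsilon^{\inner}(v) - \delta \le \calO_b(v)$ by splitting on whether $v \in \Game_{-\Lambda}\calO_b$. If $v \notin \Game_{-\Lambda}\calO_b$, the inner-approximator property applies and gives $|\Upsilon^{\inner}(v) - \calO_b(v)| \le \delta$ directly. If $v \in \Game_{-\Lambda}\calO_b \subseteq \calO_b$, then $\calO_b(v)=1$ and the bound is trivial because $\Upsilon^{\inner}$ is $[0,1]$-valued, so $\Upsilon^{\inner}(v) - \delta \le 1 = \calO_b(v)$. The right inequality $\calO_b(v) \le \Upsilon^{\outter}(v)+\delta$ is the symmetric case: if $v \notin \Game_{+\Lambda}\calO_b$ we use the outer-approximator property, and if $v \in \Game_{+\Lambda}\calO_b$ then $v \notin \calO_b$ so $\calO_b(v)=0 \le \Upsilon^{\outter}(v)+\delta$.

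For Part 2, the key observation is the set equality
\[
\Game_{+\Lambda}\calO_{b-\Lambda\Ind_m} \;=\; \calO_{(b-\Lambda\Ind_m)+\Lambda\Ind_m}\setminus \calO_{b-\Lambda\Ind_m} \;=\; \calO_b \setminus \calO_{b-\Lambda\Ind_m} \;=\; \Game_{-\Lambda}\calO_b,
\]
using \Cref{eq:boundaries}. Consequently, for any $v \notin \Game_{+\Lambda}\calO_{b-\Lambda\Ind_m}$ we also have $v \notin \Game_{-\Lambda}\calO_b$, so the inner-approximator property of $\Upsilon^{\inner}$ at $v$ gives $|\Upsilon^{\inner}(v)-\calO_b(v)|\le \delta$. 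Moreover, since $\calO_{b-\Lambda\Ind_m}$ and $\calO_b$ differ exactly on $\Game_{-\Lambda}\calO_b$, we have $\calO_{b-\Lambda\Ind_m}(v) = \calO_b(v)$ for such $v$, and the required inequality $|\Upsilon^{\inner}(v)-\calO_{b-\Lambda\Ind_m}(v)|\le \delta$ follows. The claim that $\Upsilon^{\outter}$ is a $(\Lambda,\delta)$-inner approximator for $\calO_{b+\Lambda\Ind_m}$ follows by the analogous identity $\Game_{-\Lambda}\calO_{b+\Lambda\Ind_m} = \Game_{+\Lambda}\calO_b$.

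Since both parts reduce to chasing the definitions of $\Game_{\pm\Lambda}\calO_b$, there is no substantive hard step; the only thing worth being careful about is ensuring that the ``trivial'' case in Part 1 uses the assumption $\Upsilon^{\inner},\Upsilon^{\outter}\colon \R^m \to [0,1]$.
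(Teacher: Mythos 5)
Your proof is correct, and since the paper states this as a ``Fact'' with no proof at all, your definition-chasing argument is exactly the implicit justification the authors have in mind. The set identity $\Game_{+\Lambda}\calO_{b-\Lambda\Ind_m} = \Game_{-\Lambda}\calO_b$ (and its mirror) is the clean way to see Part 2, and the case split in Part 1 correctly uses that $\Upsilon^{\inner}, \Upsilon^{\outter}$ take values in $[0,1]$.
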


The next lemma is straightforward but very useful for us.   Intuitively, it says that in order for an $\R^m$-valued random variable $\tilde{\bv}$ to fool a translated orthant $\calO_b$ relative to another $\R^m$-valued random variable $\bv$, it suffices to (i) have $\tilde{\bv}$ fool both inner and outer approximators for $\calO_b$, and (ii) establish anticoncentration of the original random variable $\bv$ at the inner and outer boundaries of $\calO_b$. We explain in detail how we will use this lemma after giving its proof below.

\begin{lemma}
\label{lem:soft-to-hard}
Let $\Upsilon^\inner, \Upsilon^\outter : \R^m \to [0,1]$ be $(\Lambda,\delta)$-inner and -outer approximators for $\calO_b$.  Let $\bv$ and $\tilde{\bv}$ be $\R^m$-valued random variables satisfying:
\begin{equation} \big| \E\big[\Upsilon(\bv)\big] -  \E\big[\Upsilon(\tilde{\bv})\big]\big| \le \gamma \label{eq:fool-Upsilon}
\end{equation}
for both $\Upsilon \in \{ \Upsilon^{\outter},\Upsilon^{\inner}\}$.  Then
\[ \big| \Pr\big[ \bv \in \calO_b\big] - \Pr\big[\tilde{\bv}\in \calO_b\big] \big| \le \gamma + 2\delta + \Pr\big[ \bv \in \Game_{\pm\Lambda}\calO_b\big]. \]
\end{lemma}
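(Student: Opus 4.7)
\medskip

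\noindent\textbf{Proof plan.} The plan is a standard sandwiching argument: use the outer approximator to upper bound $\Pr[\tilde\bv\in\calO_b]-\Pr[\bv\in\calO_b]$ and the inner approximator to lower bound the same quantity, paying $\delta$ for each switch between $\calO_b$ and its approximator, paying $\gamma$ for each switch between $\bv$ and $\tilde\bv$, and paying $\Pr[\bv\in\Game_{\pm\Lambda}\calO_b]$ to absorb the region where the approximators are unreliable.

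\medskip

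\noindent\textbf{Upper bound.} First I would write
\[
\Pr[\tilde\bv\in\calO_b] \;=\; \E[\calO_b(\tilde\bv)] \;\le\; \E[\Upsilon^{\outter}(\tilde\bv)] + \delta \;\le\; \E[\Upsilon^{\outter}(\bv)] + \gamma + \delta,
\]
using part~(1) of \Cref{fact:two-properties} for the first inequality and the hypothesis~\eqref{eq:fool-Upsilon} with $\Upsilon=\Upsilon^\outter$ for the second. Then I would split the expectation $\E[\Upsilon^\outter(\bv)]$ according to whether $\bv\in\Game_{+\Lambda}\calO_b$ or not. On the ``good'' event $\bv\notin\Game_{+\Lambda}\calO_b$ the defining property of an outer approximator yields $\Upsilon^\outter(\bv)\le\calO_b(\bv)+\delta$; on the ``bad'' event I simply bound $\Upsilon^\outter(\bv)\le 1$. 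This gives
\[
\E[\Upsilon^\outter(\bv)] \;\le\; \E[\calO_b(\bv)] + \delta + \Pr[\bv\in\Game_{+\Lambda}\calO_b],
\]
and chaining the two displays produces the desired one-sided bound.

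\medskip

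\noindent\textbf{Lower bound.} The symmetric direction uses $\Upsilon^\inner$ in place of $\Upsilon^\outter$. I would write $\Pr[\tilde\bv\in\calO_b]\ge\E[\Upsilon^\inner(\tilde\bv)]-\delta\ge\E[\Upsilon^\inner(\bv)]-\gamma-\delta$, and then split $\E[\Upsilon^\inner(\bv)]$ according to whether $\bv\in\Game_{-\Lambda}\calO_b$, using $\Upsilon^\inner(\bv)\ge\calO_b(\bv)-\delta$ on the complement of $\Game_{-\Lambda}\calO_b$ and $\Upsilon^\inner(\bv)\ge 0$ otherwise, to obtain the matching inequality in the opposite direction.

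\medskip

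\noindent\textbf{Combining.} Finally I would combine the two one-sided bounds and use that $\Game_{+\Lambda}\calO_b$ and $\Game_{-\Lambda}\calO_b$ are disjoint, so
\[
\Pr[\bv\in\Game_{+\Lambda}\calO_b] + \Pr[\bv\in\Game_{-\Lambda}\calO_b] \;=\; \Pr[\bv\in\Game_{\pm\Lambda}\calO_b],
\]
which yields the stated inequality. There is no real obstacle here; the only thing to be careful about is tracking the constants so that both the $2\delta$ (one $\delta$ from replacing $\calO_b$ by the approximator, one $\delta$ from the approximator's defining inequality on the good region) and the single copy of $\Pr[\bv\in\Game_{\pm\Lambda}\calO_b]$ appear exactly as claimed, which works out because the boundary contribution on each side is a different piece of this disjoint union.
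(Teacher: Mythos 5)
Your proof is correct and follows essentially the same sandwiching argument as the paper's. The only cosmetic difference is that the paper routes the bound $\E[\Upsilon^\outter(\bv)] - \delta \le \Pr[\bv\in\calO_{b+\Lambda\Ind_m}]$ through Item~2 of \Cref{fact:two-properties} (``$\Upsilon^\outter$ is an inner approximator for the shifted orthant'') and then decomposes $\calO_{b+\Lambda\Ind_m} = \calO_b \sqcup \Game_{+\Lambda}\calO_b$, whereas you obtain the same inequality by directly splitting the expectation on the event $\bv\in\Game_{+\Lambda}\calO_b$ versus its complement; the two derivations are interchangeable.
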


\begin{proof} The proof follows similar lines to the arguments used to prove Lemma~3.3 in \cite{HKM12}.
We first note that
\begin{align*}
\Pr\big[\tilde{\bv} \in \calO_{b}\big] & \le \E\big[\Upsilon^{\outter}(\tilde{\bv})\big] + \delta \tag*{(Item 1 of \Cref{fact:two-properties})} \\
&\le (\E\big[\Upsilon^{\outter}(\bv)\big] + \gamma)  + \delta \tag*{(\Cref{eq:fool-Upsilon} with $\Upsilon = \Upsilon^{\outter}$)} \\
&\le \Pr\big[\bv \in \calO_{b+\Lambda\Ind_m}\big] + \gamma + 2\delta. \tag*{(Item 2 of \Cref{fact:two-properties})}
\end{align*}
Combining this with a symmetric argument for the lower bound, we have:
\begin{equation}  \Pr\big[\bv \in \calO_{b - \Lambda\Ind_m}\big] - \gamma -2\delta \le
\Pr\big[\tilde{\bv} \in \calO_b \big] \le  \Pr\big[\bv \in \calO_{b+ \Lambda\Ind_m}\big] + \gamma + 2\delta.\label{eq:levy}
\end{equation}
To convert this type of closeness into CDF closeness, we observe that
\begin{align*}
\Pr\big[\bv \in \calO_{b+ \Lambda\Ind_m}\big]  &= \Pr\big[\bv \in \calO_b\big] + \Pr\big[\bv \in \Game_{+\Lambda}\calO_b\big]  \\
\Pr\big[\bv \in \calO_{b- \Lambda\Ind_m}\big]  &= \Pr\big[\bv \in \calO_{b}\big] -\Pr\big[\bv \in \Game_{-\Lambda}\calO_b \big].
\end{align*}
Plugging these identities into \Cref{eq:levy}, we conclude that
\begin{align*}
\Pr\big[\tilde{\bv} \in \calO_b \big] &=  \Pr\big[\bv \in \calO_b \big] \pm \big(\gamma+ 2\delta + \Pr\big[\bv \in \Game_{+\Lambda}\calO_b\big] +\Pr\big[\bv \in \Game_{-\Lambda}\calO_b \big] \big) \\
&= \Pr\big[\bv \in \calO_b \big] \pm \big(\gamma + 2\delta + \Pr\big[\bv \in \Game_{\pm\Lambda}\calO_b\big] \big),
\end{align*}
thus completing the proof of \Cref{lem:soft-to-hard}.\end{proof}

\subsubsection{Applying \Cref{lem:soft-to-hard} in the context of \Cref{thm:fool-k-tau-regular}, and the organization of the rest of this paper.}
Applying \Cref{lem:soft-to-hard} with $\bv$ and $\tilde{\bv}$ being $A\bu$ and $A \bz$ respectively, the task of bounding
\[ \Big| \Prx_{\bu \sim \bn}\big[A\bu \in \calO_b \big] - \Prx_{\bz \sim \mathscr{G}_{\mathrm{MZ}}}\big[ A\bz \in \calO_b\big] \Big| \]
reduces to the following two-step program:
\begin{enumerate}
\item Establishing anticoncentration within orthant boundaries: bounding $\Pr\big[ A\bu \in \Game_{\pm \Lambda}\calO_b\big]$; and,
\item Fooling Bentkus's mollifier: bounding $\big|\E\big[ \wt{\calO}(A\bu)\big] - \E\big[\wt{\calO}(A\bz)\big]\big|$ for $\wt{\calO} \in \{ \wt{\calO}_{b^{\outter},\lambda},\wt{\calO}_{b^{\inner},\lambda}\}$, the inner and outer approximators for $\calO_b$ given by \Cref{lemma:bentkus-sandwich}.
\end{enumerate}
\Cref{sec:anticonc} is devoted to the former, and \Cref{sec:fool-bentkus} the latter. In \Cref{sec:put-together} we put these pieces together to prove \Cref{thm:fool-k-tau-regular}.

%!TEX root = main.tex

\section{Boolean anticoncentration within orthant boundaries}
\label{sec:anticonc}

The main result of this section is \Cref{thm:anticonc}, which provides the first step of the two-step program described at the end of \Cref{sec:mollifier}:

\begin{theorem}[Boolean anticoncentration within orthant boundaries]
\label{thm:anticonc}
Assume $A \in \R^{m\times n}$ satisfies the following property: each of its row vectors has a $\tau$-regular subvector of $2$-norm~$1$, where $\tau$ is as set in~\Cref{sec:params}.\footnote{Equivalently, $A$ is $(n, \tau)$-standardized.}
%%%\onote{Er, sorry about this weirdness, but it was weird that $k$ showed up in the hypothesis of the theorem, but not the conclusion.  Any improving suggestion? \blue{Li-Yang: How about saying $(n,\tau)$-standardized instead of $(\infty,\tau)$-standardized in the footnote?  (Actually, I think it'd also be fine if we didn't have a footnote at all.)} \red{Rocco:  I slightly reworded the footnote --- somehow the ``One might say'' and the scare quotes and the $\infty$ seemed weird to me, but the current phrasing just points out that this is equivalent to a previously-defined notion; somehow I'm more comfortable with that.  I agree with Ryan that it's weird to have a $k$ in the hypothesis but not the conclusion, and I agree with Li-Yang that it'd be nice to avoid $\infty$. }}
Then for all $b \in \R^m$ and $\Lambda \geq \tau$,
%\rnote{Was $\Lambda \in (0,1)$ but if $\Lambda = 2^{-2^{2^n}}$ this isn't true, right?  We need $\Lambda \geq \tau$ in order to be able to do the union bound (building up the $\Lambda$-width strip by $\Lambda/\tau$ many disjoint $\tau$-width strips) giving us this bound.}
we have
\[ \Prx_{\bu \sim \bn} \bracks*{A\bu \in \Game_{\pm\Lambda} \calO_{b}}= O\parens*{\Lambda\sqrt{\log m}}.\]
\end{theorem}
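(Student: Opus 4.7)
The plan is to derive \Cref{thm:anticonc} from the Littlewood--Offord theorem for polytopes (\Cref{thm:LO-for-polytopes}) together with appropriate technical extensions of it. First, I would reformulate the event. By \Cref{eq:boundaries}, the set $\Game_{\pm\Lambda}\calO_b$ is the symmetric difference between the two orthants $\calO_{b-\Lambda\Ind_m}$ and $\calO_{b+\Lambda\Ind_m}$, and so the event $A\bu\in\Game_{\pm\Lambda}\calO_b$ decomposes into the ``outer boundary'' event $A\bu\in\calO_{b+\Lambda\Ind_m}$ with some $A_i\cdot\bu > b_i-\Lambda$, and a symmetric ``inner boundary'' event. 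Each of these is a width-$2\Lambda$ boundary event of exactly the form appearing in \Cref{thm:LO-for-polytopes}, modulo a mismatch in scale and in the hypotheses on the entries $A_{ij}$.

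The main conceptual point is matching of scales. \Cref{thm:LO-for-polytopes} assumes $|A_{ij}|\ge 1$ for every $i,j$, so that each $A_i\cdot\bu$ has standard deviation $\Theta(\sqrt n)$, and it bounds a width-$2$ boundary probability by $O(\sqrt{\log m}/\sqrt n)$. In our setting each row is $\tau$-regular of $2$-norm $1$, so $A_i\cdot\bu$ has standard deviation $\Theta(1)$; in these rescaled units, one width-$2\Lambda$ strip here corresponds to roughly $\Lambda\sqrt n$ shifted width-$2/\sqrt n$ strips in the LO-for-polytopes setting, each contributing $O(\sqrt{\log m}/\sqrt n)$, for a total of $O(\Lambda\sqrt{\log m})$. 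To make this rigorous I would tile the width-$2\Lambda$ slab by thin parallel slabs and apply a variant of \Cref{thm:LO-for-polytopes} to each tile, summing the resulting bounds. Any ``head'' coordinates of each row (i.e., coordinates outside the $\tau$-regular norm-$1$ subvector) can be handled by conditioning: after freezing those bits, the thresholds shift but the tail polytope remains $\tau$-regular of $2$-norm $1$, so the conditional anticoncentration bound is preserved and averages back to $O(\Lambda\sqrt{\log m})$.

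The chief obstacle is that \Cref{thm:LO-for-polytopes} requires $|A_{ij}|\ge 1$ for \emph{every} entry, whereas a $\tau$-regular vector of $2$-norm $1$ can have arbitrarily small individual entries, violating that hypothesis outright. I would handle this by a dyadic decomposition of each row by coefficient magnitudes, partitioning the tail coordinates of row $i$ into buckets $\{j : |A_{ij}|\in [2^{-s-1},2^{-s}]\}$ and applying (an extension of) \Cref{thm:LO-for-polytopes} to the coordinates at a single dominant scale while conditioning on the rest. The $\tau$-regular, $2$-norm-$1$ assumption guarantees that most of the $\ell_2$ mass of each row sits on a bounded number of dyadic scales of comparable magnitude, which is what allows the LO-for-polytopes machinery to apply after rescaling within a scale. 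The genuine technical difficulty, which the paper flags when it says that ``various technical extensions of \Cref{thm:LO-for-polytopes} are required,'' is to recombine the per-scale bounds across the $m$ rows while preserving the $\sqrt{\log m}$ dependence (rather than incurring an extra $\log m$ from a naive union bound over scales), especially since different rows may have different dominant scales. I expect this recombination step, rather than the basic reduction or the covering argument, to be where the real work lies.
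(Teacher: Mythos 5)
Your high-level skeleton is right --- tile the width-$2\Lambda$ slab into $O(\Lambda/\tau)$ thin strips, prove a per-strip bound of $O(\tau\sqrt{\log m})$, and handle the mismatch between the hypothesis ``$\tau$-regular subvector of $2$-norm $1$'' and the Littlewood--Offord hypothesis ``$|A_{ij}|\ge 1$ for all $i,j$'' by some extension of \Cref{thm:LO-for-polytopes}.  The tiling step (union bound over $2\lceil\Lambda/\tau\rceil$ translated copies of $b$) is exactly what the paper does in \Cref{sec:grand-finale}.  But your proposed mechanism for closing the scale mismatch --- a dyadic decomposition of each row by coefficient magnitude --- diverges from the paper's route, and you yourself put your finger on the reason it is problematic: recombining per-scale bounds across $m$ rows with different dominant scales without losing the $\sqrt{\log m}$ is a genuine obstacle, and the proposal offers no mechanism for it.  That is where your argument has a gap.

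The paper avoids this obstacle entirely with a different reduction.  Rather than decompose each row by scale, it applies a single global random transformation: randomly hash the $n$ coordinates into $B=\lfloor 1/\tau^2\rfloor$ buckets and sum within each bucket with independent uniform $\pm 1$ signs.  The crucial observation is that this exactly preserves the distribution of $A\bu$ --- it has the same law as $\ol{\bA}\bu'$ for $\bu'\sim\bits^B$ --- while making the coefficient structure uniform.  \Cref{prop:good-spread} (Paley--Zygmund plus negative association of the bucket masses) shows that for a $\tau$-regular norm-$1$ row, a constant fraction of buckets get $\Omega(1/B)$ of the $\ell_2$ mass; \Cref{fact:ant} and a Chernoff bound then give \Cref{prop:mo-good-spread}, that a constant fraction of the resulting $B$ entries exceed $\Theta(\tau)$ in magnitude. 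Union-bounding over the $m$ rows gives \Cref{cor:most-good-spread}: except with probability $m\exp(-\Omega(1/\tau^2))$, \emph{every} row of $\ol{\bA}$ is ``spread.''  This is exactly the input to the robust Littlewood--Offord theorem \Cref{thm:anticonc1} (proved via the $\alpha$-semi-thin notion and \Cref{lem:anticonc1}), which only requires that in each row an $\alpha$-fraction of entries have magnitude $\ge\lambda$ --- not all of them.  Applying it with $\alpha=1/512$ and $\lambda=\tau/2$ finishes the proof.  The ``head'' coordinates outside the regular norm-$1$ subvector are absorbed trivially because extra coordinates can only increase bucket masses (this is noted in the proof of \Cref{prop:mo-good-spread}).

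So the extension you should be looking for is not per-row, per-scale, but rather: (i) a random coarsening that reduces the dimension from $n$ to $B\approx 1/\tau^2$ while preserving the distribution of $A\bu$ and concentrating a constant fraction of each row's mass at scale $\Theta(\tau)$, plus (ii) the $\alpha$-semi-thin generalization of \Cref{lem:thin-boundary}, which replaces the requirement ``every edge incident to $\bdry H_i$ exits $\bdry H_i$'' with ``an $\alpha$-fraction of such edges exit'' and pays only a $1/\alpha$ factor.  The whole $m$-row interaction is handled in one shot by the union bound in \Cref{cor:most-good-spread}, which works precisely because $\exp(-\Omega(1/\tau^2))$ beats $m$ (this constrains the choice of $\tau$ in \Cref{sec:params}).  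Your dyadic approach, by contrast, threatens to introduce a union bound over $O(\log n)$ scales per row, and the cross-row alignment of scales is exactly what you don't control.
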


En route to proving \Cref{thm:anticonc} we will establish a ``Littlewood--Offord theorem for polytopes,'' \Cref{thm:LO-for-polytopes}, that was stated in \Cref{sec:LO}.  \Cref{thm:LO-for-polytopes} will in fact be obtained as a special case of a more general result about intersections of $m$ arbitrary \emph{unate} functions (namely~\Cref{lem:thin-boundary}).  %\lnote{Should we give a brief description of the statement here? {\red{Rocco:  How about just a forward pointer?}}}

\begin{definition}[Unateness]
A function $F: \bn \to \{0,1\}$ is \emph{unate in direction $\sigma \in \bn$} if the function $G(x_1,\dots,x_n) = F(\sigma_1 x_1,\dots, \sigma_n x_n)$ is a monotone Boolean function, meaning that $G(x) \leq G(x)$ whenever $x_j \leq x_j$ for all $j \in [n].$ We  refer to $\sigma$ as the \emph{orientation} of $F$.
\end{definition}

Our analysis, dealing as it does with intersections of unate functions, is somewhat reminiscent of that of \cite{Kane14intersection}, and indeed we will establish the main result of \cite{Kane14intersection}---an upper bound of $O(\sqrt{n \log m})$ on the average sensitivity of any intersection of $m$ unate functions---in the course of our analysis.

%%%\paragraph{Roadmap for this section: How we'll get to \Cref{thm:anticonc}}
%%%\lnote{For my own benefit, to be deleted}
%%%\begin{itemize}
%%%\item \Cref{sec:anticonc-setup}: Preliminaries
%%%\item \Cref{sec:Kane-reproof}: Reproving Kane's average sensitivity bound
%%%\item \Cref{sec:surface}: Anticoncentration under the assumption that $|A_{ij}|\ge 1$ (\Cref{thm:LO-for-polytopes}, Littlewood--Offord for polytopes)
%%%\item \Cref{sec:semi-thin}: Robust variant where ``at least an $\alpha$ fraction of each row of $A$ has magnitude $\lambda$"
%%%\item \Cref{sec:grand-finale}: Grand finale: anticoncentration for $(k,\tau)$-standardized $A$'s (\Cref{thm:anticonc})
%%%
%%%\end{itemize}
%%%

\subsection{Caps and their boundary edges}
\label{sec:anticonc-setup}

Let $G$ and $H$ be subsets of $\bn$.  We typically think of $G$ as a $G$eneral/arbitrary set and $H$~as a $H$alfspace, though formally $H$ will only need to be unate.  Throughout this section we write $\sigma \in \bn$ to denote the orientation of $H$.

We call the set $G \setminus H$ the \emph{cap},  the set $G\cap H$ the \emph{body}, and the complement of $G$ the \emph{exterior}.
%%   \rnote{The fates are conspiring against us, it's inopportune that ``G cap H'' is how one reads/says the body and not the cap...
%%
%%Should we consider replacing  ``cap'' by ``dome'' or (not entirely joking) ``toque'' or something?  No joking, I literally say ``G cap H'' in my mind when I read ``$G \cap H$'' (I still sometimes move my lips when I read).  Or is this confusing only to me?  \blue{Li-Yang:} Yeah good point, I can fix this at the end.} \onote{I really?  I read it as ``gee intersect aitch''.  I dunno, I feel like `cap' is like `spherical cap'; kinda like it. \red{Rocco:  I'm okay with sticking with ``cap'' because of the association with ``spherical cap''; if you guys are too, let's delete this whole *note discussion.}}
%%%\rnote{Maybe we plant a seed for the later terminology (which I like) and say ``and we refer to the complement of $G$ as the \emph{exterior}'' here?}
Please refer to \Cref{fig:gcap}, where $G$ is the union of the two regions with blue shading and $H$ is the gray-shaded region (depicted as a halfspace in the figure).  The upward arrows in the diagram illustrate some edges of the hypercube.  We have oriented these edges according to $\sigma$: for an edge $\{x,y\}$ in the $j$-th direction in which $x_j = -1$ and $y_j = 1$, the tail of the corresponding arrow represents $x$ if $\sigma_j = -1$, and $y$ if $\sigma_j = 1$.  Note in particular that the edges are oriented ``away" from $H$ (i.e., so that $H$ is antimonotone with respect to the edge orientations).

\myfig{1}{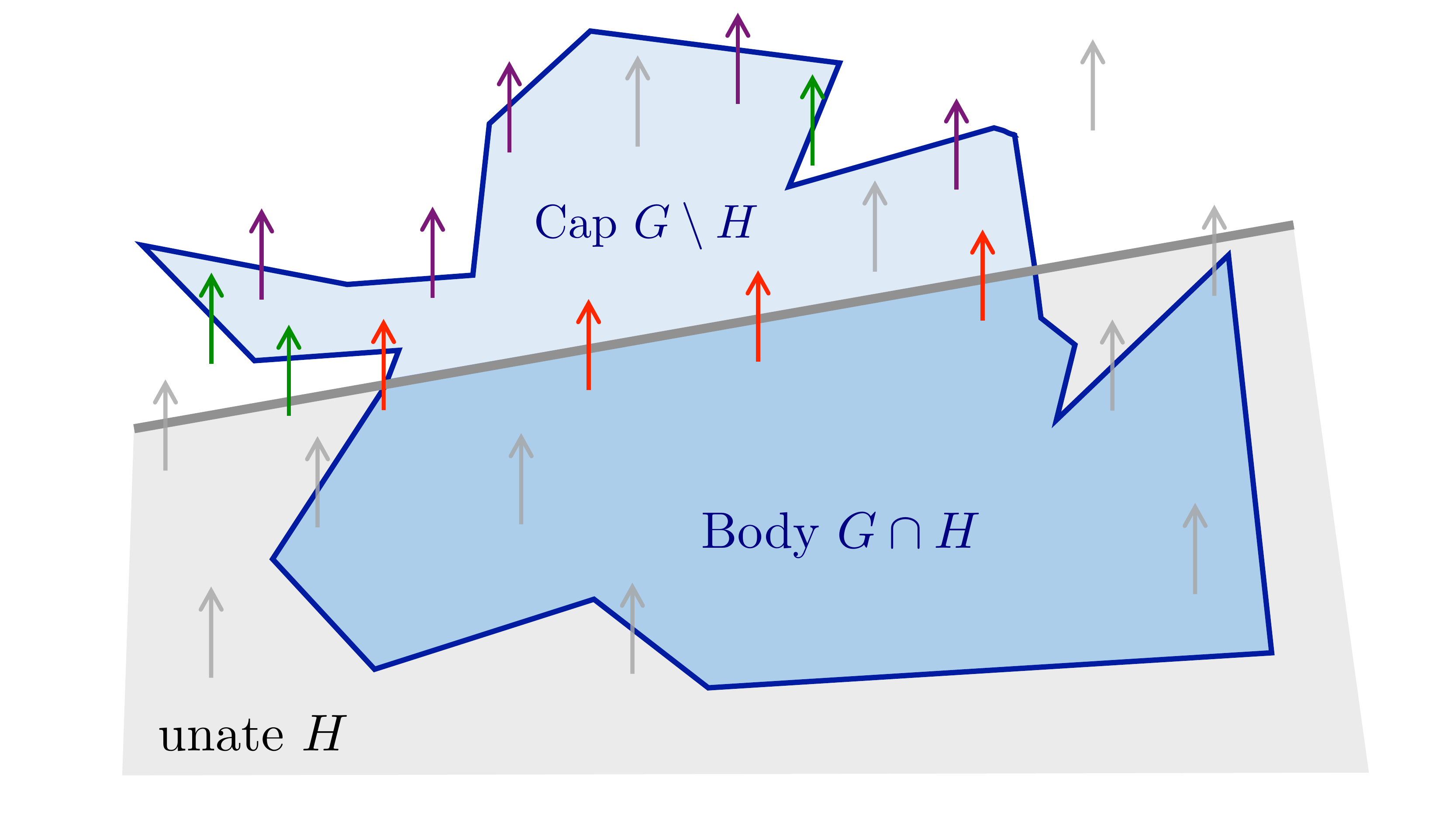}{Illustration of a cap and body}{fig:gcap}

  We will be concerned with the \emph{boundary} edges for the cap $G \setminus H$; these are edges which have one endpoint inside $G \setminus H$ and one endpoint outside it.

  \begin{definition}[Edge boundary] \label{def:AInf}
For a general set $F \subseteq \bits^n$, let $\AInf(F)$ denote the fraction of all $n2^{n-1}$ edges that are boundary edges for~$F$.%\onote{I commented out the Fourier talk here, thinking we'd just save it for \Cref{lem:kane-ineq}} %For Fourier analysis aficionados, this is $\Tinf[F]/n$, where $\Tinf[F] = \sum_{j=1}^n \Tinf_j[F]$ is the total influence (equivalently, average sensitivity) of $F$.\footnote{Here and subsequently we will sometimes blur the distinction between a subset $F \subseteq \bn$ and its $\zo$-valued indicator function where there is no risk of confusion.}\lnote{Should this footnote come earlier, maybe even in the preliminaries?   Do we blur this distinction earlier in the paper?}\onote{Not to be both pompous and annoying, but my book defines Influences for 0/1-functions differently from Influences for $\pm 1$-valued functions.  If $F$ is blurred with a 0/1 indicator, one might have to insert a factor of 2 somewhere?  Perhaps we can drop the ``$\zo$-valued'' adjective in the footnote}
\end{definition}

  We distinguish the three possible types of boundary edges of the cap $G \setminus H$:
%  \lnote{IB has become BC, ID has become EC, and OD has become CE.  (Right now I'm thinking that we can get away with not introducing the terms ``dome" and ``base", and also just referring to gcaps as caps; I may have to retract this later$\dots$)}
\begin{itemize}
\item[$\circ$] {\bf Body$\to$Cap (BC) edges:} the red edges in the diagram.  Formally, these are edges where the tail is in the body $G\cap H$ and the head is in the cap $G\setminus H$.
\item[$\circ$] {\bf Exterior$\to$Cap (EC) edges:} the green edges in the diagram.  Formally, these are edges where the tail is not in $G$, and the head is in the cap $G\setminus H$.
\item[$\circ$] {\bf Cap$\to$Exterior (CE) edges:} the purple edges in the diagram.  Formally, these are edges where the tail is in the cap $G\setminus H$ and the head is not in $G$.
\end{itemize}
\begin{remark}
Note that {\bf there are no Cap$\to$Body (CB) edges}.  Formally, these would be the last possibility for $G \setminus H$ boundary edges, namely ones with tail in the cap $G \setminus H$ and head in the body $G \cap H$. But these cannot exist due to the antimonotonicity of~$H$ vis-a-vis the edges; if the tail is already not in~$H$, then the head cannot be in~$H$.
\end{remark}

Given a cap~$C = G \setminus H$, we write $\BC(G,H)$, $\EC(G,H)$, $\CE(G,H)$ for the fraction of hypercube edges of each of the three above types.  Therefore $\AInf(C) = \BC(G,H) + \EC(G,H) + \CE(G,H)$.
%\gray{Given a cap~$C$, we write $\BC(C)$, $\EC(C)$, $\CE(C)$ for the fraction of hypercube edges of each of the three above types.\onote{Formally this doesn't make sense, as $\BC(C)$ depends not just on the set $C$, but on its decomposition from $G$ and $H$.  Given that, and also given the fact that $\BC(C)$ looks a little weird with the repetitive $C$, perhaps we could change notation to $\BC(G,H)$ and similarly? {\blue{Li-Yang:} Okay will fix later.}} }

We will also be interested in the \emph{directed} edge boundary of caps:

\begin{definition}[Directed edge boundary] \label{def:AOInf}
    For a cap $G \setminus H$, define
    \begin{equation}    \label{eqn:AOinf}
        \AOInf(G,H) = \BC(G,H) + \EC(G,H) - \CE(G,H),
    \end{equation}
    the fraction of inward boundary edges minus the fraction of outward boundary edges.   \end{definition}

It will be very useful for us to have an upper bound on $ \Ainf(G \cap H) - \Ainf(G)$, the change in $\AInf(G)$ when we intersect $G$ with $H$ (note that this quantity can be either positive or negative).  The following fact is immediate from the definitions:

\begin{fact}[Change in boundary size]                                        \label{fact:capInf}
    If $G \setminus H$ is a cap, then
    \begin{equation}    \label{eqn:Ainf}
        \Ainf(G \cap H) - \Ainf(G) = \BC(G,H) -\EC(G,H) -\CE(G,H).
    \end{equation}
\end{fact}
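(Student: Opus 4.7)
My plan is to compute $\AInf(G)$ and $\AInf(G \cap H)$ directly in terms of (directed) edge classes and then subtract, with all of the work done by a short classification of edges plus one observation about antimonotonicity. Concretely, I would classify each $\sigma$-oriented hypercube edge according to which of the three regions body $B = G \cap H$, cap $C = G \setminus H$, exterior $E = \overline{G}$ contains its tail and which contains its head. This produces nine a priori classes; the three ``diagonal'' classes $BB$, $CC$, $EE$ lie entirely inside a single region and so contribute to no boundary. For the remaining six cross-classes I would introduce the obvious fractional notation $\BE, \EB, \CB$, etc., extending the paper's existing notation $\BC, \EC, \CE$.

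The first real step is the observation that $\CB = 0$: a $CB$ edge would have its tail outside $H$ and its head inside $H$, contradicting the antimonotonicity of $H$ relative to the $\sigma$-orientation (noted in the discussion immediately preceding the statement). This is the only place where any nontrivial property of $H$ enters the argument, and is the ``main obstacle'' only in the weak sense that everything else is pure bookkeeping.

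With $\CB = 0$ in hand, I would then simply read off
\[
\AInf(G) \;=\; \BE + \EB + \CE + \EC, \qquad \AInf(G \cap H) \;=\; \BC + \CB + \BE + \EB \;=\; \BC + \BE + \EB,
\]
since the boundary edges of $G$ are exactly those crossing between $G = B \cup C$ and $E$, while the boundary edges of $G \cap H = B$ are exactly those crossing between $B$ and its complement $C \cup E$, and the $CB$ contribution vanishes. Subtracting cancels the $\BE$ and $\EB$ terms and yields
\[
\AInf(G \cap H) - \AInf(G) \;=\; \BC(G,H) - \CE(G,H) - \EC(G,H),
\]
which is precisely the claimed identity. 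I do not anticipate any further complications.
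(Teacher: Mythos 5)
Your proposal is correct and is essentially what the paper intends when it calls this fact ``immediate from the definitions'': after the remark just above establishing that there are no Cap$\to$Body edges (which you correctly single out as the only nontrivial input), the identity follows from exactly the tail/head edge census you carry out, with $\AInf(G)$ being the four $\{B,C\}\leftrightarrow E$ classes and $\AInf(G\cap H)$ being the $B\leftrightarrow\{C,E\}$ classes. The bookkeeping and cancellation check out.
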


Comparing \Cref{eqn:Ainf,eqn:AOinf}, we plainly have:

\begin{fact}
\label{fact:plainly}
 $\Ainf(G \cap H) - \Ainf(G) \leq \AOInf(G,H)$.\end{fact}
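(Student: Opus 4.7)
The plan is to simply subtract the two identities. From Definition \ref{def:AOInf} we have
\[
\AOInf(G,H) = \BC(G,H) + \EC(G,H) - \CE(G,H),
\]
while Fact \ref{fact:capInf} gives
\[
\Ainf(G \cap H) - \Ainf(G) = \BC(G,H) - \EC(G,H) - \CE(G,H).
\]
Subtracting, the $\BC$ and $\CE$ contributions cancel and I am left with
\[
\AOInf(G,H) - \bigl(\Ainf(G \cap H) - \Ainf(G)\bigr) = 2\,\EC(G,H),
\]
which is nonnegative since $\EC(G,H)$ is defined as the fraction of hypercube edges of a particular type (Exterior$\to$Cap). Rearranging yields the stated inequality.

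There is no real obstacle here; the fact is essentially a bookkeeping corollary of the two preceding items. The one conceptual content worth flagging in the write-up is the interpretation: the slack $2\,\EC(G,H)$ measures precisely the contribution of edges that cross from outside $G$ into the new cap $G\setminus H$ — these edges are counted positively in $\AOInf$ (as ``inward'' boundary edges of the cap) but they also \emph{increase} $\Ainf(G\cap H)$ relative to $\Ainf(G)$, and so they show up with a minus sign in the $\Ainf(G\cap H)-\Ainf(G)$ expression. It is this asymmetry between the undirected boundary change and the directed boundary that makes $\AOInf$ an upper bound. Since the proof is a two-line arithmetic check, I would present it as a short ``Proof'' environment immediately following the fact rather than as a separate exposition.
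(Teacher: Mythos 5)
Your computation is correct and is exactly what the paper means by ``Comparing \Cref{eqn:Ainf,eqn:AOinf}, we plainly have'': the two formulas differ precisely by $2\,\EC(G,H)\ge 0$. You have just spelled out the subtraction the paper leaves implicit, so this is the same argument.
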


 To get a quantitative bound, we have the following lemma:

\newcommand{\GISO}{U}
\newcommand{\vol}{\mathrm{vol}}
\begin{lemma}                                        \label{lem:kane-ineq}
    For any cap~$C = G \setminus H$,
    \[
        \AOInf(G,H) \leq \frac{\GISO(\vol(C))}{\sqrt{n}},
          %= O\parens*{\frac1{\sqrt{n}}} \cdot  \Phi(\E[C]),
    \]
    where $\vol(C) = |C|/2^n$ and $\GISO$ denotes the function $\GISO(p) = 2p\sqrt{2\ln(1/p)}$.
%where $\Phi$ denotes the function $\Phi(p) = p \sqrt{\ln(e^{1/2}/p)}.$
\end{lemma}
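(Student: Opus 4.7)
The plan is to express $\AOInf(G,H)$ as a clean Fourier sum involving only the level-$1$ coefficients of $f := \Ind_C$, and then bound that sum via a Chernoff-style moment generating function (MGF) argument. Write $p := \vol(C)$, so that $\E[f] = p$.

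First I would rewrite
\[
\AOInf(G,H) \cdot n \cdot 2^{n-1} \;=\; \sum_{(t,h)} \big(f(h) - f(t)\big),
\]
where the sum ranges over all $\sigma$-directed edges of the cube. This holds because the \BC/\EC/\CE trichotomy accounts for \emph{every} boundary edge of the cap: \BC{} and \EC{} edges each contribute $+1$ to $f(h)-f(t)$, \CE{} edges contribute $-1$, and all remaining directed edges contribute $0$. The crucial point is that there are no \textsc{Cap}$\to$\textsc{Body} edges, a consequence of the antimonotonicity of $H$ along arrows noted earlier. Now I would group the sum by direction $j$, recalling that in direction $j$ every directed edge has $t_j = \sigma_j$ and $h_j = -\sigma_j$, and expand $f$ in its Fourier basis. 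A one-line computation shows that every Fourier coefficient $\hat f(S)$ with $|S| \ne 1$ cancels, leaving
\[
\AOInf(G, H) \;=\; -\frac{2}{n}\sum_{j=1}^n \sigma_j \hat{f}(\{j\}) \;=\; -\frac{2}{n}\,\E_{\bu \sim \bn}\!\big[f(\bu)\, S_\sigma(\bu)\big],
\qquad\text{where }S_\sigma(x) := \sum_{j=1}^n \sigma_j x_j.
\]

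Next I would lower-bound $\E[f S_\sigma]$ (equivalently, upper-bound $-\E[f S_\sigma]$) via a Chernoff/MGF argument. Since the $\sigma_j \bu_j$ are i.i.d.\ uniform $\pm 1$, for any $\lambda \ge 0$ we have $\E[e^{-\lambda S_\sigma}] = (\cosh \lambda)^n \le e^{n\lambda^2/2}$. Combining with Jensen's inequality applied conditionally on $\{f = 1\}$ gives
\[
p \cdot e^{-\lambda\, \E[S_\sigma \mid f = 1]} \;\le\; \E\!\big[e^{-\lambda S_\sigma}\, f\big] \;\le\; e^{n\lambda^2/2},
\]
and optimizing over $\lambda$ (the balance point is $\lambda = \sqrt{2\ln(1/p)/n}$) yields $\E[S_\sigma \mid f = 1] \ge -\sqrt{2n \ln(1/p)}$, hence $\E[f S_\sigma] \ge -p\sqrt{2n \ln(1/p)}$. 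Plugging this into the identity of the previous paragraph gives $\AOInf(G, H) \le 2p\sqrt{2\ln(1/p)}/\sqrt n = \GISO(p)/\sqrt n$, as required.

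The delicate step is establishing the Fourier identity in the first paragraph: the sign conventions imposed by the $\sigma$-orientation have to be tracked carefully, and the key conceptual observation is that the signed count $\BC + \EC - \CE$ coincides with $\sum_{(t,h)}(f(h)-f(t))$ \emph{exactly} because the cap structure forbids \textsc{Cap}$\to$\textsc{Body} edges. Once that identity is in hand the remaining MGF calculation is standard and delivers the numerical constant in $\GISO(p) = 2p\sqrt{2\ln(1/p)}$ on the nose rather than merely up to absolute factors; the argument applies uniformly in $p \in (0,1]$, with the degenerate $p=1$ case ($\GISO(p)=0$, $C$ all of $\bn$) handled by taking $\lambda \to 0^+$.
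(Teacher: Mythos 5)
Your argument is correct, and it reaches the same numerical constant as the paper's. Both proofs begin identically: the BC/EC/CE trichotomy (and the absence of Cap$\to$Body edges) gives the same Fourier identity $\AOInf(G,H) = -\tfrac{2}{n}\sum_j \sigma_j\widehat{\Ind_C}(\{j\})$, up to an overall sign. (The paper writes this with the opposite sign, which appears to be a typo in its orientation bookkeeping, but the discrepancy is immaterial because the subsequent Cauchy--Schwarz step bounds the absolute value anyway.) Where you diverge is in how you bound the resulting degree-one sum. The paper invokes Cauchy--Schwarz followed by the Fourier ``Level-1 Inequality'' ($\sum_j\widehat{C}(\{j\})^2 \le 2p^2\ln(1/p)$, cited as a black box); you instead bound $\E[\Ind_C\cdot S_\sigma]$ directly via a Chernoff/MGF estimate together with conditional Jensen. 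The two routes are morally the same --- the Level-1 Inequality is itself proven by essentially your MGF calculation, applied to the extremal direction $\sigma_j=\mathrm{sign}(\widehat{C}(\{j\}))$ --- but your version is self-contained and skips the Cauchy--Schwarz detour, since the direction $\sigma$ of interest is already given. One small caveat worth flagging: the middle inequality in your chain should read $p\cdot e^{-\lambda\,\E[S_\sigma \mid f=1]} \le \E[e^{-\lambda S_\sigma} f] \le \E[e^{-\lambda S_\sigma}] \le e^{n\lambda^2/2}$; the second step uses $f\le 1$, and it is worth stating explicitly since otherwise the reader might wonder why conditioning helps. With that said, the optimization over $\lambda$ is carried out correctly, the constant $2p\sqrt{2\ln(1/p)}$ is exact, and the degenerate $p=1$ boundary is handled sensibly.
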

\begin{proof}
    This is a basic fact in analysis of Boolean functions.  Identifying $C$ with its indicator function $C : \{-1,1\}^n \to \{0,1\}$, we have $\vol(C) = \E[{C(\bu)}]$ and
    \[
        \AOInf(G,H) = 2\mathop{\Ex_{\bu \sim \{-1,1\}^n}}_{\bj \sim [n]}[C(\bu)\sigma_{\bj}\bu_{\bj}] = \frac{2}{n} \sum_{j=1}^n \sigma_j \wh{C}(\{j\}),
    \]
    where $\wh{C}(\{j\})$ denotes the degree-1 Fourier coefficient of $C$ corresponding to coordinate $j$.
    It is well known and elementary that for $F : \{-1,1\}^n \to \{0,1\}$ with $\E[F] = p$, one has $\sum_{j=1}^n |\wh{F}(\{j\})| \leq O(p \sqrt{\ln(1/p)})\sqrt{n}$; see, e.g., Kane's paper~\cite[Lemma~6]{Kane14intersection} for the short proof.  For the sake of an asymptotically tight constant, we can use Cauchy--Schwarz and the Fourier ``Level-1 Inequality'' \cite{Tal96,Chang02,IMR14}
%    \onote{haha, after ranting and raving about not using ``Chang'', I appealed to it to get a sharp constant briefly :)  Sorry --- y'all were right! Please double-check my factors of 2, by the way!}\lnote{Charikar was cited here. I changed it to Chang.}
    to get
    \[
        \sum_{j=1}^n \sigma_j \wh{C}(\{j\}) \leq \sqrt{n}\cdot \sqrt{\sum_{j=1}^n \wh{C}(\{j\})^2} \leq \sqrt{n}\cdot  p \sqrt{2\ln(1/p)}. \qedhere
    \]
\end{proof}

\subsubsection{Reproving the main result of \cite{Kane14intersection}}
\label{sec:Kane-reproof}

We can now reprove the main result of \cite{Kane14intersection} (which we will use later):

\begin{theorem}[\cite{Kane14intersection}]
\label{thm:kane-unate}
Let $F$ be the intersection of $m \geq 2$ unate functions over $\bn$.  Then
\begin{equation} \label{eq:kanebound}
    \Ainf(F) \leq \frac{2\sqrt{2\ln m}}{\sqrt{n}}.
\end{equation}
(Equivalently, an intersection of $m \geq 2$ unate functions has average sensitivity at most $2\sqrt{2\ln m}\sqrt{n}$.)
\end{theorem}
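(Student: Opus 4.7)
The plan is to prove \Cref{thm:kane-unate} by a telescoping/hybrid argument, intersecting the unate functions one at a time and bounding the incremental change in $\Ainf$ via the tools from \Cref{sec:anticonc-setup}. Specifically, I would set $G_0 = \bn$ and $G_i = H_1 \cap H_2 \cap \cdots \cap H_i$, so that $G_m = F$ and $\Ainf(G_0) = 0$. For each $i \in [m]$, the set $G_{i-1} \setminus H_i$ is a cap (in the sense of \Cref{sec:anticonc-setup}, with $G_{i-1}$ as the ``general'' set and $H_i$ as the unate set, using whichever orientation $\sigma^{(i)}$ is associated to $H_i$). Writing $p_i \coloneqq \vol(G_{i-1} \setminus H_i) = \vol(G_{i-1}) - \vol(G_i)$ for the volume of this cap, \Cref{fact:plainly} followed by \Cref{lem:kane-ineq} gives
\[
\Ainf(G_i) - \Ainf(G_{i-1}) \;\leq\; \AOInf(G_{i-1}, H_i) \;\leq\; \frac{\GISO(p_i)}{\sqrt{n}} \;=\; \frac{2 p_i \sqrt{2 \ln(1/p_i)}}{\sqrt{n}}.
\]

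Summing this telescoping inequality over $i = 1, \ldots, m$ yields
\[
\Ainf(F) \;\leq\; \frac{2\sqrt{2}}{\sqrt{n}} \sum_{i=1}^m p_i\sqrt{\ln(1/p_i)},
\]
so it remains to bound $\sum_i p_i\sqrt{\ln(1/p_i)}$. The key observation is that the caps $G_{i-1} \setminus H_i$ are pairwise disjoint (they live in disjoint ``shells'' as we peel off $H_i$'s one by one), so $\sum_{i=1}^m p_i = \vol(G_0) - \vol(G_m) \leq 1$.

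To finish, I would use the fact that $g(p) \coloneqq p\sqrt{\ln(1/p)}$ is concave on $(0,1]$ (a direct second-derivative check gives $g''(p) = -\tfrac{1}{2p\sqrt{\ln(1/p)}} - \tfrac{1}{4p(\ln(1/p))^{3/2}} < 0$) and vanishes at $p = 0$. Moreover, for $m \geq 2$ the point $1/m$ lies in the increasing regime of $g$ (whose maximum is at $p = e^{-1/2}$). Therefore, extending the list of $p_i$'s by zeros if necessary so that $\sum_i p_i = 1$ only increases the sum, and then Jensen's inequality gives
\[
\sum_{i=1}^m g(p_i) \;\leq\; m\, g\!\left(\tfrac{1}{m}\sum_{i=1}^m p_i\right) \;\leq\; m \cdot g(1/m) \;=\; \sqrt{\ln m}.
\]
Combining this with the previous display yields $\Ainf(F) \leq 2\sqrt{2 \ln m}/\sqrt{n}$, which is \Cref{eq:kanebound}.

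I do not anticipate significant obstacles: all the heavy lifting has been done by \Cref{lem:kane-ineq} (the Level-1 inequality bound on the directed edge boundary of a single cap), and the rest is a clean telescoping followed by a standard concavity/Jensen argument. The only mild subtlety worth double-checking is that the orientations $\sigma^{(i)}$ may differ across the $H_i$'s, but this poses no issue because \Cref{lem:kane-ineq} is applied independently to each cap $G_{i-1} \setminus H_i$ with its own orientation.
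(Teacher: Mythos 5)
Your proposal is correct and is essentially identical to the paper's own proof: you telescope $\Ainf$ over the sequence $G_i = H_1 \cap \cdots \cap H_i$, apply \Cref{fact:plainly} and \Cref{lem:kane-ineq} to each increment, observe that the cap volumes $p_i$ sum to $\vol(F^c) \leq 1$, and finish with concavity/Jensen (the ``extend by zeros'' remark is unnecessary since you apply Jensen directly to $m$ terms, but it is harmless). The only cosmetic difference is that you work with $g(p) = p\sqrt{\ln(1/p)}$ rather than $\GISO(p) = 2p\sqrt{2\ln(1/p)}$, which just shifts constants around.
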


\begin{proof} Let $H_1, \dots, H_m$ be unate functions and define associated caps
\begin{equation} C_i = (H_1 \cap \cdots \cap H_{i-1}) \setminus H_i, \label{eq:cap-def}
\end{equation}
with $C_1 = \bits^n \setminus H_1$ (i.e.~$H_0=\bn$). Letting $F = H_1 \cap \cdots \cap H_m$, we have that the complement $F^c = \bn \setminus F$ of $F$ can be expressed as a disjoint union of caps:
\begin{equation}    \label{eq:sqcup}
    F^c = C_1 \sqcup \cdots \sqcup C_m.
\end{equation}
For intuition, we may think of the intersection of $m$ unate sets $F$ as being formed in $m$ stages, starting with $\bn$ and successively intersecting with each $H_i$;  given this interpretation, $C_i$ is the portion of $\bn$ that is removed in the $i$-th stage.  With this notation in hand, we have that
\begin{align*}
\AInf(F) &= \sum_{i=1}^m \AInf((H_1 \cap \cdots \cap H_{i-1}) \cap H_i) - \AInf(H_1 \cap \cdots \cap H_{i-1}) \\
&\le \sum_{i=1}^m \AOInf(H_1 \cap \cdots \cap H_{i-1}, H_i)  \tag*{(\Cref{fact:plainly} with $G = H_1\cap \cdots \cap H_{i-1}$ and $H = H_i$)}  \\
&\le \frac1{\sqrt{n}} \cdot \sum_{i=1}^m \GISO(\vol(C_i)) \tag*{(\Cref{lem:kane-ineq})}.
\end{align*}
Finally,
\[
    \sum_{i=1}^m \GISO(\vol(C_i)) \leq m \cdot \GISO\parens*{\frac{\sum_{i=1}^m \vol(C_i)}{m}} = m \cdot \GISO\parens*{\frac{\vol(F^c)}{m}} \leq m \cdot \GISO(\tfrac{1}{m}) = 2\sqrt{2\ln m},
\]
where we used concavity of $\GISO$, then \Cref{eq:sqcup},  then the fact that $\GISO$ is increasing on~$[0,1/2]$. This completes the proof of~\Cref{thm:kane-unate}.
\end{proof}

\subsection{A Littlewood--Offord theorem for polytopes (\Cref{thm:LO-for-polytopes})} \label{sec:surface}

In this section we prove \Cref{thm:LO-for-polytopes}:

\begin{reptheorem}{thm:LO-for-polytopes}
There is a universal constant $C$ ($C = 5\sqrt{2}$ suffices) such that the following holds.  For all $m \geq 2$, $b \in \R^m$ and $A \in \R^{m\times n}$ with
$|A_{ij}| \geq 1$ for all $i \in [m]$ and $j \in [n]$,
\[ \Prx_{\bu \sim \bn} \bracks*{A\bu \in \Game_{-2}\calO_b} \leq\frac{C\sqrt{\ln m}}{\sqrt{n}}.  \]
\ignore{
\rnote{Somehow now looking at the above, ``$\Game^{-2}\calO_b$'' looks weird to me, like an exponent of $-2$\dots should we consider a subscript instead? \blue{Li-Yang: Okay, I changed the superscripts to subscripts.  Hope I got them all..}}\onote{agreed, kinda weird.  Actually, $\Game$ itself looks weird to me.  Granted, that's kind of an O'DonnellCriticizesO'Donnell comment.}
}
\end{reptheorem}

%%%
%%%\gray{Before proving \Cref{thm:LO-for-polytopes}, we recall the ``point probability'' version of the classic Littlewood--Offord theorem:
%%%
%%%\begin{theorem}[Littlewood--Offord, point probability version]
%%%Let $w \in (\R \setminus \{0\})^n$.  For all $b \in \R$,
%%%\[ \Pr\big[\, w \cdot \bu = b\,\big] \le {n \choose \lfloor n/2\rfloor }\cdot 2^{-n} = O\parens*{\frac1{\sqrt{n}}}. \]
%%%\end{theorem}
%%%
%%%The above result may be viewed as an upper bound on the probability that a random point from $\bn$ lies on the surface of a ``one-facet'' polytopes with no zero-coefficients.
%%%As a direct consequence of \Cref{thm:LO-for-polytopes} we obtain the following natural extension of this theorem to $m$-facet polytopes with no zero-coefficients:
%%%
%%%\begin{corollary}[Littlewood--Offord for polytopes, point probability version]
%%%\label{cor:point}
%%%Let $A \in (\R \setminus \{0\})^{m\times n}$.  For all $b\in \R^m$,
%%%\[\Pr\big[ \, A\bu \in \Game \calO_b\,\big] = O\parens*{\frac{\sqrt{\log m}}{\sqrt{n}}}. \]
%%%\end{corollary}}

We note in passing that that the anticoncentration bound given by \Cref{thm:LO-for-polytopes} is best possible up to constant factors.  Indeed, our matching lower bound applies even to the stricter event of falling \emph{on the surface} of $\calO_b$:

\begin{claim}[Optimality of \Cref{thm:LO-for-polytopes}] \label{claim:optimal}
For $2 \leq m \leq 2^n$, there is a matrix $A \in {\bits^{m \times n}}\ignore{(\R \setminus \{0\})^{m \times n}}$ and a vector $b \in \R^m$ such that
\[
    \Prx_{\bu \sim \bn} \bracks*{A\bu \in \Game \calO_b} = \Omega\parens*{\frac{\sqrt{\ln m}}{\sqrt{n}}}.
\]
\end{claim}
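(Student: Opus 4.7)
The plan is to prove the claim by the probabilistic method. Let $\bA \in \bits^{m \times n}$ be a random matrix with i.i.d.\ uniform $\pm 1$ entries, and fix $b_i = B$ (the same integer threshold, of the same parity as $n$) for every $i \in [m]$, with $B$ to be specified below. The key observation is that for every fixed $\bu \in \bits^n$ the values $\bA_1 \bu, \ldots, \bA_m \bu$ are i.i.d.\ copies of the symmetric random walk $W := \epsilon_1 + \cdots + \epsilon_n$ (with $\epsilon_j$ i.i.d.\ uniform $\pm 1$), because each entry $\bA_{ij} \bu_j$ is again uniform $\pm 1$ (since $\bu_j \in \{-1,1\}$) and distinct rows of $\bA$ are independent. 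Consequently, writing $\bW_1, \ldots, \bW_m$ for i.i.d.\ copies of $W$,
\[
\E_{\bA}\bigl[\Pr_{\bu}[\bA\bu \in \Game\calO_b]\bigr] \;=\; \Pr\bigl[\max_{i \in [m]} \bW_i = B\bigr] \;=:\; p,
\]
so by averaging some specific matrix $A$ realizes surface probability at least $p$, and it suffices to show $p = \Omega(\sqrt{\ln m}/\sqrt n)$.

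To lower-bound $p$, write $\Phi_W(t) := \Pr[W \leq t]$ and use
\[
p \;=\; \Phi_W(B)^m - \Phi_W(B-2)^m \;\geq\; m\,\Phi_W(B-2)^{m-1}\,\Pr[W = B].
\]
I will choose $B$ to be the smallest integer of the correct parity with $\Pr[W \geq B] \leq 1/m$; this makes $\Phi_W(B-2)^{m-1} \geq (1 - 1/m)^{m-1} \geq 1/e$, so it remains to show $m\,\Pr[W = B] = \Omega(\sqrt{\ln m}/\sqrt n)$.

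The rest is a standard binomial-tail computation. The exact ratio $\Pr[W = B+2]/\Pr[W = B] = (n-B)/(n+B+2)$, combined with a geometric-tail sum, yields $\Pr[W \geq B] \leq \Pr[W=B]\cdot (n+B+2)/(2B+2)$, hence $\Pr[W=B] \geq \Omega(B/n)\cdot \Pr[W\geq B]$. A Mill's-ratio-style lower bound $\Pr[W \geq B-2] \geq \Omega(\sqrt n / B)\, e^{-(B-2)^2/(2n)}$ (proved via Stirling on $\binom{n}{(n+B)/2}$), applied to the minimality condition $\Pr[W \geq B-2] > 1/m$, forces $B = \Omega(\sqrt{n \ln m})$ whenever $\ln m \gtrsim \ln n$; in the remaining corner regime $\ln m = O(\ln n)$ the local CLT directly produces $\Pr[W=B] = \Omega(1/\sqrt n)$, which is already enough since $\sqrt{\ln m}/\sqrt n = O(\sqrt{\ln n}/\sqrt n)$ in that case. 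A small adjustment of $B$ (or a shift of $B$ by $2$) ensures $\Pr[W \geq B]$ is itself $\Omega(1/m)$, not merely that $\Pr[W\geq B-2]$ is. Chaining the inequalities gives $p \geq \Omega(m \cdot (B/n) \cdot (1/m)) = \Omega(B/n) = \Omega(\sqrt{\ln m}/\sqrt n)$.

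The main technical obstacle is to carry the tail estimates uniformly across the whole range $2 \leq m \leq 2^n$: when $m$ is close to $2^n$ the threshold $B$ is forced close to $n$ and the Gaussian/Mill's-ratio approximations degrade, so one must fall back on the exact binomial ratio identity (which is not an approximation and remains valid all the way up to $B = n-2$) to drive the chain $p \geq \Omega(B/n)$ in that extreme regime. There $B/n = \Omega(1)$ matches the constant-order target $\sqrt{\ln m}/\sqrt n = \Theta(1)$, so the same single argument covers every value of~$m$.
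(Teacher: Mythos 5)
Your proposal is correct and takes essentially the same route as the paper: a random $\pm 1$ matrix with constant threshold vector, a probabilistic-existence averaging step, and a binomial-tail/Mills-ratio estimate showing the surface probability is $\Omega(\sqrt{\ln m}/\sqrt{n})$ once the threshold is placed where the upper tail has mass $\approx 1/m$. The only cosmetic difference is that you express the surface probability as $\Phi_W(B)^m - \Phi_W(B-2)^m$ and lower-bound it via convexity, whereas the paper decomposes the surface into the $m$ mutually exclusive events ``on the boundary of the $i$-th facet and in the interior of the others'' to arrive at the same quantity $m\,p_{\mathrm{S}}\,p_{\mathrm{I}}^{m-1}$.
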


\noindent We prove \Cref{claim:optimal} in Appendix~\ref{ap:optimal}.

\subsubsection{Proof of \Cref{thm:LO-for-polytopes}}
\label{sec:proof-of-LO}
As mentioned at the beginning of this section, we will obtain \Cref{thm:LO-for-polytopes} as a corollary of a more general result about intersections of unate functions.   Let $H_1,\dots,H_m \subseteq \bn$ be unate sets, $m \geq 2$, and further suppose that we have additional unate sets $\ol{H}_1, \dots, \ol{H}_m$ such that $H_i \subseteq \ol{H}_i$ for all~$i$. (For intuition it may be helpful to think of $H_i$ as the ``interior'' of $\ol{H}_i$; see the proof of~\Cref{thm:LO-for-polytopes} using \Cref{lem:thin-boundary} just below for a typical example of sets $H_i$ and $\ol{H}_i$.) We define the following subsets of $\bn$:
\begin{align*}
F &= \overline{H}_1 \cap \cdots \cap \overline{H}_m  \\
F^\circ &= H_1 \cap \cdots \cap H_m \tag{{interior of $F$}}\\
\bdry F &= F \setminus F^\circ  \tag{{boundary of $F$}}\\
F^c &= \bn \setminus F \tag{{exterior of $F$}}\\
\bdry H_i &= \ol{H}_i \setminus H_i \text{ (for each $i \in [m]$).}
\tag{{boundary of $\ol{H}_i$}}
\end{align*}
%We also write $\bdry H_i = \ol{H}_i \setminus H_i$.
\begin{definition}[Thin sets]
    We say that $\bdry H_i$ is \emph{thin} if it does not contain any induced edges of the hypercube.
    %\rnote{This is the case in our standard setup when $\bdry H_i = \{ x : a^i \cdot x = b^i\}$ and $a^i$ has no zeroes.}
\end{definition}

\begin{lemma} \label{lem:thin-boundary}
If $\bdry H_i$ is thin for each $i \in [m]$, then $\vol(\bdry F) \leq \frac{5\sqrt{2\ln m}}{\sqrt{n}}$.
%\ignore{\onote{Oh man, you'll kill me for this, but I put in explicit constants here, and now I ask if you can double-check them :)}}\rnote{Checked Ryan's explicit constants and they look good to me!}
\end{lemma}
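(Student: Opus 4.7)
The plan is to adapt the cap-decomposition argument underlying Kane's \Cref{thm:kane-unate} so as to control the \emph{volume} of $\bdry F$ directly, rather than the average sensitivity $\AInf(F)$. First I will decompose $\bdry F$ into thin disjoint pieces by interpolating between $F$ and $F^\circ$ one halfspace at a time: define the hybrid sets
\[ \tilde F_i \;:=\; H_1 \cap \cdots \cap H_i \cap \ol H_{i+1} \cap \cdots \cap \ol H_m \qquad (i = 0,1,\ldots,m), \]
so that $\tilde F_0 = F$ and $\tilde F_m = F^\circ$. Since $\tilde F_i$ is obtained from $\tilde F_{i-1}$ by replacing $\ol H_i$ with $H_i \subseteq \ol H_i$, the sequence is nested, and the consecutive differences $D_i := \tilde F_{i-1} \setminus \tilde F_i$ partition $\bdry F$. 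Using $\tilde F_{i-1} \subseteq \ol H_i$ I can rewrite
\[ D_i \;=\; \tilde F_{i-1} \cap (\ol H_i \setminus H_i) \;=\; \tilde F_{i-1} \cap \bdry H_i, \]
which exhibits each $D_i$ as a subset of the thin set $\bdry H_i$, hence thin in its own right.

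Next I will identify each $D_i$ with a Kane-style cap in order to apply \Cref{lem:kane-ineq}. Using $\tilde F_{i-1} \subseteq \ol H_i$ once more, one has $D_i = \tilde F_{i-1} \setminus H_i$; this is the cap of the set $\tilde F_{i-1}$ by the unate set $H_i$, so \Cref{lem:kane-ineq} yields $\AOInf(\tilde F_{i-1},H_i) \le \GISO(\vol(D_i))/\sqrt n$. To convert this into a bound on $\vol(D_i)$ itself, I will exploit thinness combinatorially: for each $x \in D_i$, thinness of $\bdry H_i$ together with unateness of $H_i$ and $\ol H_i$ forces every neighbor of $x$ obtained by moving against the orientation of $H_i$ to lie outside $\ol H_i$ (hence outside $\tilde F_{i-1}$), while every neighbor moving with the orientation lies in $H_i$. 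This pins down the $\BC$, $\EC$, and $\CE$ contributions making up $\AOInf(\tilde F_{i-1},H_i)$, and lets me relate $\vol(D_i)$ to $\AOInf(\tilde F_{i-1},H_i)$ together with the volume of the classical Kane cap $C_i := (\ol H_1 \cap \cdots \cap \ol H_{i-1}) \setminus \ol H_i$ arising from the standard decomposition of $F^c$ as in the proof of \Cref{thm:kane-unate}.

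Finally I will sum over $i$ and apply concavity. Since the Kane caps $\{C_i\}$ partition $F^c$, we have $\sum_i \vol(C_i) \le 1$; concavity of $\GISO(p) = 2p\sqrt{2\ln(1/p)}$ and its monotonicity on $[0,1/2]$ then give $\sum_i \GISO(\vol(C_i)) \le m\cdot \GISO(1/m) = 2\sqrt{2\ln m}$, exactly as in Kane's proof. Combining with the per-$i$ bounds from the previous step, I obtain $\vol(\bdry F) = \sum_i \vol(D_i) \le 5\sqrt{2\ln m}/\sqrt n$; the degradation from Kane's $2\sqrt 2$ to $5$ reflects the extra bookkeeping needed to pass from $\AOInf$-bounds to volume-bounds.

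The main obstacle will be the third step. Thinness alone cannot bound $\vol(D_i)$, since e.g.\ the parity set is thin and has density $1/2$; the essential point is that each $D_i$ sits inside a Kane-type cap, so that summing $\vol(D_i)$ can be reduced to summing $\GISO(\vol(C_i))$, which is controlled by concavity. Getting the per-$i$ bookkeeping right, without any slack that scales with $m$, will be the delicate part of the argument.
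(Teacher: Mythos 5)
Your hybrid-chain decomposition $\bdry F = \bigsqcup_i D_i$, with $D_i = \tilde F_{i-1}\setminus H_i$, is a genuinely different route from the paper's: the paper instead classifies the edges touching $\bdry F$ into boundary-to-interior, boundary-to-exterior, and boundary-to-boundary$'$ types and bounds the three fractions $\nu_{BI},\nu_{BE},\nu_{BB'}$ separately. Your route does work, but the step you yourself flag as ``the delicate part'' is a real gap, and your sketch of it is off in two respects. First, ``every neighbor moving with the orientation lies in $H_i$'' tacitly assumes $H_i$ and $\ol H_i$ share an orientation; the setup of \Cref{sec:proof-of-LO} never stipulates this (it does hold in the halfspace application), and in any case this half of the observation is never needed. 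Second, there is no per-$i$ inequality tying $\vol(D_i)$ to $\AOInf(\tilde F_{i-1},H_i)$ together with $\vol(C_i)$; the accounting must pass through $\AInf(F)$ in aggregate.

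Here is the correct completion of your step 3. Since $D_i \subseteq \bdry H_i$ is thin, $\AInf(D_i)=2\vol(D_i)$; and since caps have no Cap$\to$Body edges,
\[
\AInf(D_i)=\AOInf(\tilde F_{i-1},H_i)+2\,\CE(\tilde F_{i-1},H_i).
\]
Your $-\sigma$ observation shows that every $\CE$ edge of the cap $D_i$ exits $\ol H_i$, hence exits $F$; and such edges are not shared between different $D_i$'s, so $\sum_i\CE(\tilde F_{i-1},H_i)\le\AInf(F)\le 2\sqrt{2\ln m}/\sqrt n$ by \Cref{thm:kane-unate} applied to $F=\ol H_1\cap\cdots\cap\ol H_m$. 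Meanwhile \Cref{lem:kane-ineq} applied to the caps $D_i$ gives $\sum_i\AOInf(\tilde F_{i-1},H_i)\le\sum_i\GISO(\vol(D_i))/\sqrt n\le 2\sqrt{2\ln m}/\sqrt n$ by the concavity argument, since the $D_i$'s partition $\bdry F$. Summing the displayed identity over $i$ yields $2\vol(\bdry F)\le 6\sqrt{2\ln m}/\sqrt n$, i.e.\ $\vol(\bdry F)\le 3\sqrt{2\ln m}/\sqrt n$. So there is in fact \emph{no} degradation from Kane's constant, contrary to your expectation—properly executed, your decomposition is sharper than the paper's.
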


\begin{proof}[Proof of \Cref{thm:LO-for-polytopes} assuming \Cref{lem:thin-boundary}]
Fix any $b \in \R^m$ and $A \in \R^{m\times n}$ such that
$|A_{ij}| \geq 1$ for all $i \in [m]$ and $j \in [n]$, and let
\[
\ol{H}_i = \big\{x \in {\bn}\colon A^i  \cdot x \leq b_i\big\},
\quad \quad
H_i = \big\{x \in {\bn}\colon A^i  \cdot x \leq b_i - 2\big\},
\]
so
\begin{align*}
\bdry H_i &= \big\{x \in \bn\colon b_i - 2 < A^i  \cdot x \leq b_i \big\} \quad \text{and}\\
\bdry F &= \big\{x \in \bn \colon Ax \le b\ \& \   A^i \cdot x > b_i-2 \text{ for some $i\in [m]$}\big\} \\
&= {\big\{ x \in \bn \colon  Ax \in \Game_{-2}\calO_b \big\}}.
\end{align*}
Since $|A_{ij}| \geq 1$ for all $i,j$, it follows that each $\bdry H_i$ is thin, and hence \Cref{lem:thin-boundary} directly gives \Cref{thm:LO-for-polytopes}.
\end{proof}

The rest of this section will be devoted to the proof of \Cref{lem:thin-boundary}.   Recalling that $F^\circ$ is called the \emph{interior} of $F$ and $\bdry F$ is called the \emph{boundary} of $F$, we say that an edge in the hypercube is \emph{boundary-to-interior} if it has one endpoint in $\bdry F$ and the other endpoint in $F^\circ$, and we write $\nu_{BI}$ for the fraction of all edges that are of this type.  We similarly define \emph{boundary-to-exterior} edges and $\nu_{BE}$, with~$F^c$.    Note that every boundary-to-interior edge is a boundary edge for $F^\circ = H_1 \cap\cdots \cap H_m$, which is an intersection of $m$ unate sets.  By applying \Cref{thm:kane-unate} to $F^\circ$, we get that
\begin{equation}    \label{eqn:nu_BI}
    \nu_{BI} \leq \frac{2\sqrt{2\ln m}}{\sqrt{n}}.
\end{equation}
Similarly, every boundary-to-exterior edge is a boundary edge for $F = \ol{H}_1 \cap \cdots \cap \ol{H}_m$; applying \Cref{thm:kane-unate} to this intersection yields
\begin{equation}    \label{eqn:nu_BE}
    \nu_{BE} \leq \frac{2\sqrt{2\ln m}}{\sqrt{n}}.
\end{equation}
Next, we bound the fraction of edges that have both endpoints in $\bdry F$ and go between ``two different parts of~$\bdry F$. More precisely, for $x \in \bdry F$, define $i^\star(x)$ to be the least~$i$ for which $x \in \bdry H_i$ (equivalently, the least~$i$ for which $x \not \in H_i$).  We say that an edge ${\{x,y\}}$
%\rnote{Replaced $(x,y)$ with $\violet{\{x,y\}}$ --- since we're counting edges let's be careful about ordered/unordered, and here we mean unordered, right?}
is \emph{boundary-to-boundary$'$} if $x, y \in \bdry F$ but $i^\star(x) \neq i^\star(y)$; we write $\nu_{BB'}$ for the fraction of such edges.

\begin{observation}
\label{obs:thin}
    If every $\bdry H_i$ is thin, then every edge with both endpoints in $\bdry F$ is boundary-to-boundary$'$.  In this case, $\nu_{BI} + \nu_{BE} + \nu_{BB'}$ is exactly the fraction of edges in the cube that touch~$\bdry F$, which in turn is an upper bound on $\vol(\bdry F)$.
\end{observation}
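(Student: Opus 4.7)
The plan is to verify the two assertions of the observation in sequence, using nothing beyond the definitions of $\bdry F$, $\bdry H_i$, and thinness. I expect no real obstacle: this is essentially bookkeeping, and the key structural input is that thinness forbids induced hypercube edges inside any single $\bdry H_i$.

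First I would establish that under the thinness hypothesis, any hypercube edge $\{x,y\}$ with both endpoints in $\bdry F$ is necessarily boundary-to-boundary$'$. Suppose otherwise, i.e.\ $i^\star(x) = i^\star(y) =: i$. By definition of $i^\star$, both $x$ and $y$ lie in $\bdry H_i$. But then $\{x,y\}$ is an induced edge of the hypercube sitting entirely inside $\bdry H_i$, contradicting the assumed thinness of $\bdry H_i$.

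Second I would classify the hypercube edges that touch $\bdry F$ (i.e.\ have at least one endpoint in $\bdry F$) according to where their other endpoint lies. Since $\bn = F^\circ \sqcup \bdry F \sqcup F^c$ is a partition (using $F = F^\circ \sqcup \bdry F$ and $F^c = \bn \setminus F$), the other endpoint lies in exactly one of these three sets, producing a boundary-to-interior, boundary-to-boundary$'$ (by the previous step, since the alternative---two endpoints in $\bdry F$ with the same $i^\star$---has been ruled out), or boundary-to-exterior edge respectively. These three categories are pairwise disjoint, so $\nu_{BI} + \nu_{BE} + \nu_{BB'}$ equals precisely the fraction of hypercube edges touching $\bdry F$.

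Finally, to bound $\vol(\bdry F)$ by this quantity I would do a simple double count. Each vertex $x \in \bdry F$ is incident to $n$ hypercube edges, all of which touch $\bdry F$. Thus the number of (vertex, incident edge) pairs with the vertex in $\bdry F$ equals $n|\bdry F|$, while each edge touching $\bdry F$ contributes at most $2$ such pairs. Hence the fraction of edges touching $\bdry F$ is at least $\tfrac{n|\bdry F|/2}{n\cdot 2^{n-1}} = |\bdry F|/2^n = \vol(\bdry F)$, which combined with the previous step yields $\nu_{BI} + \nu_{BE} + \nu_{BB'} \ge \vol(\bdry F)$, completing the observation.
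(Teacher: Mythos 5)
Your proof is correct, and since the paper states this as an unproven observation, you have simply supplied the natural argument it implicitly relies on. Each of the three assertions is handled exactly as one would expect: thinness rules out adjacent vertices of $\bdry F$ sharing the same $i^\star$; the partition $\bn = F^\circ \sqcup \bdry F \sqcup F^c$ gives the exhaustive and disjoint classification of edges touching $\bdry F$ into $BI$, $BB'$, and $BE$; and the final double count (equivalently, that a uniformly random endpoint of a uniformly random edge is uniform on $\bn$) yields $\vol(\bdry F) \le \nu_{BI}+\nu_{BE}+\nu_{BB'}$.
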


Thus \Cref{lem:thin-boundary} follows from \Cref{eqn:nu_BI,eqn:nu_BE} and the following claim:
\begin{claim}[Boundary-to-boundary$'$ edges]
                                     \label{claim:nu_BB}
    $\displaystyle \nu_{BB'} \leq \frac{\sqrt{2\ln m}}{\sqrt{n}}$.
\end{claim}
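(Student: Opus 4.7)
My plan is to attribute each BB$'$ edge to a cap in the sequence $B_i := F_{i-1}\setminus H_i$ (where $F_k := H_1\cap\cdots\cap H_k\cap\ol{H}_{k+1}\cap\cdots\cap\ol{H}_m$, so that the $B_i$ partition $\bdry F$) and then apply Lemma~\ref{lem:kane-ineq} together with the concavity argument already used to prove \Cref{thm:kane-unate}. Since $B_i\subseteq\bdry H_i$ is thin, every edge incident to $B_i$ is a boundary edge of the cap $B_i=G_i\setminus H_i$ (where $G_i=F_{i-1}$), so $\AInf(B_i)=2\vol(B_i)$ and by Lemma~\ref{lem:kane-ineq} we have $\AOInf(G_i,H_i)\le\GISO(\vol(B_i))/\sqrt n$. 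Concavity of $\GISO$ together with $\sum_i\vol(B_i)=\vol(\bdry F)\le 1$ yields $\sum_i\AOInf(G_i,H_i)\le 2\sqrt{2\ln m}/\sqrt n$, matching the bound we seek up to a factor of~$2$.

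The crucial combinatorial step is to show that each BB$'$ edge contributes $+2$ to $\sum_i\AOInf(G_i,H_i)$, so that the factor of $2$ is absorbed. Fix a BB$'$ edge $\{x,y\}$ with $x\in B_i$, $y\in B_j$, $i<j$. Since $y\in B_j\subseteq\bdry H_j$, thinness of $\bdry H_j$ forbids $x\in\bdry H_j$; combined with $x\in F\subseteq\ol{H}_j$, this forces $x\in H_j$. Antimonotonicity of $H_i$ under $\sigma^{(i)}$ then pins $y$ as the tail of the edge, with $y\in G_i\cap H_i=F_i$, making the edge a BC edge of cap $B_i$ (contributing $+1$). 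Antimonotonicity of $H_j$ under $\sigma^{(j)}$ pins $x$ as the tail, with $x\notin G_j$ (since $x\notin H_i$ and $i<j$), making it an EC edge of cap $B_j$ (another $+1$). A similar analysis shows that BI edges contribute exactly $+1$ each (as BC of the relevant $B_i$), while for a BE edge $\{x,y\}$ with $x\in B_i$ and $y\in F^c$, thinness of $\bdry H_i$ leaves only two cases --- $y\in H_i$, contributing $+1$ via EC of $B_i$, and $y\notin\ol{H}_i$, contributing $-1$ via CE of $B_i$. Writing $\nu_{BE}^{\inner}$ and $\nu_{BE}^{\outter}$ for these fractions, we obtain the identity $2\nu_{BB'}+\nu_{BI}+\nu_{BE}^{\inner}-\nu_{BE}^{\outter}\le 2\sqrt{2\ln m}/\sqrt n$.

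The main obstacle is the negative $-\nu_{BE}^{\outter}$ term, which is a priori unbounded by the ingredients so far. I plan to cancel it by running a parallel Kane-style summation on the ``outer'' cap sequence $\ol C_i := (\ol H_1\cap\cdots\cap\ol H_{i-1})\setminus\ol H_i$, which partitions $F^c$ into caps with respect to the $\ol H_i$'s (having the same orientations $\sigma^{(i)}$). A BE$^{\outter}$ edge $\{x,y\}$ has $y\in\ol C_{k^\star(y)}$ for some $k^\star(y)\le i$, and an argument analogous to Step~1 shows that it appears as a BC edge of $\ol C_{k^\star(y)}$, contributing $+1$ to $\sum_i\AOInf(\ol G_i,\ol H_i)$. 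Applying the concavity bound to this second sum (using $\sum_i\vol(\ol C_i)=\vol(F^c)\le 1$), adding the two inequalities, and bookkeeping the contributions of BI, BE$^{\inner}$, IE, and EE edges to the dual sum yields $2\nu_{BB'}\le 2\sqrt{2\ln m}/\sqrt n$ after the $\nu_{BE}^{\outter}$ term cancels, giving the claimed bound $\nu_{BB'}\le\sqrt{2\ln m}/\sqrt n$.
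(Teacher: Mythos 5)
Your partition of $\bdry F$ into the pieces $B_i = F_{i-1}\setminus H_i$ and the attribution of each BB$'$ edge $\{x,y\}$ (with $i^\star(x)<i^\star(y)$) as a BC edge of $B_{i^\star(x)}$ and an EC edge of $B_{i^\star(y)}$ — using thinness of $\bdry H_{i^\star(y)}$ to force $x\in H_{i^\star(y)}$, which pins down the orientations — are correct, and in fact closely shadow the paper: your $B_i$ is exactly $C_i\cap F$ where $C_i=(H_1\cap\cdots\cap H_{i-1})\setminus H_i$ is the paper's cap. The gap is in the accounting in Step~3. You bound both the primal sum $\sum_i\AOInf(G_i,H_i)$ and the dual sum $\sum_i\AOInf(\ol G_i,\ol H_i)$ by $2\sqrt{2\ln m}/\sqrt n$; after adding them and cancelling $\nu_{BE}^{\outter}$, what remains is $2\nu_{BB'}+(\text{nonnegative terms})\le 4\sqrt{2\ln m}/\sqrt n$, i.e.\ $\nu_{BB'}\le 2\sqrt{2\ln m}/\sqrt n$, not the claimed $\sqrt{2\ln m}/\sqrt n$. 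Your final sentence adds two right-hand sides of $2\sqrt{2\ln m}/\sqrt n$ but records only one of them. In fact the dual construction is superfluous at this point in the argument: $\nu_{BE}^{\outter}\le\nu_{BE}\le 2\sqrt{2\ln m}/\sqrt n$ already follows from \Cref{eqn:nu_BE}, and plugging that into your primal inequality yields the same $\nu_{BB'}\le 2\sqrt{2\ln m}/\sqrt n$. Two secondary issues: the sign $-1$ you assign to BE$^{\outter}$ edges (as CE for $B_i$), and the outer-cap argument itself, both implicitly require $\ol H_i$ to share the orientation $\sigma^{(i)}$ of $H_i$, which is not among the hypotheses of \Cref{lem:thin-boundary}; and BI edges, which you list as contributing to the dual sum, have both endpoints in $F$ and so contribute $0$ to it.

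The source of the factor-of-$2$ loss is the decision to bound $\sum_i\AOInf$ rather than $\sum_i\EC$: the former drags in $\nu_{BI},\nu_{BE}^{\inner},-\nu_{BE}^{\outter}$, and controlling the last costs a second Kane-type bound. The paper instead uses the caps $C_i$ (with $G_1=\bits^n$, hence $\AInf(G_1)=0$) and the algebraic identity $\AOInf(G,H)-(\AInf(G\cap H)-\AInf(G))=2\,\EC(G,H)$; the $\AInf$ differences telescope to $\AInf(F^\circ)\ge 0$, giving $\sum_i\EC\le\tfrac12\sum_i\AOInf\le\sqrt{2\ln m}/\sqrt n$ with no stray terms, and $\nu_{BB'}\le\sum_i\EC$ by the same orientation argument you used. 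That $\EC$ is naturally ``half of'' $\AOInf$ is exactly where the tight constant comes from, and it is also why the paper never needs to orient edges by $\ol H_i$. Your approach would still yield \Cref{lem:thin-boundary} with a slightly larger constant, but not \Cref{claim:nu_BB} as stated.
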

\begin{proof}
    We define the caps $C_1, \dots, C_m$ with respect to the $H_i$'s as in \Cref{eq:cap-def} in the proof of \Cref{thm:kane-unate}.
   Subtracting \Cref{eqn:Ainf} from \Cref{eqn:AOinf} for each $C_i$ and summing over $i\in [m]$,
   \begin{align*}
   2\sum_{i=1}^m \EC(H_1 \cap \cdots \cap H_{i-1}, H_i) &= \sum_{i=1}^m \AOInf(H_1 \cap \cdots \cap H_{i-1}, H_i) \\
   & \ \ \ - \bigg(\sum_{i=1}^m \AInf((H_1\cap \cdots \cap H_{i-1})\cap H_i) - \AInf(H_1 \cap \cdots \cap H_{i-1}) \bigg) \\
   &= \sum_{i=1}^m \AOInf(H_1 \cap \cdots \cap H_{i-1}, H_i) - \Ainf(H_1\cap \cdots \cap H_m) \\
   &= \sum_{i=1}^m \AOInf(H_1 \cap \cdots \cap H_{i-1}, H_i) - \AInf({F^\circ}).
   \end{align*}
  Since $\AInf(F^\circ) \ge 0$, it follows that
  \begin{equation}    \label{eq:the-bound}
    \sum_{i=1}^m \EC(H_1 \cap \cdots \cap H_{i-1}, H_i) \leq \frac12 \sum_{i=1}^m \AOinf(H_1 \cap \cdots \cap H_{i-1}, H_i) \le \frac{\sqrt{2\ln m}}{\sqrt{n}},
\end{equation}
where the derivation of the second inequality is exactly as in the proof of \Cref{thm:kane-unate}.   By \Cref{eq:the-bound}, it suffices to show
    \begin{equation}    \label{eqn:claim1}
        \nu_{BB'} \leq \sum_{i=1}^m \EC(H_1 \cap \cdots \cap H_{i-1}, H_i).
    \end{equation}
    Let\ignore{$(x,y)$} {$\{x,y\}$} be a boundary-to-boundary$'$ edge and assume without loss of generality that $i^\star(x) < i^\star(y)$.  We now show that edge\ignore{$(x,y)$} {$\{x,y\}$} contributes to~$\EC(H_1 \cap \cdots \cap H_{i^\star(y)-1}, H_{i^\star(y)})$.  For brevity, write \mbox{$G = H_1 \cap \cdots \cap  H_{i^*(y)-1}$},  $H = H_{i^*(y)}$, and $C = G \setminus H = C_{i^\star(y)}$.  Since $x \in \bdry H_{i^\star(x)} = \overline{H}_{i^\star(x)} \setminus H_{i^\star(x)}$ (in particular, $x \not \in H_{i^\star(x)}$) and $i^\star(x) < i^\star(y)$ we have that $x \not \in G$.  On the other hand, $y \in G \setminus H = C$ by definition of $i^\star(y)$.  Since {$x \notin G$ and $y \in G \setminus H$}, we conclude that indeed ${\{x,y\}} \in \EC(G,H)$ as claimed.
\end{proof}

This completes the proof of \Cref{lem:thin-boundary}, and hence \Cref{thm:LO-for-polytopes}.

\subsection{A robust generalization of the Littlewood--Offord theorem for polytopes}
\label{sec:semi-thin}

In the previous section we proved \Cref{thm:LO-for-polytopes}, which establishes anticoncentration of $A\bu$ under the assumption that all its entries have magnitude at least $1$.  The goal of this section is to prove the following robust generalization of \Cref{thm:LO-for-polytopes}:

\begin{theorem}                                       \label{thm:anticonc1}
    Let $A \in \R^{m \times n}$ have the property that in every row, at least an $\alpha$ fraction of the entries have magnitude at least~$\lambda$.
    Then for any $b \in \R^m$,
%    \onote{Just noticed that I wrote $\{\pm 1\}$ rather than $\{-1,1\}$ below; we might want to go thru doc and fix these $\{\pm 1\}$'s.}
    \[
        \Pr\bracks*{A\bu \in \Game_{-2\lambda}\calO_b} \leq \frac{5 \sqrt{2\ln m}}{\alpha \sqrt{n}}.
    \]
\end{theorem}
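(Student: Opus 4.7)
The proof will follow the same outline as that of \Cref{thm:LO-for-polytopes} in \Cref{sec:proof-of-LO}, adapted to accommodate the partial loss of thinness. By rescaling $A \to A/\lambda$ and $b \to b/\lambda$, we may assume $\lambda = 1$. Define $H_i = \{x \in \bn : A_i \cdot x \le b_i - 2\}$ and $\bar{H}_i = \{x \in \bn : A_i \cdot x \le b_i\}$, and let $F, F^\circ, \bdry F$, and $i^\star(\cdot)$ be as in \Cref{sec:proof-of-LO}; the goal is to bound $\vol(\bdry F)$. Set $S_i = \{j \in [n] : |A_{ij}| \ge 1\}$, which has size at least $\alpha n$ by hypothesis.

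The main obstacle is that each $\bdry H_i$ is no longer guaranteed to be thin: a hypercube edge $\{x,y\}$ in direction $j$ can have both endpoints in $\bdry H_i$ precisely when $|A_{ij}| < 1$, since $|A_i \cdot y - A_i \cdot x| = 2|A_{ij}|$ must then be strictly less than the interval width $2$. To account for this I introduce a new quantity $\nu_{BB''}$, defined as the fraction of hypercube edges $\{x,y\}$ with $x, y \in \bdry F$ and $i^\star(x) = i^\star(y)$. A vertex-degree double-count on the cube (each $x \in \bdry F$ has degree $n$, and each edge contributes at most twice to the degree sum over $\bdry F$) then yields the inequality $\vol(\bdry F) \le \nu_{BI} + \nu_{BE} + \nu_{BB'} + \nu_{BB''}$, generalizing \Cref{obs:thin}.

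The key new step is to establish $\nu_{BB''} \le (1 - \alpha)\,\vol(\bdry F)$. For an edge $\{x,y\}$ contributing to $\nu_{BB''}$ with $i^\star(x) = i^\star(y) = i$, both endpoints lie in $\bdry H_i$; by the observation above this forces the edge's direction $j$ to satisfy $j \notin S_i$. Hence for each $x \in \bdry F$, at most $|[n] \setminus S_{i^\star(x)}| \le (1 - \alpha) n$ of the $n$ coordinate directions incident to $x$ can give rise to a $\nu_{BB''}$-edge. Summing over $x \in \bdry F$, dividing by $2$ to correct for double-counting of edges, and normalizing by the total number of edges $n \cdot 2^{n-1}$ gives the claim.

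Finally, the three quantities $\nu_{BI}$, $\nu_{BE}$, and $\nu_{BB'}$ admit the same bounds as in \Cref{sec:proof-of-LO}: these come from \Cref{thm:kane-unate} (applied to $F^\circ$ and $F$, which are each intersections of $m$ unate sets) together with the cap-decomposition argument of \Cref{claim:nu_BB}, all of which rely only on unateness of the $H_i$'s and $\bar H_i$'s and not on thinness. Summing these yields $\nu_{BI} + \nu_{BE} + \nu_{BB'} \le \tfrac{5\sqrt{2 \ln m}}{\sqrt{n}}$, so combining with the bound on $\nu_{BB''}$ gives
\[
\vol(\bdry F) \;\le\; \frac{5\sqrt{2\ln m}}{\sqrt{n}} + (1 - \alpha)\,\vol(\bdry F),
\]
which rearranges to the desired conclusion. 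The conceptual point, relative to the thin case, is that non-thinness of each $\bdry H_i$ is localized in direction space to the coordinates of $[n] \setminus S_i$, so the penalty for giving up thinness manifests as a single multiplicative $(1 - \alpha)$ factor on $\vol(\bdry F)$ itself, which can be absorbed to the left-hand side.
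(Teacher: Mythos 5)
Your proof is correct and takes essentially the same approach as the paper: the paper abstracts your observation (that a same-index boundary–boundary edge forces its direction to lie outside $S_i$) into a notion it calls ``$\alpha$-semi-thin,'' and then runs precisely the same counting argument, phrased probabilistically as $\Pr[\bu \in \bdry F \ \&\ \bu^{\oplus\bj} \text{ is } \bu\text{-good}] \ge \alpha\,\vol(\bdry F)$, which is just the complementary form of your bound $\nu_{BB''} \le (1-\alpha)\vol(\bdry F)$. The decomposition, the invocation of the non-robust arguments for $\nu_{BI}, \nu_{BE}, \nu_{BB'}$ (which only need unateness), and the final rearrangement to absorb the $(1-\alpha)$ factor all match the paper's proof of \Cref{lem:anticonc1}.
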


Recall that \Cref{thm:LO-for-polytopes} followed as an easy consequence of the fact that
{$\vol(\bdry F) \leq {\frac{5 \sqrt{2\log m}}{\sqrt{n}}}$}
 when all $\bdry H_i$'s are ``thin" (\Cref{lem:thin-boundary}).  We slightly generalize this notion here.
\begin{definition}[Semi-thin]
    For $\alpha \in [0,1]$, say that $\bdry H_i$ is \emph{$\alpha$-semi-thin} if the following holds:  For each $x \in \bdry H_i$, at least an $\alpha$ fraction of its hypercube neighbors are outside $\bdry H_i$.  (Note that ``$1$-semi-thin'' is equivalent to ``thin''.) % Informally, if ``$\alpha = \Omega(1)$'' we may simply say that ``$\bdry H_i$ is semi-thin''.
\end{definition}
\begin{example}
    Suppose $H = \{ x  \in \bn: a \cdot x \leq b_1\}$ and $\overline{H} = \{x \in \bn: a \cdot x \leq b_2\}$ where $b_1 \leq b_2$, so $\bdry H = \{x \in \bn : b_1 < a \cdot x \leq b_2\}$.  If $|a_j| \geq (b_2 - b_1)/2$ for at least an $\alpha$ fraction of the coordinates $j \in [n]$, then $\bdry H$ is $\alpha$-semi-thin.
%     \rnote{I think all the $\leq$'s and $<$'s are correct now? (They seem to synch up with what happens in the ``Proof of \Cref{thm:LO-for-polytopes} assuming \Cref{lem:thin-boundary}''.)}
\end{example}

\Cref{thm:anticonc1} follows as a direct consequence of the following lemma (by the  same reasoning that derives \Cref{thm:LO-for-polytopes} as a corollary of \Cref{lem:thin-boundary}):

\begin{lemma}[Robust version of \Cref{lem:thin-boundary}]\label{lem:anticonc1}
    In the setup of \Cref{sec:proof-of-LO}, suppose each $\bdry H_i$ is $\alpha$-semi-thin.  Then
    \[
        \vol(\bdry F) \leq \frac{5 \sqrt{2\ln m}}{\alpha \sqrt{n}}.
    \]
\end{lemma}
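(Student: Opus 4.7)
The plan is to follow the structure of the proof of Lemma~\ref{lem:thin-boundary} as closely as possible, replacing only the single step where \emph{thinness} was essential. The original proof classified the hypercube edges touching $\bdry F$ into three types---boundary-to-interior (BI), boundary-to-exterior (BE), and boundary-to-boundary$'$ (BB$'$)---and invoked Observation~\ref{obs:thin} (thinness) to rule out any edges with both endpoints in a single $\bdry H_i$, so that $\vol(\bdry F) \le \nu_{BI} + \nu_{BE} + \nu_{BB'}$. The three bounds $\nu_{BI} \le \frac{2\sqrt{2\ln m}}{\sqrt{n}}$, $\nu_{BE} \le \frac{2\sqrt{2\ln m}}{\sqrt{n}}$, and $\nu_{BB'} \le \frac{\sqrt{2\ln m}}{\sqrt{n}}$ from \Cref{eqn:nu_BI}, \Cref{eqn:nu_BE}, and Claim~\ref{claim:nu_BB} rely solely on Theorem~\ref{thm:kane-unate} and the unateness of the $H_i$'s; thinness is never used in their derivation. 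I will therefore reuse all three estimates verbatim.

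The key new ingredient will be a double-counting argument that harnesses the $\alpha$-semi-thin hypothesis in place of Observation~\ref{obs:thin}. Fix $x \in \bdry F$ and let $i = i^\star(x)$. By $\alpha$-semi-thinness of $\bdry H_i$, at least $\alpha n$ of $x$'s hypercube neighbors lie outside $\bdry H_i$, and I claim that for each such neighbor $y$ the edge $\{x,y\}$ is of type BI, BE, or BB$'$. The case analysis is short: (i) if $y \notin \ol{H}_i$ then $y \notin F$ and the edge is BE; (ii) if $y \in H_i$ and $y \in F^\circ$ the edge is BI; (iii) if $y \in H_i$ and $y \in F^c$ the edge is BE; and (iv) if $y \in H_i$ and $y \in \bdry F$, then $y \notin \bdry H_i$ forces $i^\star(y) \neq i$, making the edge BB$'$.

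With this claim in hand, summing over all $x \in \bdry F$ gives
\[
    \alpha n \cdot |\bdry F| \;\le\; \sum_{x \in \bdry F} \#\{y : \{x,y\} \text{ is a BI, BE, or BB$'$ edge}\} \;\le\; 2 \cdot (\nu_{BI} + \nu_{BE} + \nu_{BB'}) \cdot n 2^{n-1},
\]
where the factor of $2$ on the right absorbs the fact that BB$'$ edges have both endpoints in $\bdry F$ (while BI and BE edges have only one). Plugging in $\nu_{BI} + \nu_{BE} + \nu_{BB'} \le \frac{5\sqrt{2\ln m}}{\sqrt{n}}$ and rearranging yields $\vol(\bdry F) \le \frac{5\sqrt{2\ln m}}{\alpha \sqrt{n}}$. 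The main thing to verify carefully will be the four-case classification above; every other piece of the argument reuses machinery that is already established in the proof of Lemma~\ref{lem:thin-boundary}.
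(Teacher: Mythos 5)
Your proof is correct and is essentially the same argument as the paper's: both use the $\alpha$-semi-thin hypothesis to argue that, for every $x \in \bdry F$, at least an $\alpha$-fraction of its Hamming neighbors yield edges of type BI, BE, or BB$'$, and both then combine this with the three estimates $\nu_{BI}, \nu_{BE} \le 2\sqrt{2\ln m}/\sqrt{n}$ and $\nu_{BB'} \le \sqrt{2\ln m}/\sqrt{n}$ carried over from the thin case. The paper packages the double count probabilistically (over a uniformly random vertex $\bu$ and coordinate $\bj$, with the notion of an ``$\bu$-good'' neighbor playing the role of your case analysis (i)--(iv)), whereas you carry out the equivalent combinatorial count over vertices of $\bdry F$ directly; the factor-of-$2$ bookkeeping for BB$'$ edges works out to the same constant either way.
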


\begin{proof}
Our proof of \Cref{lem:thin-boundary}  (a combination of \Cref{eqn:nu_BI}, \Cref{eqn:nu_BE}, and \Cref{claim:nu_BB}) shows that
\begin{equation}
 \nu_{BI} + \nu_{BE} + \nu_{BB'} \leq \frac{5\sqrt{2\ln m}}{\sqrt{n}}. \label{eq:sum-of-three}
 \end{equation}
However, in our current setting the left-hand side of the above is \emph{not} a bound on $\vol(\partial F)$; \Cref{obs:thin} no longer holds and we now may have edges $(x,y)$ where $i^{\star}(x) = i^{\star}(y)$.  Given an $x \in \partial F$ and $y$ a Hamming neighbor of $x$, we say that $y$ is \emph{$x$-bad} if $y \in \partial F$ and $i^\star(y) = i^\star(x)$; otherwise, we say that $y$ is \emph{$x$-good}.   With this terminology, we can rewrite \Cref{eq:sum-of-three} as
%\onote{double-check no factors of 2 lost? \red{Rocco: I think it's right.  If every edge satisfied the condition of~\Cref{eq:sum-of-three} the LHS of~\Cref{eq:sum-of-three} would be 1, and similarly the LHS of~\Cref{eq:one} would be 1, so I don't think there's any missing or extra factor of 2; let's erase this conversation in a next pass}}
\begin{equation} \Pr\bracks*{\bu \in \partial F \ \&\ \bu^{\oplus \bj} \text{ is $\bu$-good}} \leq \frac{5 \sqrt{2\ln m}}{\sqrt{n}}, \label{eq:one}
\end{equation}
where $\bu \sim \bn$ and $\bj \sim [n]$ are uniformly random, and $\bu^{\oplus \bj}$ denotes $\bu$ with its $\bj$-th coordinate flipped.   By the $\alpha$-semi-thin property, for any $x \in \partial F$, the fraction of $\bj$'s such that $x^{\oplus \bj}$ is $x$-good is at least $\alpha$.  Therefore
\begin{equation} \Pr\bracks*{\bu \in \partial F \ \&\ \bu^{\oplus \bj} \text{ is $\bu$-good}} \ge \Pr[\bu \in \partial F] \cdot \alpha, \label{eq:two}
\end{equation}
and the lemma follows by combining \Cref{eq:one,eq:two}.
\end{proof}

\subsection{Proof of \Cref{thm:anticonc}}
\label{sec:grand-finale}

In this section we prove \Cref{thm:anticonc} using \Cref{lem:anticonc1} established in the previous section.
%%\rnote{I reworded the following sentence
%%
%%``Specifically, we will establish a reduction from the task of bounding the anticoncentration of $A\bu$ where $A$ is $(k,\tau)$-standardized (\Cref{thm:anticonc}) to that of $A\bu$ where at least an $\alpha$ fraction of entries of each row of $A$ has magnitude at least $\tau$ (\Cref{lem:anticonc1}).''
%%
%%because it sounds weird to my ear to use such computational sounding language (reduction, task) for our situation, which is really just using one intermediate result in the proof of another result.}
In more detail, we use a bound on the anticoncentration of $A\bu$ under the assumption that at least an $\alpha$ fraction of entries of each row of $A$ have magnitude at least $\tau$ (given by~\Cref{lem:anticonc1}) to establish a bound on the anticoncentration of $A\bu$ under the assumption that each of $A$'s rows has a $\tau$-regular subvector of 2-norm 1 (\Cref{thm:anticonc}).

%We recall the following definition:
%\begin{definition}
%    For $\tau > 0$, we say a vector  $a \in \R^n$ is \emph{$\tau$-regular} if $\|a\|^4_4 \leq \tau^{2} \cdot \|a\|^2_2$ \onote{Tweaked this slightly to synch it up with HKM - hope it doesn't upset any of the calculations below...}.  In particular, if $\|a\|_\infty \leq \tau \|a\|_2$ then $a$ is $\tau$-regular.
%\end{definition}
%\lnote{I think the following was written using the HKM notion of regularity: $\|a\|^4_4 \leq \tau^{2} \cdot \|a\|^2_2$.  Let's update if necessary.} \lnote{Also, I think we need derandomized versions of them, right?}\onote{I don't see why?}
The following result regarding $\tau$-regular linear forms is fairly standard.
\begin{proposition}                                     \label{prop:good-spread}
    Let $w \in \R^n$ be a $\tau$-regular vector with $\|w\|_2 = 1$. Let $\bpi : [n] \to [B]$ be a random hash function that independently assigns each coordinate in $[n]$ to a uniformly random bucket in $[B]$.\ignore{ uniformly random partition of the coordinates $[n]$ into $B \geq 1$ buckets}  For $b \in [B]$, write  $\bsigma_b^2 = \sum_{j \in \bpi^{-1}(b)} w_j^2$, and say that bucket~$b$ is \emph{good} if $\bsigma_b^2 > \frac{1}{2B}$.  Assume  $B \leq 1/\tau^2$.  Then
    \[
        \Pr\bracks*{\text{at most $\frac{B}{16}$ buckets $b \in [B]$ are good}} \leq \exp\parens*{-\frac{B}{64}}.
    \]
\end{proposition}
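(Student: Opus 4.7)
The plan is to first establish via a second-moment calculation that each individual bucket is good with constant probability (so the expected number of good buckets is $\Omega(B)$), and then to pass to a Chernoff-style lower tail bound using negative association of the bucket loads.

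For a fixed $b \in [B]$, since each coordinate is hashed to bucket $b$ independently with probability $1/B$, a direct computation gives $\E[\bsigma_b^2] = \|w\|_2^2/B = 1/B$ and
\[
\E\bracks*{(\bsigma_b^2)^2} \;=\; \frac{1}{B}\sum_{j} w_j^4 \;+\; \frac{1}{B^2}\sum_{j \neq k} w_j^2 w_k^2 \;\le\; \frac{\tau^2}{B} + \frac{1}{B^2} \;\le\; \frac{2}{B^2},
\]
using $w_j^2 \le \tau^2\|w\|_2^2 = \tau^2$ from $\tau$-regularity together with the hypothesis $\tau^2 \le 1/B$. Applying the Paley--Zygmund inequality to $\bsigma_b^2 \ge 0$ with parameter $1/2$ then yields $\Pr[\bsigma_b^2 > 1/(2B)] \ge \tfrac14 \cdot (1/B^2)/(2/B^2) = \tfrac18$. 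Writing $\mathbf{X}_b$ for the indicator that bucket $b$ is good and $\mathbf{N} = \sum_{b=1}^B \mathbf{X}_b$, this gives $\E[\mathbf{N}] \ge B/8$.

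The remaining task is to argue that $\mathbf{N}$ is sufficiently concentrated around its expectation. I would exploit that the bucket loads $(\bsigma_1^2, \ldots, \bsigma_B^2)$ are negatively associated: for each coordinate $j$ the ``one-hot'' vector $(w_j^2 \Ind[\bpi(j) = b])_{b \in [B]}$ has at most one nonzero entry and is therefore negatively associated, and these vectors are independent across $j$, whence their coordinatewise sums are negatively associated by the standard sum-of-independent-NA-vectors closure. Each $\mathbf{X}_b$ is a non-decreasing function of a distinct coordinate of this family, so by the closure under monotone functions of disjoint subsets the Bernoullis $\mathbf{X}_1, \ldots, \mathbf{X}_B$ are also negatively associated. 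The Chernoff--Hoeffding lower tail bound for sums of negatively associated Bernoullis, applied with deviation parameter $\delta = 1/2$, then gives
\[
\Pr\bracks*{\mathbf{N} \le B/16} \;\le\; \Pr\bracks*{\mathbf{N} \le \tfrac12 \E[\mathbf{N}]} \;\le\; \exp\parens*{-\tfrac{\E[\mathbf{N}]}{8}} \;\le\; \exp(-B/64),
\]
as required.

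The main point requiring care is the concentration step. A direct bounded-differences/Azuma argument applied to the martingale that exposes $\bpi(1), \ldots, \bpi(n)$ coordinate by coordinate (noting that changing one $\bpi(j)$ can flip at most two $\mathbf{X}_b$'s) yields only a tail of the form $\exp(-\Omega(B^2/n))$, which is far too weak in the relevant regime $B \ll n$. Invoking the negative correlation that arises because each coordinate contributes to exactly one bucket is what enables a Chernoff-type exponent linear in $\E[\mathbf{N}]$.
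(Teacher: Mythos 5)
Your proof is correct and follows essentially the same route as the paper: a Paley--Zygmund second-moment bound to show each bucket is good with probability at least $1/8$, negative association of the bucket loads $(\bsigma_1^2,\ldots,\bsigma_B^2)$ (the paper cites a balls-in-bins reference where you give a quick constructive argument via one-hot vectors and closure properties, but these are interchangeable), and then a Chernoff lower tail for sums of negatively associated indicators. The parenthetical remark about why bounded differences would be too weak is a nice sanity check but not strictly needed.
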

\begin{proof}
    Let $\bX_b = \Ind[\bsigma_b^2 > \frac{1}{2B}]$ be the indicator that the $b$-th bucket is good.  Since $\E[\bsigma_b^2] = \frac{1}{B}$ and
    \[
        \E[\bsigma_b^4] = \E\bracks*{\parens*{\sum_{j=1}^n w_j^2 \Ind[\bpi(j) = b]}^2} = \frac{1}{B} \sum_{j=1}^n w_j^4 + \frac{1}{B^2} \sum_{j \neq j'} w_j^2 w_{j'}^2 \leq \frac{\tau^2}{B} + \frac{1}{B^2} \leq \frac{2}{B^2},
    \]
    the Paley--Zygmund inequality implies that $\E[\bX_b] = \Pr[\bsigma_b^2 > \frac12 \E[\bsigma_b^2]] \geq \frac1{8}$.

    The joint random variables $\bsigma_1^2, \dots, \bsigma_B^2$ are of ``balls in bins'' type (where the $j$-th ``ball'' has ``mass'' $w_j^2$), and are therefore negatively associated (see, e.g., \cite[Example~3.1]{DP09}; the fact that the balls have different ``masses'' does not change the argument).  Since $\Ind_{(\frac{1}{2B}, \infty)}$ is a nondecreasing function, it follows that the random variables $\bX_1, \dots, \bX_B$ are also negatively associated.  Thus we may apply the Chernoff bound to $\sum_{k=1}^B \bX_k$, which has mean at least $\frac{B}{8}$.  The result follows.
\end{proof}
Recall the following fact, which can also be easily proven using Paley--Zygmund (see e.g.~Proposition 3.7 of the full version of~\cite{GOWZ10}):
\begin{fact}                                        \label{fact:ant}
For all $w \in \R^n$ and $\bu \sim \bn$, we have $\Pr\bracks*{|w \cdot \bu| \geq \frac12\|w\|_2} \geq \frac{1}{16}$.
\end{fact}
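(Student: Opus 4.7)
My plan is to prove the anticoncentration statement via the second-moment method, specifically by applying the Paley--Zygmund inequality to the random variable $X^2$, where $X = w \cdot \bu$. The event $|X| \geq \tfrac{1}{2}\|w\|_2$ is equivalent to $X^2 \geq \tfrac{1}{4}\|w\|_2^2$, so it suffices to lower-bound the probability that $X^2$ is a reasonable fraction of its mean.

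The key computation is to estimate the first two moments of $X^2$. Since the $\bu_j$ are independent Rademachers, we immediately get $\E[X^2] = \sum_j w_j^2 = \|w\|_2^2$. For $\E[X^4]$, I would expand $(w \cdot \bu)^4 = \sum_{i,j,k,l} w_i w_j w_k w_l \bu_i \bu_j \bu_k \bu_l$ and observe that the only surviving terms are those in which indices match in pairs. A short bookkeeping shows
\[
\E[X^4] = 3\Big(\sum_j w_j^2\Big)^{\!2} - 2\sum_j w_j^4 \;\leq\; 3\|w\|_2^4.
\]

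With these two moment bounds in hand, the proof finishes by a single application of Paley--Zygmund to the nonnegative random variable $X^2$ with parameter $\theta = 1/4$:
\[
\Pr\!\left[X^2 \geq \tfrac{1}{4}\E[X^2]\right] \;\geq\; \left(\tfrac{3}{4}\right)^{\!2}\cdot \frac{\E[X^2]^2}{\E[X^4]} \;\geq\; \frac{9/16}{3} \;=\; \frac{3}{16},
\]
which is comfortably greater than the claimed $1/16$. The only real calculation is the fourth-moment bound, which is completely standard; there is essentially no obstacle. (If desired, one could alternatively cite Khintchine's inequality or the standard $(2,4,\sqrt{3})$-hypercontractivity of Rademacher sums to short-circuit the fourth-moment expansion, but the direct computation above is the cleanest route.)
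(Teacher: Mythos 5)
Your proposal is correct and follows precisely the route the paper indicates: the paper does not write out a proof of \Cref{fact:ant} but notes that it "can also be easily proven using Paley--Zygmund" and cites Proposition~3.7 of \cite{GOWZ10}, which is what you have done. Your fourth-moment computation $\E[(w\cdot\bu)^4] = 3\|w\|_2^4 - 2\sum_j w_j^4 \le 3\|w\|_2^4$ and the resulting bound of $3/16$ are correct (and in fact give a constant better than the stated $1/16$).
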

We combine these as follows:
\begin{proposition}                                     \label{prop:mo-good-spread}
    Let $w \in \R^n$ and assume that some subvector~$w'$ of~$w$ is $\tau$-regular with $\|w'\|_2 = 1$.  Let $\bpi : [n] \to [B]$ be as in~\Cref{prop:good-spread}\ignore{a uniformly random partition of the coordinates $[n]$ into $B$ buckets}, where $B \leq 1/\tau^2$.  Let $\bu \sim \bn$, and define $\ol{\bw} \in \R^B$ by $\ol{\bw}_b = \sum_{j \in \bpi^{-1}(b)} w_j \bu_j$.  Call a bucket $b \in [B]$ \emph{big} if $|\ol{\bw}_b| > \frac{1}{2\sqrt{2B}}$.  Then
    \[
        \Pr\bracks*{\text{fewer than $\frac{B}{512}$ buckets are big}} \leq \exp\parens*{-\frac{B}{2048}}.
    \]
\end{proposition}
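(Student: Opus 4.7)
\textbf{Proof plan for \Cref{prop:mo-good-spread}.} The plan is to reduce to \Cref{prop:good-spread} and \Cref{fact:ant}. Let $S \subseteq [n]$ denote the coordinate subset on which $w'$ is supported, so that $|w_j| \leq \tau$ for $j \in S$ and $\sum_{j \in S} w_j^2 = 1$. First, apply \Cref{prop:good-spread} to the $\tau$-regular unit vector $w'$ together with the induced hash $\bpi\uhr_S$: with probability at least $1 - \exp(-B/64)$ there are at least $B/16$ ``good'' buckets, where a bucket $b$ is good if
\[
    \bsigma_b^2 \;\coloneqq\; \sum_{j \in \bpi^{-1}(b)\cap S} w_j^2 \;>\; \frac{1}{2B}.
\]
Call this favorable event $\calE$ and condition on any fixed hash $\bpi$ realizing $\calE$.

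Next, for each good bucket $b$, the full restricted weight vector $w\uhr_{\bpi^{-1}(b)}$ has $2$-norm at least $\bsigma_b > 1/\sqrt{2B}$, since it contains the $S$-portion as a subvector. Applying \Cref{fact:ant} to the linear form $\ol{\bw}_b = \sum_{j \in \bpi^{-1}(b)} w_j \bu_j$ over the uniform bits $\bu_{\bpi^{-1}(b)}$, we obtain
\[
    \Pr\Bigl[ |\ol{\bw}_b| \geq \tfrac{1}{2\sqrt{2B}} \,\Big|\, \bpi \Bigr] \;\geq\; \tfrac{1}{16},
\]
i.e.\ each good bucket is big with probability at least $1/16$.

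The key observation is that, conditional on $\bpi$, the random variables $\{\ol{\bw}_b\}_{b\in[B]}$ depend on disjoint blocks of the coordinates of $\bu$ and are therefore \emph{mutually independent}. Thus the indicators that good buckets are big are independent Bernoulli's, each of parameter at least $1/16$, and their sum has expectation at least $(B/16)\cdot(1/16) = B/256$. A standard multiplicative Chernoff bound then yields
\[
    \Pr\bigl[\text{fewer than $B/512$ big buckets} \,\big|\, \bpi \in \calE\bigr] \;\leq\; \exp(-\Omega(B)).
\]

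Finally, a union bound combines the failure probability $\exp(-B/64)$ that $\calE$ fails with the Chernoff failure probability to yield the claimed $\exp(-B/2048)$ bound. No step is a serious obstacle; the only thing to check carefully is that the numerical constants $1/16$, $1/16$, and the Chernoff deviation parameter $1/2$ combine to give exactly the stated exponent, which is a routine calculation.
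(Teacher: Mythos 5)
Your proposal is correct and follows exactly the paper's own argument: apply \Cref{prop:good-spread} to the $\tau$-regular subvector $w'$; note that the extra coordinates of $w$ in each bucket can only increase the restricted $2$-norm, so good buckets stay good; apply \Cref{fact:ant} to each good bucket (independently, since the buckets of $\bu$ are disjoint); and close with a Chernoff bound. The only thing left implicit in the paper that you make explicit is the final union bound over the two failure events, which is fine since $\exp(-B/64)$ is dominated by the Chernoff term, and one should simply note that the stated constants in the proposition leave enough slack to absorb this.
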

\begin{proof}
    First apply \Cref{prop:good-spread} to~$w'$ and observe that the presence of additional coordinates from~$w$ cannot harm ``goodness''.  Then apply \Cref{fact:ant} to the good buckets. Each becomes ``big'' independently with probability at least~$\frac{1}{16}$, and the proof follows from another Chernoff bound.
\end{proof}
We take $B = \lfloor 1/\tau^2 \rfloor$ in the above.  This yields the following:
\begin{corollary}                                       \label{cor:most-good-spread}
    Assume $A \in \R^{m\times n}$ satisfies the following property: each of its row vectors has a $\tau$-regular subvector of $2$-norm~$1$. Fix $B = \lfloor 1/\tau^2 \rfloor$ and let $\ol{\bA} \in \R^{m \times B}$ be the matrix obtained from~$A$ by randomly partitioning its columns into~$B$ buckets, and adding them up with uniformly random~$\pm 1$ signs within each bucket.  Say that a row of~$\ol{\bA}$ is \emph{spread} if at least a $\frac{1}{512}$-fraction of its entries exceed~$\frac{\tau}{2\sqrt{2}}$.  Then  except with probability at most $m \cdot \exp(-\Omega(1/\tau^2))$, all of $\ol{\bA}$'s rows are spread.
\end{corollary}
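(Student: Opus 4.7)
The plan is to derive the corollary as a direct consequence of Proposition~\ref{prop:mo-good-spread} applied row-by-row, followed by a union bound over the $m$ rows. There is essentially no new work to do; the main task is to carefully match the parameters in the proposition to those in the corollary's statement.

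First, I would fix an arbitrary row $i \in [m]$ of $A$ and apply Proposition~\ref{prop:mo-good-spread} with $w = A_i$. By hypothesis, $A_i$ contains a $\tau$-regular subvector of $2$-norm $1$, so the hypothesis of the proposition is satisfied, and the choice $B = \lfloor 1/\tau^2 \rfloor \le 1/\tau^2$ meets the constraint $B \le 1/\tau^2$. I would then observe that the random vector $\ol{\bw} \in \R^B$ produced by the proposition is exactly the $i$-th row of $\ol{\bA}$: the bucketing $\bpi$ is the random partition used to form $\ol{\bA}$, and the uniform random signs $\bu_j$ are precisely the random $\pm 1$ signs applied within the buckets.

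The second small check is that the threshold $\tfrac{1}{2\sqrt{2B}}$ in the proposition is at least $\tfrac{\tau}{2\sqrt{2}}$, since $B \le 1/\tau^2$ gives $\sqrt{B} \le 1/\tau$, hence $\tfrac{1}{2\sqrt{2B}} \ge \tfrac{\tau}{2\sqrt{2}}$. Consequently every ``big'' bucket (in the sense of the proposition) contributes a coordinate of $|\ol{\bA}_{ib}|$ exceeding $\tfrac{\tau}{2\sqrt{2}}$, so having at least $B/512$ big buckets implies the $i$-th row of $\ol{\bA}$ is spread. The proposition therefore gives that the $i$-th row of $\ol{\bA}$ fails to be spread with probability at most $\exp(-B/2048) = \exp(-\Omega(1/\tau^2))$.

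Finally, I would take a union bound over the $m$ rows to conclude that the probability that \emph{some} row of $\ol{\bA}$ is not spread is at most $m \cdot \exp(-\Omega(1/\tau^2))$, which is exactly the stated bound. There is no real obstacle here: the corollary is a packaging of Proposition~\ref{prop:mo-good-spread} into matrix form, and the only subtlety is the floor in $B = \lfloor 1/\tau^2 \rfloor$, which only makes the threshold larger and hence helps rather than hurts.
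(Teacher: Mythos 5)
Your proof is correct and follows exactly the route the paper intends: the paper itself introduces \Cref{cor:most-good-spread} with the one-line remark ``We take $B = \lfloor 1/\tau^2\rfloor$ in the above,'' leaving the row-by-row application of \Cref{prop:mo-good-spread}, the threshold comparison $\tfrac{1}{2\sqrt{2B}} \ge \tfrac{\tau}{2\sqrt{2}}$, and the union bound over $m$ rows implicit. You have simply spelled out those implicit steps, and all of them check out.
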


\subsubsection{Proof of \Cref{thm:anticonc}}
We can now prove \Cref{thm:anticonc}, which we restate here for convenience:

\begin{reptheorem}{thm:anticonc}
Assume $A \in \R^{m\times n}$ satisfies the following property: each of its row vectors has a $\tau$-regular subvector of $2$-norm~$1$, where $\tau$ is as set in Section~\ref{sec:params}.   Then for all $b \in \R^m$ and $\Lambda \geq \tau$,
we have
\[ \Prx_{\bu \sim \bn} \bracks*{A\bu \in \Game_{\pm\Lambda} \calO_{b}}= O\parens*{\Lambda\sqrt{\log m}}.\]
\end{reptheorem}

\begin{proof}
  By union-bounding over $2\lceil \Lambda/\tau \rceil$ choices of~$b$, it suffices to prove the following: Whenever $A \in \R^{m \times n}$ has a $\tau$-regular subvector of $2$-norm~$1$ in each row, it holds that $\Pr[A\bu \in \Game_{-\tau}\calO_b] \leq O(\tau \sqrt{\log m})$.  Note that the distribution of $A\bu$ is the same as that of $\ol{\bA}\bu'$, where $\ol{\bA}$ is as in \Cref{cor:most-good-spread}, and $\bu' \sim \bits^B$ is uniform.  Thus applying  \Cref{cor:most-good-spread} and then \Cref{thm:anticonc1} (with $\alpha = \frac{1}{512}$ and $\lambda = \frac{\tau}{2} \geq \frac{\tau}{2\sqrt{2}}$), we conclude that
\[
    \Pr\bracks*{A\bu \in \Game_{-\tau}\calO_b} = O\parens*{\tau\sqrt{\log m}} + m \cdot \exp\parens*{-\Omega(1/\tau^2)}.
\]
By our choice of $\tau$ as set in \Cref{sec:params},  we get the desired overall bound of $O\parens*{\tau\sqrt{\log m}}$ and the proof is complete.
\end{proof}
%\newpage

%!TEX root = main.tex

\section{Fooling Bentkus's mollifier}
\label{sec:fool-bentkus}

The main result of this section is the following theorem, which provides the second step of the two-step program described at the end of~\Cref{sec:mollifier}:

\begin{theorem}[$\mathscr{G}$ fools Bentkus's mollifier]
\label{thm:fool-bentkus}
Let $\mathscr{G}$ be our generator with parameters as given in \Cref{sec:params}, and likewise let $\lambda > 0$ be as set in \Cref{sec:params}.  For all $(k,\tau)$-standardized matrices $A \in \R^{m\times n}$ and all $b \in \R^m$,
\[ \Big| \Ex_{\bu \sim \bn}\big[\wt{\calO}_{b,\lambda}(A\bu) \big] - \Ex_{\bz \sim \mathscr{G}_{\mathrm{MZ}}}\big[ \wt{\calO}_{b,\lambda}(A\bz)\big] \Big| = {O(\delta)}. \]
\end{theorem}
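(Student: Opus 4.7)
The plan is a hybrid argument over the $\bucks$ buckets, combined with a Taylor-expansion analysis that exploits the product structure of $\wt{\calO}_{b,\lambda}$ (Equation~\eqref{eq:bentkus-product}). Define hybrids $\bz^{(0)},\bz^{(1)},\ldots,\bz^{(\bucks)}$ where $\bz^{(\ell)}$ uses the PRG assignment $\by^\ell \oplus \tilde\by^\ell$ in buckets $1,\ldots,\ell$ and a fresh uniform assignment in the remaining buckets, all XORed with the common $\by^\star$; thus $\bz^{(0)}$ is uniform and $\bz^{(\bucks)} \sim \mathscr{G}$. It suffices to bound the per-swap error $|\Ex[\wt{\calO}_{b,\lambda}(A\bz^{(\ell-1)})] - \Ex[\wt{\calO}_{b,\lambda}(A\bz^{(\ell)})]|$ by $O(\delta/\bucks)$ and telescope. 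First, via the high-probability guarantee of the $r_\hash$-wise uniform hash (Proposition~\ref{prop:hash}), condition on a \emph{good} $\bh$ for which each bucket contains at most $w = 2k/\bucks$ head variables from any single row $i$ (the bad event contributes $O(\delta)$ in total); call this conditioning event $E$.

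Fix $\ell$ and let $\bv$ denote the (random) contribution of the other buckets to $Ax$, identically distributed in the two hybrids. Let $\bDelta$ be bucket $\ell$'s contribution under the uniform assignment and $\bDelta'$ under the pseudorandom assignment, and split each as $\bDelta = \bDelta^H + \bDelta^T$ by restricting to head vs.\ tail coordinates (row by row). Under event~$E$, the tail of each row of $A$ is $\tau$-regular and so $\|\bDelta^T\|_\infty, \|\bDelta'^T\|_\infty$ are small (order $\tau$ after standardization and bucketing), while $\bDelta^H, \bDelta'^H$ are determined by $\le mw$ bits total. Taylor-expand $\wt{\calO}_{b,\lambda}$ to degree $d-1$ as a function of the tail, using the head-shifted center $\bv + \bDelta^H$:
\[
    \wt{\calO}_{b,\lambda}(\bv+\bDelta) \;=\; \sum_{|\alpha| < d} \frac{\partial_\alpha \wt{\calO}_{b,\lambda}(\bv + \bDelta^H)}{\alpha!}\,(\bDelta^T)^{\alpha} \;+\; \err(\bv+\bDelta^H,\bDelta^T),
\]
and analogously for $\bDelta'$. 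By Fact~\ref{fact:taylor} and Theorem~\ref{thm:bentkus}, the remainder is bounded by $O_d(\sqrt{\log m}/\lambda)^d \cdot \|\bDelta^T\|_\infty^d$, which is $O(\delta/\bucks)$ for $d$ a sufficiently large constant depending on $\eps$.

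The crux is matching the Taylor main terms between the two experiments. Each monomial $(\bDelta^T)^\alpha$ has total degree $<d \le r_\bucket$ in the bucket's bits and head-independent coefficients, so its expectation is identical under the $r_\bucket$-wise uniform $\by^\ell$ and under truly uniform bits, and moreover independent of $\bDelta^H$ in both cases because heads and tails are disjoint coordinates within the bucket. It remains to show that $\Ex[\partial_\alpha \wt{\calO}_{b,\lambda}(\bv+\bDelta^H)]$ agrees under uniform and pseudorandom head assignments to within $(\lambda/m\sqrt{n})^{d-1}\cdot O(\delta/\bucks)$. This is where the product form $\partial_\alpha \wt{\calO}_{b,\lambda}(v) = \prod_{i=1}^m \tInd_{b_i,\lambda}^{(\alpha_i)}(v_i)$ and the GMR CNF PRG enter: each factor $\tInd_{b_i,\lambda}^{(\alpha_i)}(v_i + A_i^H \cdot \bDelta^H_i)$ depends on at most $w$ head bits inside bucket $\ell$, is uniformly bounded by $O_d(1/\lambda)^d$ (Fact~\ref{fact:psi-derivatives}), and varies slowly; discretizing each $\tInd_{b_i,\lambda}^{(\alpha_i)}$ at a sufficiently fine scale and taking the product over $i$, the resulting function of the head bits can be expressed as a bounded linear combination of width-$w$ CNF indicators. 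Invoking Theorem~\ref{thm:GMR} with error $\delta_\CNF$ as set in Section~\ref{sec:params} controls each such expectation; combined with the derivative bounds this closes the per-bucket gap at $O(\delta/\bucks)$, and summing over $\ell$ yields the claimed $O(\delta)$ bound.

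The main obstacle is the last step: reducing the comparison of expected Taylor coefficients---products of $m$ smooth, bounded functions of head bits---to fooling a width-$w$ CNF. Unlike standard Lindeberg arguments which compare a single Taylor expansion in $\bDelta$ vs.\ $\bDelta'$ around the common center $\bv$, here the expansions are around the \emph{distinct} centers $\bv+\bDelta^H$ and $\bv+\bDelta'^H$, and matching tail moments alone no longer suffices; the product structure of Bentkus's mollifier is what makes the CNF-PRG-based comparison tractable.
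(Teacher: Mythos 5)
Your hybrid structure, conditioning on an $H$-good hash, and the decision to Taylor-expand around the head-shifted centers $\bv+\bDelta^H$ and $\bv+\bDelta'^H$ rather than around a common center all match the paper's proof (Proposition~\ref{prop:hash}, Lemma~\ref{lem:single-swap}, Claim~\ref{claim:fixed-alpha}). However, there is a genuine gap in the step where you try to \emph{decouple} the head factor from the tail monomial.

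You assert that the expectation of $(\bDelta^T)^\alpha$ is ``independent of $\bDelta^H$\ldots because heads and tails are disjoint coordinates within the bucket,'' and on that basis you propose to match $\Ex[(\bDelta^T)^\alpha]$ by $r_\bucket$-wise uniformity and, separately, $\Ex[\partial_\alpha\wt{\calO}_{b,\lambda}(\bv+\bDelta^H)]$ by the CNF PRG. The claim of disjointness is false. The $\textsc{Head}_i/\textsc{Tail}_i$ partition in the definition of a $(k,\tau)$-standardized matrix (\Cref{def:regularity}, \Cref{rem:H-and-T-matrices}) is \emph{per row}: a coordinate $j$ can perfectly well lie in $\textsc{Head}_{i_1}\cap\textsc{Tail}_{i_2}$ for distinct rows $i_1,i_2$. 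In that case the bucket bit $x_j$ influences both $(H^B x)_{i_1}$ (hence $\partial_\alpha\wt{\calO}_{b,\lambda}(\bv+H^Bx)$) and $(T^B x)_{i_2}$ (hence $(\bDelta^T)^\alpha$), so these two random quantities are correlated and $\Ex[\partial_\alpha\wt{\calO}_{b,\lambda}(\bv+\bDelta^H)\cdot(\bDelta^T)^\alpha]$ need not factor into a product of expectations. The decoupling you invoke is exactly where the argument breaks.

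The paper's Claim~\ref{claim:fixed-alpha} avoids this precisely by refusing to decouple: it treats the entire $\alpha$-Taylor-coefficient
$G_\alpha(x)=\partial_\alpha\wt{\calO}_{b,\lambda}(H^Bx)\cdot(T^Bx)^\alpha$
as a single function of the bucket bits. Using the product form $\wt{\calO}_{b,\lambda}(v)=\prod_i\tInd_{b_i,\lambda}(v_i)$, each factor ($\tInd^{(\alpha_i)}_{b_i,\lambda}(H^B_i x)$, $\tInd_{b_i,\lambda}(H^B_i x)$, and each $T^B_ix$) is written as a bounded-weight combination of Boolean $w$-juntas or $1$-juntas (Fact~\ref{fact:combo}), and the full product $G_\alpha$ becomes a weight-$W$ combination of Boolean product functions, i.e.\ width-$w$ CNFs, with $W=O_c(\sqrt{n}/\lambda)^c$. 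The CNF PRG then fools $G_\alpha$ directly, with the cross-row head/tail correlations automatically absorbed because every relevant bucket bit appears inside the CNF instances. The tail monomials are thus \emph{not} handled by low-degree moment matching at all; $r_\bucket$-wise uniformity enters only in bounding the Taylor remainder $\Ex[\|T^B\cdot\|_\infty^d]$ via Lemma~\ref{lem:hash}. To repair your draft, replace the decoupling step with this ``linearize the whole product into CNFs'' argument.
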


%\rnote{We should give some high level account of the hybrid approach, no?}
At a very high level, in line with the usual Lindeberg approach, \Cref{thm:fool-bentkus} is proved by hybridizing between $\bu$ and $\bz$ via a sequence of intermediate distributions.  In our setting there are $\bucks+1$ such distributions, the first of which is $\bu$ and the last of which is $\bz$, and the $\ell$-th of which may be viewed as ``filling in buckets $\ell,\dots,\bucks$ according to $\bu$ and filling in buckets $1,\dots,\ell-1$ according to $\bz$,'' where the $\bucks$ buckets correspond to the partition of $[n]$ induced by the choice of the random hash function in the Meka--Zuckerman generator.

In \Cref{sec:singleswap} we upper bound the error incurred by taking a single step through this sequence of hybrid distributions. The upper bound given there (see \Cref{lem:single-swap}) has a first component corresponding to the terms of order $0,\dots,\tay-1$ in a {$(d-1)$-st} order Taylor expansion,
%\lnote{Minor point, but are we doing the $d$-th order or $(d-1)$st order expansion?
%\red{Rocco:}  How about we have the error term correspond to degree $d$.  Is that called $d$-th order, or $(d-1)$st order?},
and a second component corresponding to the error term in Taylor's theorem. The first component is upper bounded in \Cref{sec:singleswap}, and the second component is upper bounded in \Cref{sec:hkmreproof}.  \Cref{sec:proof-thm-fool-bentkus} formalizes the hybrid argument and uses the results of these earlier subsections to establish \Cref{thm:fool-bentkus}.

\begin{remark}[Head and tail matrices]\label{rem:H-and-T-matrices}
Recalling the definition of {a} $(k,\tau)$-standardized {matrix} $A$ (\Cref{def:regularity}), for every $i \in [m]$ there is a partition $[n] = \textsc{Head}_i \sqcup \textsc{Tail}_i$ such that $|\head_i| \le k$ and $(A_i)_{\tail_i}$ is $\tau$-regular with $2$-norm $\|(A_i)_{\tail_i}\|_2$ equal to $1$.  Therefore, we may write $A$ as $H + T$ where
\[ H_{ij} = A_{ij} \cdot \Ind[\,j \in \head_i\,] \quad \text{and}\quad  T_{ij} = A_{ij} \cdot \Ind[\,j \in \tail_i\,] \]
 for all $j\in [n]$ and $i\in [m]$.  Note that every row of $H$ is $k$-sparse, and every row of $T$ is $\tau$-regular with $2$-norm $1$.
\end{remark}

\subsection{Single swap in the hybrid argument} \label{sec:singleswap}

\begin{lemma}[Error incurred by a single swap]
\label{lem:single-swap}
Fix \ignore{$d \ge 1$ and }$B \sse [n]$.  Let $H^B, T^B \in \R^{m\times B}$, where every row of $H^B$ is $w$-sparse and every row of $T^B$ has $2$-norm at most 1. Let $\bu, \by$ be random variables over  $\bits^B$, \ignore{$\bu, \by \sim \bits^B$}where $\bu$ is uniform and $\by$ $\delta_{\CNF}$-fools the class of width-$w$ CNFs.
%\rnote{Was ``$(m2^k)$-clause CNFs''}
For all $b \in \R^m$, $\lambda >0$, and all integers $\tay \ge {2}$
%\rnote{Nitpick:  changed it to $d \geq 2$ since we'll take $c=d-1$ in the proposition where we assume $c \geq 1$. And while I'm riding this hobby-horse, should the $O$'s below be $O_d$?},
% \rnote{Note:  what's below is specialized to $s=3$, i.e. going out just to the Taylor expansion where the error term corresponds to degree 3}
%\onote{\gray{Also, I don't know where the $\log^2 m$ in the numerator above $\lambda^3$ is coming from.  Should it be $\log^{1.5} m$?} {\blue{Li-Yang:} Yes you are right, I've fixed this.  (The snapped version that HKM uses would have $\log^2 m$, but we are using the unsnapped version which has better parameters.)}}
\begin{align}
&\big| \E\big[ \wt{\calO}_{b,\lambda}(H^B\bu + T^B\bu)\big] - \E\big[\wt{\calO}_{b,\lambda}(H^B\by + T^B\by)\big]\big|  \label{eq:single-swap} \\
&=  \delta_{\CNF} \cdot m^{d-1} \cdot O_d\parens*{\frac{\sqrt{n}}{\lambda}}^{d-1} +  O_d\parens*{\frac{\sqrt{\log m}}{\lambda}}^d \left(\E\big[\|T^B\bu\|_\infty^\tay\big] + \E\big[\|T^B\by\|_\infty^\tay\big]\right).\nonumber \end{align}
\end{lemma}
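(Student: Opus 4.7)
My plan is to implement the Lindeberg method but---following the key novelty outlined in~\Cref{sec:new-ingredients}---to expand Taylor series around two \emph{distinct} base points, namely $H^B\bu$ and $H^B\by$, rather than a single common one. Concretely, viewing $H^B x + T^B x$ as $v + \Delta$ with $v = H^B x$ and $\Delta = T^B x$, \Cref{fact:taylor} yields
\[
\wt{\calO}_{b,\lambda}(H^B x + T^B x) = \sum_{|\alpha|\le d-1} \frac{\partial_\alpha \wt{\calO}_{b,\lambda}(H^B x)}{\alpha!}\,(T^B x)^\alpha \;+\; \err(x),
\]
with $|\err(x)| \le \bigl(\sup_{v^*}\sum_{|\alpha|=d}|\partial_\alpha \wt{\calO}_{b,\lambda}(v^*)|\bigr)\,\|T^B x\|_\infty^d = O_d(\sqrt{\log m}/\lambda)^d\,\|T^B x\|_\infty^d$ by \Cref{thm:bentkus}. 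Substituting this identity separately into the $\bu$- and $\by$-expectations on the left-hand side of~\eqref{eq:single-swap} and applying the triangle inequality, the two error contributions combine to produce exactly the second summand on the RHS. It therefore remains to bound, for each multi-index $\alpha$ with $|\alpha|\le d-1$,
\[
D_\alpha := \bigl| \E\bigl[\partial_\alpha \wt{\calO}_{b,\lambda}(H^B \bu)\,(T^B \bu)^\alpha\bigr] - \E\bigl[\partial_\alpha \wt{\calO}_{b,\lambda}(H^B \by)\,(T^B \by)^\alpha\bigr]\bigr|.
\]

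To handle $D_\alpha$, I crucially exploit the product structure of Bentkus's mollifier (\Cref{eq:bentkus-product}): $\partial_\alpha \wt{\calO}_{b,\lambda}(v) = \prod_{i=1}^m \tInd_{b_i,\lambda}^{(\alpha_i)}(v_i)$. Letting $S = \{i : \alpha_i > 0\}$ (so $|S| \le |\alpha| \le d-1$), I factor the integrand as $R(x)\,Q(x)$, where $R(x) = \prod_{i\in S}\tInd_{b_i,\lambda}^{(\alpha_i)}((H^B x)_i)\,(T^B_i x)^{\alpha_i}$ collects the $|S|$ ``active'' derivative factors together with the tail monomials, while $Q(x) = \prod_{i\notin S} \tInd_{b_i,\lambda}((H^B x)_i)$. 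Using the definition $\tInd_{b_i,\lambda}(t) = \Pr_{\bg_i\sim\calN(0,1)}[t+\lambda \bg_i \le b_i]$ and swapping expectations gives $Q(x) = \E_{\bg}\bigl[\prod_{i\notin S}\Ind[(H^B x)_i \le b_i - \lambda \bg_i]\bigr]$; for each fixed $\bg$, the inner product is an {\sc And} of halfspace-indicators, each depending on only $w$ coordinates of $x$ (since $H^B$'s rows are $w$-sparse), i.e., precisely a width-$w$ CNF formula $C_\bg(x)$, which $\by$ $\delta_\CNF$-fools. Meanwhile $R$ is uniformly bounded: \Cref{fact:psi-derivatives} gives $|\tInd_{b_i,\lambda}^{(\alpha_i)}|\le O_d(1/\lambda)^{\alpha_i}$, and $|T^B_i x| \le \sqrt{n}\,\|T^B_i\|_2 \le \sqrt{n}$ via Cauchy--Schwarz, whence $\|R\|_\infty \le O_d(\sqrt{n}/\lambda)^{d-1}$.

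The main obstacle will be converting $\by$'s CNF-fooling guarantee for $C_\bg$ alone into a comparable bound on $|\E[R(\bu)\,C_\bg(\bu)] - \E[R(\by)\,C_\bg(\by)]|$, since the multiplicative factor $R$ is real-valued and entangles head-junta dependencies (on the supports of the rows $H^B_i$, $i\in S$) with polynomial dependencies (through the tail monomials $(T^B_i x)^{\alpha_i}$). My plan is to expand the degree-$\le d{-}1$ tail polynomial $\prod_{i\in S}(T^B_i x)^{\alpha_i}$ in the monomial basis $\{\chi_U\}_{|U|\le d-1}$ (whose $\ell_1$ coefficient mass is at most $\prod_{i\in S}\|T^B_i\|_1^{\alpha_i}\le n^{(d-1)/2}$), and then to recognize each product $\chi_U\cdot(\text{head-junta factor})\cdot C_\bg$ as a signed combination of width-$w$ CNFs obtained by case-restricting $C_\bg$ and absorbing $O_d(1)$-many additional literal clauses. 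Applying $\by$'s CNF-fooling to each such CNF and recombining, I expect to obtain $D_\alpha \le O_d(\delta_\CNF)\cdot O_d(\sqrt{n}/\lambda)^{d-1}$. Finally, summing over the at most $\binom{m+d-1}{d-1}\le m^{d-1}$ relevant multi-indices $\alpha$, and absorbing the $1/\alpha!$ normalizers into the $O_d$-notation, yields the first summand on the RHS of~\eqref{eq:single-swap}.
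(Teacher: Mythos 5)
Your proposal follows essentially the same route as the paper's: Taylor-expand $\wt{\calO}_{b,\lambda}$ around the two distinct base points $H^B\bu$ and $H^B\by$, absorb the error term via \Cref{thm:bentkus} and $\|T^B x\|_\infty^d$, and reduce to bounding each $D_\alpha$ (this is exactly the paper's \Cref{claim:fixed-alpha}), which you handle by exploiting the product structure of $\wt{\calO}_{b,\lambda}$ together with the CNF-fooling property of $\by$.

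That said, the final step of your bound on $D_\alpha$ is only sketched and, as written, introduces two avoidable complications. First, the Gaussian-averaging representation $Q(x) = \Ex_{\bg}\bigl[\prod_{i\notin S}\Ind[(H^Bx)_i \le b_i - \lambda\bg_i]\bigr]$ adds an integral over $\bg$ that you must carry through the analysis; the paper avoids this entirely by observing that each $[0,1]$-valued factor $\tInd_{b_i,\lambda}(H^B_i x)$ is a \emph{weight-$1$ combination of Boolean $w$-juntas} via a level-set decomposition (\Cref{fact:combo}, Item~1), with no Gaussian needed. Second, expanding $\prod_{i\in S}(T^B_i x)^{\alpha_i}$ in the \emph{global} monomial basis $\{\chi_U\}$ produces factors $\Ind[\chi_U = 1]$ that are $|U|$-juntas with $|U|$ as large as $d-1$, so the resulting products with a width-$w$ CNF are width-$\max(w,d-1)$ CNFs rather than width-$w$; this happens to be harmless under the paper's parameter settings (where $w \gg d$), but you neither state nor verify it. The paper instead decomposes each $T^B_i x = \sum_j T^B_{ij}x_j$ \emph{coordinate by coordinate} into weight-$(2\sqrt{n})$ combinations of Boolean $1$-juntas (\Cref{fact:combo}, Items~2--3), so the extra clauses contributed have width exactly $1$ and the width bound is automatic. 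More broadly, the package of ``weight-$W$ combinations of Boolean product functions'' (\Cref{fact:combo}, Item~4) is precisely the tool that turns your ``I expect to obtain'' step into a one-line calculation: it lets the three factor types ($\tInd_{b_i,\lambda}(H^B_i x)$ for $i\notin S$, $\tInd^{(\alpha_i)}_{b_i,\lambda}(H^B_i x)$ for $i\in S$, and $(T^B_i x)^{\alpha_i}$) all be decomposed uniformly and multiplied together, with the total weight $O_c(\sqrt{n}/\lambda)^c$ falling out immediately. Without it, you would need to separately justify that products of signed real-valued factors with CNFs can be repackaged as bounded signed combinations of width-$w$ CNFs, which is exactly what you leave unfinished.
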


As we will see later, \Cref{eq:single-swap} is a useful bound because we can (and will) take $\delta_{\CNF}$ to be very small, and when we apply Lemma~\ref{lem:single-swap} we will be able to ensure that both expectations on the right-hand side of \Cref{eq:single-swap} are small as well.

The main ingredient in the proof of~\Cref{lem:single-swap} is the following claim:

\begin{claim}
\label{claim:fixed-alpha}
For all integers $c \ge 1$ and $\alpha \in \N^m$ such that $|\alpha| = c$,
\begin{equation} \Big| \Ex\big[ \partial_\alpha \wt{\calO}_{b,\lambda}(H^B\bu)\cdot (T^B\bu)^\alpha\big] - \Ex\big[ \partial_\alpha \wt{\calO}_{b,\lambda}(H^B\by)\cdot (T^B\by)^\alpha\big]\Big|  = \delta_{\CNF}\cdot O_c\parens*{\frac{\sqrt{n}}{\lambda}}^c. \label{eq:fixed-alpha}
\end{equation}
\end{claim}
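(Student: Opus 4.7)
The plan is to combine the product structure of Bentkus's mollifier with a level-set decomposition to express the quantity of interest as a short linear combination of expectations of width-$w$ CNF indicators, at which point the hypothesis that $\by$ $\delta_\CNF$-fools such CNFs applies directly. First I would Fourier-expand the tail factor: writing
\[
(T^B x)^\alpha = \prod_{i=1}^m (T^B_i\cdot x)^{\alpha_i}
\]
and reducing modulo $x_j^2 = 1$ yields a multilinear representation $(T^B x)^\alpha = \sum_{|S|\le c} \beta_S\,\chi_S(x)$, where
\[
\sum_S |\beta_S| \le \prod_{i=1}^m \|T^B_i\|_1^{\alpha_i} \le \prod_{i=1}^m \bigl(\sqrt{n}\,\|T^B_i\|_2\bigr)^{\alpha_i} \le n^{c/2}
\]
by Cauchy--Schwarz and the hypothesis $\|T^B_i\|_2 \le 1$. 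It will therefore suffice to show, for each fixed $S$ with $|S|\le c$,
\[
\bigl|\E[F(\bu)\chi_S(\bu)] - \E[F(\by)\chi_S(\by)]\bigr| \le O_c(1/\lambda)^c\cdot\delta_\CNF,
\]
where $F(x)\coloneqq \partial_\alpha \wt{\calO}_{b,\lambda}(H^B x)$, and then sum this bound against the $|\beta_S|$'s to obtain the $(\sqrt n/\lambda)^c$-shaped right-hand side of \eqref{eq:fixed-alpha}.

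To prove the reduced statement I would exploit the product structure \eqref{eq:bentkus-product} to write $F(x) = \prod_{i=1}^m g_i(x)$, where $g_i(x) \coloneqq \tInd_{b_i,\lambda}^{(\alpha_i)}(H^B_i\cdot x)$ is a $w$-junta (since $H^B_i$ is $w$-sparse), bounded by $M_i = O_c(1/\lambda)^{\alpha_i}$ via \Cref{fact:psi-derivatives}; for $\alpha_i = 0$ the function $\tInd_{b_i,\lambda}$ is nonnegative and bounded by $1$, so I would take $M_i = 1$. Applying the layer-cake representation $g_i = c_i + \int \Ind[g_i \ge t]\,dt$ to each factor (with $c_i = 0$ if $\alpha_i = 0$ and $c_i = -M_i$ otherwise) and fully expanding the resulting product gives
\[
F(x) = \sum_{R \subseteq I} \gamma_R \int C_{R,\vec t}(x)\,d\vec t,
\]
where $I \coloneqq \{i : \alpha_i \ge 1\}$ has size at most $c$, each $C_{R,\vec t}$ is an AND of upper-level-set indicators of the $g_i$'s (each of which is a $w$-junta and hence a width-$w$ CNF, so $C_{R,\vec t}$ itself is a width-$w$ CNF after merging clauses), and the total $L_1$-weight of the representation satisfies $\sum_R |\gamma_R|\cdot\mathrm{vol}(\vec t\text{-range}) \le 3^{|I|} \prod_{i\in I} M_i \le O_c(1/\lambda)^c$.

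Finally I would combine: for each width-$w$ CNF indicator $C = C_{R,\vec t}$ appearing above, I can write
\[
C(x)\chi_S(x) = 2\,\Ind[C(x) = 1 \wedge \chi_S(x) = 1] - \Ind[C(x)=1],
\]
and since $\{\chi_S = 1\}$ is an $|S|$-junta (hence a width-$|S|$ CNF) and $|S| \le c \le w$ in the relevant parameter setting of \Cref{sec:params}, both indicators on the right are width-$w$ CNFs; the hypothesis on $\by$ then yields $|\E[C(\bu)\chi_S(\bu)] - \E[C(\by)\chi_S(\by)]| \le 3\delta_\CNF$. Integrating this bound against the decomposition of the previous paragraph produces the reduced bound, and summing against $|\beta_S|$ completes the proof. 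The main obstacle, and the crucial point where the argument departs from the generic Lindeberg framework, is the product decomposition in the middle paragraph: as emphasized in \Cref{sec:new-ingredients}, an arbitrary bounded function of the head variables, which could span up to $mw$ coordinates, admits no comparable representation as a short sum of width-$w$ CNF indicators, so only the product structure of $\wt{\calO}_{b,\lambda}$ makes this approach viable.
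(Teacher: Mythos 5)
Your proposal is correct and takes essentially the same route as the paper's proof: both exploit the product form \eqref{eq:bentkus-product} to express $G_\alpha(x) = \partial_\alpha\wt{\calO}_{b,\lambda}(H^Bx)\cdot(T^Bx)^\alpha$ as a low-weight linear combination of width-$w$ CNF indicators and then invoke the $\delta_\CNF$-fooling hypothesis; your layer-cake decomposition is exactly what the paper packages abstractly as ``weight-$W$ combinations of Boolean functions'' in \Cref{fact:combo}, and your $\sum_S|\beta_S|\le n^{c/2}$ bound on the Fourier expansion of $(T^Bx)^\alpha$ plays the role of the paper's $\|T^B_i\|_1\le\sqrt n$ bound. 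The one non-cosmetic difference is in how the tail monomial is handled: you collapse $(T^Bx)^\alpha$ into multilinear characters $\chi_S$ and then absorb $\Ind[\chi_S=1]$ into the CNF, which requires $|S|\le c\le w$ (true with lots of room given \Cref{sec:params}, since $c\le d-1=O_\eps(1)$ while $w=\Omega(\log(m/\delta))$, but still an extra condition). The paper avoids any comparison between $c$ and $w$ by instead writing each linear form $T^B_ix$ as a weight-$O(\sqrt n)$ combination of $\{0,1\}$-valued $1$-juntas (single literals) and keeping those $c$ literal-factors separate; a product of $m$ width-$w$ juntas with any number of additional literals is trivially a width-$w$ CNF. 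If you wanted your version to be fully unconditional you could do the same, replacing your $C\chi_S = 2\,\Ind[C\wedge\chi_S]-\Ind[C]$ identity with a weight-$2^{|S|}$ expansion of $\chi_S$ into products of literals, which still yields an $O_c(\sqrt n/\lambda)^c$ overall weight since $2^{|S|}\le 2^c = O_c(1)$.
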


\begin{remark}
Recalling the discussion of Step~1 in \Cref{sec:new-ingredients}, we remark that~\Cref{claim:fixed-alpha} provides the key ingredient of the arguments sketched there.  This claim plays an essential role in enabling us to get a strong bound on the magnitude of the difference of two expectations (which was denoted ``$|\E[\Upsilon(\bv + \bDelta)] - \E[\Upsilon(\bv' + \bDelta')]|$'' in \Cref{sec:new-ingredients} and corresponds precisely to the left-hand side of~\Cref{lem:single-swap} above) through an application of Taylor's theorem around two different points.  As will be seen in~\Cref{sec:proofofclaimfixedalpha}, the proof of~\Cref{claim:fixed-alpha} exploits the product structure of $\wt{\calO}_b$ by using pseudorandom generators for small-width CNF formulas.
\end{remark}

Before proving~\Cref{claim:fixed-alpha}, we observe that~\Cref{lem:single-swap} follows as a consequence:

\begin{proof}[Proof of~\Cref{lem:single-swap} assuming~\Cref{claim:fixed-alpha}]
By the multidimensional Taylor expansion (\Cref{fact:taylor}) applied twice to $\wt{\calO}_{b,\lambda}$, we have
\begin{align}
\text{(\ref{eq:single-swap})}
&\le\Bigg| \sum_{0\le |\alpha| \le \tay-1} \frac1{\alpha!}\Ex\big[ \partial_\alpha \wt{\calO}_{b,\lambda}(H^B\bu)\cdot (T^B\bu)^\alpha\big] - \frac1{\alpha!}\Ex\big[ \partial_\alpha \wt{\calO}_{b,\lambda}(H^B\by)\cdot (T^B\by)^\alpha\big] \Bigg| \nonumber  \\
&\ \ \ + \E\Big[\abs*{\err(H^B\bu, T^B\bu)}\Big] + \E\Big[\abs*{\err(H^B\by, T^B\by)}\Big] \nonumber \\
&\le \sum_{0\le |\alpha|\le \tay-1} \big| \Ex\big[ \partial_\alpha \wt{\calO}_{b,\lambda}(H^B\bu)\cdot (T^B\bu)^\alpha\big] - \Ex\big[ \partial_\alpha \wt{\calO}_{b,\lambda}(H^B\by)\cdot (T^B\by)^\alpha\big]\big|  \label{eq:M^2-summands} \\
&\ \ \ +\sup_{v \in \R^m}\Bigg\{ \sum_{|\alpha|=\tay} |\partial_\alpha \wt{\calO}_{b,\lambda}(v)|\Bigg\} \cdot  \left(\E\big[\|T^B\bu\|_\infty^\tay\big] + \E\big[\|T^B\by\|_\infty^\tay\big]\right). \nonumber
\end{align}
By~\Cref{claim:fixed-alpha}, each of the $O(m^{\tay-1})$ summands of \Cref{eq:M^2-summands} is at most $\delta_\CNF \cdot O(\sqrt{n}/\lambda)^{d-1}$.  This along with the bound on $\wt{\calO}_{b,\lambda}$'s derivatives given by \Cref{thm:bentkus},
\[ \sup_{v \in \R^m}\Bigg\{ \sum_{|\alpha|=d} |\partial_\alpha \wt{\calO}_{b,\lambda}(v)|\Bigg\} = O_d\parens*{\frac{\sqrt{\log m}}{\lambda}}^d \]
yields~\Cref{lem:single-swap}.
\end{proof}
\subsubsection{Proof of~\Cref{claim:fixed-alpha}} \label{sec:proofofclaimfixedalpha}

\begin{definition}
We say that a function $\xi : \bits^B \to \R$ is \emph{Boolean} if its range is contained in $\zo$.   For $\xi_1,\ldots,\xi_m : \bits^B \to \R$, we say that the associated \emph{product function} $\Xi = \prod_{i \in [m]} \xi_i$ is a \emph{Boolean product function} in case all the $\xi_i$'s are Boolean.
\end{definition}

\begin{definition}
We say that $\xi$ is a  \emph{weight-$W$ combination of Boolean functions} if it is expressible as a linear combination $\xi = \sum_\ell c_\ell \xi_\ell$  where each $\xi_\ell$  is a Boolean function and where  $\sum_\ell |c_\ell| \le W$.  Likewise, $\Xi$ is a  \emph{weight-$W$ combination of Boolean product functions} if it is expressible as a linear combination $\Xi = \sum_\ell c_\ell \Xi_\ell$  where each $\Xi_\ell$  is a Boolean product function and where  $\sum_\ell |c_\ell| \le W$.
\end{definition}

The following facts are easy to establish:
\begin{fact}
\label{fact:combo}
\begin{enumerate}
\item A function $\xi : \bits^B \to [0,1]$ is a weight-$1$ combination of Boolean functions.
\item A function $\xi : \bits^B \to [-W,W\,]$ is a weight-$(2W)$ combination of Boolean functions.
\item A weight-$W_1$ combination of weight-$W_2$ combinations of Boolean functions is a weight-$(W_1W_2)$ combination of Boolean functions.
\item
If $\xi_1$ and $\xi_2$ are weight-$W_1$ and weight-$W_2$ combinations of Boolean product functions respectively, then $\xi_1\cdot \xi_2$ is a weight-$(W_1W_2)$ combination of Boolean product functions.
\end{enumerate}
\end{fact}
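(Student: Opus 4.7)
The plan is to dispatch the four items in order, each by a short direct construction; none requires anything beyond rearranging sums and a layer-cake decomposition.

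For item (1), given $\xi : \bits^B \to [0,1]$, I will use the layer-cake representation. Let $0 = v_0 < v_1 < \cdots < v_k \le 1$ enumerate the distinct values taken by $\xi$. Then I can write $\xi(x) = \sum_{i=1}^k (v_i - v_{i-1})\,\Ind[\xi(x) \ge v_i]$. Each $\Ind[\xi \ge v_i]$ is a Boolean function on $\bits^B$, the coefficients $v_i - v_{i-1}$ are non-negative, and they telescope to $v_k \le 1$, giving a weight-$1$ combination. For item (2), given $\xi : \bits^B \to [-W,W]$, I will split $\xi = \xi^+ - \xi^-$ with $\xi^{\pm} = \max(\pm \xi, 0) \in [0,W]$. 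Applying item (1) to $\xi^+/W$ and $\xi^-/W$ shows each is a weight-$W$ combination of Boolean functions, and the signed difference gives total weight $2W$.

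For item (3), I will just expand the nested sum. Writing $\xi = \sum_\ell c_\ell \xi_\ell$ with $\sum_\ell |c_\ell| \le W_1$ and $\xi_\ell = \sum_j d_{\ell,j}\,\xi_{\ell,j}$ with each $\xi_{\ell,j}$ Boolean and $\sum_j |d_{\ell,j}| \le W_2$, I obtain $\xi = \sum_{\ell,j} (c_\ell d_{\ell,j})\,\xi_{\ell,j}$, whose total weight is $\sum_\ell |c_\ell| \sum_j |d_{\ell,j}| \le W_1 W_2$. For item (4), the same expansion works at the level of product functions: writing $\xi_1 = \sum_\ell a_\ell \Xi_\ell^{(1)}$ and $\xi_2 = \sum_j b_j \Xi_j^{(2)}$, the distributive law gives $\xi_1 \xi_2 = \sum_{\ell,j} (a_\ell b_j)\,\bigl(\Xi_\ell^{(1)} \cdot \Xi_j^{(2)}\bigr)$ with total weight at most $W_1 W_2$.

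The only conceptual point, and the step I would write out carefully, is that in item (4) the pointwise product $\Xi_\ell^{(1)} \cdot \Xi_j^{(2)}$ is again a Boolean product function. Writing $\Xi_\ell^{(1)} = \prod_{i=1}^m \xi_i^{(\ell)}$ and $\Xi_j^{(2)} = \prod_{i=1}^m \zeta_i^{(j)}$, the product factors coordinatewise as $\prod_{i=1}^m \bigl(\xi_i^{(\ell)} \cdot \zeta_i^{(j)}\bigr)$; since the pointwise product of any two $\zo$-valued functions is again $\zo$-valued, each factor is Boolean and so the whole is a Boolean product function, completing item (4). I do not anticipate any real obstacle; the entire fact is bookkeeping once this coordinatewise-product observation is in place.
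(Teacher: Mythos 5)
Your proof is correct, and it supplies exactly the routine verification the paper omits (the paper simply asserts these items as ``easy to establish'' without proof): the layer-cake decomposition for item (1), positive/negative splitting for item (2), and distributing nested sums for items (3) and (4), together with the observation that a pointwise product of Boolean product functions is again a Boolean product function. No gaps.
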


We are now ready to prove~\Cref{claim:fixed-alpha}.

\begin{proof}[Proof of~\Cref{claim:fixed-alpha}]
We define the function $G_\alpha : \bits^B \to \R$,
%\onote{Is the below accurate?  Derivative of $\wt{\calO}$ is not $\psi$ but rather some mollified half-line, no? \blue{Li-Yang: Yeah I screwed this up, sorry.  I think I fixed it by changing the definition of $\psi$. \red{Rocco:} Note that what was $\psi$ is now $\tInd$.}}
\begin{align} G_\alpha(x) &\coloneqq \partial_{\alpha}\wt{\calO}_{b,\lambda}(H^Bx) \cdot (T^Bx)^\alpha \nonumber \\
&=  \Bigg(\prod_{i \notin S}{\tInd}_{b_i,\lambda}(H^B_i x) \prod_{i \in S} {\tInd}^{(\alpha_i)}_{b_i,\lambda}(H^B_i x)\Bigg)  \cdot \prod_{i \in S} (T^B_i x)^{\alpha_i}, \label{eq:expand}
\end{align}
where $S$ denotes $\supp(\alpha) = \{ i \in [m] \colon \alpha_i > 0\}$.  (\Cref{eq:expand} crucially relies on the product structure of $\wt{\calO}_{b,\lambda} : \R^m \to (0,1)$; recall \Cref{eq:bentkus-product}.)

Note that~\Cref{claim:fixed-alpha} is equivalent to the claim that $\by$ $\delta$-fools $G_\alpha$ for $\delta = \delta_\CNF \cdot O_c(\sqrt{n}/\lambda)^c$.   We analyze the three types of functions in \Cref{eq:expand} in turn:
\begin{itemize}
\item[$\circ$] Recalling the assumptions of~\Cref{lem:single-swap}, by Item 1 of~\Cref{fact:combo}, the function $x \mapsto \tInd_{b_i,\lambda}(H^B_i x)$ is a weight-$1$ combination of Boolean functions. Furthermore, since $|\supp(H^B_i)| \le w$, it is in fact a weight-$1$ combination of Boolean $w$-juntas.
\item[$\circ$] Similarly, by Item 2 of~\Cref{fact:combo}, the function $x \mapsto{\tInd}^{(\alpha_i)}_{b_i,\lambda}(H^B_i x)$ is a weight-$(2\|{\tInd}_{b_i,\lambda}^{(\alpha_i)}\|_\infty)$ combination of Boolean $w$-juntas.
\item[$\circ$] Since $\| T^B_i \|_1  \le \sqrt{B} \cdot \| T^B_i \|_2  \leq \sqrt{B} \leq \sqrt{n}$
%\rnote{Here is where we are using the ``2-norm at most 1'' condition}
and $x_j \in \{-1,1\}$ for all $j\in B$, by Items 2 and 3 of~\Cref{fact:combo} the function $x \mapsto T^B_ix$ is a weight-$(2\sqrt{n})$ combination of Boolean functions.  Furthermore, it is a weight-$(2\sqrt{n})$ combination of Boolean $1$-juntas.
\end{itemize}
Combining the above with Item 4 of~\Cref{fact:combo}, it follows that $G_\alpha : \bits^B \to \R$ is a weight-$W$ combination of  Boolean product functions $\Xi : \bits^B \to \zo$, where
\begin{align*}
W &= \left(
\prod_{i \in S} 2\,\| {\tInd}_{b_i,\lambda}^{(\alpha_i)}\|_\infty
\right) \cdot
\left(
\prod_{i \in S} (2\sqrt{n})^{\alpha_i}
\right)
\\
&=
\left(\prod_{i \in S} O_{\alpha_i}\parens*{\frac1{\lambda^{\alpha_i}}}\right) \cdot
\left(\prod_{i \in S} (2\sqrt{n})^{\alpha_i}\right) \tag*{(\Cref{fact:psi-derivatives})} \\
&= O_c\parens*{\frac{\sqrt{n}}{\lambda}}^c.  \tag*{($|\alpha|= \alpha_1 + \cdots + \alpha_m = c$)}
\end{align*}
%\[ C = \prod_{i \in S} 2\,\| \violet{\tInd}_{b_i,\lambda}^{(\alpha_i)}\|_\infty \prod_{i \in S} (2\sqrt{n})^{\alpha_i} = O_{|\alpha|}\parens*{\parens*{\frac{\sqrt{n}}{\lambda}}^{|\alpha|}};\]
%the second equality is by~\Cref{fact:psi-derivatives} and $|\alpha| \le 2$.
Furthermore, every $\Xi$ in this combination is the product of $m$ Boolean $w$-juntas and $|\alpha|$ Boolean $1$-junta(s).  Since each such $\Xi$  is computable by a width-$w$ CNF, and $\by$ $\delta_\CNF$-fools the class of width-$w$ CNFs, we conclude that $\by$ $\delta$-fools $G_\alpha$ where $\delta = \delta_\CNF \cdot W$.  This completes the proof of~\Cref{claim:fixed-alpha}.
\end{proof}

\subsection{Bounding the error terms} \label{sec:hkmreproof}

We will use the following technical result:
\begin{claim}[Rosenthal's inequality]\label{claim:hkm1}
    Let $\beta \in [0,1]$ and let $\bx_1, \dots, \bx_n$ be independent $\{0,\pm1\}$-valued random variables, each being $0$ with probability $1-\beta$ and $\pm 1$ with probability $\beta/2$ each.  Let $w \in \R^n$ be a $\tau$-regular vector of $2$-norm~$1$.
%    \onote{Ryan says: Let $\beta = 1/L$.  For any particular bucket, if we let $\bx_i$ denote the 0/1 indicator that $i$ is in that bucket, times the $r_{\text{bucket}}$-wise uniform bits for that bucket, then the $\bx_i$'s satisfy the hypothesis of $q$-wise independence mentioned below provided $q \leq r_{\text{bucket}}, r_{\text{hash}}$.}
    Then for any $q \geq 2$,
    \[
        \E[|w \cdot \bx|^q] = O\big(q \tau \cdot (\beta/\tau^2)^{1/q} + \sqrt{q} \sqrt{\beta}\big)^q.
    \]
    Of course, if $q$ is an even integer, then the above continues to hold even if $\bx_1, \dots, \bx_n$ are merely $q$-wise independent.
%    Finally, assuming $\beta/\tau^2 \le \exp(O(q))$, we can more simply write
%    \[
%        \E[|w \cdot \bx|^q] = O(q \tau  + \sqrt{q} \sqrt{\beta})^q.
%    \]
\end{claim}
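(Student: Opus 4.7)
The plan is to recognize this as an instance of Rosenthal's inequality applied to the sum $w \cdot \bx = \sum_{j=1}^n w_j \bx_j$ of independent mean-zero random variables $\bX_j \coloneqq w_j \bx_j$. The version of Rosenthal's inequality we would invoke (e.g.\ in Latala's form) says that for $q \ge 2$ and independent mean-zero $\bX_1,\ldots,\bX_n$,
\[
\bigl\|\textstyle \sum_j \bX_j\bigr\|_q \;\le\; C\left[\sqrt{q}\,\Bigl(\textstyle\sum_j \E[\bX_j^2]\Bigr)^{1/2} + q\,\Bigl(\textstyle\sum_j \E[|\bX_j|^q]\Bigr)^{1/q}\right]
\]
for an absolute constant $C$; raising both sides to the $q$-th power gives exactly the form stated in the claim.

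The next step is to compute the two ``moment'' quantities using the distribution of the $\bx_j$'s and the regularity of $w$. Since $\E[\bx_j^2] = \beta$ and $\E[|\bx_j|^q] = \beta$, we have
\[
\sum_j \E[\bX_j^2] \;=\; \beta\,\|w\|_2^2 \;=\; \beta,
\qquad
\sum_j \E[|\bX_j|^q] \;=\; \beta\,\|w\|_q^q.
\]
For the $q$-th norm, I would use the $\tau$-regularity $\|w\|_\infty \le \tau$ together with $\|w\|_2 = 1$ to interpolate:
\[
\|w\|_q^q \;=\; \sum_j |w_j|^{q-2}\,|w_j|^2 \;\le\; \tau^{q-2}\,\|w\|_2^2 \;=\; \tau^{q-2}.
\]
Thus $\bigl(\sum_j \E[|\bX_j|^q]\bigr)^{1/q} \le (\beta \tau^{q-2})^{1/q} = \tau \cdot (\beta/\tau^2)^{1/q}$. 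Plugging into Rosenthal yields the claimed bound.

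For the final sentence about $q$-wise independence (when $q$ is an even integer), I would note that one standard proof of Rosenthal's inequality for even integer $q$ proceeds via the combinatorial expansion
\[
\E\bigl[(\textstyle\sum_j \bX_j)^q\bigr] \;=\; \sum_{j_1,\ldots,j_q} \E[\bX_{j_1}\cdots \bX_{j_q}],
\]
and then bounds each term by grouping indices by multiplicity and applying AM--GM to the resulting moment products. Each summand $\E[\bX_{j_1}\cdots \bX_{j_q}]$ involves at most $q$ coordinates, so $q$-wise independence of the $\bx_j$'s is sufficient to factor the expectations as in the fully independent case, and the identical bound goes through.

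The only real work is verifying the Rosenthal-type inequality in the precise form written, since textbook statements often use the ``max'' of the two terms rather than the sum with the $\sqrt{q}$ and $q$ prefactors spelled out; but the two forms are equivalent up to an absolute constant, so I expect no obstacle. The main ingredient is the regularity-based moment bound $\|w\|_q^q \le \tau^{q-2}$, which is where the $\tau$ assumption enters.
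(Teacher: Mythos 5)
Your proposal is correct and takes essentially the same route as the paper. The paper invokes the Nagaev–Pinelis form of Rosenthal's inequality, $\E[|w\cdot\bx|^q]\le 2^{O(q)}(q^q\sum_j\E[|w_j\bx_j|^q]+q^{q/2}(\sum_j\E[(w_j\bx_j)^2])^{q/2})$, and then computes $\beta\sum_j|w_j|^q\le\beta\tau^{q-2}$ exactly as you do; your Latała-style norm version is the same bound raised to the $q$-th power, with the two forms differing by a factor $2^{O(q)}$ that is harmlessly absorbed into the $O(\cdot)^q$ in the claim. The paper likewise offers no further justification for the $q$-wise-independence remark, so your sketch of the combinatorial expansion is, if anything, a bit more explanation than the original gives.
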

\begin{proof}
%%\rnote{This had read
%%
%%``
%%This is the Rosenthal inequality, using constants due to Nagaev and Pinelis~\cite{NP78} (see also~\cite[(4)]{PU85}).  We are using that $\sum_j \|w_j \bx_j\|_2^2 = \beta$ and $\sum_j \|w_j \bx_j\|_q^q \leq \beta \sum_j |w_j|^{q-2} \leq \beta \tau^{q-2}$.
%%''
%%
%%but that felt somewhat opaque to me since we hadn't stated the Rosenthal inequality, explained how we were using those bounds, etc.}
    This is an almost immediate consequence of a refinement of an inequality due to Rosenthal~\cite{Rosenthal70}.
    The exact version we use is due to Nagaev and Pinelis \cite{NP78} (see also~\cite[(4)]{PU85}); in our context, it states that
\begin{align*}
     \E[|w \cdot \bx|^q] &\leq 2^{O(q)} \cdot \left(q^q \sum_{j=1}^n \E[|w_j \bx_j|^q]  +
     q^{q/2}\left(\sum_{j=1}^n \E[(w_j \bx_j)^2] \right)^{q/2}\right)\\
     &\leq 2^{O(q)} \cdot \left( q^q \beta \sum_{j=1}^n |w_j|^q + (q\beta)^{q/2} \right).
\end{align*}
Since
$\beta \sum_j |w_j|^q \leq \beta \left(\sum_j w_j^2 \right) \cdot \tau^{q-2} =
\beta \tau^{q-2}$, using $x^q+y^q \leq (x+y)^q$ for positive $x,y$ we get the claimed bound.
   \end{proof}
%\begin{corollary}                                       \label{cor:hkm2}
%    For $q \geq 2$ and even integer, \Cref{lem:hkm1} continues to hold even if $\bx_1, \dots, \bx_n$ are merely $q$-wise independent.
%\end{corollary}

The following lemma will be used to bound the expectations on the right-hand side of~\Cref{eq:single-swap}:

\begin{lemma}
\label{lem:hash}
Let $L, r_\hash,r_\bucket,$ and $\tau$ be as set in \Cref{sec:params}.  Let $\bh : [n]\to [\bucks]$ be an $r_\hash$-wise uniform hash function, and fix a bucket $\ell \in [\bucks]$.   Let $\by \sim \bn$ be an $r_\bucket$-wise uniform random variable.  Let $T \in \R^{m\times n}$ be a $\tau$-regular matrix in which each row has $2$-norm 1.   Then for all integers $d \ge 2$,
\[  \Ex_{\bh,\by}\Big[ \| T^{\bh^{-1}(\ell)} \by_{\bh^{-1}(\ell)}\|^\tay_\infty\Big]  = O_d\parens*{\tau\log m + \sqrt{(\log m)/\bucks}}^d.\]
\end{lemma}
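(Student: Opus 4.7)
The overall idea is to pass from the $L_\infty$-norm to a high $L_q$-norm (where $q \asymp \log m$), so that the $\max$ over the $m$ rows of $T$ can be absorbed into a union bound at the cost of an $O(1)^d$ factor; then each individual row moment $\E[|T_i^{\bh^{-1}(\ell)}\by_{\bh^{-1}(\ell)}|^q]$ is controlled by Rosenthal's inequality (\Cref{claim:hkm1}).

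\textbf{Step 1: Passing from $\|\cdot\|_\infty$ to $\|\cdot\|_q$.}
Fix an even integer $q \ge d$ to be chosen. For any vector $v \in \R^m$, $\|v\|_\infty \le \|v\|_q$, so for the random vector $\bv \coloneqq T^{\bh^{-1}(\ell)}\by_{\bh^{-1}(\ell)}$,
\[
\E\big[\|\bv\|_\infty^d\big] \le \E\big[\|\bv\|_q^d\big] = \E\Big[\big(\textstyle\sum_{i=1}^m |\bv_i|^q\big)^{d/q}\Big] \le \big(\textstyle\sum_{i=1}^m \E[|\bv_i|^q]\big)^{d/q},
\]
where the last inequality is Jensen's inequality applied to the concave map $x \mapsto x^{d/q}$ (since $d/q \le 1$). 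I will take $q = \lceil \log m \rceil$ rounded up to the next even integer, so that $m^{d/q} = O(1)^d$; by the parameter settings in \Cref{sec:params}, we have $q \ll r_\hash$ and $q \le r_\bucket$, so the bounded-uniformity budget is sufficient.

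\textbf{Step 2: $q$-wise independence of the effective weights.}
Fix a row $i \in [m]$ and write $\bv_i = \sum_{j=1}^n T_{ij}\, \bx_j$, where
\[
\bx_j \coloneqq \Ind[\bh(j) = \ell] \cdot \by_j \in \{0,\pm 1\}.
\]
The key observation is that $\bx_1, \dots, \bx_n$ are $q$-wise independent, with the marginal distribution of each $\bx_j$ being $0$ with probability $1-\beta$ and $\pm 1$ each with probability $\beta/2$, where $\beta = 1/L$. Indeed, for any $q$ indices $j_1, \dots, j_q$ and any target pattern in $\{0,\pm1\}^q$, the $r_\hash$-wise uniformity of $\bh$ exactly reproduces the required joint distribution of the indicators $\Ind[\bh(j_\nu)=\ell]$, and conditional on those indicators the relevant $\by_{j_\nu}$ entries are uniform by $r_\bucket$-wise uniformity of $\by$.

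\textbf{Step 3: Applying Rosenthal.}
Since each row $T_i$ of $T$ is $\tau$-regular with $\|T_i\|_2 = 1$, \Cref{claim:hkm1} (which permits $q$-wise independence when $q$ is an even integer) yields
\[
\E[|\bv_i|^q] = O\!\big(q\tau \cdot (\beta/\tau^2)^{1/q} + \sqrt{q\beta}\big)^q = O\!\big(q\tau \cdot (L\tau^2)^{-1/q} + \sqrt{q/L}\big)^q.
\]
With $q = \Theta(\log m)$, and since $(L\tau^2)^{-1/q}$ is at most a constant for the parameter regime of \Cref{sec:params} (only $(L\tau^2)^{-1}$ polylogarithmic in $m$ is needed), this simplifies to $O(\tau\log m + \sqrt{(\log m)/L})^q$.

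\textbf{Step 4: Combining.}
Plugging the bound from Step 3 into Step 1,
\[
\E\big[\|\bv\|_\infty^d\big] \le \big(m \cdot O(\tau\log m + \sqrt{(\log m)/L})^q\big)^{d/q} = m^{d/q}\cdot O_d\!\big(\tau\log m + \sqrt{(\log m)/L}\big)^d,
\]
and $m^{d/q} = O(1)^d$ by the choice of $q$, which yields the claimed bound. The only technical point that requires any care is Step 2 (verifying that the compound randomness of $\bh$ and $\by$ yields the correct $q$-wise independent structure for the effective weights $\bx_j$), which in turn is what allows direct invocation of Rosenthal in the $q$-wise independent setting; the rest is bookkeeping.
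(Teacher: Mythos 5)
Your approach is the same as the paper's: pass from the $\ell_\infty$-norm to a high $\ell_q$-norm, use Jensen, observe that the effective weights $\bx_j = \Ind[\bh(j)=\ell]\cdot\by_j$ are $q$-wise independent with the $\{0,\pm1\}$-marginal required by \Cref{claim:hkm1}, and apply that Rosenthal-type bound coordinatewise. Steps 1, 2, and 4 are correct and match the paper.

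The genuine gap is the parenthetical in Step~3: you assert that $(L\tau^2)^{-1}$ is polylogarithmic in $m$, but from the settings in \Cref{sec:params} one has $L\tau^2 = (\log m)^5 \delta^{-2-\eps} \cdot \delta^{2+2\eps}(\log m)^{-5-2\eps} = (\delta/(\log m)^2)^{\eps}$, so $(L\tau^2)^{-1} = ((\log m)^2/\delta)^{\eps}$, which grows without bound as $\delta \to 0$. With your choice $q = \Theta(\log m)$, the Rosenthal factor $(L\tau^2)^{-1/q} = ((\log m)^2/\delta)^{\eps/q}$ is not $O(1)$ once $\log(1/\delta) \gg \log m$ (e.g.\ $\delta = 2^{-m}$), and the claimed simplification fails. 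The fix is to take $q$ to be the largest even integer below $\min(r_\hash, r_\bucket)$, which is $\Theta(\log(m/\delta))$ (this is the paper's choice). Then $q \ge c\log(1/\delta)$, which makes $(1/\delta)^{\eps/q} = O(1)$ uniformly in $\delta$, while $q \ge \log m$ still yields $m^{d/q} = O(1)^d$ in Step~4; the bounded-independence budget still accommodates $q$ since $r_\bucket = \log(m/\delta)$ and $r_\hash = C_1\log(Lm/\delta)$. With this $q$, the bound comes out as $O_d\big(\tau\log(m/\delta) + \sqrt{\log(m/\delta)/L}\big)^d$, which is the form actually used downstream (the $\log m$ in the lemma statement is an informality).
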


\begin{proof}
Let $q$ be the largest even integer smaller than both $r_\hash$ and $r_\bucket$; note that $q = \Theta(\log(m/\delta))$.
%\ignore{\rnote{Was ``Let $q$ denote the largest even integer smaller than both $r_\hash$ and $r_\bucket$; note that $q = \Theta(\log m)$.'' but it's not clear to me that this largest even integer smaller than $r_\hash,r_\bucket$ will be $\Theta(\log m)$ --- will these $r$-values be exactly $\Theta(\log m)$ or might they be larger? {\blue{Li-Yang:}  I \emph{think} $r_\bucket$ will be exactly $\log(m/\delta)$ (in fact, I believe this proof alone dictates the value of $r_\bucket$), and $r_\hash$ will be at least that large.}}\lnote{\green{CONSTRAINT ON PARAMS:  $r_\hash,r_\bucket \ge \log m$. \red{Rocco:}}  In fact $\geq \log(m/\delta)$, right?}}
For notational brevity we let $\bX$ denote the $\R^m$-valued random variable $\bX \coloneqq T^{\bh^{-1}(\ell)} \by_{\bh^{-1}(\ell)}$.    Since $r_\bucket,r_\hash \ge q$, we can express $\bX$ as $\sum_{j=1}^n \bx_j T^j$ where $\bx_1,\ldots,\bx_n \sim \{-1,0,1\}$ are $q$-wise independent random variables distributed as in \Cref{claim:hkm1}, with $\beta = 1/\bucks$.

Since $q > d$, we have that
    \[
               \E\bracks*{\|\bX\|_\infty^d}
        \leq \E\bracks*{\|\bX\|_q^d}
        \leq \E\bracks*{\|\bX\|_q^{q}}^{d/q}
        = \parens*{\sum_{i=1}^m \E[\bX_i^q]}^{d/q}.
    \]
Applying \Cref{claim:hkm1} to bound each $\E[\bX^q_i]$, we conclude that
\begin{align*}
     \E\bracks*{\|\bX\|_\infty^d}  &= \parens*{m\cdot O\parens*{q\tau\cdot (1/L\tau^2)^{1/q} + \sqrt{q/\bucks}}^q}^{d/q} \nonumber \\
     &= m^{d/q} \cdot O\parens*{q\tau+ \sqrt{q/\bucks}}^{d} \\
     &=  O_d\parens*{\tau\log(m/\delta) + \sqrt{\frac{\log(m/\delta)}{\bucks}}}^d, \nonumber
\end{align*}
where the second inequality uses the fact that $\parens*{\frac1{L\tau^2}}^{1/q} = \parens*{\frac{\delta}{\log m}}^{O(1/q)} = O(1)$. This completes the proof of \Cref{lem:hash}.
%%%\rnote{In response to the question ``I'm a little puzzled as to why HKM can take $r_\bucket$ to be independent of $\delta$ and we cannot'' --- I'm not completely sure but is it because of the quantitative nature of the $q$-dependence arising from the Rosenthal inequality (that's a new ingredient in our analysis which wasn't there in their analysis, right?)}
\ignore{
\gray{
Since our parameter settings satisfy $\beta/\tau^2 = 1/(\bucks\tau^2) \le \exp(O(q)) = \poly(m)$,\rnote{\green{CONSTRAINT ON PARAMS (which should hold very easily): $1/(\bucks \tau^2) \leq \poly(m)$}

\red{ANNOYANCE:}  I guess the constraint $1/(\bucks \tau^2) \leq \poly(m)$ actually
relies on $\delta$ being a non-crazy value --- if, say, $\delta$ were $1//2^{2^m}$ then because of the $\delta$-dependence in $L$ and $\tau$, $1/(\bucks \tau^2) \leq \poly(m)$ would not hold.  So maybe we need to actually take $q$ to be $\Theta(\log(m/\delta))$ or something, instead of $\Theta(\log m)$? . \blue{Li-Yang:} Okay fixed this, though I'm a little puzzled as to why HKM can take $r_\bucket$ to be independent of $\delta$ and we cannot.
} by~\Cref{claim:hkm1}, we can bound each $\E[\bX_i^q]$ by $O\big(q \tau + \sqrt{q/\bucks}\big)^q$.   Thus we have shown that
   \begin{align*}
     \E\bracks*{\|\bX\|_\infty^d}  &\le \parens*{m\cdot O\big(q\tau + \sqrt{q/\bucks}\big)^q}^{d/q} \\
     &=  O_d\parens*{\tau\log m + \sqrt{(\log m)/\bucks}}^d,  \tag*{(\green{$q = \Theta(\log m)$})}
          \end{align*}
and this completes the proof of \Cref{lem:hash}}
}
\end{proof}

%
%\begin{corollary}                                       \label{cor:hkm2}
%     Fix an integer $d \geq 1$.\onote{This will truly be an absolute constant like $1000$, not depending on $m$ or anything else, right?}  Let $T \in \R^{m \times n}$ be a matrix in which each row is $\tau$-regular with $2$-norm~$1$.   Let $\bx_1, \dots, \bx_n$ be $q$-wise independent with distribution as in \Cref{claim:hkm1}, where $q \geq \max(\log m, d)$ is an even integer.  Assume  also $\beta/\tau^2 \leq \poly(m)$.  Let $\bX = \sum_{j=1}^n \bx_j T^j$, where $T^j$ is the $j$-th column of~$T$.  Then
%    \begin{equation}    \label{eqn:hkm2}
%        \E[\|\bX\|_\infty^d] \leq O\parens*{\max\braces*{(\log m) \tau, \sqrt{\log m}\sqrt{\beta} }}^d.
%    \end{equation}
%\end{corollary}
%\begin{proof}
%    There is no harm in assuming $q = \Theta(\log m)$.  Then
%    \[
%               \E\bracks*{\|\bX\|_\infty^d}
%        \leq \E\bracks*{\|\bX\|_q^d}
%        \leq \E\bracks*{\|\bX\|_q^{q}}^{d/q}
%        = \E\bracks*{\sum_{i=1}^m \bX_i^q}^{d/q}.
%    \]
%    By \Cref{lem:hkm1}, we can bound each $\E[\bX_i^q]$ by $O(q \tau + \sqrt{q}\sqrt{\beta})^q$.  Thus our final bound is
%    \[
%        m^{d/q} \cdot O(q \tau + \sqrt{q}\sqrt{\beta})^d,
%    \]
%    which is bounded as in \Cref{eqn:hkm2} using $q = \Theta(\log m)$ and $d=O(1).$
%\end{proof}

\subsection{Proof of~\Cref{thm:fool-bentkus}: the hybrid argument} \label{sec:proof-thm-fool-bentkus}

In this subsection we put together the two main results of the two previous subsections (\Cref{lem:single-swap} and \Cref{lem:hash}) to prove \Cref{thm:fool-bentkus}.

Recalling \Cref{rem:H-and-T-matrices}, we can write $A$ as $H+T$, where every row of $H$ is $k$-sparse and every row of $T$ is $\tau$-regular with $2$-norm $1$.  Let us say that a hash $h : [n] \to [\bucks]$ is \emph{$H$-good} if
\begin{equation} | h^{-1}(\ell) \cap \supp(H_i)| \le w \coloneqq {\frac{2k}{\bucks}} \label{eq:def-of-w}
\ignore{\rnote{Was ``$\frac{k}{\bucks} + O\big(\sqrt{\log(\delta/\bucks m)}\big)$'', but that's $\sqrt{\text{negative number}}$, and it seems to me we kind of need $2k/\bucks$ rather than $k/\bucks+$(something smaller than $k/\bucks$) for the tail bound proposition~\Cref{prop:hash} to go through smoothly.  Note that the factor of 2 doesn't cost us anything meaningful since we would be taking $w \geq k/\bucks$ anyway.}}
%\rnote{\green{CONSTRAINT ON PARAMS:  $w=2k/\bucks=\Theta(\log(m/\delta)\log\log(m/\delta))$}}
\end{equation}
for all buckets $\ell \in [\bucks]$ and rows $i\in [m]$.    Equivalently, for all $\ell \in [\bucks]$, every row of the the submatrix $H^{h^{-1}(\ell)}$ is $w$-sparse.

\begin{proposition}[Even distribution of head variables]
\label{prop:hash}
There is a universal constant $C_1 > 0$ such that the following holds.  If $\bh : [n] \to [\bucks]$ is $r_\hash$-wise uniform where $r_\hash \ge C_1 \log(Lm/\delta)$, then
\[ \Pr\bracks*{\text{$\bh$ is not $H$-good}} \le \delta. \]
\end{proposition}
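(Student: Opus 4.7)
The plan is to fix a pair $(i,\ell)\in[m]\times[L]$, bound the probability that bucket $\ell$ receives more than $w = 2k/L$ coordinates of $\supp(H_i)$, and union bound over the $mL$ pairs. Define
\[
\bX \;\coloneqq\; |\bh^{-1}(\ell)\cap\supp(H_i)| \;=\; \sum_{j\in\supp(H_i)} \Ind[\bh(j)=\ell].
\]
Since $H$ is the ``head'' of a $(k,\tau)$-standardized decomposition (\Cref{rem:H-and-T-matrices}), $|\supp(H_i)|\le k$, so $\bX$ is a sum of at most $k$ Bernoulli$(1/L)$ indicators. By the $r_\hash$-wise uniformity of $\bh$, these indicators are $r_\hash$-wise independent, and $\mu \coloneqq \E[\bX]\le k/L$.

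Next, I would invoke a standard moment-based tail bound for sums of $r$-wise independent $\{0,1\}$ random variables (e.g.\ Bellare--Rompel or Schmidt--Siegel--Srinivasan): for even $r\le r_\hash$ and any $t>0$,
\[
\Pr[\,\bX - \mu \ge t\,] \;\le\; O\!\left(\frac{r\mu + r^2}{t^2}\right)^{\!r/2}.
\]
Taking $t = k/L$ (so $\{\bX\ge w\}\subseteq\{\bX-\mu\ge t\}$) and $r = r_\hash$, the whole argument reduces to the inequality $r_\hash \ll k/L$: that inequality lets us absorb the $r_\hash^2$ term into $r_\hash\mu$, yielding $\Pr[\bX\ge w]\le O(r_\hash L/k)^{r_\hash/2}$.

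The remaining step is parameter-chasing. Plugging in the choices of \Cref{sec:params}, a direct computation gives
\[
\frac{k}{L} \;=\; \Theta\!\left(\frac{\log(m/\delta)\,\log\log(m/\delta)\,(\log m)^{2\eps}}{\delta^{\eps}}\right),
\qquad
r_\hash \;=\; \Theta(\log(m/\delta)),
\]
so $k/L$ exceeds $r_\hash$ by at least a $\log\log(m/\delta)$ factor. Choosing $C_1$ large enough that $(r_\hash/2)\log(k/(r_\hash L)) \ge \log(mL/\delta) + \log(1/\delta)$ then yields $\Pr[\bX \ge w] \le \delta/(mL)$, and a union bound over the $mL$ pairs $(i,\ell)$ finishes the proof.

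The only ingredient that is not entirely routine is the tail bound for bounded-independence Bernoulli sums, but that is by now a textbook tool. I do not foresee any genuine obstacle; the main thing to be careful about is that $k$ is large enough relative to $L$ and $r_\hash$ for the moment bound to beat the $mL$ union-bound cost, which the parameter computation above confirms.
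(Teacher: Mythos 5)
Your proposal is correct and follows essentially the same strategy as the paper: fix a pair $(i,\ell)$, observe that $|\bh^{-1}(\ell)\cap\supp(H_i)|$ is a sum of at most $k$ many $r_\hash$-wise independent Bernoulli$(1/L)$ indicators with mean at most $k/L = w/2$, apply a limited-independence tail bound, and union bound over the $mL$ pairs. The only cosmetic difference is the choice of concentration tool: you invoke the Bellare--Rompel-style moment bound $O((r\mu + r^2)/t^2)^{r/2}$, whereas the paper cites Schmidt--Siegel--Srinivasan Theorem~5(I)(a); both deliver the same $e^{-\Omega(r_\hash)}$ tail once one verifies $r_\hash = O(k/L)$, which your parameter computation correctly establishes.
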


\begin{proof}
%%\rnote{This proof is really just \cite{SSS95}, which upper bounds $\Pr[|\bX - \mu| > \eps \mu]$ where $\bX$ is $r$-wise independent and $\mu=\E[\bX]$.  There are literally five different bounds depending on the relationship between $\eps,r$ and $\mu$; see
%%\href{https://www.cs.mcgill.ca/~amehra13/Articles/kwise_independent_concentration_summary.pdf}{Theorem~3 of this link}.
%%(Note that their ``$k$'' there is our $r_{\hash}$ and their ``$\mu$'' is our $k/\bucks$.)
%%
%%I think the right regime for us is the first  of their five bounds.  We are taking $\eps=1$ since our bound is $2\mu=2k/\bucks.$  Their ``$k$'', which is our $r_{\hash}$, will be set to be $\log(\bucks m/\delta)$, and this is less than their $\eps^2 \mu e^{-1/3}$, which is (now back to our parameters) $e^{-1/3} k/\bucks = \Theta(\log(m/\delta)\log\log(m/\delta))$.
%%
%%If this looks right, feel free to delete this rnote
%%}
Fix any $\ell \in [\bucks]$ and $i \in [m].$
The quantity $|\bh^{-1}(\ell) \cap \supp(H_i)|$ is a sum of $|\supp(H_i)| \leq k$ many $r_{\hash}$-wise independent $\zo$-valued random variables, each of which takes the value 1 with probability $1/\bucks.$  To bound the probability that $|\bh^{-1}(\ell) \cap \supp(H_i)|$ is larger than $w$, we apply the well-known tail bounds for sums of limited-independence random variables due to Schmidt, Siegel, and Srinivasan~\cite{SSS95}, specifically their Theorem~5(I)(a).  Taking the ``$\delta$'' of their paper to be 1 and observing that their ``$\mu$'' is our $k/\bucks$ and their ``$k$'' is our $r_{\hash}= \Theta(\log(Lm/\delta))$,
%\rnote{\green{CONSTRAINT ON PARAMS:  $r_{\hash} = \Theta(\log(Lm/\delta))$}},
we get that
$\Pr[|\bh^{-1}(\ell) \cap \supp(H_i)| > w] \leq \delta/(\bucks m).$
The proposition follows by a union bound over all $\ell \in [\bucks]$ and $i \in [m].$
\end{proof}

We are now ready to prove \Cref{thm:fool-bentkus}, which we restate here for convenience:

\begin{reptheorem}{thm:fool-bentkus}
Let $\mathscr{G}$ be our generator with parameters as given in \Cref{sec:params}, and likewise {let} $\lambda > 0$ be as set in \Cref{sec:params}.  For all $(k,\tau)$-standardized matrices $A \in \R^{m\times n}$ and {all} $b \in \R^m$,
\[ \Big| \Ex_{\bu \sim \bn}\big[\wt{\calO}_{b,\lambda}(A\bu) \big] - \Ex_{\bz \sim \mathscr{G}_{\mathrm{MZ}}}\big[ \wt{\calO}_{b,\lambda}(A\bz)\big] \Big| = {O(\delta)}. \]
\end{reptheorem}

\begin{proof}
Let $\bh, \by^1,\ldots,\by^\bucks,  \tilde{\by}^1,\ldots,\tilde{\by}^\bucks$, $\breve{\by}$, and $\by^\star$ be the random hash function and random variables associated with our generator $\mathscr{G}$, as defined in \Cref{def:ourgenerator}.  Recall that a draw from $\bz \sim \mathscr{G}$ is $\bz \coloneqq \breve{\by}\oplus\by^\star$.  We will show that in fact $\breve{\by}$ alone satisfies:
 \begin{equation} \Big| \Ex_{\bu \sim \bn}\big[\wt{\calO}_{b,\lambda}(A\bu) \big] - \Ex\big[ \wt{\calO}_{b,\lambda}(A\breve{\by})\big] \Big| = {O(\delta)}. \label{eq:without-xor}
 \end{equation}
 Since $\by^{\star}$ and $\breve{\by}$ are independent, \Cref{thm:fool-bentkus} follows as a consequence of \Cref{eq:without-xor}.

We recall the definition of $\breve{\by}$:
\[  \breve{\by}_{\bh^{-1}(\ell)} = (\by^{\ell} \oplus \tilde{\by}^\ell)_{\bh^{-1}(\ell)}   \qquad \text{for all $\ell \in [\bucks]$}.\]
We observe first that for each $\ell \in [\bucks]$, the random variable $\by^\ell \oplus \tilde{\by}^{\ell} \sim \bn$
\begin{enumerate}
\itemsep -.5pt
\item[(i)] is $r_\bucket$-wise uniform (since $\by^{\ell}$ is); and
\item[(ii)] $\delta_\CNF$-fools the class of width-$w$ CNF formulas (since $\tilde{\by}^{\ell}$ does).
\end{enumerate}
We will use both properties in this proof.  For each hash $h : [n] \to [\bucks]$ and index $\ell \in \{0,1,\ldots,\bucks\}$, we define the hybrid random variable $\bx^{h,\ell} \sim \bn$,
\[
\bx^{h,\ell}_{h^{-1}(c)} =
\begin{cases}
\bu_{h^{-1}(c)} & \text{if $c > \ell$} \\
(\by^{\ell} \oplus \tilde{\by}^\ell)_{h^{-1}(c)} & \text{if $c \le \ell$}.
\end{cases}
\]
Averaging over $\bh$, we get that $\bx^{\bh,0} \equiv \bu$ and $\bx^{\bh,\bucks} \equiv \breve{\by}$, and so we may write
\begin{align*}
\text{LHS of (\ref{eq:without-xor})} &= \big| \Ex\big[\wt{\calO}_{b,\lambda}(A\bu) \big] - \Ex\big[ \wt{\calO}_{b,\lambda}(A\breve{\by})\big] \big|  \\
&=
\big| \Ex\big[\wt{\calO}_{b,\lambda}(A\bx^{\bh,0}) \big] - \Ex\big[ \wt{\calO}_{b,\lambda}(A\bx^{\bh,\bucks})\big] \big| \\
&\le \Ex_{\bh} \Big[ \big| \Ex\big[\wt{\calO}_{b,\lambda}(A\bx^{\bh,0}) \big] - \Ex\big[ \wt{\calO}_{b,\lambda}(A\bx^{\bh,\bucks})\big] \big|  \Big]  \\
&\le \Ex_{\bh} \Big[ \big| \Ex\big[\wt{\calO}_{b,\lambda}(A\bx^{\bh,0}) \big] - \Ex\big[ \wt{\calO}_{b,\lambda}(A\bx^{\bh,\bucks})\big] \big| \cdot \Ind\big[ \, \text{$\bh$ is $H$-good}\,\big] \Big] + \Pr\big[\,\text{$\bh$ is not $H$-good}\,\big] \\
&\le \underbrace{\Ex_{\bh} \Big[ \big| \Ex\big[\wt{\calO}_{b,\lambda}(A\bx^{\bh,0}) \big] - \Ex\big[ \wt{\calO}_{b,\lambda}(A\bx^{\bh,\bucks})\big] \big| \cdot \Ind\big[ \, \text{$\bh$ is $H$-good}\,\big] \Big]}_{\heartsuit} + \delta.
\end{align*}
The penultimate inequality uses the fact that $\wt{\calO}_{b,\lambda}$ is $(0,1)$-valued (and hence the difference in its expectations under any two distributions is at most $1$), and the final inequality is by \Cref{prop:hash} (note that we indeed have {$r_\hash \ge {C_1 \log(Lm/\delta)}$}).

It remains to bound $\heartsuit$ by $O(\delta)$.   Fix a $H$-good hash $h$.  By the triangle inequality,
\begin{equation}
\big| \Ex\big[\wt{\calO}_{b,\lambda}(A\bx^{h,0}) \big] - \Ex\big[ \wt{\calO}_{b,\lambda}(A\bx^{h,\bucks})\big] \big| \le \sum_{\ell=1}^\bucks  \big| \Ex\big[\wt{\calO}_{b,\lambda}(A\bx^{h,\ell-1}) \big] - \E\big[ \wt{\calO}_{b,\lambda}(A\bx^{h,\ell})\big] \big|.\label{eq:triangle}
\end{equation}
Fix $\ell\in [\bucks]$ and consider the corresponding summand
\begin{equation}
 \big| \Ex\big[\wt{\calO}_{b,\lambda}(A\bx^{h,\ell-1}) \big] - \E\big[ \wt{\calO}_{b,\lambda}(A\bx^{h,\ell})\big] \big|. \label{eq:fixed-hash}
 \end{equation}
For notational clarity, let us write $B$ for $h^{-1}(\ell)$ and $\overline{B}$ to denote $[n]\setminus B.$  Furthermore, since these ``adjacent" hybrid random variables $\bx^{h,\ell-1}$ and $\bx^{h,\ell}$ agree on all coordinates outside $B$, we introduce the random variable $\bs \sim \bits^{\overline{B}}$ where $\bs_{h^{-1}(c)} \equiv \bx^{h,\ell-1}_{h^{-1}(c)} \equiv \bx^{h,\ell}_{h^{-1}(c)}$ for all $c \ne \ell$. Note that $\bs, \bu_B$, and $(\by^{\ell} \oplus \tilde{\by}^\ell)_B$ are mutually independent.  We have that
\begin{align*}
 \eqref{eq:fixed-hash} &= \Big| \Ex_{\bs} \Big[ \Ex_{\bu} \big[ \wt{\calO}_{b,\lambda}(A^{\overline{B}}\bs + A^B \bu_{B}) \big] - \Ex_{\by^\ell,\tilde{\by}^{\ell}}\big[  \wt{\calO}_{b,\lambda}(A^{\overline{B}}\bs + A^B (\by^{\ell} \oplus \tilde{\by}^\ell)_{B}) \big]  \Big] \Big| \\
 &\le \Ex_{\bs} \Big[ \big|\Ex_{\bu} \big[ \wt{\calO}_{b,\lambda}(A^{\overline{B}}\bs + A^B \bu_{B}) \big] - \Ex_{\by^\ell,\tilde{\by}^{\ell}}\big[  \wt{\calO}_{b,\lambda}(A^{\overline{B}}\bs + A^B (\by^{\ell} \oplus \tilde{\by}^\ell)_{B}) \big] \big| \Big] \\
 &= \Ex_{\bs} \Big[ \big|\Ex_{\bu} \big[ \wt{\calO}_{b-A^{\overline{B}}\bs,\lambda}(A^B \bu_{B}) \big] - \Ex_{\by^\ell,\tilde{\by}^{\ell}}\big[  \wt{\calO}_{b-A^{\overline{B}}\bs,\lambda}(A^B (\by^{\ell} \oplus \tilde{\by}^\ell)_{B}) \big] \big| \Big] \tag*{(\Cref{fact:shift})} \\
 &= \Ex_{\bs} \Big[ \big|\Ex_{\bu} \big[ \wt{\calO}_{b-A^{\overline{B}}\bs,\lambda}(H^B \bu_{B} + T^B\bu_{B}) \big] - \Ex_{\by^\ell,\tilde{\by}^{\ell}}\big[  \wt{\calO}_{b-A^{\overline{B}}\bs,\lambda}(H^B (\by^{\ell} \oplus \tilde{\by}^\ell)_{B}+ T^B (\by^{\ell} \oplus \tilde{\by}^\ell)_{B}) \big] \big| \Big].
 \end{align*}
Since $h$ is $H$-good, every row of $H^B$ is indeed $w$-sparse, and since every row of $T$ has 2-norm 1, every row of $T^B$ has 2-norm at most 1. Recalling (ii) from above, we may
apply \Cref{lem:single-swap} to each outcome $s$ of $\bs$, and we get that this quantity is at most
 \[
 \delta_{\CNF} \cdot m^{d-1} \cdot  O\parens*{\frac{\sqrt{n}}{\lambda}}^{d-1} +  O\parens*{\frac{\sqrt{\log m}}{\lambda}}^d\parens*{\Ex_{\bu}\big[\|T^B\bu_B\|_\infty^\tay\big] + \Ex_{\by^\ell,\tilde{\by}^{\ell}}\big[\|T^B(\by^{\ell} \oplus \tilde{\by}^\ell)_B\|_\infty^\tay\big]},
  \]
and therefore
\begin{align}
 \text{RHS of \eqref{eq:triangle}} &\le  \bucks\cdot  \delta_{\CNF} \cdot m^{d-1} \cdot O\parens*{\frac{\sqrt{n}}{\lambda}}^{d-1}   \nonumber \\
& \ \  \ \ +  O\parens*{\frac{\sqrt{\log m}}{\lambda}}^d\cdot  \sum_{\ell=1}^\bucks\parens*{\Ex_{\bu}\Big[\|T^{h^{-1}(\ell)}\bu_{h^{-1}(\ell)}\|_\infty^\tay\Big] +\Ex_{\by^\ell,\tilde{\by}^{\ell}}\Big[\|T^{h^{-1}(\ell)}(\by^{\ell} \oplus \tilde{\by}^\ell)_{h^{-1}(\ell)}\|_\infty^\tay\Big]}.\label{eq:almost}
\end{align}
Since \Cref{eq:almost} holds for every $H$-good hash $h$, we have shown that
\begin{align*}
\heartsuit &\le \Ex_{\bh}\big[ \text{(RHS of (\ref{eq:triangle}))} \cdot \Ind[ \,\text{$\bh$ is $H$-good}\,]  \big]  \\
&\le  \Ex_{\bh}\big[ \text{(RHS of (\ref{eq:almost}))} \cdot \Ind[ \,\text{$\bh$ is $H$-good}\,]  \big] \\
&\le \Ex_{\bh}\big[ \text{(RHS of (\ref{eq:almost}))}  \big]  \\
&= \bucks\cdot  \delta_{\CNF} \cdot m^{d-1} \cdot O\parens*{\frac{\sqrt{n}}{\lambda}}^{d-1}   \nonumber \\
& \ \  \ \ +  O\parens*{\frac{\sqrt{\log m}}{\lambda}}^d\cdot \underbrace{\sum_{\ell=1}^\bucks\left(\Ex_{\bh,\bu}\Big[\|T^{\bh^{-1}(\ell)}\bu_{\bh^{-1}(\ell)}\|_\infty^\tay\Big] + \Ex_{\bh,\by^\ell,\tilde{\by}^{\ell}}\Big[\|T^{\bh^{-1}(\ell)}(\by^{\ell} \oplus \tilde{\by}^\ell)_{\bh^{-1}(\ell)}\|_\infty^\tay\Big]\right)}_{\diamondsuit}.
\end{align*}
Applying \Cref{lem:hash} to bound each of the $2\bucks$ many summands of $\diamondsuit$, we have that
\begin{align}
\heartsuit &\le \bucks\cdot  \delta_{\CNF} \cdot m^{d-1} \cdot O\parens*{\frac{\sqrt{n}}{\lambda}}^{d-1} +  O\parens*{\frac{\sqrt{\log m}}{\lambda}}^d\cdot  2\bucks \cdot O\parens*{\tau\log(m/\delta) + \sqrt{\frac{\log(m/\delta)}{\bucks}}}^d \nonumber \\
&= \bucks\cdot  \delta_{\CNF} \cdot m^{d-1} \cdot O\parens*{\frac{\sqrt{n}}{\lambda}}^{d-1} + L \cdot O\parens*{\frac{\tau\sqrt{\log(m)} \log(m/\delta)}{\lambda} + \frac{\sqrt{\log(m)\log(m/\delta)}}{\lambda \sqrt{\bucks}}}^d. \label{eq:params}
 \end{align}
By our choice of parameters as set in \Cref{sec:params},
\begin{align*}
(\ref{eq:params}) &=   O(\delta)  + \frac{(\log m)^5}{\delta^{2+\eps}} \cdot O\parens*{\delta^\eps \cdot \frac{\log(m) (\log(m/\delta))^{1.5 + \eps}}{(\log m)^{2.5+\eps}} + \delta^{\eps/2} \cdot \frac{\log(m)\log(m/\delta)}{(\log m)^{2.5}}}^d.
\end{align*}
Taking $d$ to be sufficiently large relative to $\eps$,\ignore{\rnote{Nitpick:  Even if $C_3$ were large relative to $d$, since $\eps$ is an absolute constant, wouldn't we still get $O(\delta)$ out (where now there'd be a large $\eps$-dependent constant hidden by the big-Oh, but whatever)? I guess the question is or not whether we think of our big-Oh bounding the seed length as being able to hide (large) constant factors that depend on $\eps.$ \blue{Li-Yang: Hmm, good point.  Actually, I guess our $O(\delta)$ will have to hide constant factors that depend on $d$ anyway because of the second summand. I removed the $C_3$.  I also removed all the $d$ subscripts in this section, since this is in the context of the parameter settings in \Cref{sec:params}, where $d$ is an absolute constant. }}} the above expression can be bounded by $O(\delta)$.  This establishes \Cref{eq:without-xor}, and the proof of~\Cref{thm:fool-bentkus} is complete.
\end{proof}

%%%%\ignore{
%%%%\rnote{\green{CONSTRAINT ON PARAMS:  The first expression forces
%%%%\[
%%%%\delta_{\CNF} \leq {\frac {\delta}{\bucks \cdot O(m\sqrt{n}/\lambda)^{d-1}}}.
%%%%\]
%%%%To analyze the second expression, it's helpful to recall that $L \geq 1/\tau^2$ so $\tau \geq 1/\sqrt{L}$; this means that the ``$\tau \log m$'' part dominates the ``$\sqrt{(\log m)/L}$'' part, so we can view the ``$O($stuff$)^d$'' in the second expression as being just $O(\tau \log m)^d$.  And thus the second expression forces
%%%%\[
%%%%L \cdot O\left(
%%%%{\frac {\tau (\log m)^{3/2}}{\lambda}}
%%%%\right)^d \leq \delta
%%%%\]
%%%%which results in the constraint
%%%%\[
%%%%\overbrace{\tau \leq {\frac {\lambda (\delta/L)^{1/d}}{(\log m)^{3/2}}}}^{\text{call this constraint $A$}}.
%%%%\]
%%%%}}
%%%%}

%!TEX root = main.tex

\section{Proof of \Cref{thm:fool-k-tau-regular}}
\label{sec:put-together}

Having completed both steps of the two-step program described at the end of~\Cref{sec:mollifier} we are finally ready to prove \Cref{thm:fool-k-tau-regular}, which we restate here for convenience:

\begin{reptheorem}{thm:fool-k-tau-regular}
Let $\mathscr{G}$ be our generator with parameters as set in \Cref{sec:params}.   For all $m$-facet $(k,\tau)$-standardized polytopes $Ax \leq b$,
\[ \Big| \Prx_{\bu \sim \bn}\big[A\bu \in \calO_{b} \big] - \Prx_{\bz \sim \mathscr{G}}\big[ A\bz \in \calO_{b}\big] \Big| = O(\delta). \]
\end{reptheorem}

\begin{proof}
Let $\lambda \in (0,1)$ be as set in \Cref{sec:params}.  By \Cref{lemma:bentkus-sandwich},
% \violet{with $\eps = \delta$}\rnote{If we're going to take $\eps=\delta$, guess we may as well just switch all the $\eps$'s in Section~6 to $\delta$'s?  I'll do this in a later pass}, 
 there are $b^{\inner},b^{\outter} \in \R^m$ such that $\wt{\calO}_{b^{\inner},\lambda},\wt{\calO}_{b^{\outter},\lambda}$ are $(\Lambda,\delta)$-inner and -outer approximators for $\calO_b$ respectively,
where $\Lambda = \Theta(\lambda \sqrt{\log(m/\delta)}).$  Next, we apply \Cref{lem:soft-to-hard} with $\bv$ and $\tilde{\bv}$ being $A\bu$ and $A \bz$ respectively, using \Cref{thm:fool-bentkus} to show that \Cref{eq:fool-Upsilon} is satisfied for both $\wt{\calO}_{b^{\inner},\lambda}$ and $\wt{\calO}_{b^{\outter},\lambda}$ with $\gamma = O(\delta)$.  We conclude that:
\begin{align}
&\abs*{\Prx_{\bu \sim \bn}\bracks*{A\bu \in \calO_{b}} - \Prx_{\bz \sim \mathscr{G}}\bracks*{A\bz \in \calO_{b}} }\nonumber \\
&= O(\delta) + \Pr\bracks*{A\bu \in \Game_{\pm\Lambda}\calO_b}  \tag*{(\Cref{lem:soft-to-hard} and \Cref{thm:fool-bentkus})}\\
&= O(\delta) + O\big(\Lambda\sqrt{\log m})\big) \tag*{(\Cref{thm:anticonc}; note that $\Lambda \ge \tau$ is indeed satisfied)}\\
&= O(\delta) + O\big(\lambda\sqrt{\log(m/\delta)\log m}\big) \nonumber \\
&= O(\delta). \label{eq:our-lambda} 
\end{align}
This completes the proof of \Cref{thm:fool-k-tau-regular}.
\end{proof}

\section*{Acknowledgements}
 
 R.O.~is supported by NSF grants CCF-1618679 and CCF-1717606. R.S. is supported by NSF grant CCF-1563155.  L.-Y.T. is supported by NSF grant CCF-1563122; part of this work was performed while the author was at TTI-Chicago.

This material is based upon work supported by the National Science Foundation under grant numbers listed above. Any opinions, findings and conclusions or recommendations expressed in this material are those of the authors and do not necessarily reflect the views of the National Science Foundation.

\bibliography{allrefs}{}
\bibliographystyle{alpha}

\appendix

%!TEX root = main.tex

\section{Proof of \Cref{claim:optimal}} \label{ap:optimal}
%\lnote{Small typographical changes to standardize with rest of paper: $f \to F$, $s \to \sigma$}
We recall~\Cref{claim:optimal}:

\begin{claim}[\Cref{claim:optimal} restated]
%\onote{In what sense is it slightly sharpened?  Looks identical to me\dots}
For $\violet{2} \leq m \leq 2^n$, there is a matrix $A \in \{-1,1\}^{m \times n}$ and a vector $b \in \Z^m$ such that
\[
\Pr\bracks*{ A\bu \in \Game \calO_b}  = \Omega\parens*{\frac{\sqrt{\ln m}}{\sqrt{n}}}.
\]
\end{claim}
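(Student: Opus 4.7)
My plan is to give a probabilistic construction. I would draw $A \in \{-1,1\}^{m\times n}$ with iid uniform entries and take $b = T\cdot \mathbf{1}_m$ for an integer $T$ (of the same parity as $n$) to be chosen. By the probabilistic method, it suffices to lower-bound the expected surface probability $\mathbb{E}_A\bigl[\Pr_\bu[A\bu \in \Game \calO_{T\mathbf{1}_m}]\bigr]$ by $\Omega(\sqrt{\ln m/n})$; some realization of $A$ will then achieve at least this value.

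First I would exploit the symmetry $A_{ij}\mapsto A_{ij}\bu_j$, which preserves the uniform distribution on $A$ for each fixed $\bu$. This lets me condition on $\bu = \mathbf{1}_n$ without loss of generality. The coordinates $(A\mathbf{1})_i = \sum_j A_{ij}$ are then iid copies $X_1,\dots,X_m$ of $X \coloneqq \sum_{j=1}^n A_{1,j}$, a sum of $n$ independent uniform $\pm 1$'s. Writing $F$ for the CDF of $X$, the event $A\bu\in\Game \calO_{T\mathbf{1}}$ becomes $\max_i X_i = T$, whose probability equals
\[
F(T)^m - F(T-2)^m \;=\; \Pr[X=T]\cdot\sum_{k=0}^{m-1}F(T)^k F(T-2)^{m-1-k}.
\]

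I would then choose $T$ to be the smallest integer of the correct parity for which $\Pr[X > T]\le 1/m$. This forces $F(T)\ge 1-1/m$ and so $F(T)^m \ge e^{-1}$, hence the telescoping sum above is $\Omega(\min\{1,\, m\,\Pr[X=T]\})$. The task reduces to proving $m\,\Pr[X=T] = \Omega(\sqrt{\ln m/n})$. For this I would use the ratio of consecutive binomial coefficients $\binom{n}{k-1}/\binom{n}{k}=k/(n-k+1)$ with $k=(n-T)/2$, which gives the tail estimate $\Pr[X>T] \le \Pr[X=T]\cdot(n-T)/(2T)$, and hence $\Pr[X=T] \ge \tfrac{2T}{n+T}\cdot \Pr[X\ge T] > \tfrac{2T}{(n+T)m}$ (using $\Pr[X\ge T] > 1/m$ by the minimality of $T$). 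Combining with a Slud-type lower bound on the $\pm 1$ tail, which gives $T \ge \Omega(\sqrt{n\ln m})$ whenever $m$ is not bounded by an absolute constant, I obtain $m\,\Pr[X=T] \ge \Omega(T/n) \ge \Omega(\sqrt{\ln m/n})$. In the complementary small-$T$ regime (where $m = O(1)$), the local density bound $\Pr[X=T] = \Omega(1/\sqrt{n})$ yields the same $\Omega(\sqrt{\ln m/n}) = \Omega(1/\sqrt{n})$ directly.

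The main technical obstacle will be making the binomial tail/local-density estimates fully uniform across the range $2\le m\le 2^n$, where $T$ can lie anywhere in $\{0,1,\dots,n\}$. In the moderate-deviation regime $\ln m \ll n$ this is the classical Gaussian local CLT together with the Gaussian tail estimate; for $\ln m = \Theta(n)$ one needs exact binomial ratios instead. In particular, the extreme case $m \approx 2^n$ forces $T = n$, and the formula collapses directly to $F(T)^m - F(T-2)^m = 1 - (1 - 2^{-n})^m \ge 1 - e^{-1}$, already matching the required $\Omega(\sqrt{\ln m/n}) = \Omega(\sqrt{\ln 2}) = \Omega(1)$ in this regime.
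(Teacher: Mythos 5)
Your proposal is correct and takes essentially the same approach as the paper: a random $\pm 1$ matrix with threshold vector $b = T\cdot\mathbf{1}_m$ chosen so that the per-row tail probability is about $1/m$, followed by a binomial local-density/tail estimate showing $m\Pr[X=T]=\Omega(\sqrt{\ln m}/\sqrt{n})$. Your symmetrization to $\bu = \mathbf{1}_n$ and the telescoping factorization $F(T)^m - F(T-2)^m$ are an equivalent repackaging of the paper's averaging over $z$ and its disjoint-union identity $\Pr[\text{surface}] \ge m\,p_{\mathrm S}\,p_{\mathrm I}^{m-1}$, and your consecutive-binomial-ratio plus Slud-type bound plays the role of the paper's Mills-ratio estimate.
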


\begin{proof}

The proof is a simple probabilistic existence argument that follows the approach used to prove Theorem~2 in~\cite{Kane14intersection}. For a polytope ${\cal K} = \Ind[ Ax \leq b]$ we define % $\Inside({\cal K}), \Surface({\cal K})$ and $\Outside({\cal K})$ to denote the set of those points $x \in \{-1,1\}^n$ that lie inside, on the surface of, or outside of ${\cal K}$ respectively, i.e.,
\begin{align*}
\Inside ({\cal K}) &= \{x \in \bits^n: Ax \in \calO_b \setminus \Game \calO_b, &\text{i.e.~}A_i x < b_i \text{~for all~}i \in [m]\},\\
\Surface ({\cal K}) &= \{x \in \bits^n: Ax \in  \Game \calO_b, &\text{i.e.~}Ax \leq b \text{~and~}A_i x = b_i \text{~for some~}i \in [m]\}.
%\Outside ({\cal K}) &= \{x \in \bits^n: Ax \notin \calO_b, &\text{i.e.~}A_i x > b_i \text{~for some~}i \in [m]\}.%
\end{align*}

Given $\violet{2} \leq m \leq 2^n$, if $m < 10$ then the one-facet polytope $\Ind[x_1 + \cdots + x_n \leq 0]$ does the job (more formally, we take $A$ to be the $m \times n$ all-1's matrix and $b$ to be the zero vector in $\R^m$).  It is also clear that proving our result for $m \leq 2^{n/10}$ also proves it for $2^{n/10} \leq m \leq 2^n$. So we henceforth assume that $10 \leq m \leq 2^{n/10}$.
 Let $k \geq n/2$ be an integer to be chosen later, and let $F: \{-1,1\}^n \to \zo$ denote the halfspace $F(x)= \Ind[x_1 + \cdots + x_n \leq t]$. Now define the following quantities:
\begin{align*}
p_{\mathrm{I}} &\coloneqq |\Inside(F)|/2^n = {n \choose < k}/2^n,\qquad 
p_{\mathrm{S}} \coloneqq |\Surface(F)|/2^n  = {n \choose k}/2^n.
%p_{\mathrm{O}} &\coloneqq |\Outside(F)|/2^n= {n \choose > k}/2^n.
\end{align*}
%\onote{Deleted some unused notation here}
 %For $\sigma \in \{-1,1\}^n$, let $F_\sigma(x)=F(\sigma_1 x_1,\dots,
%\sigma_n x_n)$.  Each $F_\sigma$ is clearly a halfspace with %We note that for each $\sigma \in \{-1,1\}^n$ the function $F_\sigma$ is a halfspace which, after negating variables, is identical to $F$ (and in particular has
%the same number of inside, outside, and surface points as $F$.%).

Let $\overline{\bsigma}=(\bsigma^1,\dots,\bsigma^{m})$ where each $\bsigma^i$ is an independent uniform string in $\bits^n$.  Let $\bA \in \bits^{m \times n}$ be the matrix whose $i$-th row is $\bsigma^i$, and let $b$ be the vector $(k,\dots,k) \in \Z^{m}.$ It is easy to see that in order to prove our result it suffices to show that there is a fixed outcome $A$ of $\bA$ such that
\begin{equation} \label{eq:optimal-goal}
\Pr\bracks*{ A\bu \in \Game \calO_{b}} = \Omega\parens*{\frac{\sqrt{\log m}}{\sqrt{n}}},
\end{equation}
and this is what we show below.  Towards this end, for each $i \in [m]$ let us define the matrix $\bA^{\setminus i} \in \bits^{(m-1) \times n}$ obtained by removing the $i$-th row of $\bA$, and further define $b' = (k,\dots,k) \in \Z^{m-1}.$

For each fixed $z \in \{-1,1\}^n$ and each $i \in [m]$ we have
\[
\Prx_{\overline{\bsigma}} \left[ z \in \Inside(\Ind[\bA^{\setminus i} x \leq b']) \right]= p_{\mathrm{I}}^{m-1}
\]
and
\[
\Prx_{\overline{\bsigma}} \left[ z \in \Surface(\Ind[\bsigma^i \cdot x \leq k])\right]= p_{\mathrm{S}}.
\]
Since $\bsigma^i$ and $(\bsigma^{i'})_{i' \in [m] \setminus \{i\}}$ are independent for each $i \in [m]$, it follows that
\[
\Prx_{\overline{\bsigma}}
\left[
z \in \Inside(\Ind[\bA^{\setminus i} x \leq b']) \, \, \& \, \,
z \in \Surface(\Ind[\bsigma^i \cdot x \leq k])
\right
]= p_{\mathrm{S}} \cdot p_\mathrm{I}^{m-1},
\]
and since the events
\[
z \in \Inside(\Ind[\bA^{\setminus i} x \leq b']) \, \, \& \, \,
z \in \Surface(\Ind[\bsigma^i \cdot x \leq k])
\]
and
\[
z \in \Inside(\Ind[\bA^{\setminus i'} x \leq b']) \, \, \& \, \,
z \in \Surface(\Ind[\bsigma^{i'} \cdot x \leq k])
\]
are mutually exclusive for $i \neq i' \in [m]$, by a union bound we have that
\[
\Prx_{\overline{\bsigma}} \big[ \, z \in \Surface(\Ind[\bA x \leq b']) \big] = \Prx_{\overline{\bsigma}}\bracks*{ \bA z \in \Game \calO_{b'}}
= m \cdot p_{\mathrm{S}} \cdot p_{\mathrm{I}}^{m-1}.
\]
It follows that there is an outcome of $\overline{\bsigma}$ such that the resulting matrix $A \in \R^{m \times n}$ has at least an $m \cdot p_{\mathrm{S}} \cdot p_{\mathrm{I}}^{m-1}$ fraction of all  points in $\{-1,1\}^n$ satisfying $A z \in \Game \calO_{\violet{b}}$; i.e.,
\begin{equation} \label{eq:goal}
\Pr\bracks*{ A\bu \in \Game \calO_b} \geq m \cdot p_{\mathrm{S}} \cdot p_{\mathrm{I}}^{m-1}.
\end{equation}

It remains only to argue that for any $10 \leq m \leq 2^{n/10}$, there is a value $k$ such that, for $p_{\mathrm{I}}={n \choose <k}$ and $p_{\mathrm{S}} = {n \choose k}$, we have
\[
m \cdot p_{\mathrm{S}} \cdot p_{\mathrm{I}}^{m-1} = \Omega(\sqrt{\log m}/\sqrt{n}).
\]
Towards this end we choose $k$ to be the largest integer such that ${n \choose < k} / 2^n \leq 1-{\frac 1 m}$.  Recalling that $10 \leq m \leq 2^{n/10}$, we have that $n/2 \leq k \leq 0.99n$, and hence ${n \choose k}$ and ${n \choose k-1}$ are within an absolute constant multiplicative factor of each other.  It follows that

\[
p_{\mathrm{I}} = {n \choose < k}/2^n = 1-\Theta(1/m),
\]
\ignore{\rnote{Note that if $m$ were, say, $2^{n-1}$ or $2^n/(n^{1.5})$, this would not be true. If $m=2^{n-1}$ then $1-1/m = 1-2/2^n,$ and the value of $k$ would be $n-1$ and ${n \choose  < n-1} = 1 - (n+1)/2^n,$ so it's false since $(n+1)/2^n$ is not $\Theta(1/m).$   Similarly, if $m=2^n/(n^{1.5})$ then the value of $k$ would be $n-2$ and ${n \choose < n-2} = 1 - ({n \choose 2} + n + 1)/2^n$ so it's false since $({n \choose 2} + n + 1)/2^n$ is not $\Theta(1/m)$.}} which implies that
\[
p_{\mathrm{I}}^{m-1}=\Omega(1).
\]
Writing $k = n/2 + (\sqrt{n}/2) t$, we have the elementary binomial tail lower bound $1-p_{\mathrm{I}} \geq \exp(-O(t^2))$ (see, e.g.,~\cite[inequality~(4.2)]{LedouxTalagrand})); hence $t \geq \Omega(\sqrt{\ln m})$.  The desired bound
\[
    p_{\mathrm{S}} = {n \choose k}/2^n = \Omega(t/(m \sqrt{n}))
\]
now follows from asymptotically tight estimates (up to universal constants for all $0 \leq t \leq \sqrt{n}$) for the Mills ratio of the binomial distribution; see~\cite{McK89}.
\end{proof}

\ignore{
%\red{See Daniel's Lemma 7; roughly speaking we should have $m=\Theta(m)$,
%\[
%\text{his~}\eps = \Theta(1/m),
%\]
%\[
%p_S = {L \choose k}/2^L = \Omega(\eps \sqrt{\log(1/\eps)}/\sqrt{L}) = \Omega((1/m) \cdot \sqrt{\log m}/\sqrt{L}),
%\]
%\[
%p_I = {L \choose < k}/2^L = 1-\Theta(1/m) \text{~or so, so~}p_I^{m-1}=\Omega(1) \text{~and~}
%m p_S p_I^{m-1} = \Omega(\sqrt{\log m}/\sqrt{L}).
%\]
%}
}

\end{document}